\g@addto@macro{\thm@space@setup}{\thm@headpunct{:}}
\newcolumntype{L}[1]{>{\raggedright\let\newline\\\arraybackslash\hspace{0pt}}m{#1}}
\newcolumntype{C}[1]{>{\centering\let\newline\\\arraybackslash\hspace{0pt}}m{#1}}
\newcolumntype{R}[1]{>{\raggedleft\let\newline\\\arraybackslash\hspace{0pt}}m{#1}}
\newcolumntype{H}{>{\setbox0=\hbox\bgroup}c<{\egroup}@{}} 
\newcommand{\e}{\mathrm{e}} 
\newcommand{\dd}[1]{\, \mathrm{d} #1}
\newcommand{\ex}[1]{\, \mathrm{exp} \Big(#1 \Big)}
\newcommand{\1}[1]{\, \mathbb{I} \left\{ #1 \right\}}
\renewcommand{\i}{i}
\newcommand{\EV}[2][\!]{\mathbb{E}^{#1} \left[ #2 \right]} 
\newcommand{\Var}[2][\!]{\; {\mathbb{V}\mathrm{ar}}^{#1} \big[ #2 \big]}
\newcommand{\logn}[1]{\, \ln \left( #1 \right)}
\newcommand{\sign}[1]{\, \mathrm{sgn} \left( #1 \right)}
\newcommand{\mat}[1]{\begin{pmatrix} #1 \end{pmatrix}}
\newcommand{\T}{^\top} 
\newcommand{\Transp}[1]{\left(#1\right)^{\! \top}} 
\newcommand{\scalarprod}[1]{ \langle #1  \rangle}
\newcommand{\ceil}[1]{\left\lceil#1\right\rceil}
\DeclareMathOperator{\E}{\mathbb{E}}
\DeclareMathOperator{\PM}{\mathbb{P}}
\DeclareMathOperator{\QM}{\mathbb{Q}}
\DeclareMathOperator{\R}{\mathbb{R}}
\DeclareMathOperator{\C}{\mathbb{C}}
\DeclareMathOperator{\F}{\mathcal{F}}
\DeclareMathOperator{\I}{\mathcal{I}}
\DeclareMathOperator{\Flt}{{\mathrm{Flt}}}
\DeclareMathOperator{\Cor}{\mathbb{C}\mathrm{or}}
\newtheorem{definition}{Definition}[chapter]
\newtheorem{theorem}[definition]{Theorem}
\newtheorem{lemma}[definition]{Lemma}
\newtheorem{corollary}[definition]{Corollary}
\newtheorem{assumption}{Assumption}
\theoremstyle{remark}
\newtheorem{example}{Example}
\newtheorem*{remark}{Remark}
\begin{document}

\thispagestyle{empty}
\begin{center}
\vspace*{\fill}

\includegraphics[width=3.5cm]{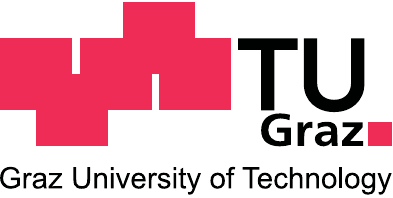}

%

\vspace*{2cm}
\sffamily
\normalsize  Dipl.-Ing. Stefan Waldenberger
\\[1cm]

{\Large\bfseries  Time-inhomogeneous affine processes \\ and affine market models
 \\[2cm]}

{\large\bfseries  DISSERTATION}\\[0.5cm]

\normalsize zur Erlangung des akademischen Grades\\[5mm]
Doktor der technischen Wissenschaften \\[1cm]

eingereicht an der\\[5mm]

{\large\bfseries  Technischen Universität Graz}\\[1cm]

\normalsize  Betreuer/in:\\
Prof. Dr. Wolfgang Müller \\[0.5cm]
Institut für Statistik 
\\[1.5cm]

\small  Graz, September 2015
\end{center}
\vspace*{\fill}

\newpage{}
\thispagestyle{empty}
\vspace*{\fill}
\noindent\textbf{EIDESSTATTLICHE ERKLÄRUNG}\\[2mm]
\emph{\bfseries AFFIDAVIT}\\[5mm]

\noindent Ich erkläre an Eides statt, dass ich die vorliegende Arbeit
selbstständig verfasst, andere als die angegebenen Quellen/Hilfsmittel nicht
benutzt, und die den benutzten Quellen wörtlich und inhaltlich entnommenen
Stellen als solche kenntlich gemacht habe. Das in TUGRAZ-online hochgeladene
Textdokument ist mit der vorliegenden Dissertation identisch. 

\vspace*{2cm}

\noindent\textit{I declare that I have authored this thesis independently, that I have
not used other than the declared sources/resources, and that I have explicitly
indicated all material which has been quoted either literally or by content
from the sources used. The text document uploaded to TUGRAZonline is identical
to the present doctoral dissertation.}

\vspace*{5cm}

\noindent
\begin{minipage}[h]{0.4\linewidth}
  \begin{center}
    \hrulefill\\
    Datum/Date
  \end{center}
\end{minipage}
\hspace*{0.1\linewidth}
\begin{minipage}[h]{0.5\linewidth}
  \begin{center}
    \hrulefill\\
    Unterschrift/Signature
  \end{center}
\end{minipage}
\vspace*{\fill}



\begin{abstract} \begin{center} \begin{minipage}{0.8 \textwidth}
\section*{\centering Deutschsprachige Kurzfassung}
\vspace{1 em}
Diese Dissertation beschäftigt sich mit affinen Prozessen und deren Anwendung im Bereich der Finanzmathematik. Im ersten Teil betrachten wir die Theorie zeitinhomogener affiner Prozesse auf allgemeinen Zustandsräumen. Zuerst werden zeitinhomogene Markov Prozesse genau definiert. Für stochastisch stetige affine Prozesse zeigen wir, dass immer eine càdlàg Modifikation existiert. Anschließend betrachten wir die Regularität und die Semimartingal Eigenschaft von affinen Prozessen. Im Gegensatz zum zeithomogenen Fall sind zeitinhomogene affine Prozesse im Allgemeinen weder regulär noch Semimartingale. Der zeitinhomogene Fall führt daher zu neuen und interessanten Fragestellungen. Nimmt man an, dass ein affiner Prozess auch ein Semimartingal ist, so kann man auch ohne Regularität zeigen, dass die Parameterfunktionen verallgemeinerte Riccati Integralgleichungen lösen. Diese Aussage verallgemeinert ein wichtiges Resultat für zeithomogene affine Prozesse. Wir zeigen weiters, dass stochastisch stetige affine Semimartingale im Wesentlichen durch deterministische Zeittransformationen aus absolut stetig affinen Semimartingalen entstehen. 
Diese Prozesse sind eine Verallgemeinerung der zeithomogenen regulär affinen Prozesse. 

Im zweiten Teil betrachten wir die Klasse der affinen LIBOR Marktmodelle. Wir modifizieren das ursprüngliche Modell, sodass neben nichtnegativen affinen Prozessen auch reellwertige affine Prozesse verwendet werden können. Numerische Beispiele zeigen, dass dadurch flexiblere Volatilitätsoberflächen erzeugt werden können. 
Weiters führen wir die Klasse der affinen Inflationsmarktmodelle ein. Diese Modelle erweitern affine LIBOR Marktmodelle auf Inflationsmärkte.  Ein Kalibrierungsbeispiel zeigt, dass beobachtete Marktpreise von Inflationsderivaten sehr gut wiedergegeben werden können. 
\end{minipage} \end{center} \end{abstract}

\selectlanguage{english}

\begin{abstract} \begin{center} \begin{minipage}{0.8 \textwidth}
\section*{\centering Abstract}
\vspace{1 em}
This thesis is devoted to the study of affine processes and their applications in financial mathematics. In the first part we consider the theory of time-inhomogeneous affine processes on general state spaces. We present a concise setup for time-inhomogeneous Markov processes. 
For stochastically continuous affine processes we show that there always exists a càdlàg modification. Afterwards we consider the regularity and the semimartingale property of affine processes. Contrary to the time-homogeneous case, time-inhomogeneous affine processes are in general neither regular nor semimartingales and the time-inhomogeneous case raises many new and interesting questions. Assuming that an affine process is a semimartingale, we show that even without regularity the parameter functions satisfy generalized Riccati integral equations. This generalizes an important result for time-homogeneous affine processes. We also show that stochastically continuous affine semimartingales are essentially generated by deterministic time-changes of what we call absolutely continuously affine semimartingales. These processes generalize time-homogeneous regular affine processes.

In the second part we consider the class of affine LIBOR market models. We contribute to this class of models in two ways. First, we modify the original setup of the affine LIBOR market models in such a way that next to nonnegative affine processes real-valued affine processes can also be used. Numerical examples show that this allows for more flexible implied volatility surfaces. Second, we introduce the class of affine inflation market models, an extension of the affine LIBOR market models. A calibration example shows that these models perform very well in fitting market-observed prices of inflation derivatives.
\end{minipage} \end{center} \end{abstract}

\begin{abstract} \begin{center} \begin{minipage}{0.8 \textwidth}
\section*{\centering Acknowledgements}
\vspace{1 em}
First and foremost I want to thank my advisor, Wolfgang Müller, who encouraged me to pursue a PhD in this area. His door was always open and I am very grateful for all the time he spent discussing and answering my many questions. I would also like to thank the whole staff of the Institute of statistics, who filled the daily coffee breaks with many interesting discussions. It was a very familiar environment and I was very happy to spend my time there. Special thanks go to Ernst Stadlober, who was very generous with my requests to attend conferences and scientific meetings. 

I also want to thank Christa Cuchiero, Philipp Harms, Josef Teichmann and Thorsten Schmidt, with whom I had many interesting and helpful discussions on affine processes. I am especially grateful to Josef Teichmann for inviting me to ETH Zürich and pointing me in some interesting directions. 

Special thanks go to my family for supporting all my decision and giving me the opportunity to get to this point. Finally, and perhaps most importantly, I want to thank Steffi Klinger, who encouraged me during the whole time of my PhD, unquestioningly put up with all the hours spent on writing this thesis and who always listened to my doubts and problems.
\end{minipage} \end{center} \end{abstract}

\addtocontents{toc}{\protect\thispagestyle{empty}}
\tableofcontents

\chapter*{Introduction}
\addcontentsline{toc}{chapter}{Introduction}
\setcounter{page}{1}

This dissertation is centered around the class of affine processes and their applications to financial markets. Time-inhomogeneous affine processes are informally described as $E$-valued Markov processes, where the logarithm of the conditional moment generating function is affine in the current state of the process, i.e.
$$\EV{ \e^{\scalarprod{u, X_{t}}} \vert \F_s}  = \ex{\phi_{s,t}(u) + \scalarprod{\psi_{s,t}(u), X_s}}, \quad u \in \i \R^d, x \in E,$$
with deterministic functions $\phi$ and $\psi$. This means that the transition operator of the Markov process maps exponentially affine functions into exponentially affine functions. 

In the first part of this thesis we introduce time-inhomogeneous affine processes on general state spaces. While time-homogeneous affine processes ($\phi$ and $\psi$ depend only on $t-s$) are extensively studied, time-inhomo\-geneous affine processes did not get the same amount of attention. 
We focus on continuously affine transition functions (referring to the continuity of $\phi$ and $\psi$), which basically correspond to stochastically continuous affine processes. For affine processes with a continuously affine transition function we are able to show that there always exists a càdlàg modification of the process. This extends a similar result for time-homogeneous affine processes. 

We then consider the question of regularity and the semimartingale property of time-inhomogeneous affine processes. Regularity is related to differentiability of $\phi$ and $\psi$ with respect to the time parameters. For time-homogeneous continuously affine processes regularity always holds. It is used to infer that the parameter functions are solutions of generalized Riccati differential equations and that there exist differentiated semimartingale characteristics which are affine in the current state of the process. We give examples that in the time-inhomogeneous case neither regularity nor the semimartingale property holds automatically. However, for time-inhomogeneous continuously affine processes which are simultaneously semimartingales we can extend many results of the time-homogeneous case. 
We are able to show that in this case the parameter functions are solutions to generalized Riccati integral equations, where the integral is a Lebesgue-Stieltjes integral with respect to a deterministic function $G$. Furthermore, the semimartingale characteristics are absolutely continuous with respect to $G$ and depend in an affine way on the current state of the process. If  the functions $\phi$ and $\psi$ are even absolutely continuous, we show that one can choose $G(t) = t$. We say that such affine semimartingales have an absolutely continuously affine transition function and propose that absolutely continuously affine processes are a suitable generalization of time-homogeneous regular affine processes. The deterministic function $G$ can be interpreted as a time-change. It follows that continuously affine semimartingales are essentially equal to deterministically time-changed absolutely continuously affine semimartingales.

In the second part of this dissertation we consider practical applications of affine processes based on the class of affine LIBOR market models as introduced in \citet{KPT11}. The class of affine LIBOR market models uses the analytical tractability of affine processes to describe an analytically tractable class of LIBOR market models. In order to guarantee nonnegative interest rates only nonnegative affine processes can be used. In chapter \ref{cha:cosh} (which corresponds to \citet{MW14}) we extend the affine LIBOR market models in such a way that real-valued affine processes can also be used without destroying the nonnegativity of interest rates. This has the advantage that the resulting models are more flexible in producing market-observed implied volatility smiles. We illustrate this by numerical examples. 

Chapter \ref{cha:infl} extends affine LIBOR market models to inflation markets. In inflation markets there exist two different types of liquidly traded inflation swaps. Market models considered so far are not able to give closed formulas for both types of swaps. Here the analytical tractability of affine inflation market models can be used to get closed formulas for both swaps. Furthermore, calls and puts on these swap rates can be priced by Fourier methods. This covers pricing for the most liquidly traded derivatives in inflation markets. Using these formulas, we are able to calibrate an affine inflation market model to market data. This chapter corresponds to \citet{WA15}.

\chapter{Time-inhomogeneous affine processes} \label{cha:affineproc}
Time-homogeneous affine processes are time-homogeneous Markov processes with a state space $E$, such that the transition probabilities $P_{t}$ satisfy
$$ \int_E \e^{\scalarprod{u,\xi}} P_{t}(x,\dd{\xi}) = \Phi_{t}(u) \e^{\scalarprod{\psi_{t}(u),x}} $$
for $x \in E$, $t \geq 0$ and all $u$ such that $x \mapsto \e^{\scalarprod{u,x}}$ is bounded in $E$.

The class of affine processes includes many well-known and widely used stochastic processes. In particular, Lévy processes, Ornstein-Uhlenbeck processes (\citet{SA99}, section 17), CIR processes (\citet{CIR85}) and 
Wishardt-processes (\citet{BR91}) belong to this class. 
Affine processes go back to \citet{KW71}, where continuous-time limits of Galton-Watson branching processes were studied. Later affine processes with values in the canonical state space $\R^m_{\geq 0} \times \R^n$ were introduced and applied in a series of papers, compare \citet{DU96}, \citet{DS00}, \citet{DPS00}. An important break-through was presented in \citet{DFS03}, where a complete characterization of regular affine processes taking values in the canonical state space was given. Here regularity refers to the existence of the time-derivatives of $\Phi$ and $\psi$. It was shown that $\Phi$ and $\psi$ are solutions of generalized Riccati differential equations. Furthermore, it was proved that regular affine processes are semimartingales whose differentiated semimartingale characteristics are affine in the current state of the process.
It was not until recently, that in another series of papers it was shown that the assumption of regularity is automatically satisfied for time-homogeneous stochastically continuous affine processes. For the canonical state space this was done in \citet{KST11}, and for general state spaces in \citet{KST11b} and \citet{CT13}. 

Time-inhomogeneous affine processes are analogously defined with the functions $\Phi$ and $\psi$ depending on two time parameters. They were first introduced in \citet{FI05}, where, assuming stronger forms of regularity, the results in \citet{DFS03} were extended to the time-inhomogeneous case. In this chapter we explore how the theory of time-homogeneous affine processes with a general state space extends to the time-inhomogeneous case without such strong regularity assumptions. We show that the time-inhomogeneous case raises new and interesting questions, which do not appear in the time-homogeneous case. 

We start by defining time-inhomo\-geneous Markov processes in a way which fits especially well with semimartingale theory. 
In section \ref{sec:affineproctheory} we introduce time-inhomo\-geneous affine processes on general state spaces and show that stochastic continuity of the Markov process is essentially equivalent to the continuity of $\Phi$ and $\psi$. This leads to the definition of a continuously affine transition function.  In section \ref{sec:cadlagversion} we extend the results of \citet{CT13} to the time-inhomogeneous case and proof that continuously affine processes have a càdlàg modification which possesses the strong Markov property. 

The last part of chapter \ref{cha:affineproc} focuses on regularity questions and the semimartingale property. While in the time-homogeneous case regularity is automatically satisfied, this is not true for time-inhomogeneous affine processes. Furthermore, the semimartingale property no longer holds in general, as examples show. 
Under mild technical conditions we are able to show that for Markov semimartingales with a continuously affine transition function the semimartingale characteristics are absolutely continuous with respect to a deterministic increasing function $G$. Additionally, they depend in an affine way on the current state of the process. 
From this we get that the functions $\Phi$ and $\psi$ are solutions of generalized Riccati integral equations, where the integrals are Lebesgue-Stieltjes integrals with respect to the function $G$. Hence even without regularity many of the results for time-homogeneous affine processes generalize.

The function $G$ can be used to time-change the process. Fixing the probability measure, the time-changed process turns out to be an affine process with a new affine transition function, where the new $\Phi$ and $\psi$ satisfy generalized Riccati integral equations with respect to the Lebesgue measure, i.e. they are even absolutely continuous. This leads to the definition of an absolutely continuously affine transition function. The time-changed process then has differentiated semimartingale characteristics, which depend on time and affinely on the current state of the process. We consider such processes as a suitable generalization of time-homogeneous regular  affine processes. 
The class of  Markov semimartingales with continuously affine transition functions is therefore essentially equal to the class of deterministically time-changed Markov semimartingales with an absolutely continuously (`regular') affine transition function. 

\section{Markov processes} \label{sec:markov}

\begin{definition}
A transition kernel $N$ on a measurable space $(E,\mathcal{E})$ is a map
\begin{align*}
N: E \times \mathcal{E} & \rightarrow [0,\infty], 
\end{align*}
such that
\begin{itemize}
\item for each $x \in E$ the map $A \mapsto N(x,A)$ is a measure,
\item for each $A \in \mathcal{E}$ the map $x \mapsto N(x,A)$ is measurable.
\end{itemize}
If furthermore $N(x,E) = 1$ for all $x \in E$, then $N$ is a transition probability. If $N(x,E) < 1$ we can add a cemetery point $\Delta$ to $E$ and set $N(x,\{ \Delta \}) = 1 - N(x,E)$ and $N(\Delta,\{ \Delta \}) = 1$, so that $N$ is a transition probability on $(E_\Delta, \mathcal{E}_\Delta)$, where $E_\Delta = E \cup \{ \Delta \}$ and $ \mathcal{E}_\Delta = \sigma(\mathcal{E},\{ \Delta \})$. 
\end{definition}
Functions $f$ mapping from $E$ to $\R$ or $\C$ are extended to $E_\Delta$ by setting $f(\Delta) = 0$. To shorten notation for the rest of this section we use $E$ for the state space, even if there exists a cemetery point. A transition probability introduces a transition operator on the set of bounded (or alternatively nonnegative) measurable functions $f: E \rightarrow \R$ by
\begin{equation*}
N f(x) := \int_E f(y) N(x,\dd{y}).
\end{equation*}
Note that we use the same letter to denote the transition probability and the corresponding operator. For two kernels $M, N$ it is possible to define $M N$ by 
$$M N (x,A) := \int_E N(y,A) M(x,\dd{y}),$$
which is again a transition kernel. Note that $x \mapsto Nf(x)$ is measurable and that $(M N) f (x) = M (N f) (x)$.
\begin{definition}
A transition function on $(E,\mathcal{E})$ is a collection $\{P_{s,t}\}_{0 \leq s \leq t}$ of transition probabilities on $(E,\mathcal{E})$ such that $$P_{s,u} P_{u,t} = P_{s,t}$$ for all $0 \leq s < u < t$ and $P_{s,s}(x,\cdot) = \delta_x(\cdot)$, where $\delta_x(\cdot)$ denotes the Dirac measure. The transition function is called time-homogeneous if $P_{s,t} = P_{t-s}$ for a collection $\{P_t\}_{t \geq 0}$ of transition probabilities.
\end{definition}
\begin{definition} \label{def:markov1}
A process $X$ with state space $(E,\mathcal{E})$ on the filtered probability space $(\Omega,\mathcal{A},\F,\PM)$ with $\F := (\F_t)_{t \geq 0}$, such that $X_s = \Delta$ implies $X_t = \Delta$ for all $t > s$, is a time-inhomogeneous Markov process with transition function (operator) $\{P_{s,t}\}$ on $(E,\mathcal{E})$, if it is $\F$-adapted and 
\begin{equation} \EV{f(X_t) \vert \F_s} = P_{s,t} f(X_s) \label{eq:markovprop1} \end{equation}
 for all times $s<t$ and bounded measurable $f: E \rightarrow \R$. It is a time-homogeneous Markov process if  the transition function is time-homogeneous.
\end{definition}
As with the analogy between transition functions and transition operators one can equivalently formulate the Markov property by requiring that for $B \in \mathcal{E}$, $$\PM(X_{t} \in B \vert \F_s) = P_{s,t}(X_s,B).$$

One can always find a Markov process with a given transition function. This is the statement of the following lemma (see e.g. I.3 in \citet{RY10}).
\begin{lemma} \label{lem:markovconstr}
Consider the coordinate process $X$ on the filtered space $(E^{\R_{\geq 0}},\F^X_\infty,\F^X)$ with $\F^X_\infty := \sigma((X_t)_{t \geq 0})$ and $\F_t^X = \sigma((X_t)_{0 \leq s \leq t})$. For a given transition function $\{ P_{s,t} \}$ and start distribution $\mu$ on $E$ there is a unique probability measure $\PM$ on $(E^{\R_{\geq 0}},\F^X_\infty)$, such that $X$ is a Markov process
on $(E^{\R_{\geq 0}},\F^X_\infty,\F^X,\PM)$ and $X_0$ has distribution $\mu$. 
\end{lemma}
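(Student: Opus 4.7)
The plan is the classical Kolmogorov extension construction, specialized to the Markov setting. First I would define, for every finite tuple $0 = t_0 < t_1 < \cdots < t_n$, a probability measure $\mu_{t_0,\ldots,t_n}$ on $(E^{n+1},\mathcal{E}^{\otimes(n+1)})$ by
\begin{equation*}
\mu_{t_0,\ldots,t_n}(A_0 \times \cdots \times A_n) = \int_{A_0}\mu(\dd{x_0}) \int_{A_1} P_{t_0,t_1}(x_0,\dd{x_1}) \cdots \int_{A_n} P_{t_{n-1},t_n}(x_{n-1},\dd{x_n}),
\end{equation*}
extended to cylinders indexed by an arbitrary finite subset $\{s_1<\cdots<s_k\}\subset\R_{\geq 0}$ by inserting the starting time $0$ if necessary and integrating out the intermediate coordinates.

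Next I would check Kolmogorov consistency, namely that projecting $\mu_{t_0,\ldots,t_n}$ onto a sub-tuple yields the corresponding measure. For omissions at the endpoint this is immediate from $P_{s,s}=\mathrm{id}$ and $P_{s,u}(x,E)=1$; for omissions of an intermediate index $t_i$ the equality reduces precisely to the Chapman-Kolmogorov identity $P_{t_{i-1},t_i}P_{t_i,t_{i+1}} = P_{t_{i-1},t_{i+1}}$, which holds by assumption on the transition function. Kolmogorov's extension theorem then produces a unique probability measure $\PM$ on $(E^{\R_{\geq 0}},\F^X_\infty)$ whose finite-dimensional marginals are the $\mu_{t_0,\ldots,t_n}$.

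It remains to verify the Markov property \eqref{eq:markovprop1} under $\PM$ and uniqueness. For the Markov property I would fix $s<t$, a bounded measurable $f$, and a cylinder set $C \in \F_s^X$ of the form $\{X_{s_1}\in B_1,\ldots,X_{s_k}\in B_k,X_s\in B\}$ with $s_1<\cdots<s_k<s$. Writing out $\EV{\mathbf{1}_C f(X_t)}$ in terms of the finite-dimensional distribution on $(s_1,\ldots,s_k,s,t)$ and separating the last integration via $P_{s,t}f(X_s)$ shows $\EV{\mathbf{1}_C f(X_t)} = \EV{\mathbf{1}_C P_{s,t}f(X_s)}$; a monotone class / $\pi$-$\lambda$ argument extends this from cylinders to all of $\F_s^X$. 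Uniqueness of $\PM$ follows because any two probability measures under which $X$ is Markov with transition function $\{P_{s,t}\}$ and initial law $\mu$ must agree on the cylinder sets (by the same unrolling of conditional expectations), and these form a $\pi$-system generating $\F^X_\infty$.

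The main technical obstacle is the invocation of Kolmogorov's extension theorem, which in full generality requires a hypothesis on $(E,\mathcal{E})$ such as being a standard Borel / Polish space so that the consistency condition actually yields a countably additive extension; once that set-up is granted, the remaining steps are bookkeeping with conditional expectations. For the state spaces arising in the thesis (Borel subsets of $\R^d$) this is not an issue, so I would note the standing measurability assumption and refer to the cited section I.3 in \citet{RY10} for the extension step itself rather than reproving it.
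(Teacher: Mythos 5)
Your proposal is correct and is exactly the standard Kolmogorov-extension argument that the paper itself does not reproduce but delegates to the cited reference (I.3 in Revuz--Yor): finite-dimensional distributions built from $\mu$ and the transition kernels, consistency via Chapman--Kolmogorov, extension, and then the Markov property and uniqueness via a monotone class argument on cylinder sets. Your remark that the extension step needs a standard Borel hypothesis on $(E,\mathcal{E})$ is the right caveat and is harmless here since the state spaces used later are Borel subsets of $\R^d$.
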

This motivates an alternative definition of a Markov process, not with respect to a fixed probability measure $\PM$, but with a whole family of probability measures. We demonstrate this in the time-homogeneous case first. Consider
\begin{enumerate}[i) ]
\item an $\F$-adapted $(E,\mathcal{E})$-valued stochastic process $X$ on $(\Omega,\mathcal{A},\F)$. If $E$ contains a cemetery point $\Delta$ it has to hold that $X_s = \Delta$ implies $X_t = \Delta$ for all $t > s$. 
\item Assume that for every $x \in E$ there exists a measure $\PM^{x}$ on $(\Omega,\mathcal{A})$, such that $\PM^x(X_0 = x) = 1$. Additionally, assume that for all $t \geq 0$ and $B \in \mathcal{E}$ the map $x \mapsto \PM^x(X_t \in B)$ is measurable. In other words $P_t(x,B) := \PM^x(X_t \in B)$ is a transition probability on $(E,\mathcal{E})$.
\end{enumerate}
Denote by $\mathbb{E}^x$ the expectation value with respect to the measure $\PM^x$. 
\begin{definition}\label{def:markov2hom}
The combination $(X,\F,\{\PM^x\})$ is a time-homogeneous Mar\-kov process on $(\Omega,\mathcal{A})$, if i) and ii) are satisfied and for each bounded measurable function $f: E \rightarrow \R$ and $x \in E$, $s,t \geq 0$
\begin{equation} \label{eq:markov2hom}
 \EV[x]{f(X_{s+t}) \vert \F_s} = P_{t}f(X_s).
\end{equation}
\end{definition}
Note that the tower property for conditional expectations gives that $\{P_t\}$ is a time-homogeneous transition function. With this definition $X$ is a homogeneous Markov process in the sense of Definition \ref{def:markov1} on $(\Omega,\mathcal{A},\F,\PM^x)$ for each $x \in E$. Contrary, for the transition function $\{P_t\}$ of a time-homogeneous Markov process in the sense of Definition \ref{def:markov1} we can construct probability measures $\PM^{x}$ using Lemma \ref{lem:markovconstr} with starting distribution $\mu = \delta_x$.  Then $(X,\F^X,\{\PM^{x}\})$ is a Markov process on $(\Omega,\F_\infty^X)$ in the sense of Definition \ref{def:markov2hom}. This is then called the canonical realization of $\{ P_t \}$. 

Sometimes the Markov property is written as
$$ \EV[x]{f(X_{t+s}) \vert \F_s} = \EV[X_s]{f(X_t)}.$$
Here the right hand side is defined as $\EV[X_s]{f(X_t)} = g(X_s)$, where $$g(x) = \EV[x]{f(X_{t})} = P_t f(x).$$ Note that $g$ is measurable and that $\EV[X_s]{f(X_t)}$ is a $\sigma(X_s)$-measurable random variable.

Next we generalize Definition \ref{def:markov2hom} to time-inhomogeneous Markov processes. For this we replace ii) by
\begin{enumerate}
\item[ii')] Assume that for each $r \geq 0$ and $x \in E$ there exists a probability measure $\PM^{(r,x)}$ on $(\Omega,\mathcal{A})$ such that $\PM^{(r,x)}(X_0 = x) = 1$. Additionally, assume that for all $s,t \geq 0$, $B \in \mathcal{E}$ the map $x \mapsto \PM^{(s,x)}(X_t \in B)$ is measurable.  This implies that $P_{s,s+t}(x,B) := \PM^{(s,x)}(X_t \in B)$ is a transition probability. 
\end{enumerate}

\begin{definition}\label{def:markov2inhom}
The combination $(X,\F,\{\PM^{(r,x)}\})$ is a time-inhomogeneous Mar\-kov process on $(\Omega,\mathcal{A})$, if i) and ii') are satisfied and for each bounded measurable function $f: E \rightarrow \R$ and $x \in E$, $r,s,t \geq 0$
\begin{equation} \label{eq:markov2inho} \EV[(r,x)]{f(X_{t+s}) \vert \F_s} = P_{r+s,r+s+t} f(X_s) =\EV[(r+s,X_s)]{f(X_t)}. \end{equation}
\end{definition}

In this case the tower property gives that $\{ P_{s,t} \}$ is a (time-inhomogeneous) transition function. 

\begin{remark}
Definiton \ref{def:markov2inhom} is not standard. For instance in \citet{GS75} a Markov process is defined as a system $(X,\F_s^r,\{\PM^{(r,x)}\})$, where 
\begin{itemize}
\item $(\mathcal{F}^r_s)_{0 \leq r \leq s \leq \infty}$ is a family of $\sigma$-algebras, such that $\F_s^r \subset \F_t^v$ for $0 \leq v \leq r \leq s \leq t$,
\item $\PM^{(r,x)}$ are probability measures on $(\Omega,\F_\infty^r)$ for $x \in E$, such that $x \mapsto \PM^{(r,x)}(X_t \in B)$ is measurable and 
\item $X$ is a process such that $X_t$ is $\F_t^t$-measurable and for each bounded measurable $f$ and $0 \leq r \leq s \leq t < \infty$
$$ \EV[(r,x)]{f(X_{t}) \vert \F^r_s} = \EV[(s,X_s)]{f(X_t)}.$$
\end{itemize}
The canonical choice for the family of $\sigma$-algebras is $\F_s^r = \sigma(X_t, r \leq t \leq s)$. We believe that Definiton \ref{def:markov2inhom} is more convenient, especially if one considers Markov processes which are semimartingales, as done in section \ref{sec:regularity}. 
\end{remark}
With Definition \ref{def:markov2inhom} $X$ is a time-inhomogeneous Markov process in the sense of Definition \ref{def:markov1} on $(\Omega,\mathcal{A},\F,\PM^{(r,x)})$ with transition function $\{P_{r+s,r+t}\}_{0 \leq s \leq t}$ for each $(r,x)$. On the other hand given a time-inhomogeneous transition function $\{ P_{s,t} \}$ we can construct probability measures $\PM^{(r,x)}$ by applying Lemma \ref{lem:markovconstr} using the transition function $\{P_{r+s,r+t}\}_{0 \leq s \leq t}$ and starting measure $\mu = \delta_x$, such that the coordinate process $X$ is a Markov process on $(E^{\R_{\geq 0}}, \F_\infty^X, \F^X,\PM^{(r,x)})$. Then $(X,\F^X,\{\PM^{(r,x)}\})$ is a time-inhomogeneous Markov process in the sense of Definition \ref{def:markov2inhom}. We call  $(X,\F^X,\{\PM^{(r,x)}\})$ the canonical realization of $\{P_{s,t} \}$. 

For the canonical realization of a transition function one can define the shift operators $\theta_t:  E^{\R_{\geq 0}} \rightarrow E^{\R_{\geq 0}}, \theta_t(\omega)(s) := \omega(s+t),$
so that \begin{equation} X_s(\theta_t(\omega)) = X_{s+t}(\omega). \label{eq:Xshiftprop} \end{equation}
Alternatively one could assume the existence of operators $\theta_t: \Omega \rightarrow \Omega$, such that \eqref{eq:Xshiftprop} holds. 
In this case the Markov property can be extended. Note that this is sometimes also used to define a Markov process (see Proposition III.1.7 in \citet{RY10} for the homogeneous case). Remember that $\F_\infty^X = \sigma(X_t, t \geq 0)$ is the $\sigma$-algebra generated by the process $X$.
\begin{lemma}
Let $(X,\F,\{\PM^{(r,x)}\})$ be a Markov process and 
assume that shift operators $\theta_t$ exist. Then for every bounded (or nonnegative) $\F_\infty^X$-measurable random variable $Z$ and $r,t \geq 0$, $x \in E$
\begin{equation} \label{eq:markov2inhomalternative}
\EV[(r,x)]{Z \circ \theta_t \vert \F_t} = \EV[(r+t,X_t)]{Z}, \quad \text{ a.s. on the set $\{X_t \neq \Delta \}$}.
\end{equation}
\end{lemma}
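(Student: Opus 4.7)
The plan is to apply the functional monotone class theorem, bootstrapping from the basic Markov identity \eqref{eq:markov2inho} to all bounded $\F_\infty^X$-measurable $Z$. Since $\F_\infty^X$ is generated by the $\pi$-system of finite-dimensional cylinder sets, it suffices to establish the identity for random variables of the product form $Z = \prod_{i=1}^n f_i(X_{s_i})$ with $f_i$ bounded measurable and $0 \le s_1 \le \cdots \le s_n$, and then close under bounded monotone limits.

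The base case $n=1$ is essentially \eqref{eq:markov2inho}: for $Z = f(X_s)$ the shift relation \eqref{eq:Xshiftprop} gives $Z \circ \theta_t = f(X_{s+t})$, and \eqref{eq:markov2inho} yields
$$\EV[(r,x)]{f(X_{s+t}) \vert \F_t} = P_{r+t,r+t+s} f(X_t) = \EV[(r+t,X_t)]{f(X_s)}$$
on $\{X_t \neq \Delta\}$. For the inductive step, I would split $Z = Y \cdot f_n(X_{s_n})$ with $Y = \prod_{i=1}^{n-1} f_i(X_{s_i})$, so that $Z \circ \theta_t = (Y \circ \theta_t)\, f_n(X_{t+s_n})$. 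Conditioning first on $\F_{t+s_{n-1}}$ and applying the base case to the innermost factor produces a new bounded measurable function $g_{n-1}(X_{t+s_{n-1}}) := f_{n-1}(X_{t+s_{n-1}}) \cdot P_{r+t+s_{n-1},\, r+t+s_n} f_n(X_{t+s_{n-1}})$, so that the product collapses to a product of $n-1$ factors of the required form. The induction hypothesis then gives the left-hand side equal to $\EV[(r+t,X_t)]{\,Y \cdot g_{n-1}(X_{s_{n-1}})/f_{n-1}(X_{s_{n-1}})\,}$, which, by applying the Markov property once more under $\PM^{(r+t,X_t)}$ and the tower property, coincides with $\EV[(r+t,X_t)]{Z}$.

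The final step is the closure argument. The collection of bounded $\F_\infty^X$-measurable $Z$ for which the identity holds is a vector space containing constants, is stable under bounded monotone limits of nonnegative sequences, and contains the multiplicative class of finite products just treated; by the functional monotone class theorem it exhausts all bounded $\F_\infty^X$-measurable functions, and the nonnegative case follows by monotone convergence. In parallel one must verify that $x \mapsto \EV[(r+t,x)]{Z}$ is measurable in order for the right-hand side to be a genuine random variable. This is exactly assumption ii') for $Z = 1_B(X_s)$, and propagates through linear combinations, products and monotone limits by the very same scheme.

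The only delicate point — and the reason for the exception set $\{X_t = \Delta\}$ — is that no measure $\PM^{(r+t,\Delta)}$ has been specified, so the right-hand side is a priori undefined on $\{X_t = \Delta\}$. On that event, however, the convention $f(\Delta) = 0$ together with i) forces $Z \circ \theta_t = 0$, so the identity extends trivially with the convention $\EV[(r+t,\Delta)]{Z} := 0$. I expect the main obstacle to be the bookkeeping in the inductive step, i.e.\ keeping track of the nested conditional expectations under the shift while correctly collapsing them through repeated applications of \eqref{eq:markov2inho}; once the product case is in hand, the monotone class closure is routine.
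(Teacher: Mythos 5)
Your proposal is correct and follows essentially the same route as the paper: establish the identity for a single coordinate $Z=f(X_s)$ directly from the Markov property, extend to finite products of coordinates by induction (the paper phrases this for cylinder sets $\Gamma=\{X_0\in A_0,\dots,X_{t_n}\in A_n\}$), verify measurability of $x\mapsto \EV[(s,x)]{Z}$ by the same monotone class scheme, and close up with the (functional) monotone class theorem. The only cosmetic blemish is the division by $f_{n-1}(X_{s_{n-1}})$ in your inductive step, which should be read as ``replace the factor $f_{n-1}$ by $g_{n-1}$'' to avoid dividing by zero; otherwise the argument matches the paper's.
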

\begin{proof}
For the time-homogeneous case this is proved in \citet{RY10}, Proposition III.1.6 and III.1.7. The time-inhomogeneous case is similar. 
For the right hand side to be meaningful we need measurability of $x \mapsto \EV[(s,x)]{Z}$ for each $s \geq 0$. For $Z=\mathbb{I}_\Gamma$, where $\Gamma = \{X_0 \in A_0, X_{t_1} \in A_1, \dots, X_{t_n} \in A_n \}$ this follows from the Markov property and measurability of the transition function. The family of such sets $\Gamma$ is closed under finite intersections, an application of the monotone class theorem (\citet{RY10}, section 0.II) gives measurability for $\Gamma \in \F^X_\infty$. An approximation by simple functions extends measurability to $\F_\infty^X$-measurable $Z$. 

Now let $\Gamma = \{X_s \in A\}$. Then equation \eqref{eq:markov2inhomalternative} holds, i.e.
$$ \EV[(r,x)]{\mathbb{I}_\Gamma \circ \theta_t \vert \F_t} = \EV[(r,x)]{\1{X_{s+t} \in A} \vert \F_t} =
P_{r+t,r+t+s}(X_t,A) = \EV[(r+t,X_t)]{\Gamma}.$$
With the Markov property this also follows for $\Gamma = \{X_0 \in A_0, X_{t_1} \in A_1, \dots, X_{t_n} \in A_n \}$ and a monotone class argument as above extends this to sets $\Gamma \in \F_\infty^X$ and then to $\F_\infty^X$-measurable $Z$. 
\end{proof}
\begin{remark}
The measurability of $x \mapsto \EV[(r,x)]{Z}$ for bounded $\F_\infty^X$-measurable $Z$ allows us to define for each $r \geq 0$ and probability measure $\mu$ on $(E,\mathcal{E})$ a probability measure $\PM^{(r,\mu)}$ on $(\Omega,\F_\infty^X)$ (but not $(\Omega,\mathcal{A})$) by
$$\PM^{(r,\mu)}(\Lambda) = \int_E \PM^{(r,x)}(\Lambda) \mu(\dd{x}).$$
Denote by $\E^{(r,\mu)}$ the expectation under $\PM^{(s,\mu)}$. For each bounded $\F_\infty^X$-measurable $Z$
$$\EV[(r,\mu)]{Z} = \int_E \EV[{(r,x)}]{Z} \mu(\dd{x}).$$
Furthermore equation \eqref{eq:markov2inhomalternative} holds with $\mathbb{E}^{(r,x)}$ replaced by $\mathbb{E}^{(r,\mu)}$ and $\F_t$ replaced by $\F_t^X \subset \F_t$. 
\end{remark}

A time-inhomogeneous Markov process is time-homogeneous if the measures $\PM^{(r,x)}$ do not depend on $r$. The time-inhomogeneous case can be reduced to the time-homogeneous case by extending the state space and the probability space (see also \citet{BO14} or \citet{WE14}, section 8.5.5). Consider the following:
\begin{itemize}
\item The extended state space $\tilde E := \R_{\geq 0} \times E$ with the $\sigma$-algebra 
$$\mathcal{\tilde E} := \{ B \subset \tilde E:  B^s := \{x: (s,x) \in B \} \in \mathcal E \text{ for all } s \in \R_{\geq 0} \}.$$
\item The extended probability space $\tilde \Omega := \R_{\geq 0} \times \Omega$ with the $\sigma$-algebra 
$$ \mathcal{\tilde A} := \{A \subset \tilde \Omega: A^s := \{\omega: (s,\omega) \in A\} \in \mathcal{A}  \text{ for all } s \in 
\R_{\geq 0} \}.$$
\item The space-time process $\tilde X = (\Theta,X)$ on $(\tilde \Omega, \mathcal{\tilde A})$ defined by
$$\tilde X_t(s,\omega) := (s+t,X_{t}(\omega)).$$
\item The filtration $\mathcal{\tilde F}$ defined by $\mathcal{\tilde F}_t := \{ A \subset \mathcal{\tilde A}: A^s \in \F_t \text { for all } s \in \R_{\geq 0} \}.$
\item The probability measures $\tilde \PM^{(r,x)}$ on $(\tilde \Omega, \mathcal{\tilde A})$, which for $A \in \mathcal{\tilde A}$ are defined by
$$\tilde \PM^{(r,x)}(A) := \PM^{(r,x)}(A^r).$$
Note that the probability measure $\tilde \PM^{(r,x)}$ is concentrated on the set $\{r\} \times \Omega$, i.e. $\tilde \PM^{(r,x)}(\{r\} \times \Omega) = 1$. 
\item The transition function $\tilde P_t$ on $(\tilde E, \mathcal{\tilde E})$ defined by setting for $B \in \mathcal{\tilde E}$
$$\tilde P_t((s,x),B) := P_{s,s+t}(x,B^{s+t}) =  \PM^{(s,x)}(X_t \in B^{s+t}).$$
\item If there exist shift operators $\theta_t$ they are extend by $\tilde \theta_t(s,\omega) := (s+t,\theta_t(\omega))$. 
\end{itemize}
\begin{lemma} \label{lem:timehomtransformation}
Let $(X,\F,\{\PM^{(r,x)} \})$ be a time-inhomogeneous Markov process on $(\Omega,\mathcal{A})$ with state space $(E,\mathcal{E})$. 
\begin{itemize}
\item $(\tilde X,\mathcal{\tilde F},\{ \tilde \PM^{(r,x)} \})$ is a time-homogeneous Markov process on $(\tilde \Omega, \mathcal{\tilde A})$ with state space $(\tilde E,\mathcal{\tilde E})$ and transition function $\{ \tilde P_t \}$.
\item $(X,\mathcal{\tilde F},\{ \tilde \PM^{(r,x)} \})$ is a time-inhomogeneous Markov process on $(\tilde \Omega, \mathcal{\tilde A})$ with state space $(E,\mathcal{E})$ and transition function $\{ P_{s,t} \}$.
\end{itemize}
\end{lemma}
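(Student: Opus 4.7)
The plan is to verify the defining conditions of Definitions \ref{def:markov2hom} and \ref{def:markov2inhom} for the two claims, relying throughout on the fact that $\tilde{\mathbb{P}}^{(r,x)}$ is concentrated on the slice $\{r\}\times\Omega$ and, restricted to that slice, coincides with $\mathbb{P}^{(r,x)}$. Consequently every expectation on $(\tilde\Omega,\tilde{\mathcal{A}})$ with respect to $\tilde{\mathbb{P}}^{(r,x)}$ reduces to an expectation on $(\Omega,\mathcal{A})$ with respect to $\mathbb{P}^{(r,x)}$.

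First I would dispatch the bookkeeping items. Adaptedness of $\tilde X$ and $X$ to $\tilde{\mathcal{F}}$ follows because for a set $A = \{\tilde X_t \in B\}$ the $s$-slice $A^s$ equals $\{X_t \in B^{s+t}\}$, and analogously $\{X_t \in B\}^s = \{X_t\in B\}$, both of which lie in $\mathcal{F}_t$ by the adaptedness of $X$. The correct starting distribution holds since $\tilde X_0(r,\omega) = (r,X_0(\omega))$ together with $\mathbb{P}^{(r,x)}(X_0 = x) = 1$ gives $\tilde{\mathbb{P}}^{(r,x)}(\tilde X_0 = (r,x)) = 1$. Measurability of the maps $(s,x) \mapsto \tilde{\mathbb{P}}^{(s,x)}(\tilde X_t \in B) = P_{s,s+t}(x,B^{s+t})$ and $(s,x) \mapsto \tilde{\mathbb{P}}^{(s,x)}(X_t \in B) = P_{s,s+t}(x,B)$ in $\tilde{\mathcal{E}}$ follows from the measurability hypothesis in ii') together with the standing joint measurability of the transition function in its time parameter.

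Next I would establish the Markov property. For the first bullet, since $\tilde{\mathbb{P}}^{(r,x)}$ concentrates on $\{r\}\times\Omega$,
\[ \tilde{\mathbb{E}}^{(r,x)}\bigl[f(\tilde X_{s+t}) \,\big\vert\, \tilde{\mathcal{F}}_s\bigr] = \mathbb{E}^{(r,x)}\bigl[f(r+s+t,X_{s+t}) \,\big\vert\, \mathcal{F}_s\bigr]. \]
Applying \eqref{eq:markov2inho} to the bounded measurable function $y \mapsto f(r+s+t,y)$ yields $P_{r+s,r+s+t}f(r+s+t,\cdot)(X_s)$, and comparing with the definition $\tilde P_t f(u,x) = \int_E f(u+t,y)\,P_{u,u+t}(x,\mathrm{d}y)$ shows this equals $\tilde P_t f(r+s,X_s) = \tilde P_t f(\tilde X_s)$, as required by \eqref{eq:markov2hom}. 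For the second bullet the same reduction gives $\tilde{\mathbb{E}}^{(r,x)}[f(X_{s+t})\,\vert\, \tilde{\mathcal{F}}_s] = P_{r+s,r+s+t}f(X_s)$ directly from \eqref{eq:markov2inho}, which is precisely \eqref{eq:markov2inho} on the extended space. For completeness, to justify that $\{\tilde P_t\}$ is a time-homogeneous transition function in the first bullet, I would check the semigroup identity $\tilde P_s \tilde P_t = \tilde P_{s+t}$ by disintegration along the first coordinate, which reduces immediately to the Chapman--Kolmogorov identity $P_{u,u+s}P_{u+s,u+s+t} = P_{u,u+s+t}$ for $\{P_{s,t}\}$.

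The main obstacle is purely notational: carefully tracking which time parameter is attached to which coordinate of $\tilde E$ and $\tilde \Omega$, and confirming that the slice-wise definitions of $\tilde{\mathcal{F}}$ and $\tilde{\mathbb{P}}^{(r,x)}$ interact correctly with the space-time shift in $\tilde X$. There is no genuine probabilistic content beyond the hypothesis that $(X,\mathcal{F},\{\mathbb{P}^{(r,x)}\})$ is a time-inhomogeneous Markov process.
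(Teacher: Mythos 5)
Your proof is correct and follows essentially the same route as the paper: slice-wise verification of measurability and adaptedness of $\tilde X$, and reduction of the Markov property of $\tilde X$ to that of $X$ via the fact that $\tilde \PM^{(r,x)}$ is concentrated on $\{r\} \times \Omega$ and that $A \in \mathcal{\tilde F}_s$ gives $A^r \in \F_s$ with $\tilde \PM^{(r,x)}(A) = \PM^{(r,x)}(A^r)$. One small correction: there is no ``standing joint measurability'' of $(s,x) \mapsto P_{s,s+t}(x,B)$ in this lemma and none is needed --- the slice-wise $\sigma$-algebra $\mathcal{\tilde E}$ is chosen precisely so that the fixed-$s$ measurability from ii') suffices (a function on $\tilde E$ is $\mathcal{\tilde E}$-measurable iff each of its $s$-sections is $\mathcal{E}$-measurable), which is why the paper only passes to the product $\sigma$-algebra in the remark following the lemma, under an additional joint-measurability hypothesis.
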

\begin{proof}
That $\{ \tilde P_t \}$ is a transition function can be easily checked. The crucial point is measurability of $\tilde P_t$. For fixed $B \in \mathcal{\tilde E}$ and any real Borel set $C$
$$\Lambda := \tilde P_t^{-1}(\cdot, B)(C) = \bigcup_{s \in \R_{\geq 0}} \{s\} \times P_{s,s+t}^{-1}(\cdot,B^{s+t})(C)$$
satisfies $\Lambda^s = P_{s,s+t}^{-1}(\cdot,B^{s+t})(C) \in \mathcal{E}$ and therefore $\Lambda \in \mathcal{\tilde E}$. 
Similar considerations show that the process $\tilde X$ is $\mathcal{\tilde F}$-adapted. 
To show the Markov property of $\tilde X$ one has to check that for $B \in \mathcal{\tilde E}$
$$\mathbb{\tilde E}^{(r,x)}[\mathbb I \{ \tilde X_{s+t} \in B \} \vert \mathcal{\tilde F}_s] =  \tilde \PM^{(r,x)}(\tilde X_{s+t} \in B \vert \mathcal{\tilde F}_s) = 
\tilde P_t(\tilde X_s, B).$$
$A \in \mathcal{\tilde F}_s$ implies $A^r \in \F_s$ and $\tilde \PM^{(r,x)}(A) =  \tilde \PM^{(r,x)}(\{r\} \times A^r) = \PM^{(r,x)}(A^r)$. Together with the Markov property of $X$ one obtains
\begin{align*}
\int_A \tilde P_t(\tilde X_s(\tilde \omega), B) \tilde \PM^{(r,x)}(\dd{\tilde \omega}) & = \int_{A^r} P_{r+s,r+s+t}(X_s(\omega),B^{r+s+t}) \PM^{(r,x)}(\dd{\omega}) \\
& = \int_{A^r} \PM^{(r,x)}(X_{s+t} \in B^{r+s+t} \vert \F_s) \PM^{(r,x)}(\dd{\omega}) \\
& =  \int_{A^r} \1{ X_{s+t}(\omega) \in B^{r+s+t}} \PM^{(r,x)}(\dd{\omega}) \\
& = \int_A  \1{\tilde X_{s+t}(\tilde \omega) \in B} \tilde \PM^{(r,x)}(\dd{\tilde \omega}).
\end{align*}
This shows that $(\tilde X,\mathcal{\tilde F},\{ \tilde \PM^{(r,x)} \})$ is a time-homogeneous Markov-process. It is easy to check that $(X,\mathcal{\tilde F},\{ \tilde \PM^{(r,x)} \})$ is a time-inhomogeneous Markov process with transition function $\{P_{s,t}\}$. 
\end{proof}
To get measurability in $(\tilde E, \mathcal{\tilde E})$ of the map $(s,x) \mapsto \tilde P_t((s,x),B)$ for $B \in \mathcal{\tilde E}$ we have to consider the very large $\sigma$-algebra $\mathcal{\tilde E}$ on $\tilde E$ (and then also $\mathcal{\tilde A}$ on $\tilde \Omega$). If $(s,x) \mapsto P_{s,s+t}(x,B)$ is $\mathcal{B}(\R_{\geq 0}) \otimes \mathcal{E}$-measurable for all $B \in \mathcal{B}(\R_{\geq 0}) \otimes \mathcal{E}$ the space-time process $\tilde X$ can also be realized with respect to the smaller product $\sigma$-algebras $\mathcal{\tilde E}^\prime := \mathcal{B}(\R_{\geq 0}) \otimes \mathcal{E}$ on $\tilde E$ and $\mathcal{\tilde A}^\prime := \mathcal{B}(\R_{\geq 0}) \otimes \mathcal{A}$ on $\tilde \Omega$ (note that $\mathcal{\tilde E}^\prime \subset \mathcal{\tilde E}$ and $\mathcal{\tilde A}^\prime \subset \mathcal{\tilde A}$).
In this case $\tilde \PM^{(r,x)} = \delta_r \otimes \PM^{(r,x)}$ and for $B = [a,b] \times B^\prime$, $B^\prime \in \mathcal{E}$, the definition of the transition function reduces to
\begin{equation} \label{eq:spacetimeproducttransitionfunction}
\tilde P_t((s,x),B)  = \tilde P_t((s,x),[a,b]\times B^\prime) = P_{s,s+t}(x,B^\prime) \mathbb{I}_{[a,b]}(s+t).
\end{equation}

If we consider a Markov process in the sense of Definition \ref{def:markov1} the transformation can be simplified. In this case the $\tilde E$-valued process $\bar X_t(\omega) = (t,X_t(\omega))$ is a time-homogeneous Markov process on the original filtered probability space. The remarks regarding the measurability of the transition function are still true. 
\begin{definition}
\label{def:stochcont}
A transition function $\{P_{s,t}\}$ is called stochastically continuous if for all $x \in E$
$$P_{s,S}(x,\cdot) \overset{d}{\rightarrow} P_{t,T}(x,\cdot), \text{ whenever } (s,S) \rightarrow (t,T) \text{ with }0 \leq s \leq S \text{ and }0 \leq t \leq T.$$
\end{definition}
For a Markov process in the sense of Definition \ref{def:markov2inhom} there is a whole family of probability measures. Hence stochastic continuity for Markov processes is formulated in terms of weak convergence of the transition function $\{P_{s,t}\}$. 
Note that this definition implies convergence in probability in the following sense. Consider the continuous functions $f^{\epsilon}(x) = \frac{ \vert x \vert}{\epsilon} \land 1$. Under the fixed probability measure $\PM$ of Definition \ref{def:markov1} or $\PM$ being equal to any of the measures $\PM^{(r,x)}$ in case of Definition \ref{def:markov2inhom},
\begin{align*}
\lim_{s \rightarrow t} \PM( & \vert X_t - X_s \vert > \epsilon) \leq  \lim_{s \rightarrow t} \EV{ \EV{f^\epsilon(X_t - X_s) \vert \F_s }} \\ & =  \EV{ \lim_{s \rightarrow t} \int f^\epsilon(y-X_s) P_{s,t}(X_s,\dd{y}) } =  \EV{ f^\epsilon(0) } = 0.  
\end{align*}
It follows that $X$ satisfies the usual definition of stochastic continuity, i.e. that $X_s$ converges to $X_t$ in probability (under $\PM$) for $s \rightarrow t$.  

In later sections we will use completed filtered probability spaces (see \citet{RY10}, section 0.3) and modifications of stochastic processes (see Definitions I.1.6 and I.1.7 in \citet{RY10}). We want to clarify how this is to be understood.
\begin{definition} \label{def:completion}
Let $(A, \mathcal{A})$ be a measurable space with a probability measure $\PM$. The completion $\mathcal{A}^{\PM}$ is defined as
$$\mathcal{A}^{\PM} := \{B \subset A: \exists B_1, B_2 \in \mathcal{A}: B_1 \subset B \subset B_2, \PM(B_2 \setminus B_1) = 0 \}.$$
The probability measures are extended for $B \in \mathcal{A}^{\PM}$ by setting $\PM(B) := \PM(B_2) = \PM(B_1)$.  
A sub-$\sigma$-algebra $\F$ on $(A, \mathcal{A})$ is completed by
$$\F^{\PM} := \{A:  \exists B \in \F: 
\PM( A \setminus B \cup B \setminus A ) = 0 \}.$$
\end{definition}
The completed $\sigma$-algebra $\mathcal{A}^{\PM}$ contains all subsets of sets of $\PM$-measure zero and  the same is true for $\F^{\PM}$. 
Completing $\sigma$-algebras (and adding null sets in general) does not destroy the Markov property (see Lemma \ref{lem:MarkovCompletion} below). 

Let $X$ and $\tilde X$ be two $E$-valued stochastic process on $(\Omega,\mathcal{A},\PM)$. $X$ is a modification of $\tilde X$ if $\PM(X_t = \tilde X_t) = 1$ for all $t \geq 0$. If $(X,\F,\{\PM^{(s,x)}\})$ and $(\tilde X,\F,\{ \PM^{(s,x)}\})$ are Markov processes in the sense of Definition \ref{def:markov2inhom} they are modifications of each other, if $\PM^{(s,x)}(X_t = \tilde X_t) = 1$ for all $(s,x)$.

We are now prepared to formulate the following easy lemma. 
\begin{lemma} \label{lem:MarkovCompletion}
If $X$ is Markov process on $(\Omega,\mathcal{A}, \F ,\PM)$, it is also a Markov process on the completed space 
$(\Omega,\mathcal{A}^{\PM},\F^{\PM},\PM)$ with $\F^{\PM} = (\F_t^{\PM})_{t \geq 0}$. Also any modification $\tilde{X}$ of $X$ satisfying that $X_s = \Delta$ implies $X_t = \Delta$ for all $t > s$ is a Markov process on $(\Omega,\mathcal{A}^{\PM},\F^{\PM},\PM)$. Furthermore the transition functions are the same.
\end{lemma}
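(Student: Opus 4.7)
The strategy is to separate the two claims and reduce each to routine null-set bookkeeping: completing the probability space does not enlarge the collection of random variables whose conditional expectations we care about, it only enlarges the admissible integration sets by $\PM$-null modifications, which integrals cannot detect.

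For the first claim, I would verify the two ingredients of Definition \ref{def:markov1} separately on the completed space. Adaptedness is immediate, since $\F_t \subset \F_t^{\PM}$ and $X$ is $\F$-adapted. For the Markov property \eqref{eq:markovprop1} with respect to $\F^{\PM}$, observe that $P_{s,t} f(X_s)$ is $\F_s$-measurable, hence a fortiori $\F_s^{\PM}$-measurable and thus a valid candidate. To check the defining integral identity, take any $A \in \F_s^{\PM}$; by Definition \ref{def:completion} there exists $B \in \F_s$ with $\PM(A \triangle B) = 0$. Since $f$ is bounded measurable,
\[
\int_A f(X_t)\,\dd\PM = \int_B f(X_t)\,\dd\PM = \int_B P_{s,t} f(X_s)\,\dd\PM = \int_A P_{s,t} f(X_s)\,\dd\PM,
\]
where the middle equality is the Markov property for $X$ on the original filtration. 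Hence $\EV{f(X_t)\vert \F_s^{\PM}} = P_{s,t}f(X_s)$ $\PM$-a.s., and the transition function is unchanged.

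For the second claim about a modification $\tilde X$, the nontrivial point is that $\tilde X$ need not be $\F$-adapted, but it is automatically $\F^{\PM}$-adapted: for any $B \in \mathcal{E}$, the symmetric difference $\{\tilde X_t \in B\}\triangle \{X_t \in B\}$ is contained in the $\PM$-null set $\{\tilde X_t \neq X_t\}$, so $\{\tilde X_t \in B\}$ differs from an $\F_t$-set by a $\PM$-null set and therefore lies in $\F_t^{\PM}$. The Markov property on the completed space follows from the first part together with the a.s.-equalities $f(\tilde X_t) = f(X_t)$ and $P_{s,t}f(\tilde X_s) = P_{s,t}f(X_s)$:
\[
\EV{f(\tilde X_t) \vert \F_s^{\PM}} = \EV{f(X_t) \vert \F_s^{\PM}} = P_{s,t} f(X_s) = P_{s,t} f(\tilde X_s) \quad \PM\text{-a.s.},
\]
and again the transition function is the same $\{P_{s,t}\}$. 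The hypothesis that $\tilde X_s = \Delta$ implies $\tilde X_t = \Delta$ for $t > s$ is exactly the absorption condition required by Definition \ref{def:markov1}.

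There is no substantive obstacle here; the entire argument is an exercise in the fact that $\PM$-completion preserves both measurability (up to null sets) and integral identities. The only point worth flagging is the correct choice of a representative for the conditional expectation on the completed filtration — one must use the original $\F_s$-measurable version $P_{s,t}f(X_s)$ and verify the defining identity through the approximating $\F_s$-set $B$, rather than trying to recompute the conditional expectation from scratch.
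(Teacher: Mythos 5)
Your proposal is correct and follows essentially the same route as the paper's (much terser) proof: adaptedness survives completion, the conditional-expectation identity is checked through an approximating $\F_s$-set, and the modification is handled by noting that $\F_t^{\PM}$ contains all subsets of null sets so that a.s.-equal processes have the same adaptedness and the same conditional expectations. Your write-up simply makes explicit the null-set bookkeeping that the paper leaves to the reader.
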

\begin{proof}
$X$ is still $\F^{\PM}$-adapted and conditional expectations are not influenced by adding sets of measure $0$. Since the filtration contains all sets of measure $0$, a modification is still adapted. Since $\PM(\{\tilde{X}_t \neq {X}_t \}) = 0$ it is still a Markov process. 
\end{proof}
For the rest of this thesis we write $(X,\F,\{ \PM^{(s,x)} \})$ if we explicitly consider a Markov process in the sense of Definition \ref{def:markov2inhom} while a Markov process $X$ can refer to either of the two definitions. If a Markov process $X$ is to be understood in the sense of Definition \ref{def:markov1}, we explicitly specify the underlying probability space $(\Omega, \mathcal{A},\F,\PM)$. 

\section{Affine processes} \label{sec:affineproctheory}
From now on consider a state space $E_{\Delta}$, where $E$ is a closed subset of the $d$-dimensional real vector space $V$ (think of $\R^d$, but it can also be a different space, like the space of positive definite matrices on $\R^{n \times n}$, which is isomorphic to $\R^{n(n+1)/2}$). 
\begin{assumption} \label{ass:affinespan}
$E$ contains an affine basis, i.e. $d+1$ elements $x^0,\dots,x^{d}$, such that
$x^1-x^0, \dots x^d-x^0$ are linearly independent. 
 
\end{assumption}
For $E$ consider the Borel $\sigma$-algebra $\mathcal{E} = \mathcal{B}(E)$. 
We use the convention that $\Vert \Delta \Vert = \infty$. 
Define
\begin{equation}
\mathcal{U} = \{ u \in V + \i V  \big \vert \  x \mapsto \e^{\scalarprod{u,x}} \text{ is a bounded function in }E\}
\end{equation}
and for $u \in \mathcal{U}$ the bounded functions $f_u(x) := \e^{\scalarprod{u,x}}$ for $x \in E$, $f_u(\Delta) := 0$. 
\begin{definition} \label{def:affineprocess}
A time-inhomogeneous transition function $\{ P_{s,t} \}$ is called affine if there exist functions $\Phi_{s,t}: \mathcal{U}  \rightarrow \C$ and $\psi_{s,t}: \mathcal{U} \rightarrow V+\i V$ for $0 \leq s \leq t$ such that for all $ u \in \mathcal{U}$, $x \in E$
\begin{equation} P_{s,t} f_u(x) =  \Phi_{s,t}(u) \e^{\scalarprod{\psi_{s,t}(u),x}}.
\label{eq:affineprocgeneral} \end{equation}
A time-inhomogeneous Markov process is affine if its transition function is affine. If the transition function is time-homogeneous (i.e. $P_{s,t} = P_{t-s}$), then
\begin{equation} P_{t} f_u(x) =  \Phi_{t}(u) \e^{\scalarprod{\psi_{t}(u),x}}
\label{eq:affineprocgeneralhom} \end{equation}
with $\Phi_t(u) := \Phi_{0,t}(u)$ and $\psi_t(u) := \psi_{0,t}(u)$.
\end{definition}
The affine property of a process is a property of the transition operator. Hence it does not depend on the used definition of a Markov process (Definition \ref{def:markov1} versus Definition \ref{def:markov2inhom}). 
The restrictions on the state space are not really restrictions. The affine property automatically extends to the closure of a set and one can always pass to a lower dimensional ambient vector space $V$, so that Assumption \ref{ass:affinespan} is fulfilled (see \citet{CT13}). 

As long as $\Phi_{s,t}(u) \neq 0$ we can find a function $\phi$ (which is unique only up to adding multiples of $2 \pi \i$), such that $\Phi_{s,t}(u) = \e^{\phi_{s,t}(u)}$. In this case equation \eqref{eq:affineprocgeneral} can be rewritten as
\begin{equation}
P_{s,t} f_u(x) = \e^{\phi_{s,t}(u) + \scalarprod{\psi_{s,t}(u),x}}.
\end{equation}
This exponential affine form is one motivation for the name affine process. Note that if $(X,\F, \{\PM^{(s,x)}\})$ is an affine process, then for $r,s,t \geq 0$
$$\EV[(r,x)]{\e^{\scalarprod{u,X_{t+s}}} \vert \F_s} = \Phi_{r+s,r+s+t}(u)  \e^{ \scalarprod{\psi_{r+s,r+s+t}(u),X_s}},$$
which for $s=0$ yields
$$\EV[(r,x)]{\e^{\scalarprod{u,X_{t}}}} = \Phi_{r,r+t}(u) \e^{ \scalarprod{\psi_{r,r+t}(u),x}}.$$
The exponential functions in the above equations are meaningfully defined with the convention $f(\Delta) = 0$, i.e. $\e^{\scalarprod{u,\Delta}} = 0$.
Equations like this will be used frequently throughout this chapter. 

Time-homogeneous affine process have been widely studied in the literature, see for example the pioneering work of \citet{DFS03} and the articles \citet{CT13, KST11b}, which are similar in spirit to this treatment. By Lemma \ref{lem:timehomtransformation} a time-inhomogeneous Markov process can be extended to the time-homogeneous space-time process. The affine property carries over to the space-time process only partially. 
\begin{lemma} \label{lem:affinespacetime}
Let $X$ be a Markov process with affine transition function. Then the time-homogeneous transition function $\{ \tilde P_t \}$ of the space-time process $\tilde{X} := (\Theta,X)$ from Lemma \ref{lem:timehomtransformation} satisfies for 
$u_0 \in \C_{\leq 0}$, $u \in \mathcal{U}$ and $f_{(u^0,u)}(s,x) := \e^{u^0 s} f_u(x) $
\begin{equation} \label{eq:affinehomtransform}
\tilde P_t f_{(u^0,u)}(s,x) = \e^{u^0 t} \Phi_{s,s+t}(u) \e^{u^0 s + \scalarprod{\psi_{s,s+t}(u),x}}.
\end{equation}
\end{lemma}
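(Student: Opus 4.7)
The proof is essentially a direct unpacking of the definition of $\tilde{P}_t$ followed by applying the affine property of $\{P_{s,t}\}$. The key observation is that the first component of $\tilde{X}$ is deterministic given the starting time, so $\tilde{P}_t((s,x),\cdot)$ concentrates on the slice $\{s+t\}\times E$.

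First I would check that the test function is admissible: since $u^0 \in \C_{\leq 0}$ we have $\vert \e^{u^0 s} \vert \leq 1$ for all $s \geq 0$, and $u \in \mathcal{U}$ gives boundedness of $f_u$ on $E$, so $f_{(u^0,u)}$ is bounded and measurable on $(\tilde E, \mathcal{\tilde E})$ and the integral defining $\tilde P_t f_{(u^0,u)}(s,x)$ is well defined.

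Next I would use the explicit form of $\tilde P_t$ recorded just before the lemma. From the definition $\tilde X_t(s,\omega) = (s+t, X_t(\omega))$ and $\tilde P_t((s,x),B) = P_{s,s+t}(x,B^{s+t})$, the measure $\tilde P_t((s,x),\cdot)$ is concentrated on $\{s+t\}\times E$ (its total mass there is $P_{s,s+t}(x,E)=1$). Hence, using the product structure from \eqref{eq:spacetimeproducttransitionfunction} with the first-coordinate part collapsed to the Dirac mass at $s+t$, the computation proceeds as
\begin{align*}
\tilde P_t f_{(u^0,u)}(s,x)
&= \int_{\tilde E} \e^{u^0 \theta} f_u(y)\, \tilde P_t\bigl((s,x),\dd(\theta,y)\bigr) \\
&= \e^{u^0(s+t)} \int_E f_u(y)\, P_{s,s+t}(x,\dd{y}) \\
&= \e^{u^0(s+t)} \, P_{s,s+t} f_u(x).
\end{align*}
Applying the affine property \eqref{eq:affineprocgeneral} of $\{P_{s,t}\}$ gives $P_{s,s+t} f_u(x) = \Phi_{s,s+t}(u)\,\e^{\scalarprod{\psi_{s,s+t}(u),x}}$, and regrouping $\e^{u^0(s+t)} = \e^{u^0 t}\e^{u^0 s}$ yields precisely \eqref{eq:affinehomtransform}.

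There is no real obstacle in this argument; the only subtlety worth flagging is the measurability/admissibility of $f_{(u^0,u)}$ under the possibly large $\sigma$-algebra $\mathcal{\tilde E}$, which is handled by the observation that $f_{(u^0,u)}$ is a product of a bounded continuous function of $s$ and a bounded $\mathcal{E}$-measurable function of $x$, hence $\mathcal{B}(\R_{\geq 0}) \otimes \mathcal{E}$-measurable and a fortiori $\mathcal{\tilde E}$-measurable.
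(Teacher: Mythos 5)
Your proof is correct and follows essentially the same route as the paper's, which simply invokes the identity $\tilde P_t f(s,x) = P_{s,s+t} f^{s+t}(x)$ with $f^{s+t}(x) = f(s+t,x)$ and then applies the affine property; your version just spells out the integral computation and the (harmless) measurability check explicitly.
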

\begin{proof} 
Follows directly from the definition of the time-homogeneous transition function $\tilde P_t$ in Lemma \ref{lem:timehomtransformation}, i.e. $\tilde P_t f(s,x) := P_{s,s+t} f^{s+t}(x)$ with $f^{s+t}(x) = f(s+t,x)$. 
\end{proof}
\begin{remark}
$\tilde X$ is almost an affine process. The transition function $\{ \tilde P_t \}$ is exponentially affine in $x$, but not in the (basically deterministic) time component $s$, since $\Phi_{s,s+t}$ and $\psi_{s,s+t}$ depend on the state $(s,x)$ if $\{ P_{s,t} \}$ is not time-homogeneous. 

\end{remark}

So far the functions $\Phi$ and $\psi$ in Definition \ref{def:affineprocess} are not necessarily uniquely defined and can be arbitrary irregular. If we assume that the transition function is stochastically continuous, the following lemma yields continuity of $(s,t,u) \mapsto P_{s,t} f_u(x)$, which we later want to transfer to $\Phi$ and $\psi$. Note that stochastic continuity is sometimes part of the definition of an affine process (see e.g. \citet{CT13}). 
For an affine process consider 
\begin{equation} \label{eq:o(T,u)}
\begin{aligned}
\sigma(s,u) & := \inf\{t > s: \Phi_{s,t}(u) = 0\}, \\
o(T,u) & := \sup\{t < T: \Phi_{t,T}(u) = 0\} \lor 0,
\end{aligned}
\end{equation}
where we use the conventions $\inf \{ \emptyset \} = \infty$ and $\sup \{ \emptyset \} = - \infty$. 
$\sigma$ is used for going forward in time, while $o$ is needed for going backward in time. 
For $k \in \mathbb{N}$ let 
\begin{align*}
\mathcal{U}_k := & \{ u \in V + \i V: \sup_{x \in E} \vert \e^{\scalarprod{u,x}} \vert \leq k \}, \\
\mathcal{Q}_k  := & \{ (t,T,u): u \in \mathcal{U}_k, \; o(T,u) < t \leq T \} 
\\ = & \{(t,T,u): 0 \leq t \leq T, u \in \mathcal{U}_k, \Phi_{v,T}(u) \neq 0 \text{ for } t \leq v \leq T \}, 
\end{align*}
and $\mathcal{U} = \cup_k \, \mathcal{U}_k$ and $\mathcal{Q} := \cup_k \mathcal{Q}_k$.

\begin{lemma} \label{lem:affineproperties1}
Let $\{ P_{s,t} \}$ be stochastically continuous. Then
\begin{enumerate}[(i) ]
\item
$\Phi$ vanishes on $\{(t,T,u), 0 \leq t \leq T, u \in \mathcal{U} \} \setminus \mathcal{Q}$, so $$\mathcal{Q} = \{(t,T,u): 0 \leq t \leq T, u \in \mathcal{U}, \Phi_{t,T}(u) \neq 0 \}.$$
\item The function $\psi$ maps $\mathcal{Q}$ into $\mathcal{U}$.
\item For $x \in E$ the functions $f^x(t,T,u) := P_{t,T} f_u (x)$ are continuous on $$\{(t,T,u), 0 \leq t \leq T, u \in \mathcal{U}_k \}$$ for all $k \in \mathbb{N}$. If $0 \in E$, this is also true for $\Phi$.
\item $\mathcal{Q}$ is open and $\sigma(t,u) > t$, $o(T,u)<T$ for any $u \in \mathcal{U}$.
\end{enumerate}
\end{lemma}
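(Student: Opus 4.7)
The plan is to prove (i) and (ii) purely algebraically from the Chapman--Kolmogorov relation, establish joint continuity in (iii) by a weak-convergence/tightness argument, and then deduce (iv) from (iii) together with a compactness argument on the interval $[t,T]$.

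For (i), I would observe that if $(t, T, u) \notin \mathcal{Q}$ while $0 \le t \le T$ and $u \in \mathcal{U}$, then some intermediate $v \in [t, T]$ has $\Phi_{v,T}(u) = 0$, so that $P_{v,T}f_u(y) = \Phi_{v,T}(u)\e^{\scalarprod{\psi_{v,T}(u),y}} = 0$ for every $y \in E$. Chapman--Kolmogorov then gives $P_{t,T}f_u(x) = P_{t,v}(P_{v,T}f_u)(x) = 0$ for every $x$, and the affine identity at $(t,T)$ forces $\Phi_{t,T}(u) = 0$ because the exponential factor never vanishes. For (ii), I would use that for $(t, T, u) \in \mathcal{Q}$ with $u \in \mathcal{U}_k$,
\begin{equation*}
\bigl|\e^{\scalarprod{\psi_{t,T}(u), x}}\bigr| = \frac{|P_{t,T}f_u(x)|}{|\Phi_{t,T}(u)|} \le \frac{k}{|\Phi_{t,T}(u)|},
\end{equation*}
which bounds the exponential uniformly in $x \in E$, so $\psi_{t,T}(u) \in \mathcal{U}$.

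For (iii), given any convergent sequence $(t_n, T_n, u_n) \to (t, T, u)$ with $u_n, u \in \mathcal{U}_k$, I would split
\begin{equation*}
f^x(t_n, T_n, u_n) - f^x(t, T, u) = P_{t_n,T_n}(f_{u_n} - f_u)(x) + \bigl(P_{t_n,T_n}f_u(x) - P_{t,T}f_u(x)\bigr).
\end{equation*}
The second term vanishes directly by stochastic continuity applied to the bounded continuous function $f_u$. For the first term, Prokhorov's theorem gives tightness of the weakly convergent family $\{P_{t_n, T_n}(x, \cdot)\}$, so one can localise on a compact $K \subset E$ of large uniform measure; on $K$ the convergence $f_{u_n} \to f_u$ is uniform (since $u \mapsto \e^{\scalarprod{u,y}}$ is continuous uniformly for $y$ in compacts), while the tails are controlled by the uniform bound $|f_u|, |f_{u_n}| \le k$. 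The continuity of $\Phi$ in the case $0 \in E$ follows immediately from $\Phi_{t,T}(u) = P_{t,T}f_u(0) = f^0(t,T,u)$.

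For (iv), the bounds $\sigma(t,u) > t$ and $o(T,u) < T$ should follow by inspecting the affine identity at the diagonal: $P_{t,t}(x,\cdot) = \delta_x$ gives $P_{t,t}f_u(x) = \e^{\scalarprod{u, x}} \neq 0$, and the joint continuity from (iii) then forces $P_{t, s}f_u(x)$ to stay nonzero in a small right-neighbourhood of $t$, so $\Phi_{t,s}(u) \neq 0$ there; the argument for $o(T, u) < T$ is symmetric. For openness of $\mathcal Q$, I would fix $(t_0, T_0, u_0) \in \mathcal{Q}_k$ and any $x \in E$; then $v \mapsto P_{v, T_0}f_{u_0}(x)$ is continuous and nowhere zero on the compact interval $[t_0, T_0]$, hence bounded below in modulus by some $\alpha > 0$. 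Using the reparametrisation $v = t + s(T - t)$ with $s \in [0, 1]$, the jointly continuous map $(s, t, T, u) \mapsto P_{t + s(T-t), T}f_u(x)$ inherits the lower bound $\alpha/2$ on a neighbourhood of $[0, 1] \times \{(t_0, T_0, u_0)\}$ by uniform continuity on a compact set, which translates into a neighbourhood of $(t_0, T_0, u_0)$ lying inside $\mathcal{Q}_k$. The main obstacle in this plan will be the tightness step in (iii), which must simultaneously handle convergence of the integrators $P_{t_n, T_n}(x,\cdot)$ and of the integrands $f_{u_n}$; once that is in hand, the compactness reparametrisation in (iv) is a natural follow-up.
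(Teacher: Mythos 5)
Your proposal is correct and follows essentially the same route as the paper: Chapman--Kolmogorov propagation of zeros for (i), the uniform bound $|P_{t,T}f_u(x)|\le k$ divided by $|\Phi_{t,T}(u)|$ for (ii), localisation on a compact set plus uniform convergence of $f_{u_n}$ there for (iii), and continuity at the diagonal for (iv). The only cosmetic difference is that you invoke Prokhorov tightness where the paper uses an explicit compactly supported cutoff $\rho$ with $P_{t,T}(1-\rho)(x)<\epsilon$, and your tube-lemma argument for openness of $\mathcal{Q}$ just spells out what the paper compresses into one line.
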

\begin{remark}
Note that
\begin{equation} \label{eq:phi0}
\exists x \in E: f^x(t,T,u) = 0 \Leftrightarrow \Phi_{t,T}(u) = 0 \Leftrightarrow \forall x \in E_\Delta: f^x(t,T,u) = 0.
\end{equation}
\end{remark}
\begin{proof}
If $\Phi_{t,T}(u) = 0$ we get that for $s \leq t \leq T$, $x \in E$
\begin{align*} 
f^x(s,T,u)& = P_{s,T} f_u(x) = P_{s,t} P_{t,T} f_u(x) = 0
\end{align*}
so also $\Phi_{s,T}(u) = 0$. Then $\Phi_{s,T}(u) = 0$ for all $s < o(T,u)$. By stochastic continuity also $\Phi_{o(T,u),T}(u) = 0$ and (i) follows. 

To get (ii) note that for $u \in \mathcal{U}_k$ and $x \in E$
$$\vert \Phi_{t,T}(u) \e^{\scalarprod{\psi_{t,T}(u),x}} \vert = \vert P_{t,T} f_u (x) \vert \leq \int_E \vert f_u(\xi) \vert P_{t,T}(x,\dd{\xi}) \leq \int_E K P_{t,T}(x,\dd{\xi}) = K.$$
If $(t,T,u) \in \mathcal{Q}$ we can divide by $\vert \Phi_{t,T}(u) \vert$ and $\vert \e^{\scalarprod{\psi_{t,T}(u),x}} \vert \leq K{\vert \Phi_{t,T}(u) \vert}^{-1}$ for all $x \in E$. Hence $\psi_{t,T}(u) \in \mathcal{U}$. 

Let $x \in E$ and let $(t_n,T_n,u_n) \rightarrow (t,T,u)$ be a converging sequence, where $u_n, u \in \mathcal{U}_k$ and $0 \leq t_n \leq T_n$, $0 \leq t \leq T$. 
There exists a function $\rho: E \rightarrow [0,1]$ with compact support such that $$P_{t,T} (1-\rho)(x) < \epsilon.$$ Choose $N_1$ such that $$\vert f_{u_n}(\xi) - f_u(\xi) \vert = \vert \e^{\scalarprod{u_n,\xi}} -  \e^{\scalarprod{u,\xi}} \vert < \epsilon, \qquad \forall n \geq N_1, \xi \in \mathrm{supp}(\rho). $$
By stochastic continuity there is $N_2$ such that for all $n \geq N_2$
\begin{align*}
& P_{t_n,T_n}(1-\rho)(x)  < 2 \epsilon, \\
& \big \vert P_{t_n,T_n} f_u(x) - P_{t,T} f_u(x) \big \vert  < \epsilon .
\end{align*}
Then for all $n \geq \max\{N_1,N_2\}$
\begin{align*}
\big \vert P_{t_n,T_n} f_{u_n}(x) - P_{t,T} f_u(x) \big \vert  & \leq  \big \vert P_{t_n,T_n} (\rho (f_{u_n} - f_u)) (x) \big \vert  \\
 &\  + \big \vert P_{t_n,T_n} ((1-\rho) (f_{u_n} - f_u)) (x) \big \vert \\
 &\  +  \big \vert P_{t_n,T_n} f_u(x)  - P_{t,T} f_u(x) \big \vert \\ 
 & < \epsilon + 2 k \epsilon + \epsilon = 2 (k+1) \epsilon .
\end{align*}
Hence $f^x(t,T,u)$ is continuous on $\{(t,T,u): 0 \leq t \leq T, u \in \mathcal{U}_k\}$ for each $k \in \mathbb{N}$ and $x \in E$. If $0 \in E$, $\Phi_{t,T}(u) = f^0(t,T,u)$ is continuous and we get (iii). 
Since $\Phi_{t,t}(u) \neq 0$ the continuity of $f^x$ and \eqref{eq:phi0} implies (iv).
\end{proof}
We want to use the continuity in Lemma \ref{lem:affineproperties1} (ii) to get unique choices of to $\Phi$ and $\psi$. This is possible because of Assumption \ref{ass:affinespan}. 

\begin{lemma}  \label{lem:affineproperties}
Let $\{P_{s,t}\}$ be stochastically continuous. Then $\psi$ is uniquely defined on $\mathcal{Q}$ by requiring that $\psi_{t,T}(u)$ is continuous on $\mathcal{Q}_k$ for all $k \in \mathbb{N}$ and $\psi_{0,0}(0) = 0$. In this case also $\Phi$ is uniquely defined and continuous on $\mathcal{Q}_k$ 
 for all $k \in \mathbb{N}$. Furthermore, there is a unique function $\phi_{t,T}(u)$ on $\mathcal{Q}$, which is continuous on $\mathcal{Q}_k$ and satisfies $\phi_{0,0}(0) = 0$ and $\e^{\phi_{t,T}(u)} = \Phi_{t,T}(u)$. 
\end{lemma}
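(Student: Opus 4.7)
The plan is to exploit Assumption \ref{ass:affinespan}: since $x^1-x^0,\dots,x^d-x^0$ is a basis of $V$, an element $\psi_{t,T}(u)\in V+\i V$ is uniquely determined by the $d$ pairings $\scalarprod{\psi_{t,T}(u),\, x^i-x^0}$. Dividing \eqref{eq:affineprocgeneral} at two basis points yields, for $(t,T,u)\in\mathcal{Q}$ and $i=1,\dots,d$,
\[
\frac{P_{t,T} f_u(x^i)}{P_{t,T} f_u(x^0)} = \e^{\scalarprod{\psi_{t,T}(u),\, x^i-x^0}},
\]
so the task reduces to choosing continuous branches of the logarithm of the jointly continuous, nonvanishing (by Lemma \ref{lem:affineproperties1}) $\C$-valued functions on $\mathcal{Q}_k$ on the left. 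Once $\psi$ is fixed, $\Phi_{t,T}(u):=P_{t,T}f_u(x^0)\,\e^{-\scalarprod{\psi_{t,T}(u),x^0}}$ is forced and continuous, and $\phi$ is a continuous logarithm of $\Phi$.

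For existence I would use path-lifting along the vertical segment $\gamma_{t,T,u}(s):=((1-s)T+st,\,T,\,u)$, $s\in[0,1]$, which lies entirely in $\mathcal{Q}_k$ by the very definition of $\mathcal{Q}$. At $s=0$ we sit on the diagonal $(T,T,u)$, where $P_{T,T}f_u=f_u$ forces $\Phi_{T,T}(u)=1$ and $\psi_{T,T}(u)=u$ as the natural anchors. Lifting the continuous nonvanishing path $s\mapsto P_{\gamma(s)}f_u(x^i)/P_{\gamma(s)}f_u(x^0)$ with initial value $\scalarprod{u,x^i-x^0}$ produces $L^i(t,T,u)$, and I let $\psi_{t,T}(u)$ be the unique $v\in V+\i V$ solving $\scalarprod{v,x^i-x^0}=L^i(t,T,u)$ for all $i$. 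Joint continuity of $L^i$ on $\mathcal{Q}_k$ follows from the standard fact that continuous lifts into $\C$ depend continuously, in the sup norm on $[0,1]$, on both the lifted path and its initial value, combined with the joint continuity of the ratio from Lemma \ref{lem:affineproperties1}. An analogous lifting of $v\mapsto\Phi_{v,T}(u)$ with anchor $\phi_{T,T}(u)=0$ gives the continuous $\phi$, and evaluation at the constant path $\gamma_{0,0,0}\equiv(0,0,0)$ yields the normalisations $\psi_{0,0}(0)=0$ and $\phi_{0,0}(0)=0$.

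For uniqueness I would first show $\mathcal{Q}_k$ is path-connected: any $(t,T,u)$ is joined to $(T,T,u)$ by the vertical segment above, and the diagonal is joined to $(0,0,0)$ via $s\mapsto((1-s)T,(1-s)T,(1-s)u)$, using that $\mathcal{U}_k = \{a+\i b: a,b\in V,\ \sup_{x\in E}\scalarprod{a,x}\le\log k\}$ is convex and contains $0$. The difference of two continuous $\psi$'s satisfying \eqref{eq:affineprocgeneral} maps $\mathcal{Q}_k$ into the discrete set $\{v\in V+\i V:\ \scalarprod{v,x^i-x^0}\in 2\pi\i\mathbb{Z},\ i=1,\dots,d\}$, hence is locally constant, therefore constant on the connected $\mathcal{Q}_k$, and $0$ by the anchor. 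Uniqueness of $\Phi$ is then forced by its defining formula, and uniqueness of $\phi$ follows from the same discrete-difference argument with target $2\pi\i\mathbb{Z}$. The step I expect to be most delicate is the joint continuity of the path-lift, which requires a careful but standard comparison of continuous lifts of uniformly close paths in $\C\setminus\{0\}$ with converging initial values; path-connectedness of $\mathcal{Q}_k$ via convexity of $\mathcal{U}_k$ is the other technical point, but follows cleanly from the real/imaginary decomposition above.
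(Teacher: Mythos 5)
Your uniqueness argument is sound and is in substance a more hands-on version of what the paper does. The paper exhausts $\mathcal{Q}$ by compact sets $K_n$, contracts each $K_n$ to a point by sliding $t$ up to $T$ and then shrinking along the convex set $\R_{\geq 0}\times\mathcal{U}$, invokes a Banach-algebra theorem of Di Bucchianico to obtain unique continuous logarithms $g^x$ of the $f^x$ anchored at $(0,0,0)$, and then kills the integer-valued corrections $\alpha^x$ by continuity plus connectedness. You bypass the $g^x$ entirely: path-connectedness of $\mathcal{Q}_k$ (via the same vertical segment and the same convexity of $\mathcal{U}_k$) plus the observation that the difference of two admissible continuous $\psi$'s lands in a discrete lattice gives constancy, and the anchor $\psi_{0,0}(0)=0$ gives zero. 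The same scheme handles uniqueness of $\phi$. This buys you a shorter argument that avoids the external reference, at the cost of having to verify the continuous-dependence-of-lifts fact yourself; the paper's contractibility argument does that work for you in one stroke and simultaneously delivers the existence of the auxiliary logarithms $g^x$, which it then reuses to \emph{define} $\phi$.

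The existence half of your proposal, however, contains a genuine gap — and note that the lemma does not actually claim existence; the paper states explicitly after the proof that one gets ``only uniqueness,'' deferring existence to Lemma \ref{lem:psicontexistence} under the extra hypothesis that a single connected component of $E$ contains an affine basis. The problem with your construction: lifting the logarithms of $P_{t,T}f_u(x^i)/P_{t,T}f_u(x^0)$ along the vertical segment produces continuous functions $L^i$ with $L^i(t,T,u)=\scalarprod{\psi_{t,T}(u),x^i-x^0}+2\pi\i k_i(t,T,u)$ for integers $k_i$ that need not vanish. Defining $\tilde\psi$ by $\scalarprod{\tilde\psi,x^i-x^0}=L^i$ then gives, for a general $x\in E$ with $x-x^0=\sum_i c_i(x^i-x^0)$ and real, typically non-integer, coefficients $c_i$,
\begin{equation*}
\e^{\scalarprod{\tilde\psi_{t,T}(u),\,x-x^0}}=\e^{\scalarprod{\psi_{t,T}(u),\,x-x^0}}\,\e^{2\pi\i\sum_i c_i k_i(t,T,u)},
\end{equation*}
and the second factor is not $1$ in general. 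So $\tilde\psi$ satisfies \eqref{eq:affineprocgeneral} only at the $d+1$ basis points, not on all of $E$, i.e.\ it is not an admissible version of $\psi$. This is exactly the obstruction that makes existence fail for disconnected state spaces (e.g.\ lattices), and it is why the paper isolates the connectedness hypothesis in Lemma \ref{lem:psicontexistence}. Since the lemma at hand asserts only uniqueness, your proof survives once the existence claim for $\psi$ is dropped; the path-lifting construction of $\phi$ anchored at $\phi_{T,T}(u)=0$ is unaffected, because there the target $\Phi$ is a scalar and no compatibility across points of $E$ is required.
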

For the proof we borrow heavily from \citet{KST11b}.
\begin{proof}
Define the sets $K_n := \{(t,T,u): u \in \mathcal{U}_n, \Vert u \Vert \leq n, t \in [o(T,u) + \frac{1}{n},T] \}$, so that $K_n \subset \mathcal{Q}_n$ and $\lim_{n \rightarrow \infty} K_n = \mathcal{Q}$. We claim that every set $K_n$ is contractible to zero. 
Let $\gamma = (t(r),T(r),u(r))_{0 \leq r \leq 1}$ be a continuous curve in $K_n$. Then for $0 \leq \alpha \leq 1$ the curves $\gamma_\alpha = (t(r)+ (1 - \alpha) (T(r)-t(r)), T(r),u(r))_{0 \leq r \leq 1}$ depend continuously on $\alpha$ and stay in $K_n$ for each $\alpha$. Furthermore, $\gamma_1 = \gamma$ and $\gamma_0 = (T(r),T(r),u(r))_{0 \leq r \leq 1}$. So any continuous curve in $K_n$ is homotopically equivalent to a continuous curve in the convex set $\R_{\geq 0} \times \mathcal{U}$, where all continuous curves are contractible to zero. This proves that $K_n$ is contractible to zero. Let $H_n: [0,1] \times K_n \rightarrow K_n$ be such a continuous contraction and fix $x \in E$. Since $K_n$ is compact, we have 
$\lim_{\alpha \rightarrow \beta} \Vert f^x(H_n(\alpha, \cdot)) - f^x(H_n(\beta, \cdot)) \Vert_{\infty} = 0.$ Hence $f^x  \circ H_n$ is a continuous curve in $C_b(K_n)$ from $f^x\vert_{K_n}$ to the constant function $1$. Here $C_b(K_n)$ denotes the Banach space of bounded continuous functions on $K_n$. By Theorem 1.3 in \citet{DB91} there exists a continuous logarithm $g_n^x \in C_b(K_n)$, i.e. $f^x(t,T,u) = \e^{g^x_n(t,T,u)}$ for all $(t,T,u) \in K_n$.
We next show, that $g_n^x$ is uniquely defined by setting $g_n^x(0,0,0) = 0$. Assume that $\tilde g^x_n(t,T,u)$ is another such continuous logarithm, then $g^x_n(t,T,u) = \tilde g^x_n(t,T,u) + 2 \pi \i \alpha(t,T,u)$ with $\alpha(t,T,u) \in \mathbb{Z}$. Then $\alpha(t,T,u)$ is continuous on $K_n$, which is contractible to $0$ and hence also connected. Hence $\alpha(t,T,u)$ cannot depend on $(t,T,u)$. Inserting $(0,0,0)$ gives $\alpha(t,T,u) = 0$ and the uniqueness of $g_n^x$. 

For arbitrary $m \leq n$ we have 
$$g_m^x(t,T,u) = g_n^x(t,T,u) + 2 \pi \i \alpha^x(t,T,u) \qquad \text{ for all } (t,T,u) \in K_m,$$
where $\alpha^x(t,T,u)$ is a continuous function from $K_m$ to $\mathbb{Z}$ satisfying $\alpha^x(0,0,0)=0$. Hence $g_m^x(t,T,u) = g_n^x(t,T,u)$ for all $(t,T,u) \in K_m$. This shows that $g_n^x$ extends $g_m^x$, hence
 there is a unique function $g^x : \mathcal{Q} \rightarrow \C$ such that $g^x(0,0,0) = 0$ and \begin{equation}\e^{g^x(t,T,u)} = f^x(t,T,u) =  \Phi_{t,T}(u) \e^{\scalarprod{\psi_{t,T}(u),x}} . \label{eq:gx} \end{equation}
Note that like $f^x$, $g^x$ is then continuous on each $\mathcal{Q}^k$.

Fix a point $x^0$ of an affine basis $x^0, \dots, x^d$. Then on $\mathcal{Q}_k$
$$e^{\scalarprod{\psi_{t,T}(u) ,x - x^0}} = \e^{g^x(t,T,u) - g^{x^0}(t,T,u)},$$
so $\scalarprod{\psi_{t,T}(u)  , x - x^0} =  g^x(t,T,u) - g^{x^0}(t,T,u) + 2 \pi \i \alpha^x(t,T,u),$
where $\alpha^x(t,T,u) \in \mathbb{Z}$. $\alpha^x(0,0,0) = 0$. 
Continuity of $\psi$, $g^x$ and $g^{x^0}$ on $\mathcal{Q}_k$ gives continuity of $\alpha^x(t,T,u)$. Hence $\alpha^x=0$ and $\scalarprod{\psi_{t,T}(u) ,x - x^0}$ is uniquely defined on $\mathcal{Q}$ through the unique functions $g^x, x \in E$.
By Assumption \ref{ass:affinespan} this is also true for $\psi$. On
$\mathcal{Q}$ set $\phi_{t,T}(u) = g^x(t,T,u) - \scalarprod{\psi_{t,T}(u) ,x}$ which is equal for all $x \in E$ and then also uniquely defined. Then $\Phi_{t,T}(u) = \e^{\phi_{t,T}(u)}$ and $\Phi$ is also uniquely defined and continuous on $\mathcal{Q}_k$. 
\end{proof}
From now when talking about continuous functions $\Phi$ and $\psi$, we always mean that they are of the form described in Lemma \ref{lem:affineproperties}. It suffices to require weaker properties of $\psi$ to be sure that it is in fact a {continuous} version. These properties need to be sufficient to guarantee that $\alpha^x$ in the proof of Lemma \ref{lem:affineproperties} does not depend on $(t,T,u)$. E.g. one could relax this by requiring that for each $(t_0,T_0,u_0)$ and $x \in E$ there is some open neighborhood where $\vert  \scalarprod{\Im \psi_{t,T}(u),x} - \scalarprod{\Im \psi_{t_0,T_0}(u_0),x}\vert < \pi$ as suggested in \citet{CT13}. 
Note that as in the time-homogeneous case (see \citet{CT13, KST11b}) we do not get the existence of {continuous} versions, but only uniqueness.  
Under a strengthening of the assumptions on the state space, we get that the function $\psi$ is always continuous.
\begin{lemma} \label{lem:psicontexistence}
Assume that $E$ has a connected component, which contains an affine basis. If $\{ P_{s,t} \}$ is stochastically continuous, then any function $\psi$ satisfying \eqref{eq:affineprocgeneral} is continuous on $\mathcal{Q}_k$ with $\psi_{0,0}(0) = 0$. 
\end{lemma}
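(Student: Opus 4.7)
My plan is to reduce the statement to a uniqueness claim: Lemma \ref{lem:affineproperties} already supplies a distinguished function $\bar\psi$ which is continuous on every $\mathcal{Q}_k$ and normalized by $\bar\psi_{0,0}(0) = 0$. If I can show that under the strengthened hypothesis on $E$ every function $\psi$ satisfying \eqref{eq:affineprocgeneral} must coincide with $\bar\psi$ on $\mathcal{Q}$, the continuity conclusion is immediate.

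The first step is to fix $(t,T,u) \in \mathcal{Q}$ and introduce the linear functional $\ell(x) := \scalarprod{\psi_{t,T}(u) - \bar\psi_{t,T}(u), x}$ on $V + \i V$. Since both $\psi$ and $\bar\psi$ must yield $P_{t,T}f_u(x) = \Phi_{t,T}(u) \e^{\scalarprod{\psi_{t,T}(u),x}}$, dividing the two representations gives $\e^{\ell(x)} = 1$, so $\ell(x) \in 2\pi \i \mathbb{Z}$ for every $x \in E$. Crucially, $\ell$ is linear in $x$ and therefore continuous.

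Next I would exploit the new hypothesis. Let $E_0$ be a connected component of $E$ that contains an affine basis $x^0, \dots, x^d$. The restriction $\ell\vert_{E_0}$ is a continuous map from a connected set into the discrete set $2\pi\i \mathbb{Z}$, hence constant on $E_0$. In particular $\ell(x^j - x^0) = 0$ for $j = 1, \dots, d$, and because $x^1 - x^0, \dots, x^d - x^0$ form a basis of $V$ by Assumption \ref{ass:affinespan}, linearity forces $\ell \equiv 0$. Thus $\psi_{t,T}(u) = \bar\psi_{t,T}(u)$ pointwise on $\mathcal{Q}$, and continuity of $\psi$ on each $\mathcal{Q}_k$ is inherited from $\bar\psi$.

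The only subtle step is the passage from ``$\ell$ is $2\pi\i \mathbb{Z}$-valued on $E_0$'' to ``$\ell$ is constant on $E_0$''; connectedness is indispensable here. Without a connected component carrying an entire affine basis, a nonzero linear functional could consistently take different integer multiples of $2\pi\i$ at different basis points, leaving genuine room for non-continuous choices of $\psi$. This is precisely the ambiguity that the strengthened assumption rules out, and it is also the reason why Lemma \ref{lem:affineproperties} alone gives only uniqueness of a continuous version and not continuity of an arbitrary one.
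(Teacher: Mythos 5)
There is a genuine gap, and it sits at the very first step of your reduction. You take from Lemma \ref{lem:affineproperties} ``a distinguished function $\bar\psi$ which is continuous on every $\mathcal{Q}_k$'' and then prove that any other admissible $\psi$ must equal it. But Lemma \ref{lem:affineproperties} only establishes \emph{uniqueness} of a continuous version, not its \emph{existence}; the paper says this explicitly right after that lemma (``we do not get the existence of continuous versions, but only uniqueness''). The existence of a continuous $\psi$ is precisely the content of Lemma \ref{lem:psicontexistence}, so your argument is circular: the comparison functional $\ell(x) = \scalarprod{\psi_{t,T}(u) - \bar\psi_{t,T}(u), x}$ cannot be formed until a continuous $\bar\psi$ is known to exist. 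Note also that existence is not automatic from the construction in Lemma \ref{lem:affineproperties}: the continuous logarithms $g^x$ determine $\scalarprod{\psi_{t,T}(u), x - x^0}$ only up to an integer multiple of $2\pi\i$ that may depend on both $x$ and $(t,T,u)$, and defining a candidate $\bar\psi$ by solving the linear system on the affine basis does not guarantee that $\e^{\scalarprod{\bar\psi_{t,T}(u), x - x^0}} = f^x/f^{x^0}$ for all other $x \in E$.

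Your use of connectedness (a continuous $2\pi\i\mathbb{Z}$-valued map on a connected set is constant) is the right mechanism, but it has to be deployed differently. The paper's proof fixes an arbitrary $\psi$ satisfying \eqref{eq:affineprocgeneral}, builds a logarithm $g(t,T,u,x)$ of $h(t,T,u,x) = f^x(t,T,u)/f^{x^0}(t,T,u)$ that is \emph{jointly} continuous on $\mathcal{Q}_k \times E$ (via the contraction argument and Theorem 1.3 of \citet{DB91}), and writes $\scalarprod{\psi_{t,T}(u), x - x^0} = g(t,T,u,x) - 2\pi\i\,\alpha(t,T,u,x)$. Connectedness in $x$ makes $\alpha$ constant on the component containing the affine basis; evaluating at $x = x^0$ then identifies $2\pi\i\,\bar\alpha(t,T,u) = g(t,T,u,x^0)$, which is continuous in $(t,T,u)$, and from there continuity of $\scalarprod{\psi_{t,T}(u), x^i - x^0}$ and hence of $\psi$ follows from the affine basis. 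The essential point your proposal misses is that one must control the integer ambiguity \emph{as a function of $(t,T,u)$}, not merely compare two candidates at a fixed $(t,T,u)$.
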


\begin{proof}
Let $x^0, \dots, x^d$ be an affine basis in a connected component. Define the function $$h(t,T,u,x) := \e^{\scalarprod{\psi_{t,T}(u),x-x_0}}, \qquad x \in V.$$
For $x \in E$ this can be written as $h(t,T,u,x) =  \frac{f^x(t,T,u)}{f^{x^0}(t,T,u)}$. By Lemma \ref{lem:affineproperties1} iii) $h$ is continuous in $(t,T,u) \in \mathcal{Q}_k$ for each $x \in E$. Furthermore it is continuous in $x$ for each $(t,T,u)$ and even jointly continuous on $\mathcal{Q}_k \times E$. Set $E_n = \{x \in E, \Vert x \Vert \leq n \}$. With $H_n$ as in the proof of Lemma \ref{lem:affineproperties}, $\alpha \rightarrow h(H_n(\alpha,(t,T,u)),x)$ is a continuous curve in $C_b(K_n \times E_n)$ from $h \vert_{K_n \times E_n}$ to the constant function $1$. By Theorem 1.3 in \citet{DB91} there is a continuous function $g^n(t,T,u,x)$ on $K_n \times E_n$, such that $h(t,T,u,x) = \e^{g^n(t,T,u,x)}$. Setting $g(0,0,0,x) = 0$ for all $x \in E_n$ uniquely defines $g^n$. As in the proof of Lemma \ref{lem:affineproperties} we can extend this to get a unique function $g(t,T,u,x)$ that is continuous on $\mathcal{Q}_k \times E$, satisfies $g(0,0,0,x) = 0$ for all $x \in E$ and $h(t,T,u,x) = \e^{g(t,T,u,x)}$. 

Then $\scalarprod{\psi_{t,T}(u),x-x^0} = g(t,T,u,x) - 2 \pi \i \alpha(t,T,u,x)$ with $\alpha(t,T,u,x) \in \mathbb{Z}$. 
Since the left hand side is continuous in $x$ also $\alpha$ is continuous in $x$. Thus $\alpha$ is constant in $x$, on every connected component. 
Denote by $\bar \alpha (t,T,u)$ the value on the connected component containing $x^0, \dots, x^d$. Then
$$\scalarprod{\psi_{t,T}(u),x^i - x^0} = g(t,T,u,x^i) - 2 \pi \i  \bar\alpha(t,T,u), \quad 0 \leq i \leq d.$$
The case $i=0$ proves that $2 \pi \i \bar \alpha(t,T,u) = g(t,T,u,x^0)$ is continuous on $\mathcal{Q}_k$. Then also $\scalarprod{\psi_{t,T}(u),x^i - x^0}$ is continuous on $\mathcal{Q}_k$. Since $\{x^0, \dots, x^d\}$ is an affine basis, $\psi_{t,T}(u)$ is continuous on $\mathcal{Q}_k$. Finally, $\scalarprod{\psi_{0,0}(0),x^i-x^0} =g(0,0,0,x^i) - 2 \pi \i \bar \alpha(0,0,0) = 0$ for $1 \leq i \leq d$ implies $\psi_{0,0}(0) = 0$.
\end{proof}

This shows that for almost every state space stochastic continuity implies continuity of $\Phi$ and $\psi$. Next we show that essentially also the reverse is true. We first state the following well-known fact. 
\begin{lemma} \label{lem:borelexponentialgenerated}
$\mathcal{B}(\R^d) = \sigma(C_{exp}) := \sigma(\{ f^{-1}(A): A \in \mathcal{B}(\C), f \in C_{exp} \})$, where
$$C_{exp} = \left \{\sum_{k=1}^n a_k  f_{u_k}(x), a_k \in \C, u_k \in \i \R^d, n \in \mathbb{N} \right \}.$$
\end{lemma}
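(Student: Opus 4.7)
The first inclusion $\sigma(C_{exp}) \subseteq \mathcal{B}(\R^d)$ is immediate: every $f_u$ with $u \in \i \R^d$ is continuous and thus Borel measurable, and taking finite complex linear combinations preserves Borel measurability. Hence each $f \in C_{exp}$ is $(\mathcal{B}(\R^d), \mathcal{B}(\C))$-measurable, which already yields $\sigma(C_{exp}) \subseteq \mathcal{B}(\R^d)$.

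For the reverse inclusion the plan is to show that each coordinate projection $\pi_j : x \mapsto x_j$ is $\sigma(C_{exp})$-measurable and then to invoke the standard identity $\mathcal{B}(\R^d) = \sigma(\pi_1, \dots, \pi_d)$, which follows for example from writing open balls as countable unions of rational rectangles. Fix $j \in \{1, \dots, d\}$ and $t \in \R \setminus \{0\}$ and let $e_j$ denote the $j$-th standard unit vector. With $u_t := \i t e_j \in \i \R^d$, both $f_{u_t}(x) = \e^{\i t x_j}$ and $f_{-u_t}(x) = \e^{-\i t x_j}$ lie in $C_{exp}$, so the combination $g_t := \tfrac{1}{2 \i}\bigl(f_{u_t} - f_{-u_t}\bigr)$ also lies in $C_{exp}$ and equals the real-valued function $x \mapsto \sin(t x_j)$. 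In particular $g_t$ is $\sigma(C_{exp})$-measurable.

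Choosing any sequence $t_n \to 0$ with $t_n \neq 0$, the pointwise limit $t_n^{-1} g_{t_n}(x) = t_n^{-1} \sin(t_n x_j) \to x_j$ exhibits $\pi_j$ as a pointwise limit of $\sigma(C_{exp})$-measurable functions; hence $\pi_j$ is $\sigma(C_{exp})$-measurable, and the claimed inclusion $\mathcal{B}(\R^d) \subseteq \sigma(C_{exp})$ follows. The argument is short and essentially mechanical, with no genuine obstacle. The only mildly delicate point is that $C_{exp}$ consists of complex-valued functions, so $\sigma(C_{exp})$-measurability must be read in the sense specified in the lemma statement; but since $\operatorname{Re}, \operatorname{Im} : \C \to \R$ are continuous, real and imaginary parts of $\sigma(C_{exp})$-measurable functions are again $\sigma(C_{exp})$-measurable, so no difficulty arises.
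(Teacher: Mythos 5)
Your proof is correct, and it takes a genuinely different and in fact more elementary route than the paper. The paper's (sketched) argument works with a modified indicator function of an interval $[a,b]$, approximates it pointwise by $C^\infty$ functions, writes each approximant via Fourier inversion as $g(x) = \frac{1}{2\pi}\int_{\i\R}\hat g(z)\e^{zx}\,\dd{z}$, and then truncates and discretizes this integral by Riemann sums to exhibit $g$ as a pointwise limit of elements of $C_{exp}$; this places the interval itself in $\sigma(C_{exp})$. That route leaves several analytic details (the pointwise convergence of the smoothed indicators and of the Riemann sums) to the reader. You instead recover each coordinate projection $\pi_j$ directly as the pointwise limit $t_n^{-1}\sin(t_n x_j)\to x_j$, where $\sin(t x_j)=\tfrac{1}{2\i}(f_{\i t e_j}-f_{-\i t e_j})\in C_{exp}$, and then invoke $\mathcal{B}(\R^d)=\sigma(\pi_1,\dots,\pi_d)$. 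Every step in your version is a one-line verification (continuity for the easy inclusion, closure of measurability under scalar multiples and pointwise limits for the hard one), so you get a complete proof where the paper only offers a sketch. What the paper's approach buys in exchange is a statement closer to a density result for $C_{exp}$ in a space of test functions, but for the way the lemma is actually used downstream (as input to the functional monotone class theorem, where only $\sigma(C_{exp})=\mathcal{B}(\R^d)$ and closure of $C_{exp}$ under multiplication matter), your argument is fully sufficient.
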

\begin{proof}[Sketch of proof]
To simplify notation assume $d = 1$. Take some interval $[a,b]$ and a modified indicator function of this interval, which is $0$ outside of $[a,b]$, $1$ on $(a,b)$ and $1/2$ for $\{a,b\}$. This function can be pointwise approximated by $C^\infty$-functions. By Fourier methods such an approximating $C^\infty$-function $g$ can be written as $g(x) = \frac{1}{2 \pi} \int_{\i \R} \hat g(z) \e^{z x} \dd{z}$. Cutting off the integral and approximating it with Riemann sums gives that it can be pointwise approximated by functions in $C_{exp}$, which then proves the result. 
\end{proof}
\begin{lemma} \label{lem:provestochcont}
Let $(t,T,u) \mapsto P_{t,T} f_u(x)$ be continuous on $\{(t,T,u): 0 \leq t \leq T, u \in \i V \subset \mathcal{U}_1 \}$ for each $x \in E$. Then $\{ P_{t,T} \}$ is stochastically continuous. 
\end{lemma}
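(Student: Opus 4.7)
The plan is to apply L\'evy's continuity theorem to the restrictions of the transition measures to the ambient vector space $V$, exploiting that for $u\in\i V$ the function $f_u(\xi)=\e^{\scalarprod{u,\xi}}$ is (after identifying $V$ with $\R^d$) exactly a character used in the characteristic function of a finite measure on $V$.

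Fix $x\in E$ and an arbitrary sequence $(s_n,S_n)\to(t,T)$ with $0\le s_n\le S_n$ and $0\le t\le T$, and let $\mu_n:=P_{s_n,S_n}(x,\cdot)$ and $\mu:=P_{t,T}(x,\cdot)$ be the associated probability measures on $E_\Delta$. Denote by $\mu_n^*$ and $\mu^*$ their restrictions to $E$, viewed as finite Borel measures on the ambient space $V$. Since $f_u(\Delta)=0$ by convention, the continuity hypothesis reads, for every $u\in\i V$,
$$ \int_V \e^{\scalarprod{u,\xi}}\,\mu_n^*(\dd{\xi}) \;=\; P_{s_n,S_n}f_u(x) \;\longrightarrow\; P_{t,T}f_u(x) \;=\; \int_V \e^{\scalarprod{u,\xi}}\,\mu^*(\dd{\xi}). $$
Specializing at $u=0$ gives convergence of total masses $\mu_n^*(V)\to\mu^*(V)$.

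Either $\mu^*(V)=0$, in which case $\mu=\delta_\Delta$, $\mu_n(\{\Delta\})\to 1$, and $\mu_n\to\mu$ weakly on $E_\Delta$ is immediate; or $\mu^*(V)>0$. In the latter case, normalizing to probability measures for $n$ large and applying L\'evy's continuity theorem on $\R^d$, we obtain $\mu_n^*/\mu_n^*(V)\to\mu^*/\mu^*(V)$ weakly on $V$. Combined with the total-mass convergence, this gives $\mu_n^*\to\mu^*$ weakly as finite measures on $V$, and hence also weakly on the closed set $E$ on which both are supported.

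Finally, to pass from $E$ to $E_\Delta$, note that $\Delta$ is isolated in $E_\Delta$, so any $g\in C_b(E_\Delta)$ restricts to a function $g^\circ\in C_b(E)$, and
$$ \int g\,\dd{\mu_n} \;=\; \int_E g^\circ\,\dd{\mu_n^*} \;+\; g(\Delta)\bigl(1-\mu_n^*(V)\bigr). $$
Both summands converge to the corresponding quantities for $\mu$, so $\mu_n\to\mu$ weakly on $E_\Delta$, which is the claimed stochastic continuity. The delicate point is the cemetery $\Delta$, which turns $\mu_n^*$ into a sub-probability measure on $V$; this is handled cleanly by reading the total mass off the characteristic function at $u=0$.
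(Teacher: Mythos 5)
Your proof is correct, and it takes a genuinely different route from the paper's. The paper does not use L\'evy's continuity theorem at all: it observes that continuity of $(t,T)\mapsto P_{t,T}f_u(x)$ for $u\in\i V$ passes to linear combinations in $C_{exp}$, and then combines Lemma \ref{lem:borelexponentialgenerated} with the functional monotone class theorem to conclude that $(t,T)\mapsto P_{t,T}f(x)$ is continuous for \emph{all} bounded measurable $f$, which is (more than) the weak convergence demanded by Definition \ref{def:stochcont}. You instead fix $x$ and a sequence $(s_n,S_n)\to(t,T)$, read the total mass on $E$ off the characteristic function at $u=0$, and apply L\'evy's theorem to the normalized restrictions to $V$; this targets exactly weak convergence and handles the sub-probability defect caused by the cemetery state cleanly. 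What each buys: the paper's route reuses a monotone-class pattern that recurs elsewhere (Lemma \ref{lem:affinemeasurability}, Theorem \ref{thm:filtrationrightcont}), but its key step is delicate, since a bounded monotone limit of functions $f_n$ with $(t,T)\mapsto P_{t,T}f_n(x)$ continuous only yields semicontinuity of the limit, so closedness of the class $\mathcal{H}$ under monotone limits is not automatic; your route imports a standard theorem and sidesteps that issue, proving only what is actually needed. Two minor remarks on your write-up: the continuity at $0$ of the limit function required by L\'evy's theorem deserves a word (it is automatic here, being the characteristic function of the finite measure $\mu^*$); and your claim that $\Delta$ is isolated in $E_\Delta$ is at odds with the one-point compactification the paper adopts later in section \ref{sec:cadlagversion} --- harmlessly so, since you only use the restriction map from bounded continuous functions on $E_\Delta$ to those on $E$, which is valid in any topology inducing the given one on $E$.
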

\begin{proof}
$(t,T) \mapsto P_{t,T} f_u(x)$ is continuous for all $x \in E$ and $u \in \mathcal{U}_1$.  
Hence this also holds for $f \in C_{exp}$. Let $C_b(E_\Delta)$ denote the bounded measurable functions $f: E_\Delta \rightarrow \C$ and define the vector space
$$\mathcal{H} := \{f \in C_b(E_\Delta): (t,T) \mapsto P_{t,T}f(x) \text{ is continuous for all } x \in E \},$$ 
which is a monotone class containing the constant function. $C_{exp} \subset \mathcal{H}$ is closed under multiplication.
An application of the functional monotone class theorem (\citet{RY10}, Theorem 0.2.2) together with Lemma \ref{lem:borelexponentialgenerated} gives that $\mathcal{H}$ contains all bounded measurable functions, which gives stochastic continuity of $\{P_{t,T}\}$. 
\end{proof}
\begin{corollary}
Let $\{ P_{s,t} \}$ be an affine transition function on a state space $E_\Delta$ which contains $0$ and has a connected component containing an affine basis. 
Then the following are equivalent. 
\begin{itemize}
\item $\{ P_{s,t} \}$  is stochastically continuous. 
\item $\psi$ is continuous on $\mathcal{Q}_k$ with $\psi_{0,0}(0) = 0$ and $\Phi$ and $(t,T,u) \mapsto P_{t,T} f_u(x)$ are continuous on $\{(t,T,u): 0 \leq t \leq T, u \in \mathcal{U}_k \}$ for all $x \in E$ and $k \in \mathbb{N}$. 
\end{itemize}
\end{corollary}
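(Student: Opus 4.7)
The proof is essentially a bookkeeping exercise assembling results already established in the section, so I would not expect any real obstacles — the only thing to check is that the hypotheses (presence of $0\in E$ and of a connected component containing an affine basis) match exactly what the cited lemmas require.

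For the forward direction I would start from the assumption that $\{P_{s,t}\}$ is stochastically continuous and appeal to Lemma \ref{lem:affineproperties1}. Part (iii) of that lemma gives continuity of $f^x(t,T,u)=P_{t,T}f_u(x)$ on $\{(t,T,u):0\le t\le T, u\in\mathcal{U}_k\}$ for every $x\in E$ and every $k$; since $0\in E$ by hypothesis, $\Phi_{t,T}(u)=f^0(t,T,u)$ is continuous on the same set, in particular on $\mathcal{Q}_k$. For the continuity of $\psi$ I would invoke Lemma \ref{lem:psicontexistence}, whose hypothesis (existence of a connected component of $E$ that contains an affine basis) is precisely what is assumed here; that lemma delivers continuity of $\psi$ on each $\mathcal{Q}_k$ together with $\psi_{0,0}(0)=0$.

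For the reverse direction I would note that the only fact actually needed is the continuity of $(t,T,u)\mapsto P_{t,T}f_u(x)$ on $\{(t,T,u):0\le t\le T,u\in\mathcal{U}_k\}$ for every $x\in E$ and every $k$. Restricting to $k=1$ and to $u\in \i V$ (which is contained in $\mathcal{U}_1$ since $|\e^{\langle u,x\rangle}|=1$ there) puts us exactly in the situation of Lemma \ref{lem:provestochcont}, which then yields stochastic continuity of $\{P_{t,T}\}$. The continuity assumptions on $\Phi$ and $\psi$ are not consumed in this direction but come for free from the continuity of $f^x$ together with Lemmas \ref{lem:affineproperties1}--\ref{lem:affineproperties}.

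In short, my proof would amount to two single-line arguments: \emph{($\Rightarrow$)} combine Lemma \ref{lem:affineproperties1}(iii) (using $0\in E$) with Lemma \ref{lem:psicontexistence} (using the connected-component hypothesis); \emph{($\Leftarrow$)} apply Lemma \ref{lem:provestochcont} to the restriction $u\in \i V\subset\mathcal{U}_1$. The only subtle points worth flagging explicitly in the write-up are that $0\in E$ is exactly what promotes continuity of $f^x$ to continuity of $\Phi$, and that the connected-component assumption is what rules out the modulo-$2\pi\i$ ambiguity that would otherwise prevent us from selecting a continuous version of $\psi$.
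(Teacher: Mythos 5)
Your proposal is correct and follows exactly the same route as the paper's own (very short) proof: Lemma \ref{lem:provestochcont} for the reverse implication, and Lemma \ref{lem:psicontexistence} together with Lemma \ref{lem:affineproperties1}(iii) (using $0\in E$) for the forward one. Your remarks on which hypothesis feeds which lemma are accurate and slightly more explicit than the paper's one-line argument.
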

\begin{proof}
Lemma \ref{lem:provestochcont} gives one direction. Lemma \ref{lem:psicontexistence} yields the continuity of $\psi$ on $\mathcal{Q}_k$ and $\psi_{0,0}(0) = 0$.  Lemma \ref{lem:affineproperties1} iii) with the assumption $0 \in E$ the rest. 
\end{proof}
This shows that stochastic continuity and continuity of $\Phi$ and $\psi$ are mostly equivalent. Instead of stochastic continuity of the transition function we consider the following definition. 
\begin{definition}
An affine transition function $\{ P_{t,T} \}$ 
is called a continuously affine transition function if $\psi$ and $\phi$ are continuous on $\mathcal{Q}_k$  with $\psi_{0,0}(0) = 0$. $\Phi$ and $(t,T,u) \mapsto P_{t,T} f_u(x)$ are continuous on $\{(t,T,u): 0 \leq t \leq T, u \in \mathcal{U}_k \}$ for all $x \in E$ and $k \in \mathbb{N}$. 
A Markov process is continuously affine, if its transition function is continuously affine.
\end{definition}
\begin{lemma}  \label{lem:affineproperties3}
Let $\{ P_{t,T} \}$ be a continuously affine transition function. 
\begin{enumerate}[(i) ]
\item $\Phi$, $\phi$ and $\psi$ satisfy the following semi-flow equations. For $s \leq t \leq T$
\begin{equation} \label{eq:semiflow}
\begin{aligned}
\Phi_{s,T}(u)   & =   \Phi_{s,t}(\psi_{t,T}(u))  \Phi_{t,T}(u) , \qquad && \text{for all } u \in \mathcal{U}, \\
\psi_{s,T}(u) & = \psi_{s,t}(\psi_{t,T}(u)), &&  \text{if } \Phi_{s,T}(u) \neq 0, \\
\phi_{s,T}(u) & = \phi_{t,T}(u) + \phi_{s,t}(\psi_{t,T}(u)) \qquad &&  \text{if } \Phi_{s,T}(u) \neq 0.
\end{aligned}
\end{equation}
\item Let $u \in \mathcal{U} \cap V$. Then $\Im \Phi_{t,T}(u) = 0$ for $0 \leq t \leq T$ and $\Im \psi_{t,T}(u) = 0$, $\Im \phi_{t,T}(u) = 0$  for $o(T,u) < t \leq T$.
\item For all $t \geq 0$ it holds that $\Phi_{t,t}(u) = 1$, $\phi_{t,t}(u) = 0$  and $\psi_{t,t}(u) = u$. 

\end{enumerate}
\end{lemma}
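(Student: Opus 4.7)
The three parts are proved in the order (iii), (i), (ii). The common engine is Chapman-Kolmogorov $P_{s,T} = P_{s,t} P_{t,T}$ combined with the uniqueness of the continuous versions of $\phi$ and $\psi$ supplied by Lemma \ref{lem:affineproperties}, which resolves the $2\pi\i$-indeterminacy that arises whenever one passes from an exponential identity to an identity of the exponents.

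For (iii), the identity $P_{t,t}(x,\cdot)=\delta_x$ gives $\Phi_{t,t}(u)\,e^{\scalarprod{\psi_{t,t}(u),x}} = e^{\scalarprod{u,x}}$ for every $x\in E$. Evaluating at an affine basis $x^0,\dots,x^d$ from Assumption \ref{ass:affinespan} pins down $\psi_{t,t}(u)$ and $\Phi_{t,t}(u)$ up to a discrete lattice of shifts. The continuous versions satisfy $\psi_{0,0}(0)=0$, $\phi_{0,0}(0)=0$ and are continuous in $t$, so the integer ambiguity is constant in $t$ and must be zero; this yields $\psi_{t,t}(u)=u$, $\Phi_{t,t}(u)=1$, $\phi_{t,t}(u)=0$.

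For (i), apply Chapman-Kolmogorov to $f_u$. When $\Phi_{t,T}(u)\neq 0$, Lemma \ref{lem:affineproperties1}(ii) gives $\psi_{t,T}(u)\in\mathcal{U}$, hence
\[
P_{s,T}f_u(x) = P_{s,t}\bigl(\Phi_{t,T}(u)f_{\psi_{t,T}(u)}\bigr)(x) = \Phi_{t,T}(u)\,\Phi_{s,t}(\psi_{t,T}(u))\,e^{\scalarprod{\psi_{s,t}(\psi_{t,T}(u)),x}},
\]
which must coincide with $\Phi_{s,T}(u)\,e^{\scalarprod{\psi_{s,T}(u),x}}$. If $\Phi_{s,T}(u)\neq 0$ both factors on the right are nonzero (in particular $\psi_{s,t}(\psi_{t,T}(u))$ is well defined), and evaluating at $x^0,\dots,x^d$ forces $\psi_{s,T}(u)=\psi_{s,t}(\psi_{t,T}(u))$ modulo a discrete lattice. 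For fixed $(s,T,u)\in\mathcal{Q}$ the map $t\mapsto\psi_{s,t}(\psi_{t,T}(u))$ is continuous on $[s,T]$ and agrees with $\psi_{s,T}(u)$ at $t=s$ by (iii), so the ambiguity vanishes; this gives the $\psi$-identity, then the $\Phi$-identity, and via the uniqueness of the continuous logarithm the $\phi$-identity. The case $\Phi_{t,T}(u)=0$ is handled separately: then $P_{t,T}f_u\equiv 0$ on $E$, so $P_{s,T}f_u\equiv 0$ and $\Phi_{s,T}(u)=0$, making both sides of the $\Phi$-identity zero.

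For (ii), if $u\in\mathcal{U}\cap V$ then $f_u$ is a nonnegative real function, so $\Phi_{t,T}(u)\,e^{\scalarprod{\psi_{t,T}(u),x}}=P_{t,T}f_u(x)\ge 0$ is real for every $x\in E$. For $(t,T,u)\in\mathcal{Q}$, subtracting the arguments at two basis points $x^i,x^0$ yields $\scalarprod{\Im\psi_{t,T}(u),x^i-x^0}\in 2\pi\mathbb{Z}$. Restricted to the real slice of $\mathcal{Q}_k$, which is still contractible via the homotopy $H_n$ used in Lemma \ref{lem:affineproperties} (it keeps $u$ fixed), this integer is a continuous function of $(t,T,u)$, equals $0$ at $(0,0,0)$ because $\psi_{0,0}(0)=0$, and is therefore identically zero. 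Assumption \ref{ass:affinespan} then gives $\Im\psi_{t,T}(u)=0$, whence $\Im\Phi_{t,T}(u)=0$ and, by the uniqueness of the continuous logarithm, $\Im\phi_{t,T}(u)=0$. On the complement $\Phi_{t,T}(u)=0$ the statement for $\Phi$ is trivial.

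The main obstacle throughout is precisely the $2\pi\i$ indeterminacy that appears when one passes from identities of complex exponentials to identities of their exponents; its resolution rests entirely on the contractibility of the sets $K_n$ used in Lemma \ref{lem:affineproperties} together with the normalization $\psi_{0,0}(0)=0$.
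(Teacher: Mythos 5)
Your proof is correct and follows essentially the same route as the paper: Chapman--Kolmogorov applied to $f_u$, evaluation at an affine basis, and resolution of the resulting $2\pi\i$-indeterminacy via continuity on $\mathcal{Q}_k$ together with the normalization at $(0,0,0)$ (the paper proves the parts in the order (i), (ii), (iii) and invokes the uniqueness of the continuous logarithm from Lemma \ref{lem:affineproperties} where you argue along the path $t\mapsto\psi_{s,t}(\psi_{t,T}(u))$, but these are the same mechanism). One small completeness remark: your case split in (i) omits the configuration $\Phi_{t,T}(u)\neq 0$ with $\Phi_{s,T}(u)=0$, where the $\Phi$-semiflow requires $\Phi_{s,t}(\psi_{t,T}(u))=0$; this follows immediately from your displayed identity, since the left-hand side vanishes while $\Phi_{t,T}(u)$ and the exponential factor do not.
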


\begin{proof}
Regarding (i), if $\Phi_{s,T}(u) \neq 0$, then by Lemma \ref{lem:affineproperties1} i) also $\Phi_{t,T}(u) \neq 0$, so by Lemma \ref{lem:affineproperties1} ii) $\psi_{t,T}(u) \in \mathcal{U}$ and 
\begin{equation} \label{eq:semiflowderiv}
\begin{aligned} 
\Phi_{s,T}(u)  \e^{\scalarprod{\psi_{s,T}(u),x}} & = P_{s,T} f_u(x) = P_{s,t} P_{t,T} f_u(x) =\Phi_{t,T}(u) P_{s,t} f_{\psi_{t,T}(u)}(x)
\\
& = \Phi_{t,T}(u)   \Phi_{s,t}(\psi_{t,T}(u))   \e^{\scalarprod{\psi_{s,t}(\psi_{t,T}(u)),x}}.
\end{aligned}
\end{equation}
Hence also $\Phi_{s,t}(\psi_{t,T}(u)) \neq 0$ and $(s,t,\psi_{t,T}(u)) \in \mathcal{Q}$. Thus we can write \eqref{eq:semiflowderiv} as 
$$\e^{\phi_{s,T}(u)  + \scalarprod{\psi_{s,T}(u),x}} = \e^{\scalarprod{  \phi_{t,T}(u) +   \phi_{s,t}(\psi_{t,T}(u))   + \psi_{s,t}(\psi_{t,T}(u)),x}}.$$
The previously discussed uniqueness of the logarithm together with Assumption \ref{ass:affinespan} gives the semiflow equations. 
If $\Phi_{s,t}(u) = 0$, we consider two cases. If $\Phi_{t,T}(u)=0$ the semiflow equation for $\Phi$ holds. Otherwise by Lemma \ref{lem:affineproperties1} ii) $\psi_{t,T}(u) \in \mathcal{U}$ and \eqref{eq:semiflowderiv} holds. Since $\Phi_{t,T}(u)  \e^{\scalarprod{\psi_{s,t}(\psi_{t,T}(u)),x}} \neq 0$, it follows that $\Phi_{s,t}(\psi_{t,T}(u)) = 0$. This proves the semiflow equation for $\Phi$ in the case $\Phi_{s,t}(u) = 0$.

For (ii) note that for $u \in \mathcal{U}_k \cap V$, $o(T,u) < t \leq T$
$$\e^{\scalarprod{\psi_{t,T}(u),x - x^0}} = \frac{P_{t,T} f_u(x)}{P_{t,T} f_u(x^0)} \in \R_{\geq 0}.$$
Hence $\scalarprod{\Im \psi_{t,T}(u),x - x^0} = 2 \pi \i \alpha^x(t,T,u)$ with $\alpha^x(t,T,u) \in \mathbb{Z}$. As in the proof of Lemma \ref{lem:affineproperties} it follows that $\scalarprod{\Im \psi_{t,T}(u),x - x^0} = 0$. Hence also $\Im \psi_{t,T}(u) = 0$. 
This implies $\Im \phi_{t,T}(u) = 0$ and $\Im \Phi_{t,T}(u) = 0$. Note that $\Phi = 0$ outside of $\mathcal{Q}$ by \ref{lem:affineproperties1} i).

(iii) follows by the same arguments as in ii) since $P_{t,t} f_u(x) = f_u(x)$ and therefore
$\e^{\scalarprod{\psi_{t,t}(u)-u,x - x^0}} = 1.$
\end{proof}

The continuity of $\Phi$ and $\psi$ can also be used to infer the joint measurability of the transition function $\{ P_{s,t} \}$. 
\begin{lemma} \label{lem:affinemeasurability}
For a continuously affine transition function $(s,t,x) \mapsto P_{s,s+t}(x,B)$ is $\mathcal{B}(\R^2_{\geq 0}) \otimes \mathcal{E}_\Delta$-measurable for every $B \in  \mathcal{E}_\Delta$. 
\end{lemma}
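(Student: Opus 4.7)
The plan is to first establish joint measurability of $(s,t,x) \mapsto P_{s,s+t} f_u(x)$ for imaginary arguments $u \in \i V$, and then extend to indicator functions of arbitrary $B \in \mathcal{E}_\Delta$ by a functional monotone class argument, following the template already used in Lemma \ref{lem:provestochcont}.

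For the first step, fix $u \in \i V$. Since $|f_u(x)| = |\e^{\scalarprod{u,x}}| = 1$ for every real $x \in E$, we have $u \in \mathcal{U}_1$. By the definition of a continuously affine transition function, $(r,R) \mapsto \Phi_{r,R}(u)$ is continuous on $\{0 \leq r \leq R\}$, and $(r,R) \mapsto \psi_{r,R}(u)$ is continuous on the relatively open set $\{(r,R): (r,R,u) \in \mathcal{Q}_1\}$. Composing with the continuous map $(s,t) \mapsto (s, s+t)$ makes both $(s,t) \mapsto \Phi_{s,s+t}(u)$ and (on the preimage of $\mathcal{Q}_1$) $(s,t) \mapsto \psi_{s,s+t}(u)$ continuous, hence Borel measurable on $\R^2_{\geq 0}$. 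Then $(s,t,x) \mapsto \scalarprod{\psi_{s,s+t}(u),x}$, being the inner product of an $(s,t)$-measurable vector with $x$, is jointly Borel measurable on its domain, so composing with the exponential and multiplying by the bounded measurable factor $\Phi_{s,s+t}(u)$ yields joint measurability of
\[
(s,t,x) \mapsto \Phi_{s,s+t}(u) \e^{\scalarprod{\psi_{s,s+t}(u),x}} = P_{s,s+t} f_u(x)
\]
on the preimage of $\mathcal{Q}_1$. Outside this preimage $\Phi_{s,s+t}(u) = 0$ (Lemma \ref{lem:affineproperties1}(i)), hence $P_{s,s+t} f_u(x) = 0$ there, so joint measurability extends to all of $\R^2_{\geq 0} \times E_\Delta$.

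For the second step, let
\[
\mathcal{H} := \{ f \in C_b(E_\Delta) :  (s,t,x) \mapsto P_{s,s+t} f(x) \text{ is } \mathcal{B}(\R^2_{\geq 0}) \otimes \mathcal{E}_\Delta\text{-measurable} \}.
\]
Then $\mathcal{H}$ is a vector space containing the constant functions and, by bounded convergence applied inside $P_{s,s+t}$, is closed under bounded monotone convergence. By the previous step $f_u \in \mathcal{H}$ for every $u \in \i V$, hence $C_{exp} \subset \mathcal{H}$; note $C_{exp}$ is stable under multiplication. The functional monotone class theorem (Theorem 0.2.2 in \citet{RY10}) combined with Lemma \ref{lem:borelexponentialgenerated} then shows that $\mathcal{H}$ contains every bounded $\mathcal{E}_\Delta$-measurable function. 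Applying this to $f = \mathbb{I}_B$ for $B \in \mathcal{E}_\Delta$ yields the claimed joint measurability of $(s,t,x) \mapsto P_{s,s+t}(x,B)$.

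The main obstacle is the first step, where one must be careful at the boundary of the open set where $\Phi \neq 0$: $\psi$ may not be continuously extendable there, but this is harmless because the factor $\Phi$ vanishes continuously and dominates the product. Once that subtlety is handled, the monotone class extension is essentially routine and parallel to the proof of Lemma \ref{lem:provestochcont}.
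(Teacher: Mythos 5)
Your proposal is correct and follows essentially the same route as the paper: establish joint measurability of $(s,t,x)\mapsto P_{s,s+t}f_u(x)$ for $u\in\i V$ by exploiting continuity of $\Phi$ and $\psi$ on the open set where $\Phi\neq 0$ and the vanishing of the transition operator outside it, then extend via $C_{exp}$ and the functional monotone class theorem. The only cosmetic difference is that the paper verifies measurability by an explicit preimage decomposition (splitting on whether $0$ lies in the target set and accounting for $\R^2_{\geq 0}\times\{\Delta\}$), whereas you argue directly at the level of functions; both handle the boundary of $\mathcal{Q}_u$ identically.
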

\begin{proof}
Let $C \subset \C$ and $u \in \i V$. By continuity of $\Phi$ the set $\mathcal{Q}_u = \{ (s,t) \in \R^2_{\geq 0}: \Phi_{s,s+t}(u) \neq 0 \}$ is open and $\mathcal{B}(\R^2_{\geq 0})$-measurable. $(s,t) \mapsto \psi_{s,t}(u)$ is a continuous function on $\mathcal{Q}_u$. Hence also $(s,t,x) \mapsto P_{s,s+t}f_u(x) = \Phi_{s,s+t}(u) \e^{\scalarprod{\psi_{s,s+t}(u),x}}$ is continuous on $\mathcal{Q}_u \times E$ and $D = \{(s,t,x) \in \R^2_{\geq 0} \times E: P_{s,s+t}f_u(x) \in C \setminus \{ 0 \} \} \in \mathcal{B}(\R^2_{\geq 0}) \otimes \mathcal{E}_\Delta$. Therefore
\begin{align*} \{(s,t,x) \in & \R^2_{\geq 0} \times \mathcal{E}_\Delta:  P_{s,s+t} f_u(x) \in C\} \\ & = \begin{cases}  0 \notin C: & D  \\
0 \in C: & (\R^2_{\geq 0} \setminus \mathcal{Q}_u \times E) \cup (\R^2_{\geq 0} \times \{\Delta \}) \cup D, \end{cases}
\end{align*}
is in $\mathcal{B}(\R^2_{\geq 0}) \otimes \mathcal{E}_\Delta$. Hence
the function $(s,t,x) \mapsto P_{s,s+t} f_u(x)$ is $\mathcal{B}(\R^2_{\geq 0}) \otimes  \mathcal{E}_\Delta $-measurable for every $f_u$. This extends to functions $f \in C_{exp}$. A monoton class argument as in the proof of Lemma \ref{lem:provestochcont} gives the result. 
\end{proof}
This implies in particular that $(s,x) \mapsto \Tilde P_t((s,x),B)$ is $\mathcal{B}(\R_{\geq 0}) \otimes \mathcal{E}_\Delta$-measurable for every fixed $B \in \mathcal{B}(\R_{\geq 0}) \otimes \mathcal{E}_\Delta$, where $\tilde P_t$ is the transition function of the space-time process introduced in Lemma \ref{lem:timehomtransformation}. To see this let $B=[a,b] \times B^\prime$, $B^\prime \in \mathcal{E}_\Delta$ and $C \in \mathcal{B}([0,1])$. Then by Lemma \ref{lem:affinemeasurability} $$D = \{(s,x): P_{s,s+t}(x,B^\prime) \in C\} \in \mathcal{B}(\R_{\geq 0}) \otimes \mathcal{E}_\Delta. $$
By \eqref{eq:spacetimeproducttransitionfunction}
\begin{align*}
\tilde P_t^{-1}(\cdot, & [a,b]\times B^\prime)(C) = \{(s,x): P_{s,s+t}(x,B^\prime) \mathbb{I}_{[a,b]}(s+t) \in C\} \\
& = \begin{cases}
([a-t,b-t] \times E_\Delta) \cap D & 0 \notin C, \\
([a-t,b-t]^C \times E_\Delta) \cup (([a-t,b-t] \times E_\Delta) \cap D) & 0 \in C,
\end{cases}
\end{align*}
and $\tilde P_t^{-1}(\cdot, [a,b]\times B^\prime)(C) \in  \mathcal{B}(\R_{\geq 0}) \otimes \mathcal{E}_\Delta$. 
Hence the space-time process of a Markov process with a time-inhomogeneous continuously affine transition function can be realized with respect to the smaller product $\sigma$-algebras $\mathcal{\tilde E} = \mathcal{B}(\R_{\geq 0}) \otimes \mathcal{E}_\Delta$ and $\mathcal{\tilde A} = \mathcal{B}(\R_{\geq 0}) \otimes \mathcal{A}$ and the filtration $\mathcal{\tilde F}$ given by $ \mathcal{\tilde F}_t := \mathcal{B}(\R_{\geq 0}) \otimes \F_t$ (see the remark after Lemma \ref{lem:timehomtransformation}). Together with Lemma \ref{lem:affinespacetime} we have the following corollary. 
\begin{corollary} \label{cor:contaffinespacetime1}
Let $(X,\F,\{ \PM^{(s,x)} \})$ be a continuously affine Markov process with transition function $\{P_{s,t}\}$. Then $(\tilde X, \mathcal{\tilde F},  \{ \tilde \PM^{(s,x)} \})$ is a time-homogeneous Markov process on $(\tilde \Omega, \mathcal{\tilde A})$ with state space $(\tilde E,\mathcal{\tilde E})$ and the transition function $\{ \tilde P_t \}$ satisfying 
$$\tilde P_t f_{(u^0,u)}(s,x) = \e^{u^0 t} \Phi_{s,s+t}(u) \e^{u^0 s + \scalarprod{\psi_{s,s+t}(u),x}}.$$
Furthermore, $(X, \mathcal{\tilde F},  \{ \tilde \PM^{(s,x)} \})$ is a time-inhomogeneous affine process on $(\tilde \Omega, \mathcal{\tilde A})$ with transition function $\{P_{s,t}\}$. 
\end{corollary}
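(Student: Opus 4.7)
The strategy is to reduce everything to Lemma \ref{lem:timehomtransformation} and Lemma \ref{lem:affinespacetime}, but with the smaller product $\sigma$-algebras $\mathcal{\tilde E} = \mathcal{B}(\R_{\geq 0}) \otimes \mathcal{E}_\Delta$ and $\mathcal{\tilde A} = \mathcal{B}(\R_{\geq 0}) \otimes \mathcal{A}$ in place of the very large ones used in Lemma \ref{lem:timehomtransformation}. The key enabling fact is the joint measurability statement of Lemma \ref{lem:affinemeasurability}, which is exactly the hypothesis identified in the remark following Lemma \ref{lem:timehomtransformation} as sufficient for this reduction.

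First I would invoke Lemma \ref{lem:affinemeasurability} to obtain that $(s,x) \mapsto P_{s,s+t}(x,B')$ is $\mathcal{B}(\R_{\geq 0}) \otimes \mathcal{E}_\Delta$-measurable for every $B' \in \mathcal{E}_\Delta$. Then, exactly as in the computation displayed just before the corollary (for rectangles $B = [a,b] \times B'$ using formula \eqref{eq:spacetimeproducttransitionfunction} and a standard Dynkin argument to extend from rectangles to all of $\mathcal{B}(\R_{\geq 0}) \otimes \mathcal{E}_\Delta$), I would conclude that $(s,x) \mapsto \tilde P_t((s,x), B)$ is $\mathcal{\tilde E}$-measurable for every $B \in \mathcal{\tilde E}$. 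This is the only point where the \emph{continuously} affine hypothesis (rather than just affine) enters, and it is what allows the restriction to the product $\sigma$-algebras.

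Given this measurability, I would invoke Lemma \ref{lem:timehomtransformation} almost verbatim: its proof shows that $\tilde X$ is $\mathcal{\tilde F}$-adapted, that $\{\tilde P_t\}$ is a transition function, that the Chapman--Kolmogorov relation holds, and that the Markov property $\tilde{\mathbb{E}}^{(r,x)}[\mathbb{I}\{\tilde X_{s+t} \in B\} \vert \mathcal{\tilde F}_s] = \tilde P_t(\tilde X_s, B)$ holds for all $B \in \mathcal{\tilde E}$. The arguments there go through without change on the smaller $\sigma$-algebras, since the critical measurability input is now supplied in the required product form. The same calculation, read instead through the second bullet point of Lemma \ref{lem:timehomtransformation}, yields the time-inhomogeneous Markov property of $(X, \mathcal{\tilde F}, \{\tilde \PM^{(r,x)}\})$ with transition function $\{P_{s,t}\}$.

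Finally, the explicit formula for $\tilde P_t f_{(u^0,u)}(s,x)$ is immediate from Lemma \ref{lem:affinespacetime}: the computation $\tilde P_t f_{(u^0,u)}(s,x) = \e^{u^0(s+t)} P_{s,s+t} f_u(x) = \e^{u^0 t} \Phi_{s,s+t}(u) \e^{u^0 s + \scalarprod{\psi_{s,s+t}(u),x}}$ follows by applying the defining identity $\tilde P_t f(s,x) = P_{s,s+t} f^{s+t}(x)$ to $f_{(u^0,u)}$ and then using the affine property \eqref{eq:affineprocgeneral}. There is no genuine obstacle in this proof; the only subtle point is the verification of $\mathcal{\tilde E}$-measurability of $\tilde P_t$, which is precisely why continuity of $\Phi$ and $\psi$ (via Lemma \ref{lem:affinemeasurability}) is the right hypothesis here.
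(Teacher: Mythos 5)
Your proposal is correct and follows essentially the same route as the paper: the text preceding the corollary establishes exactly the $\mathcal{\tilde E}$-measurability of $\tilde P_t$ via Lemma \ref{lem:affinemeasurability} and the rectangle computation, and the corollary is then read off from Lemma \ref{lem:timehomtransformation} together with Lemma \ref{lem:affinespacetime}. Your explicit mention of the Dynkin-class extension from rectangles to general $B \in \mathcal{B}(\R_{\geq 0}) \otimes \mathcal{E}_\Delta$ is a small point the paper leaves implicit, but otherwise the arguments coincide.
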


\subsubsection*{Conservative affine processes}
A transition function is conservative if $P_{s,t}(x,E)$ = 1 for all $t \geq s$ and $x \in E$. By the convention $f(\Delta) = 0$, $f_0(x) = \1{x \neq \Delta}$ this is equivalent to $P_{s,t} f_0(x) = 1$ for all $t \geq s$ and $x \in E$. For an affine process in the sense of Definition \ref{def:markov2inhom} this corresponds to $\PM^{(s,x)}(X_t \neq \Delta) = 1$ for all $s,t\geq 0$ and $x \in E$. Note that in this case
\begin{equation} \label{eq:conserv}
\PM^{(s,x)}(X_t \neq \Delta) = \EV[{(s,x)}]{\e^{\scalarprod{0,X_t}}} = P_{s,s+t} f_0(x) = \Phi_{s,s+t}(0)  \e^{\scalarprod{\psi_{s,s+t}(0),x}}.
\end{equation}
We have the following lemma. 
\begin{lemma} \label{lem:conservative}
A continuously affine transition function is conservative if and only if $\Phi_{s,t}(0) = 1$ and $\psi_{s,t}(0) = 0$ for all $t \geq s$.
\end{lemma}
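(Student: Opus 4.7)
The backward implication is immediate: if $\Phi_{s,t}(0) = 1$ and $\psi_{s,t}(0) = 0$, then equation \eqref{eq:conserv} (which is just the affine formula applied at $u=0$, using $f_0(x) = 1$ on $E$ and $f_0(\Delta) = 0$) gives $P_{s,t}(x,E) = P_{s,t} f_0(x) = 1$ for every $x \in E$ and $s \leq t$, i.e.\ the transition function is conservative. So the entire work lies in the forward implication.

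For the forward direction I would argue as follows. Assuming conservativeness, \eqref{eq:conserv} yields the pointwise identity
\begin{equation*}
\Phi_{s,t}(0)\,\e^{\scalarprod{\psi_{s,t}(0),x}} = 1 \qquad \text{for all } x \in E,\ 0 \leq s \leq t.
\end{equation*}
In particular $\Phi_{s,t}(0) \neq 0$, so $(s,t,0) \in \mathcal{Q}$ for every $0 \leq s \leq t$. Since $0 \in \mathcal{U} \cap V$, Lemma \ref{lem:affineproperties3}(ii) then forces $\Im \psi_{s,t}(0) = 0$ and $\Im \Phi_{s,t}(0) = 0$, so both quantities are real, and in fact $\Phi_{s,t}(0) > 0$ because $\e^{\scalarprod{\psi_{s,t}(0),x}} > 0$ for $\psi_{s,t}(0) \in V$.

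Now I would exploit the affine basis from Assumption \ref{ass:affinespan}. Picking $x^0, \dots, x^d \in E$ with $x^1 - x^0, \dots, x^d - x^0$ linearly independent, and dividing the identity at $x = x^i$ by the identity at $x = x^0$, I get
\begin{equation*}
\e^{\scalarprod{\psi_{s,t}(0),\,x^i - x^0}} = 1, \qquad i = 1,\dots,d.
\end{equation*}
Since $\psi_{s,t}(0)$ is real, each exponent is real, and a real solution of $\e^{z}=1$ is $z=0$. Hence $\scalarprod{\psi_{s,t}(0), x^i - x^0} = 0$ for $i = 1, \dots, d$, and linear independence of the $x^i - x^0$ gives $\psi_{s,t}(0) = 0$. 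Plugging back into the original identity yields $\Phi_{s,t}(0) = 1$.

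The only subtle step is ensuring that $\psi_{s,t}(0)$ is real (so that the $2\pi\i\mathbb{Z}$ ambiguity of the complex logarithm collapses to $0$); this is handled cleanly by appealing to Lemma \ref{lem:affineproperties3}(ii) once we have observed that conservativeness automatically places $(s,t,0)$ in $\mathcal{Q}$. Everything else is elementary linear algebra from the affine basis assumption.
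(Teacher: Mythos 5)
Your proposal is correct and follows essentially the same route as the paper: conservativeness forces $\Phi_{s,t}(0)\neq 0$, hence $\e^{\scalarprod{\psi_{s,t}(0),\,x^i-x^0}}=1$ along the affine basis, and the $2\pi\i$ ambiguity is then eliminated before concluding $\psi_{s,t}(0)=0$ and $\Phi_{s,t}(0)=1$. The only cosmetic difference is that you kill the ambiguity by first invoking Lemma \ref{lem:affineproperties3}(ii) to get that $\psi_{s,t}(0)$ is real, whereas the paper reruns the continuity-and-connectedness argument directly --- but since Lemma \ref{lem:affineproperties3}(ii) is itself proved by that very argument, the two proofs rest on the same machinery.
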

\begin{proof}
If the transition function is conservative, then $P_{s,t} f_0(x) = 1$ for all $x \in E$ and $t \geq s$. Hence $\Phi_{s,t}(0) \neq 0$ and so $\e^{\scalarprod{\psi_{s,t}(0),x - x^0}} = 1$. As in Lemma \ref{lem:affineproperties3} by the continuity of $\psi$ and Assumption \ref{ass:affinespan} it follows that $\psi_{s,t}(0) = 0$. Then clearly $\Phi_{s,t}(0) = 1$.
\end{proof}

\section{Càdlàg modifications}  \label{sec:cadlagversion}
In this section we prove that Markov processes with a continuously affine transition function have a càdlàg modification. For general affine transition functions this is not true as the deterministic process $X_t = x + \1{t \leq 1}$ shows.
First we consider affine processes $X$ which are Markov in the sense of Definition \ref{def:markov1}, where we have a fixed probability measure $\PM$. We then extend this to Markov processes $(X,\F,\{ \PM^{(s,x)} \})$ in the sense of Definition $\ref{def:markov2inhom}$.
For time-homogeneous processes this was proved in \citet{CT13}. We follow their treatment.

We equip $E_\Delta$ with the Alexandrov topology, where every open set with a compact complement in $E$ is declared an open neighborhood of $\Delta$. Hence a sequence $(y_k)$ in $E_\Delta$ converges to $\Delta$ if $\Vert y_k \Vert \rightarrow \infty$ (here we use the convention that $\Vert \Delta \Vert = \infty$). 
Furthermore, define the limit through rational points (see e.g. II.61 in \citet{RW00}). For $y: \mathbb{Q}_{\geq 0} \rightarrow E_\Delta$ we say that $\lim_{q \downarrow \downarrow t} y_q$ exists if $\lim_{q_k \downarrow t} y_{q_k}$ exists in $E_\Delta$ for every sequence $q_k \in \mathbb{Q}_{\geq 0}$ with $q_k \downarrow t$ and is independent of the choice of the sequence $(q_k)$. In an analogous way $\lim_{q \uparrow \uparrow t} y_q$ is defined. 

Let $\tilde \Omega \subset \Omega$ be the set where for each $t$ the left and right limits of $X$ through $\mathbb{Q}$ exist in $E_\Delta$. 
We define the $E_\Delta$-valued càdlàg process
\begin{equation} \label{eq:Xtilde}
\tilde{X}_t(\omega) := \left\{
	\begin{array}{ll} 
		\lim_{q \downarrow \downarrow t} X_{q}(\omega)  & \omega \in \tilde \Omega, \\
		\Delta & \omega \notin \tilde \Omega.
	\end{array}
\right.
\end{equation}
We show that the process $\tilde{X}$ is a modification of the original process $X$ by considering the martingales
$$M_t^{T,u} = \Phi_{t,T}(u) \e^{\scalarprod{\psi_{t,T}(u),X_t}} = \EV{\e^{\scalarprod{u,X_T}} \vert \F_t }, \qquad 0 \leq t \leq T, u \in \mathcal{U},$$
which always have a càdlàg modification. 
\begin{lemma} \label{lem:exponentialcadlag}
There is a set $ \Omega_0$ with $\PM(\Omega_0) = 1$, such that on $\Omega_0$
\begin{equation} \label{eq:rationalexponentialmartingale}
\tilde M_t^{T,u}(\omega) = \lim_{q \downarrow \downarrow t} M_q^{T,u}(\omega) = \lim_{q \downarrow \downarrow t} \Phi_{q,T}(u) \e^{\scalarprod{\psi_{q,T}(u),X_q}}(\omega)
\end{equation}
exists for $0 \leq t < T$ and defines a $\C$-valued càdlàg function\footnote{For $t = T$ we set $M_T^{T,u} := \e^{\scalarprod{u,X_T}}$.} on $[0,T]$ for almost all (in the sense of Lebesgue) $(T,u) \in (0,\infty) \times \mathcal{U}$. 
\end{lemma}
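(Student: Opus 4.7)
The plan is to exploit the fact that for each fixed $(T,u)$ the process $M^{T,u}$ is a bounded complex-valued martingale on $[0,T]$, apply Doob regularization pointwise in $(T,u)$, and then promote the separate null sets to a common one via a Fubini argument on the triple $(\omega,T,u)$.

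First I would verify that $M^{T,u} = (M_t^{T,u})_{0 \leq t \leq T}$ is a bounded complex $\F$-martingale: the defining identity $M_t^{T,u} = \EV{f_u(X_T)\mid \F_t}$ follows from the Markov property combined with the affine transition function, the bound $|M_t^{T,u}| \leq \|f_u\|_\infty \leq k$ for $u \in \mathcal{U}_k$ is immediate, and the martingale property is the tower property. Writing $M^{T,u} = \Re M^{T,u} + \i \Im M^{T,u}$, both $\Re M^{T,u}$ and $\Im M^{T,u}$ are real bounded $\F$-martingales, so Doob's upcrossing inequality applied to each on the countable index set $\mathbb{Q}_{\geq 0} \cap [0,T]$ yields, for each fixed $(T,u)$, a $\PM$-null set $N_{T,u}$ off which the rational restriction $q \mapsto M_q^{T,u}(\omega)$ admits both one-sided limits at every $t \in [0,T]$.

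Next I would consolidate the exceptional sets via Fubini. Define
$$A := \{(\omega,T,u) \in \Omega \times (0,\infty) \times \mathcal{U}:\ \text{the rational restriction of } M^{T,u}(\omega) \text{ on } [0,T] \text{ lacks a one-sided limit at some } t\}.$$
By Lemma \ref{lem:affineproperties} the maps $(T,u) \mapsto \Phi_{q,T}(u)$ and $(T,u) \mapsto \psi_{q,T}(u)$ are continuous on each $\mathcal{Q}_k$, so $(\omega,T,u) \mapsto M_q^{T,u}(\omega) = \Phi_{q,T}(u)\, \e^{\scalarprod{\psi_{q,T}(u), X_q(\omega)}}$ is jointly Borel-measurable for every rational $q$. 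The failure of a one-sided limit at some $t \in [0,T]$ can be expressed as the existence of rationals $a < b$ with infinite upcrossing count of $\Re M^{T,u}$ or $\Im M^{T,u}$ over $(a,b)$ along $\mathbb{Q}_{\geq 0} \cap [0,T]$, a countable union/intersection of Borel sets; hence $A$ is jointly measurable. Since each section $A^{T,u}$ is $\PM$-null by the first step, Tonelli gives $(\PM \otimes \lambda)(A) = 0$, where $\lambda$ is Lebesgue measure on $(0,\infty) \times \mathcal{U}$ inherited from $\R \times (V + \i V)$. Fubini then supplies a set $\Omega_0 \subset \Omega$ with $\PM(\Omega_0) = 1$ such that for every $\omega \in \Omega_0$ the slice $A^\omega = \{(T,u): (\omega,T,u) \in A\}$ has zero Lebesgue measure.

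For such $\omega$ and $(T,u) \notin A^\omega$ the limit $\tilde M_t^{T,u}(\omega) = \lim_{q \downarrow \downarrow t} M_q^{T,u}(\omega)$ is well-defined for every $0 \leq t < T$; right-continuity on $[0,T)$ is automatic from the construction, and the existence of left limits on $[0,T]$ follows from the existence of rational left-limits at every point, so $t \mapsto \tilde M_t^{T,u}(\omega)$ is càdlàg on $[0,T]$ after setting $\tilde M_T^{T,u}(\omega) := \e^{\scalarprod{u, X_T(\omega)}}$ at the endpoint. The main obstacle will be the joint Borel measurability of $A$ in the triple $(\omega,T,u)$: without the continuity of $\Phi$ and $\psi$ on $\mathcal{Q}_k$ that Lemma \ref{lem:affineproperties} provides under the continuously affine assumption, one could not realize $(\omega,T,u) \mapsto M_q^{T,u}(\omega)$ as a Borel function of its arguments, and the Fubini step would collapse. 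Once joint measurability is secured, the rest is routine Doob regularization.
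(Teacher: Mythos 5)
Your proposal is correct and follows essentially the same route as the paper: apply Doob's regularity theorem to the bounded martingales $M^{T,u}$ for each fixed $(T,u)$, then use Fubini/Tonelli on the product $\Omega \times (0,\infty) \times \mathcal{U}$ to trade the $(T,u)$-dependent $\PM$-null sets for a single set $\Omega_0$ of full $\PM$-measure off which the exceptional $(T,u)$ form a Lebesgue-null set. Your explicit justification of the joint measurability of the bad set (via the continuity of $\Phi$ and $\psi$ on $\mathcal{Q}_k$ and the upcrossing characterization) fills in a detail the paper merely asserts.
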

\begin{proof}
By Doob's regularity theorem (Theorem II.65.1, \citet{RW00}) \eqref{eq:rationalexponentialmartingale} defines a càdlàg function $\PM$-a.s.. However, the null set where this is not true depends on $(T,u)$. Define the $\mathcal{A} \otimes \mathcal{B}((0,\infty)  \times \mathcal{U})$-measurable set
\begin{equation}
\Gamma = \{ (\omega,T,u) \in \Omega \times (0, \infty) \times \mathcal{U}: \eqref{eq:rationalexponentialmartingale}\text{ is not a càdlàg function} \}.
\end{equation}
By the considerations above together with Fubini's theorem we obtain
$$ \int_{\Omega} \int_{(0,\infty)\times\mathcal{U}} \mathbb{I}_\Gamma(\omega,T,u) \dd{\lambda} \dd{\PM} =  \int_{(0,\infty)\times\mathcal{U}} \int_{\Omega} \mathbb{I}_\Gamma (\omega,T,u)  \dd{\PM} \dd{\lambda} = 0, $$
where $\lambda$ denotes the Lebesgue measure on $(0,\infty) \times \mathcal{U}$. This proves the result.
\end{proof}

To infer the càdlàg property of $\tilde X$ from the càdlàg property of the martingales $\tilde M$ we use the purely analytical Lemma \ref{lem:xconvergence}. To formulate this lemma we have to introduce some notation. 
Let $\Re$ and $\Im$ be the real and imaginary part of a complex number (this is then interpreted componentwise on $V + \i V$).  
Denote by $\Pi$ the projection on $\mathrm{span}(\Re \mathcal{U}) \subset V$ and by $\Pi^\bot = id_V - \Pi$ the projection to the orthogonal complement $\mathrm{span}(\Re \mathcal{U})^\bot \subset V$.
Also denote by $\Pi$ and $\Pi^\bot$ the extension of these projections to $V + \i V$, e.g. $\Pi(u) = \Pi (\Re u) + \i \Pi (\Im u)$. 
Fix $m \in \mathbb{N}$ and $r > 0$ and define the set $K = \{ u \in \mathcal{U}^m: \Vert u \Vert \leq r \}$. Let $p$ be the dimension of $\mathrm{span}(\mathcal{\Re U})$. Then there exist linearly independent $(u_1, \dots, u_{p}) \in K \cap \Re \mathcal{U}$ and linearly independent $(u_{p+1},\dots,u_d) \in \Pi^\bot K$. 
Fix $s>0$. $\Phi$ is continuous on the compact set $\{(t,T,u): 0 \leq o(T,u) \leq t \leq T \leq s +1, u \in K \}$. Hence it is uniformly continuous. By Lemma \ref{lem:affineproperties3} iii) $\Phi_{s,s}(u) = 1$, so for each $\tilde c > 0$ there is $\delta > 0$ such that $\vert \Phi_{t,T}(u) \vert \geq \vert \Phi_{s,s}(u) \vert - \tilde c = 1 - \tilde c$ for all $0 \leq s-\delta \leq t$, $s \leq T \leq s+\delta$. Choosing e.g. $\tilde c = \frac{1}{2}$ this gives a compact set on which also $\psi$ is continuous and hence uniformly continuous and bounded. Hence there is $\epsilon > 0$, $\eta > 0$ such that for every $t \in \mathcal{I}_s = (s,s+\epsilon)$ and $q$ in $(s-\epsilon,s+\epsilon) \cap [0,t)$
\begin{equation} \label{eq:linindu} \Pi \psi_{q,t}(\bar u_1), \dots, \Pi \psi_{q,t}(\bar u_{p}) \quad \text{ as well as } \quad \Pi^\bot \psi_{q,t}(\bar u_{p+1}), \dots, \Pi^\bot \psi_{q,t}(\bar u_{d}) \end{equation}
are linearly independent for all $\Vert u_i - \bar u_i \Vert < \eta$, $1 \leq i \leq d$ 
and
$$ \inf_{u \in K} \vert \Phi_{q,t}(u) \vert > c \text{ and } \sup_{u \in K} \Vert \psi_{q,t}(u) \Vert^2 < C .$$
\begin{lemma}
\label{lem:xconvergence}
Let $q_k \rightarrow s$ and $x_{q_k}$ be a sequence with values in $E_\Delta$. Then
\begin{enumerate}[(i) ]
\item If for Lebesgue-almost all $(t,u) \in \mathcal{I}_s \times K$
$$ \lim_{k \rightarrow \infty} N_{q_k}^{t,u} := \lim_{k \rightarrow \infty} \e^{\scalarprod{\psi_{q_k,t}(t,u),x_{q_k}}} \in \C \setminus \{ 0 \}, $$
then also $\lim_{k \rightarrow \infty} x_{q_k}$ exists in $E$. 
\item If there exists some $(t,u) \in \mathcal{I}_s \times K$ such that 
$$ \lim_{k \rightarrow \infty} \e^{\scalarprod{\psi_{q_k,t}(,u),x_{q_k}}} = 0,$$
then $\lim_{k \rightarrow \infty} \Vert x_{q_k} \Vert = \infty$, i.e. $x_{q_k} \rightarrow \Delta$. If  $(x_{q_k})$ is even an $E$-valued sequence, then $\mathrm{relint}(\Re \mathcal{U}) \neq \emptyset$ and for all $u \in \mathrm{relint}(\Re \mathcal{U})$
$$\lim_{k \rightarrow \infty} \e^{\scalarprod{u,x_{q_k}}} = 0.$$
\end{enumerate}
\end{lemma}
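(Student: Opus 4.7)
The statement is purely analytic: it combines the continuity, boundedness and linear-independence data for $\psi_{q,t}$ set up immediately before the lemma to read off the limit behaviour of $x_{q_k}$ from that of the scalar exponentials $\e^{\scalarprod{\psi_{q_k,t}(u),x_{q_k}}}$. The plan is to control $\Pi x_{q_k}$ and $\Pi^\bot x_{q_k}$ separately, using the real basis vectors $\bar u_i$ ($i\leq p$) for the first and the complex ones ($i>p$) for the second.

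For (i): By Fubini applied to the Lebesgue-a.e.\ hypothesis, pick $t\in\mathcal{I}_s$ at which convergence to a nonzero limit holds for almost every $u\in K$; by continuity of $\psi$ pick perturbations $\bar u_1,\dots,\bar u_d$ close to $u_1,\dots,u_d$ at which convergence holds and such that linear independence of $\Pi\psi_{q,t}(\bar u_i)$ ($i\leq p$) and of $\Pi^\bot\psi_{q,t}(\bar u_i)$ ($i>p$) is preserved. For $i\leq p$, Lemma~\ref{lem:affineproperties3}(ii) gives $\psi_{q_k,t}(\bar u_i)\in\Re\mathcal{U}\subset\mathrm{span}(\Re\mathcal{U})$, so the exponential is positive; nonzero convergence upgrades to convergence of $\scalarprod{\psi_{q_k,t}(\bar u_i),\Pi x_{q_k}}$, and passing to a subsequence along which the bounded vectors $\psi_{q_k,t}(\bar u_i)$ converge to a basis of $\mathrm{span}(\Re\mathcal{U})$ yields convergence of $\Pi x_{q_k}$. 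For $i>p$ the modulus of the exponential gives convergence of $\scalarprod{\Re\psi_{q_k,t}(\bar u_i),x_{q_k}}$ (already known via $\Pi x_{q_k}$), while the argument yields convergence modulo $2\pi$ of $\scalarprod{\Im\psi_{q_k,t}(\bar u_i),x_{q_k}}$; splitting along $\Pi$ and $\Pi^\bot$ and inserting the known convergence of $\Pi x_{q_k}$ leaves $\scalarprod{\Pi^\bot\Im\psi_{q_k,t}(\bar u_i),\Pi^\bot x_{q_k}}$ to be handled. This is the main technical obstacle: turning mod-$2\pi$ convergence into true convergence. The proposed fix is to extract a subsequence where either $\Pi^\bot x_{q_k}$ converges or $\Vert\Pi^\bot x_{q_k}\Vert\to\infty$ with $\Pi^\bot x_{q_k}/\Vert\Pi^\bot x_{q_k}\Vert\to y^*$; in the unbounded case the mod-$2\pi$ hypothesis applied at densely many generic perturbations $\bar u$ in the a.e.\ set near $\bar u_i$ forces the numbers $\Vert\Pi^\bot x_{q_k}\Vert\scalarprod{\Pi^\bot\Im\psi_{q_k,t}(\bar u),y^*}$ to converge mod $2\pi$ simultaneously, which for generic (mutually incommensurable) values of the pairing is impossible unless $\Vert\Pi^\bot x_{q_k}\Vert$ stays bounded, a contradiction. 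Hence $\Pi^\bot x_{q_k}$ is bounded, the mod-$2\pi$ convergence is genuine, and linear independence of $\Pi^\bot\psi_{q_k,t}(\bar u_i)$ yields convergence of $\Pi^\bot x_{q_k}$. Combining the two components, $x_{q_k}\to x_\infty$ in $V$, and since the moduli of the exponentials stay bounded away from $0$ the limit lies in $E$.

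For (ii): $\e^{\scalarprod{\psi_{q_k,t}(u),x_{q_k}}}\to0$ means $\scalarprod{\Re\psi_{q_k,t}(u),x_{q_k}}\to-\infty$, and the uniform bound $\Vert\psi_{q_k,t}(u)\Vert<\sqrt{C}$ forces $\Vert x_{q_k}\Vert\to\infty$, establishing the first claim. If additionally $(x_{q_k})\subset E$, the same computation combined with $\Re\psi_{q_k,t}(u)\in\Re\mathcal{U}\subset\mathrm{span}(\Re\mathcal{U})$ also gives $\Vert\Pi x_{q_k}\Vert\to\infty$. The set $\Re\mathcal{U}$ is convex and contains $0$, hence in finite dimensions its relative interior is nonempty. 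Pass to a subsequence with $y_k:=\Pi x_{q_k}/\Vert\Pi x_{q_k}\Vert\to y^*$ of unit length in $\mathrm{span}(\Re\mathcal{U})$. For any $v\in\Re\mathcal{U}$, boundedness of $\e^{\scalarprod{v,x_{q_k}}}$ on the $E$-valued sequence gives $\scalarprod{v,\Pi x_{q_k}}\leq M_v$, hence $\scalarprod{v,y^*}\leq0$. Because $y^*\in\mathrm{span}(\Re\mathcal{U})\setminus\{0\}$ the functional $\scalarprod{\cdot,y^*}$ is not identically zero on $\mathrm{span}(\Re\mathcal{U})$, so it is strictly negative at some $v_0\in\Re\mathcal{U}$; for $v\in\mathrm{relint}(\Re\mathcal{U})$ one has $v+\lambda(v-v_0)\in\Re\mathcal{U}$ for all sufficiently small $\lambda>0$, and combining this with $\scalarprod{\cdot,y^*}\leq0$ on $\Re\mathcal{U}$ forces $\scalarprod{v,y^*}<0$. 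Therefore $\scalarprod{v,x_{q_k}}=\Vert\Pi x_{q_k}\Vert\scalarprod{v,y_k}\to-\infty$ along the subsequence; since every subsequence admits such a further subsequence with the same conclusion, $\e^{\scalarprod{v,x_{q_k}}}\to0$ for the full sequence.
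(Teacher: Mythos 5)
Part (ii) of your argument is correct and is essentially a dual formulation of the paper's: where the paper extracts a direction $g$ with $\scalarprod{g,x_{q_k}}\to\infty$ and uses $u+\epsilon g\in\mathrm{relint}(\Re\,\mathcal{U})$, you normalize $\Pi x_{q_k}$ to a limit direction $y^*$ and show $\scalarprod{\cdot\,,y^*}<0$ on the relative interior; both routes work. One small correction there: justifying $\mathrm{relint}(\Re\,\mathcal{U})\neq\emptyset$ by "convex and contains $0$" is vacuous (and would make the conclusion false at $u=0$); what is actually needed, and what the divergence of $\Pi x_{q_k}$ supplies, is a \emph{nonzero} element of $\Re\,\mathcal{U}$.

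The genuine gap is in part (i), precisely at the step you yourself flag as the main obstacle: ruling out $\Vert\Pi^\bot x_{q_k}\Vert\to\infty$. Two things go wrong. First, a single normalized limit direction $y^*$ does not control the divergence: writing $\Pi^\bot x_{q_k}=a_k y_k$ with $y_k\to y^*$, the term $a_k(y_k-y^*)$ is in general unbounded, so the phase is not $a_k\scalarprod{\Pi^\bot\Im\psi_{q_k,t}(u),y^*}$ plus something convergent. This is exactly why Lemma \ref{lem:divergencedirections} extracts a whole orthonormal family $g_1,g_2,\dots$ of divergence directions with non-increasing rates, so that $x_{q_k}-\sum_i\scalarprod{x_{q_k},g_i}g_i$ genuinely converges along a subsequence. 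Second, the claim that simultaneous mod-$2\pi$ convergence at "generic, mutually incommensurable" values of the pairing is impossible is a heuristic, not a proof; the rigorous version is the oscillatory-integral argument of \eqref{eq:riemanlebesguedivergent} (dominated convergence against the Riemann--Lebesgue lemma), and for that one needs the phase to be \emph{linear} in $u$ on an open set in the directions $g_i$. That linearity, $\scalarprod{\psi_{s,t}(u),g_i}=\scalarprod{\lambda_i(s,t),u}$, is the hard structural content of Lemma \ref{lem:divergencedirections}, obtained from positive definiteness of the normalized characteristic function via the inequality \eqref{eq:r1r2greater12}; nothing in your sketch substitutes for it. Without these two ingredients the passage from mod-$2\pi$ convergence to boundedness and then convergence of $\Pi^\bot x_{q_k}$ does not close.
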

We prove this lemma at the end of this section. 
Using this lemma we obtain the following result. 
\begin{lemma} \label{thm:tildeXmod}
$\tilde X$ is a modification\footnote{Here $X$ is already interpreted as a process on the extended probability space $(\Omega,\mathcal{A}^{\PM})$.} of $X$ on $(\Omega,\mathcal{A}^{\PM})$, i.e. $\PM(\tilde{X}_t = X_t) = 1$ for all $t \geq 0$. 
\end{lemma}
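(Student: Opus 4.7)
The plan is to first show $\PM(\tilde \Omega) = 1$, so that $\tilde X$ is well defined as an $E_\Delta$-valued càdlàg process, and then to identify $\tilde X_t$ with $X_t$ almost surely using the stochastic continuity of $X$ (which for a continuously affine transition function follows from Lemma \ref{lem:provestochcont} and the remark after Definition \ref{def:stochcont}).

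I would start from the full-measure set $\Omega_0$ furnished by Lemma \ref{lem:exponentialcadlag}, on which $\tilde M_t^{T,u}(\omega) = \lim_{q \downarrow \downarrow t} M_q^{T,u}(\omega)$ is a well-defined càdlàg function of $t \in [0,T]$ for Lebesgue-almost every $(T,u) \in (0,\infty) \times \mathcal{U}$. Fix $\omega \in \Omega_0$ and $s \geq 0$, and let $K$ and $\mathcal{I}_s$, together with the constants $c,C,\eta,\epsilon$, be chosen exactly as in the paragraph preceding Lemma \ref{lem:xconvergence}. Since $\mathcal{I}_s \times K$ has positive Lebesgue measure in $(0,\infty) \times \mathcal{U}$, a Fubini argument produces a Lebesgue-full subset $A_s(\omega) \subset \mathcal{I}_s \times K$ on which $\tilde M_s^{t,u}(\omega)$ exists. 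Together with the continuity and non-vanishing of $\Phi$ on the compact set used to define $K$ and $\mathcal{I}_s$, this yields
\[
\lim_{q \downarrow \downarrow s} \e^{\scalarprod{\psi_{q,t}(u),X_q(\omega)}} = \frac{\tilde M_s^{t,u}(\omega)}{\Phi_{s,t}(u)}
\]
for every $(t,u) \in A_s(\omega)$, where the limit is taken along any rational sequence $q_k \downarrow s$.

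Now the dichotomy of Lemma \ref{lem:xconvergence} applies. Either the right-hand side is non-zero for a.e. $(t,u) \in \mathcal{I}_s \times K$, and part (i) produces a limit $x \in E$; or there is some pair $(t,u)$ at which it vanishes, and part (ii) forces $X_{q_k}(\omega) \to \Delta$ in $E_\Delta$. The limit does not depend on the chosen rational sequence: the limiting exponentials are determined by $\tilde M_s^{t,u}(\omega)/\Phi_{s,t}(u)$, and the linear-independence condition built into the construction of $K$ and $\mathcal{I}_s$ ensures that the values $\e^{\scalarprod{\psi_{s,t}(u),x}}$, $(t,u) \in \mathcal{I}_s \times K$, separate points of $E$, ruling out distinct limits; the $\Delta$ case is unambiguous. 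A symmetric argument with $q \uparrow \uparrow s$ and $(\max(s-\epsilon,0),s)$ in place of $\mathcal{I}_s$ handles left limits. Hence $\omega \in \tilde \Omega$, which gives $\PM(\tilde \Omega) \geq \PM(\Omega_0) = 1$.

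Once $\PM(\tilde \Omega) = 1$ is established, fix $t \geq 0$ and pick rationals $q_k \downarrow t$. The stochastic continuity of $X$ yields $X_{q_k} \to X_t$ in $\PM$-probability, so a subsequence converges $\PM$-almost surely; on $\tilde \Omega$ the full sequence converges pointwise to $\tilde X_t$ by construction, and comparing the two limits on the intersection gives $\PM(X_t = \tilde X_t) = 1$. The main obstacle is the step invoking Lemma \ref{lem:xconvergence}, because one has to carefully chain the Fubini-type slicing of the Lebesgue null set on $(0,\infty) \times \mathcal{U}$ with the continuity-based choice of $K$ and $\mathcal{I}_s$ to ensure that, for each $s$ simultaneously, the non-vanishing and boundedness estimates $\inf_{u \in K}|\Phi_{q,t}(u)| > c$ and $\sup_{u \in K}\|\psi_{q,t}(u)\|^2 < C$ transfer the quantitative content of Lemma \ref{lem:exponentialcadlag} into convergence of the exponentials of $X_q(\omega)$.
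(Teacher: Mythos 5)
Your proposal is correct and follows essentially the same route as the paper: it uses the full-measure set from Lemma \ref{lem:exponentialcadlag}, feeds the resulting convergence of the exponentials $\e^{\scalarprod{\psi_{q,t}(u),X_q}}$ on almost every $(t,u) \in \mathcal{I}_s \times K$ into the dichotomy of Lemma \ref{lem:xconvergence} to show $\Omega_0 \subset \tilde\Omega$, and then identifies $\tilde X_t$ with $X_t$ via stochastic continuity and almost-sure convergence along a subsequence. The extra care you take about independence of the limit from the chosen rational sequence is a harmless refinement of the same argument.
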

\begin{proof}
By Lemma \ref{lem:exponentialcadlag} we can fix ${\Omega_0}$ with $\PM(\Omega_0)=1$ such that for each $\omega \in {\Omega_0}$ \eqref{eq:rationalexponentialmartingale}
is a càdlàg function for almost all $(T,u)$. Fix such $\omega \in \Omega_0$. Let $q_k \uparrow \uparrow s$ or $q_k \downarrow \downarrow s$. For all $(t,u) \in \mathcal{I}_s \times K$ and $k$ large it holds that  $\Phi_{q_k,t}(u) \neq 0$. For almost all $(t,u) \in \mathcal{I}_s \times K$ $\lim_{k \rightarrow \infty} M_{q_k}^{t,u}$ exists finitely valued and since $\Phi_{q_k,T}(u) \neq 0$ also $\lim_{k \rightarrow \infty} \e^{\scalarprod{\psi_{q_k,t}(u),X_{q_k}(\omega)}}$ exists finitely valued. 
 By Lemma \ref{lem:xconvergence}
$\lim_{k \rightarrow \infty} X_{q_k}(\omega) \in E_\Delta$. 
This holds for all $s \geq 0$ and hence for each $\omega \in \Omega_0$ the left and right limits of $X$ through $\mathbb{Q}$ exist in $E_\Delta$ and $\tilde{X}_s(\omega) = \lim_{q_k \downarrow \downarrow s} X_{q_k}(\omega)$ defines a càdlàg function in $E_\Delta$. Then $\Omega_0 \subset \tilde \Omega$ and $\PM(\tilde \Omega) = 1$ (for $\PM(\tilde \Omega)$ to be defined we consider the $\PM$-completion of $\mathcal{A}$). 

It remains to be proved that the $E_\Delta$-valued càdlàg process $\tilde X$  is a modification of $X$. Since $X$ is stochastically continuous (see the remark after Definition \ref{def:stochcont}) and convergence in probability implies almost sure convergence along a subsequence we have for each $t$ that 
$$\lim_{q \downarrow \downarrow t} X_q \overset{p}{\rightarrow} X_t  \Rightarrow \lim_{k \rightarrow \infty} X_{q_k}(\omega) = X_t(\omega) \text{ for all $\omega \in \bar \Omega$, where } \PM(\bar \Omega)=1.$$
However, on $\tilde \Omega$ we have that 
$$ \lim_{k \rightarrow \infty} X_{q_k}(\omega) = \lim_{q \downarrow \downarrow t} X_q(\omega) = \tilde{X}_t.$$ This yields $\tilde{X}_t = X_t$ on $\tilde \Omega \cap \bar \Omega$. Since $\PM(\tilde \Omega \cap \bar \Omega) = 1$, this proves the lemma. 
\end{proof}
\begin{remark}
Under a single measure $\PM$ we could also have defined $\tilde X = \Delta$ on $\Omega_0$ to get a modification. But $\Omega_0$ depends on $\PM$ while $\tilde \Omega$ does not. This is important when considering a Markov process $(X,\F,\{\PM^{(s,x)} \})$, where there is a whole family of probability measures. 
\end{remark}

Next we show that $\tilde X$ a.s. stays in $\Delta$, as soon as it is $\Delta$ or approaches $\Delta$ from the left. 
Note that this is true for $X$ by definition, but not necessarily for $\tilde X$. 
Define 
\begin{equation} \label{eq:explosiontime}
\begin{aligned}
T_\Delta & := \inf \{ t>0: \tilde X_{t-}  = \Delta \text{ or } \tilde X_t = \Delta \}. \\
\end{aligned}
\end{equation}

\begin{lemma} \label{thm:stayinfinity}
$\tilde X = \Delta$ on $[T_\Delta,\infty)$ $\PM$-a.s.. 
\end{lemma}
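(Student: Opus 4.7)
The plan is to use the nonnegative bounded martingale
$$M^{T,0}_t = \PM(X_T \neq \Delta \vert \F_t) = P_{t,T}(X_t,E) = \Phi_{t,T}(0)\, \e^{\scalarprod{\psi_{t,T}(0),X_t}}, \quad 0 \leq t \leq T,$$
together with the absorbing nature of $0$ for nonnegative càdlàg martingales. By Lemma \ref{lem:exponentialcadlag} we may fix a $T > 0$ for which the càdlàg modification $\tilde M^{T,0}$ of $M^{T,0}$ exists on $[0,T]$. The idea is to realize $T_\Delta$ as the hitting time of $0$ by $\tilde M^{T,0}$ (on a favourable interval) and to transport the conclusion back to $\tilde X$.

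First I would show that
$$\tilde M^{T,0}_t = \Phi_{t,T}(0)\, \e^{\scalarprod{\psi_{t,T}(0),\tilde X_t}} \quad \text{for all } 0 \leq t \leq T,\ \PM\text{-a.s.,}$$
using the convention $\e^{\scalarprod{\cdot,\Delta}} = 0$. By the joint continuity of $\Phi$ and $\psi$ on $\mathcal{Q}_k$ (Lemma \ref{lem:affineproperties}) and the càdlàg paths of $\tilde X$ from Lemma \ref{thm:tildeXmod}, the right-hand side is a càdlàg process; the left-hand side is càdlàg by Doob's regularity theorem; and the two processes agree $\PM$-a.s. at every fixed time $t$ because $\tilde X_t = X_t$ a.s.\ (Lemma \ref{thm:tildeXmod}). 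Two càdlàg processes which agree a.s.\ at each fixed time are indistinguishable, which yields the claim.

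Next I fix $t_0 \geq 0$ and exploit $\Phi_{t_0,t_0}(0) = 1$ (Lemma \ref{lem:affineproperties3}(iii)) and the continuity of $\Phi$ to choose $T > t_0$ small enough so that $\Phi_{s,T}(0) > 0$ on $[t_0,T]$. On this interval the displayed identity shows $\tilde M^{T,0}_s = 0$ iff $\tilde X_s = \Delta$, and the analogous equivalence holds for the left limits. Therefore $T_\Delta \wedge T \vee t_0$ coincides with the first time at which $\tilde M^{T,0}$ hits $0$ (either as value or as left limit) on $[t_0,T]$. Now invoke the standard fact that a bounded nonnegative càdlàg martingale $N$ reaching $0$ at a stopping time $\sigma$ remains at $0$ on $[\sigma,\infty)$: optional stopping gives $\E[N_t] \leq \E[N_\sigma] = 0$ for $t > \sigma$, hence $N_t = 0$ a.s., and right-continuity extends this to the whole interval; the left-limit case $N_{\sigma-}=0$ is reduced to the previous one via $\E[N_\sigma - N_{\sigma-} \vert \F_{\sigma-}]=0$ at predictable times. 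Applied to $\tilde M^{T,0}$, this yields $\tilde X \equiv \Delta$ on $[T_\Delta,T] \cap [t_0,T]$ almost surely. Covering $[0,\infty)$ by countably many such intervals $[t_0,T]$ completes the argument.

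The main obstacle is the indistinguishability step: turning the pointwise equality $M^{T,0}_t = \Phi_{t,T}(0)\,\e^{\scalarprod{\psi_{t,T}(0),X_t}}$ into a genuine equality of càdlàg processes on $[0,T]$ critically uses the continuity of $(\Phi,\psi)$ provided by the continuously affine assumption. A subsidiary technicality is ensuring that, despite the possible zeros of $\Phi_{\cdot,T}(0)$, one can still cover all of $[0,\infty)$ with intervals on which the equivalence $\tilde M^{T,0} = 0 \Leftrightarrow \tilde X = \Delta$ holds; continuity of $\Phi$ near the diagonal together with $\Phi_{t,t}(0)=1$ makes this straightforward.
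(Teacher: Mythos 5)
There is a genuine gap, and it sits exactly at the heart of the lemma. Your argument hinges on the equivalence $\tilde M^{T,0}_s = 0 \Leftrightarrow \tilde X_s = \Delta$, and this fails in the explosion case, which is the only nontrivial case of the statement. Recall that $T_\Delta$ is triggered not only when $X$ actually jumps to the cemetery point (in which case $X$, and hence $\tilde X$, stays at $\Delta$ forever by the very definition of the process, so nothing needs to be proved) but also when $X$ remains $E$-valued while $\Vert X_q \Vert \to \infty$. In that case $f_0(X_q) = \e^{\scalarprod{0,X_q}} = 1$ for every $q$, so $M^{T,0}_q = \Phi_{q,T}(0)\,\e^{\scalarprod{\psi_{q,T}(0),X_q}}$ does not tend to $0$: for a conservative transition function it is identically $\Phi_{q,T}(0)=1$, and in general $\psi_{q,T}(0)$ vanishes or is small near the diagonal while $\Phi_{q,T}(0)$ stays close to $1$. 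For the same reason your indistinguishability step breaks down: the function $f_0$, extended by $f_0(\Delta)=0$, is not continuous on $E_\Delta$ with the Alexandrov topology, so $t \mapsto \Phi_{t,T}(0)\,\e^{\scalarprod{\psi_{t,T}(0),\tilde X_t}}$ need not be càdlàg at explosion times, and the principle that two càdlàg processes agreeing a.s.\ at each fixed time are indistinguishable cannot be invoked. One cannot assume away this pathology, since ruling it out is precisely what the lemma asserts.

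The paper's proof does use the same absorbing-at-zero principle for nonnegative càdlàg martingales (Theorem II.78.1 of Rogers--Williams), but applied to $M^{T,u}$ for a real $u \in \mathrm{relint}(\Re\, \mathcal{U})$ rather than $u=0$. The whole difficulty is to produce such a $u$ for which the martingale genuinely vanishes at an explosion: this is supplied by the analytic Lemma \ref{lem:xconvergence}, which shows that if an $E$-valued sequence explodes while the limits $\lim_q \e^{\scalarprod{\psi_{q,T}(u),X_q}}$ exist finitely for a.e.\ $(T,u)$, then necessarily $\mathrm{relint}(\Re\, \mathcal{U}) \neq \emptyset$ and $\e^{\scalarprod{u,X_q}} \to 0$ for every $u$ in that relative interior; one then picks $T$ close enough to $T_\Delta$ that $\Phi_{\cdot,T}(u)>0$ and $\psi_{\cdot,T}(u)$ stays in $\mathrm{relint}(\Re\,\mathcal{U})$. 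Without this input the martingale argument is blind to explosion, and your proof only covers the killing case, which is already trivial from the definitions of $X$ and $\tilde X$.
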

\begin{proof}
By the definition of $\tilde X$, $\tilde X_{s-} = \Delta$ implies $\lim_{q \uparrow \uparrow s} \Vert X_q \Vert = \infty$ and $\tilde X_{s}  = \Delta$ implies $\lim_{q \downarrow \downarrow s} \Vert X_q \Vert = \infty$. 
In either case if there is a subsequence $q_k$ with $X_{q_k}(\omega) = \Delta$ (the subsequence can depend on $\omega$), then by the definition of $X$ and $\Delta$, $X = \Delta$ on $(s,\infty)$ and then $\tilde X = \Delta$ on $[s,\infty)$. So without loss of generality we can restrict to $E$-valued sequences. 

Assume there is $\omega \in \Omega_0$ and $q \rightarrow s$ with $\lim_{q \rightarrow s} \Vert X_q(\omega) \Vert = \infty$, such that $X_q(\omega) \in E$ for all $q$ (if not, the lemma is already true). By Lemma \ref{lem:xconvergence} i) 
there is a subset $(T,u)$ of $\mathcal{I}_{s} \times K$ of positive Lebesgue-measure such that 
$\lim_{q \rightarrow s} \e^{\scalarprod{\psi_{q,T}(u),X_{q}(\omega)}}$
is either zero or infinite or does not exist at all. 
Since $\omega \in \Omega_0$ by Lemma \ref{lem:exponentialcadlag} the limit exists finitely valued for almost all $(T,u) \in \mathcal{I}_s \times K$ (remember that $\Phi$ is continuous and $\Phi \neq 0$ on this set). Hence there are $T(\omega),u(\omega)$ such that
$$\lim_{q \rightarrow s} \e^{\scalarprod{\psi_{t,T(\omega)}(u(\omega)),X_{q}(\omega)}} = 0.$$
Lemma \ref{lem:xconvergence} ii) then yields that $\mathrm{relint}(\Re \mathcal{U}) \neq \emptyset$ and for all $u \in \mathrm{relint}(\Re \mathcal{U})$
\begin{equation} \label{eq:M0}
\lim_{q \rightarrow s} \e^{\scalarprod{u,X_{q}(\omega)}} = 0.
\end{equation}

Hence we can fix $u \in \mathrm{relint}(\Re \mathcal{U})$. By Lemma \ref{lem:affineproperties3} $\Phi_{t,T}(u)$ and $\psi_{t,T}(u)$ are real-valued functions. For $T>0$ we can choose $s<T$ such that  $\psi_{t,T}(u) \in \mathrm{relint}(\Re \mathcal{U})$ and $\Phi_{t,T}(u) > 0$ for $s \leq t \leq T$. Consider the set $\Omega_{s,T} = \{ \omega \in \Omega_0:  s <  T_{\Delta}(\omega) < T \}$ and the nonnegative càdlàg martingale from equation \eqref{eq:rationalexponentialmartingale}
which on $\Omega_0$ satisfies $$\tilde M_t^{T,u}(\omega) =  \Phi_{t,T}(u) \e^{\scalarprod{\psi_{t,T}(u),\tilde X_t(\omega)}}.$$ For $\omega \in \Omega_{s,T}$ the considerations leading to \eqref{eq:M0} give
$$\e^{\scalarprod{\psi_{T_\Delta(\omega),T}(u),\tilde X_{T_\Delta(\omega)}(\omega)}} = 0 \quad \text{ or } \quad \e^{\scalarprod{\psi_{T_\Delta(\omega),T}(u),\tilde X_{T_\Delta(\omega)-}(\omega)}} = 0.$$
By Theorem II.78.1 in \citet{RW00} $\tilde M_t^{T,u}(\omega) = 0$ for all $T_{\Delta}(\omega) \leq t \leq T$ a.s. on $\Omega_{s,T}$.  
 Since $\Phi_{t,T}(u) > 0$, then
$\tilde X_t(\omega) = \Delta$
for $T_{\Delta}(\omega) \leq t \leq T$ a.s. on $\Omega_{s,T}$. Since $T > T_{\Delta}(\omega)$, by the definition of $\tilde X$ also $X_t(\omega) = \Delta$ for some $t \in (T_{\Delta}(\omega),T]$. By the definition of $\Delta$ and $X$ then $X_t(\omega) = \Delta$ on $[T,\infty)$ and hence also $\tilde X_t(\omega)$. 
Since $T$ was arbitrary, the lemma follows. 
\end{proof}
Define the process $$\bar{X} := \tilde{X} \mathbb{I}_{[0,T_{\Delta})} + \Delta \mathbb{I}_{[T_{\Delta}, \infty)}. $$ 
\begin{corollary} \label{cor:cadlag1}
The process $\bar X$ is càdlàg in $E_\Delta$, a modification of $X$ and an affine process on the completed filtered probability space $(\Omega,\mathcal{A}^{\PM},\F^{\PM},\PM)$.  
\end{corollary}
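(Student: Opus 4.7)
The plan is to combine the three preceding lemmas in a straightforward way; each of the three claims (càdlàg, modification, affine) reduces to a one-step deduction.

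For the càdlàg property, I would argue pointwise on $\omega$. By definition of $\tilde X$ in \eqref{eq:Xtilde}, $\tilde X$ is $E_\Delta$-valued and càdlàg on all of $[0,\infty)$ (on $\tilde \Omega$ by construction of the rational limit, and trivially constant $\Delta$ on $\tilde \Omega^c$). On $[0, T_\Delta)$ the process $\bar X$ agrees with $\tilde X$, hence is càdlàg there. At $t = T_\Delta$ one has $\bar X_{T_\Delta} = \Delta$, and the right-continuity $\lim_{s \downarrow T_\Delta} \bar X_s = \Delta$ is immediate because $\bar X \equiv \Delta$ on $[T_\Delta, \infty)$. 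The left limit $\bar X_{T_\Delta -}$ equals $\tilde X_{T_\Delta -}$, which exists in $E_\Delta$ because $\tilde X$ is càdlàg. On $(T_\Delta, \infty)$ the process $\bar X$ is constant and hence càdlàg. Thus $\bar X$ is càdlàg everywhere.

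For the modification statement, Lemma \ref{thm:stayinfinity} gives $\tilde X = \Delta$ on $[T_\Delta, \infty)$ $\PM$-a.s., so by the very definition of $\bar X$ we obtain $\bar X = \tilde X$ $\PM$-a.s. as processes. Combined with Lemma \ref{thm:tildeXmod}, which says $\PM(\tilde X_t = X_t) = 1$ for every $t$, this yields $\PM(\bar X_t = X_t) = 1$ for every $t$, i.e.\ $\bar X$ is a modification of $X$.

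To see that $\bar X$ is affine on $(\Omega, \mathcal{A}^{\PM}, \F^{\PM}, \PM)$, I would invoke Lemma \ref{lem:MarkovCompletion}. Its hypotheses are that $\bar X$ is a modification of the Markov process $X$ and satisfies the cemetery-absorbing property $\bar X_s = \Delta \Rightarrow \bar X_t = \Delta$ for $t > s$. The former was just established, and the latter holds by construction of $\bar X$ via the stopping time $T_\Delta$: if $\bar X_s = \Delta$ then $s \geq T_\Delta$ and hence $\bar X_t = \Delta$ for every $t \geq s$. The lemma then gives that $\bar X$ is a Markov process on the completed filtered probability space with the same transition function $\{P_{s,t}\}$. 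Since this transition function is affine by assumption, $\bar X$ is an affine process.

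The only step requiring genuine care is verifying right-continuity exactly at $T_\Delta$, since one must distinguish between $\tilde X$ (whose value at $T_\Delta$ might lie in $E$ while the left limit is $\Delta$, or vice versa) and $\bar X$ (which is forced to $\Delta$ there by definition); but the enforced definition on $[T_\Delta, \infty)$ makes this immediate, so there is no real obstacle.
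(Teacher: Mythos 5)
Your proposal is correct and follows essentially the same route as the paper: the modification property is obtained by combining Lemma \ref{thm:tildeXmod} with Lemma \ref{thm:stayinfinity} (which forces $\bar X = \tilde X$ a.s.\ as processes), and the affine/Markov property on the completed space is obtained from Lemma \ref{lem:MarkovCompletion} after checking the cemetery-absorbing property of $\bar X$. The only difference is that you spell out the pathwise càdlàg verification at $T_\Delta$ explicitly, which the paper leaves implicit in the construction of $\bar X$.
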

\begin{proof}
By Lemma \ref{thm:tildeXmod} and Lemma \ref{thm:stayinfinity} $\bar X$ is a modification of $X$. 
By Lemma \ref{lem:MarkovCompletion} it is an affine process with the same transition function. Note that $\bar X$ is adapted to $\F^{\PM}$, but not necessarily to $\F$. 
\end{proof}

So far we have worked with a single probability measure $\PM$. This demonstrates the existence of a càdlàg modification for affine Markov processes in the sense of Definition \ref{def:markov1}. Now let $(X,\mathcal{F},\{ \PM^{(s,x)} \})$ be an affine process in the sense of Definition \ref{def:markov2inhom}. For each $(s,x)$ we can apply Corollary \ref{cor:cadlag1} to the affine process $X$ on $(\Omega,\mathcal{A},\F,\PM^{(s,x)})$ with continuously affine transition function $(P_{s+t,s+T})_{0 \leq t \leq T}$ and immediately get that $\bar X$ is a version on $(\Omega,\mathcal{A}^{(s,x)},\F^{(s,x)},\PM^{(s,x)})$, where $\mathcal{A}^{(s,x)}$ is the completed $\sigma$-algebra and  $\F_t^ {(s,x)}$ the completed filtration with respect to the measure $\PM^{(s,x)}$. The $\sigma$-algebras in this case still depend on the measure $\PM^{(s,x)}$. To get a single filtration define
\begin{equation} \label{eq:affinecompletion}
\mathcal{\bar A} := \bigcap_{(s,x)} \mathcal{A}^{(s,x)}, \quad  \mathcal{\bar F}_t := \ \bigcap_{(s,x)} \F_t^{(s,x)}, t \in [0,\infty].
\end{equation}
We can now formulate the final theorem of this section. 
\begin{theorem}
Let $(X,\mathcal{F},\{ \PM^{(s,x)} \})$ be a Markov process with continuously affine transition function. Then $(\bar X, \mathcal{\bar F},\{ \PM^{(s,x)} \})$ on $(\Omega,\mathcal{\bar A})$ is a modification of the affine process $(X,\mathcal{F},\{ \PM^{(s,x)} \})$ which has càdlàg paths. 
\end{theorem}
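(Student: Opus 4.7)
The plan is to reduce to the single-measure case already handled in Corollary \ref{cor:cadlag1}, exploiting that the pathwise construction $\omega \mapsto \bar X(\omega)$ in \eqref{eq:Xtilde} and the following definition of $\bar X$ depends only on $X$ and not on any choice of probability measure. Fix $(r,x)$ and view $X$ as a Markov process on $(\Omega,\mathcal{A},\F,\PM^{(r,x)})$ in the sense of Definition \ref{def:markov1} with transition function $(P_{r+s,r+t})_{0\le s\le t}$. This transition function inherits the continuously affine property from $\{P_{s,t}\}$, so Corollary \ref{cor:cadlag1} applies and yields that $\bar X$ is a càdlàg modification of $X$ on $(\Omega,\mathcal{A}^{(r,x)},\F^{(r,x)},\PM^{(r,x)})$. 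In particular $\PM^{(r,x)}(\bar X_t = X_t)=1$ for every $t$, $\PM^{(r,x)}(\bar X_0=x)=1$, and $\bar X$ is $\F^{(r,x)}$-adapted.

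Next I would verify adaptedness with respect to the smaller filtration $\mathcal{\bar F}$. Since the conclusion of the previous step holds for every $(r,x)$, the random variable $\bar X_t$ is $\F_t^{(r,x)}$-measurable for each $(r,x)$; hence $\bar X_t^{-1}(B)\in\bigcap_{(r,x)}\F_t^{(r,x)}=\mathcal{\bar F}_t$ for every Borel $B\subset E_\Delta$, so $\bar X$ is $\mathcal{\bar F}$-adapted. By construction $\bar X_t=\Delta$ on $\Omega\setminus\tilde\Omega$ and $\bar X_t=\Delta$ once $t\ge T_\Delta$, which gives the cemetery condition required in part i) of the Markov definition. The càdlàg property holds pathwise on all of $\Omega$ by the definition of $\bar X$.

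Finally I would establish the Markov property of $(\bar X,\mathcal{\bar F},\{\PM^{(r,x)}\})$. Under each $\PM^{(r,x)}$ the Markov property with respect to the completed filtration $\F^{(r,x)}$ is inherited from $X$ via the modification $\bar X=X$ a.s., giving
\begin{equation*}
\EV[(r,x)]{f(\bar X_{t+s})\vert \F_s^{(r,x)}} \;=\; P_{r+s,r+s+t}\, f(\bar X_s).
\end{equation*}
Since $\mathcal{\bar F}_s\subset \F_s^{(r,x)}$ and the right-hand side is $\mathcal{\bar F}_s$-measurable (as $\bar X_s$ is), the tower property yields the same identity with $\mathcal{\bar F}_s$ in place of $\F_s^{(r,x)}$, matching \eqref{eq:markov2inho}. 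Measurability of $x\mapsto \PM^{(s,x)}(\bar X_t\in B)=P_{s,s+t}(x,B)$ is unchanged because the transition function is the same as for $X$, so condition ii') of Definition \ref{def:markov2inhom} holds.

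The main subtlety is the tension between the measure-dependent nature of Corollary \ref{cor:cadlag1} (whose completed filtrations $\F^{(r,x)}$ differ from measure to measure) and the desire for a single filtration $\mathcal{\bar F}$ in which $\bar X$ is simultaneously adapted and Markovian. This is resolved precisely by taking $\mathcal{\bar F}_t$ as the intersection of all $\F_t^{(r,x)}$: it is small enough that membership is automatic once membership in each $\F_t^{(r,x)}$ is known, yet still contains $\F_t$, and since $\mathcal{\bar F}_s\subset \F_s^{(r,x)}$ the Markov property only passes to the smaller filtration by projection, which is easy once the transition probabilities are $\mathcal{\bar F}_s$-measurable through $\bar X_s$.
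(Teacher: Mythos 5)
Your argument is correct and follows essentially the same route as the paper: apply Corollary \ref{cor:cadlag1} under each fixed measure $\PM^{(r,x)}$, observe that adaptedness to $\mathcal{\bar F}_t = \bigcap_{(r,x)} \F_t^{(r,x)}$ follows because $\bar X_t$ is $\F_t^{(r,x)}$-measurable for every $(r,x)$, and then push the Markov property down from each $\F_s^{(r,x)}$ to the smaller filtration $\mathcal{\bar F}_s$ via the tower property. Your additional remarks on the cemetery condition and the measurability needed for condition ii') are details the paper leaves implicit, but they do not change the argument.
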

\begin{proof}
The set $\{X_t \neq \bar X_t\}$ has $\PM^{(r,x)}$-measure zero for all $(r,x)$. Hence it is in $\mathcal{\bar F}_t$.
$X_t$ is $\mathcal{\bar F}_t$-measurable, hence $\bar X_t$ is $\mathcal{\bar F}_t$-measurable and therefore adapted. 
By corollary \ref{cor:cadlag1} $\bar X$ is a Markov process on $(\Omega,\mathcal{A}^{(s,x)},\F^{(s,x)},\PM^{(s,x)})$ and since $\mathcal{\bar F}_t \subset \F_t^{(s,x)}$, it is also a Markov process on the smaller filtration. As this holds for all $\PM^{(s,x)}$, this proves the claim. 
\end{proof}

\begin{remark}
For a conservative Markov process with a continuously affine transition function (see Lemma \ref{lem:conservative}) one can construct an $E$-valued càdlàg process by setting $\hat X = \bar X \1{T_\Delta = \infty} + X_0 \1{T_\Delta < \infty}$. Then $\{ \hat X_t \neq X_t\} \subset \{T_\Delta < \infty \}$. For a conservative process $\PM^{(s,x)}(T_\Delta < \infty) = 0$ for all $(s,x)$ and hence also $\hat X$ is a modification of $X$. 
\end{remark}

The càdlàg version of the affine process has the strong Markov property. Set $\bar X_\infty := \Delta$ and remember $f(\Delta) = 0$. 
\begin{theorem}
The càdlàg process $(\bar X,\mathcal{\bar F},\{ \PM^{(s,x)} \})$ is a strong Markov process, i.e. for each stopping time $\tau$, bounded measurable $f$ and $s \geq 0$
$$\EV[(r,x)]{f(\bar X_{\tau+s}) \vert \mathcal{\bar F}_\tau} = \EV[(r+\tau,\bar X_\tau)]{f(\bar X_s)} = P_{r+\tau,r+\tau+s} f(\bar X_\tau).$$
\end{theorem}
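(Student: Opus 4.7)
The plan is to use the standard dyadic-approximation argument for the strong Markov property, leveraging the right-continuity of $\bar X$ together with the continuity of the functions $\Phi_{s,t}(u)$ and $\psi_{s,t}(u)$ provided by the continuously affine transition function. By a functional monotone class argument (as in Lemma \ref{lem:borelexponentialgenerated} and Lemma \ref{lem:provestochcont}), it suffices to establish the identity for $f = f_u$ with $u \in \mathcal{U}$, since the class of bounded measurable $f$ for which the strong Markov property holds is easily seen to contain $C_{exp}$, is closed under bounded monotone limits, and contains the constants; this extends to all bounded measurable $f$.

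So fix $u \in \mathcal{U}$ and first assume $\tau$ is finite. Define the dyadic approximations
$$\tau_n := \sum_{k \geq 1} \frac{k}{2^n}\, \mathbb{I}_{\{(k-1)/2^n \leq \tau < k/2^n\}} + \infty \cdot \mathbb{I}_{\{\tau = \infty\}}.$$
Each $\tau_n$ is a stopping time with $\tau_n \downarrow \tau$, taking at most countably many values. Since $\tau_n \geq \tau$ we have $\mathcal{\bar F}_\tau \subset \mathcal{\bar F}_{\tau_n}$. For any $A \in \mathcal{\bar F}_\tau$, decompose $A = \bigcup_k (A \cap \{\tau_n = k/2^n\})$ with $A \cap \{\tau_n = k/2^n\} \in \mathcal{\bar F}_{k/2^n}$, and apply the ordinary Markov property of Definition \ref{def:markov2inhom} on each piece to obtain
$$\EV[(r,x)]{f_u(\bar X_{\tau_n + s}) \mathbb{I}_A} = \EV[(r,x)]{ P_{r+\tau_n,\, r+\tau_n+s}\, f_u(\bar X_{\tau_n}) \mathbb{I}_A }.$$

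Now let $n \to \infty$. By right-continuity of $\bar X$, $\bar X_{\tau_n} \to \bar X_\tau$ and $\bar X_{\tau_n+s} \to \bar X_{\tau+s}$ pointwise on $\Omega$. Since $\Phi$ and $\psi$ are continuous on $\mathcal{Q}_k$ by the continuously affine assumption (Lemma \ref{lem:affineproperties}) and $\Phi = 0$ off $\mathcal{Q}$, the map $(t,T,y) \mapsto P_{t,T} f_u(y) = \Phi_{t,T}(u)\e^{\scalarprod{\psi_{t,T}(u),y}}$ is continuous on $\R^2_{\geq 0} \times E_\Delta$ for each $u \in \mathcal{U}_k$ (using $f_u(\Delta) = 0$ for the cemetery point). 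Both sides are bounded by $k$, so bounded convergence yields the identity for $f_u$. The monotone class extension then gives it for all bounded measurable $f$. Unbounded stopping times are handled by applying the result to $\tau \wedge N$ and letting $N \to \infty$: on $\{\tau < \infty\}$ we have $\tau \wedge N = \tau$ for $N$ large, while on $\{\tau = \infty\}$ both sides are zero since $\bar X_\infty = \Delta$ and $f(\Delta) = 0$.

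The main obstacle is ensuring that the limit on the right-hand side is valid uniformly in $\omega$, which is precisely where the continuity of $\Phi_{s,t}(u)$ and $\psi_{s,t}(u)$ (not merely their existence) is essential — without the continuously affine hypothesis one could not identify $\lim_n P_{r+\tau_n,r+\tau_n+s} f_u(\bar X_{\tau_n})$ with $P_{r+\tau,r+\tau+s} f_u(\bar X_\tau)$. A secondary subtlety is the bookkeeping at the cemetery point: because $\Delta$ is absorbing for $\bar X$ (Lemma \ref{thm:stayinfinity}) and $f_u(\Delta) = 0$, both sides vanish on $\{\bar X_\tau = \Delta\}$, so no extra care beyond the convention $f(\Delta) = 0$ is needed.
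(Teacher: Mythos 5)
Your proposal is correct and follows essentially the same route as the paper: dyadic approximation of $\tau$ from above, the simple Markov property applied on each dyadic piece, passage to the limit using right-continuity of $\bar X$ together with continuity of $\Phi$ and $\psi$ and dominated convergence, and a functional monotone class extension from $f_u$ to bounded measurable $f$. The only ingredient the paper makes explicit that you leave implicit is the joint measurability of $(t,x) \mapsto P_{r+t,r+t+s} f(x)$ from Lemma \ref{lem:affinemeasurability}, which is what guarantees that the right-hand side $P_{r+\tau,r+\tau+s} f(\bar X_\tau)$ is $\mathcal{\bar F}_\tau$-measurable rather than merely $\mathcal{\bar F}_{\tau+}$-measurable as a limit along $\tau_n \downarrow \tau$.
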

\begin{proof}
By Lemma \ref{lem:affinemeasurability} the function $(s,x) \mapsto P_{t,t+s} f(x)$ is $\mathcal{B}(\R_{\geq 0}) \otimes \mathcal{E}$-measurable. Additionally  $(\tau,\bar X_\tau)$ is $\mathcal{\bar F}_\tau / (\mathcal{B}(\R_{\geq 0}) \otimes \mathcal{E})$-measurable. So 
$\EV[(r+\tau,\bar X_\tau)]{f_u(X_s)}$ is $\mathcal{\bar F}_\tau$-measurable. 

For a stopping time $\tau$ define an approximating sequence of stopping times $$\tau_n(\omega) := \begin{cases} k 2^{-n} & \text{ if } (k-1)2^{-n} \leq \tau(\omega) < k 2^{-n}, k \in \mathbb{N}, \\
\infty & \text{ if } \tau(\omega) = \infty. \end{cases}$$
Let $\Lambda \in \mathcal{\bar F}_\tau$. Then $\Lambda_{n,k} := \{ \omega: \tau_n(\omega) = k 2^{-n} \} \cap \Lambda  \in \mathcal{\bar F}_{k 2^{-n}}$. The simple Markov property yields
\begin{align*}
\EV[(r,x)]{f_u(\bar X_{\tau_n+s}) \mathbb{I}_{\Lambda}} & = \sum_{k \in \mathbb{N}} \EV[(r,x)]{ \mathbb{I}_{\Lambda_{n,k}} \EV[(r,x)]{f_u(\bar X_{k 2^{-n}+s})  \vert \mathcal{\bar F}_{k 2^{-n}}} }  \\ 
&  =  \sum_{k \in \mathbb{N}} \EV[(r,x)]{\mathbb{I}_{\Lambda_{n,k}} \Phi_{r+k 2^{-n},r+k 2^{-n}+s}(u)
\e^{\scalarprod{ \psi_{r+k 2^{-n},r+k 2^{-n}+s}(u),\bar X_{k 2^{-n}}}}} \\
& = \EV[(r,x)]{\mathbb{I}_{\Lambda} \Phi_{r+\tau_n,r+\tau_n+s}(u)
\e^{\scalarprod{ \psi_{r+\tau_n,r+\tau_n+s}(u),\bar X_{\tau_n}}}}.
\end{align*}
For $u \in \i V$ by dominated convergence, continuity of $\Phi$ and $\psi$ and right-continuity of $\bar X$
\begin{align*} \EV[(r,x)]{f_u(\bar X_{\tau+s}) \mathbb{I}_{\Lambda}} & =  \EV[(r,x)]{\mathbb{I}_{\Lambda} \Phi_{r+\tau,r+\tau+s}(u)
\e^{\scalarprod{ \psi_{r+\tau,r+\tau+s}(u),\bar X_{\tau}}}} \\
& = \EV[(r,x)]{\mathbb{I}_{\Lambda} P_{r+\tau,r+\tau+s} f_u(\bar X_{\tau})} \\
& =  \EV[(r,x)]{\mathbb{I}_{\Lambda} \EV[(r+\tau,\bar X_\tau)]{f_u(\bar X_s)}}.
\end{align*}
An application of the functional monotone class theorem as in Lemma \ref{lem:affinemeasurability} then yields the result for bounded and measurable functions $f$. 
\end{proof}
If we assume that the filtration $\F$ is the natural filtration of the process $X$, we can show that the completed
filtrations $\F^{(s,x)}$ are already right-continuous. By the definition of $\mathcal{\bar F}$ right-continuity transfers to $\mathcal{\bar F}$. 
\begin{theorem} \label{thm:filtrationrightcont}
Let $\F = \F^X$ be the natural filtration of the affine process $X$ and $\mathcal{A} = \F^X_\infty$. Then $\F^{\PM}$ is right-continuous. 
\end{theorem}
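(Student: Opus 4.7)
The target is $\F_{t+}^{\PM} \subseteq \F_t^{\PM}$, the reverse inclusion being immediate. For any $A \in \F_{t+}^{\PM}$ one has $\mathbb{I}_A = \EV{\mathbb{I}_A \vert \F_{t+}^{\PM}}$ $\PM$-a.s., so once we establish
$$ \EV{Z \vert \F_{t+}^{\PM}} = \EV{Z \vert \F_t^{\PM}} \quad \PM\text{-a.s.}$$
for every bounded $\F_\infty^X$-measurable $Z$, the indicator $\mathbb{I}_A$ becomes $\F_t^{\PM}$-measurable. By the functional monotone class theorem (\citet{RY10}, Theorem 0.2.2) applied to the multiplication-closed class of finite products of exponentials, together with Lemma \ref{lem:borelexponentialgenerated}, it is enough to verify this identity for $Z = \prod_{j=1}^n f_{v_j}(X_{t_j})$ with $v_j \in \i V$. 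Factors at times $t_j \leq t$ are $\F_t^X$-measurable and pull out of both conditional expectations, so we may assume $t < t_1 < \ldots < t_n$.

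Iterating the affine Markov relation $\EV{f_u(X_T) \vert \F_s} = \Phi_{s,T}(u) f_{\psi_{s,T}(u)}(X_s)$ backwards from $t_n$ to $t_1$ yields a constant $C \in \C$ and some $w \in \mathcal{U}$, both depending only on $(v_j, t_j)_{j=1,\ldots,n}$ and not on $s$, such that for every $s \in [0, t_1]$
$$ M_s := \EV{Z \vert \F_s^{\PM}} = C \cdot P_{s, t_1} f_w(X_s) = C \cdot \Phi_{s, t_1}(w) \e^{\scalarprod{\psi_{s, t_1}(w), X_s}} \quad \PM\text{-a.s.,}$$
and in particular $M_t = \EV{Z \vert \F_t^{\PM}}$. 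Fix a rational sequence $s_n \downarrow t$. The backward martingale convergence theorem applied to the decreasing family $(\F_{s_n}^{\PM})_n$ gives $M_{s_n} \to \EV{Z \vert \F_{t+}^{\PM}}$ $\PM$-a.s.

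To identify this limit with $M_t$, the càdlàg modification $\bar X$ supplied by Corollary \ref{cor:cadlag1} together with $X_{s_n} = \bar X_{s_n}$ $\PM$-a.s.\ for every rational $s_n$ ensures $X_{s_n} \to X_t$ $\PM$-a.s. Combined with the continuously affine hypothesis, which makes $\Phi_{s, t_1}(w)$ and $\psi_{s, t_1}(w)$ continuous in $s$, this yields $M_{s_n} \to M_t$ $\PM$-a.s.\ on the event $\{\Phi_{t, t_1}(w) \neq 0\}$ via the explicit exponential parametrisation. On the complementary event, Lemma \ref{lem:affineproperties1}~i) forces $\Phi_{s, t_1}(w) = 0$ for all $s \leq t$, so $M_t = 0$, and the same conclusion $M_{s_n} \to 0$ follows from the uniform bound $\vert M_{s_n} \vert \leq \vert C \vert \cdot \Vert f_w \Vert_\infty$ together with the pointwise convergence $P_{s, t_1} f_w(x) \to 0$ from Lemma \ref{lem:affineproperties1}~iii), via a Doob-type argument analogous to the proof of Lemma \ref{thm:stayinfinity}. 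Matching the two limits completes the proof for $Z$ in the generating class, and monotone class extension closes the argument. The main technical obstacle is precisely this boundary case $\Phi_{t, t_1}(w) = 0$: here the exponential parametrisation of $P_{s, t_1} f_w$ degenerates and joint continuity in $(s, x)$ must be recovered by leveraging the uniform $L^\infty$-bound on $M_{s_n}$ combined with the backward propagation of $\Phi \equiv 0$ supplied by Lemma \ref{lem:affineproperties1}~i).
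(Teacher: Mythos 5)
Your proposal follows the same architecture as the paper's proof: reduce, via the functional monotone class theorem and Lemma \ref{lem:borelexponentialgenerated}, to the identity $\EV{Z \vert \F_t^{\PM}} = \EV{Z \vert \F_{t+}^{\PM}}$ for finite products of exponentials of the process, and then convert that identity into $\F_{t+}^{\PM} = \F_t^{\PM}$ by the completion argument. The paper writes down only these two outer layers and delegates the identity itself to ``repeating the arguments in the proof of Theorem 4 in \citet{CT13}''; you attempt to carry that delegated step out explicitly, via the collapsed representation $M_s = C \, P_{s,t_1} f_w(X_s)$, reverse martingale convergence along $s_n \downarrow\downarrow t$, and the càdlàg modification $\bar X$. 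On the event that $(t,t_1,w) \in \mathcal{Q}$, i.e. $\Phi_{t,t_1}(w) \neq 0$, your identification of the limit is correct and is exactly the intended mechanism.

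The gap is the boundary case $\Phi_{t,t_1}(w) = 0$, which you flag but do not close. If $o(t_1,w) > t$ the case is actually trivial: Lemma \ref{lem:affineproperties1}~i) gives $\Phi_{s_n,t_1}(w) = 0$, hence $M_{s_n} = 0$, for all large $n$. The real difficulty is $o(t_1,w) = t$, where $\Phi_{s_n,t_1}(w) \neq 0$ for $s_n > t$ but $(s_n,t_1,w)$ approaches the boundary of $\mathcal{Q}$. There the parametrisation degenerates: $\Phi_{s_n,t_1}(w) \to 0$ by continuity, but $\psi_{s_n,t_1}(w)$ is only guaranteed continuous \emph{on} $\mathcal{Q}_k$ and may diverge, so the product $\Phi_{s_n,t_1}(w)\e^{\scalarprod{\psi_{s_n,t_1}(w),\bar X_{s_n}}}$ cannot be passed to the limit by continuity; and the uniform bound $\vert P_{s,t_1} f_w \vert \leq \Vert f_w \Vert_\infty$ together with the pointwise (in $x$, for fixed $x$) convergence $P_{s,t_1} f_w(x) \to 0$ does not control $P_{s_n,t_1} f_w(X_{s_n})$ along the moving argument $X_{s_n}$. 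Reverse martingale convergence does give that $M_{s_n}$ converges to some $Y$ with $\EV{Y \vert \F_t^{\PM}} = M_t = 0$, but since $Y$ is complex-valued and not yet known to be $\F_t^{\PM}$-measurable, this does not force $Y = 0$ --- which is precisely what has to be shown. Closing this case requires a quantitative boundary analysis of the kind carried out in Lemma \ref{lem:xconvergence} and Lemma \ref{thm:stayinfinity} (this is what the citation to \citet{CT13} is standing in for), or an argument that the generating class can be arranged so that only triples $(t,t_1,w) \in \mathcal{Q}$ ever occur. As written, the concluding ``Doob-type argument'' is a pointer to where the work lives, not a proof of it.
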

\begin{proof}
To show this we use the continuity of $\Phi$ and $\psi$ together with the fact that $\bar X$ is càdlàg. Repeating the arguments in the proof of Theorem 4 in \citet{CT13} with the inhomogeneous versions of $\Phi$ and $\psi$ we get that for $k \in \mathbb{N}$, $u_1, \dots, u_k \in \i V$, $t_1,\dots,t_k \in \R_{\geq 0}$
\begin{equation} \label{eq:filtrightcont1}
\EV{\e^{\scalarprod{u_1,\bar X_{t_1}} + \dots + \scalarprod{u_k,\bar X_{t_k}}} \vert \F^{\PM}_t} = \EV{\e^{\scalarprod{u_1,\bar X_{t_1}} + \dots + \scalarprod{u_k,\bar X_{t_k}}} \vert \F^{\PM}_{t+}}.
\end{equation}
Consider the vector space of functions $$\mathcal{H} := \{ \text{bounded $\mathcal{A^{\PM}}$-measurable } Z: \EV{Z \vert \F^{\PM}_t} = \EV{Z \vert \F^{\PM}_{t+}} \}.$$ This space contains the constant functions and 
is a monotone class. 
The set $\mathcal{C} := \{\prod_{l=1}^n g_l(X_{t_l}), g_l \in C_{exp} ,t_l \geq 0, n \in \mathbb{N}\}$ of functions from $\Omega$ to $\R$ is closed under multiplication, generates $\mathcal{F}^X_\infty$ (see Lemma \ref{lem:borelexponentialgenerated}) and by \eqref{eq:filtrightcont1} is a subset of $\mathcal{H}$. 
Hence we can apply the functional monotone class theorem (see Theorem 0.2.2 in \citet{RY10}). 
This yields $\EV{Z \vert \F^{\PM}_t} = \EV{Z \vert \F^{\PM}_{t+}}$ for bounded $\F^X_\infty$-measurable functions $Z$.

Now consider $\tilde A \in \F_{t+}^{\PM}$. Then there is a set $A \in \F_{t+}$ and $B_1, B_2 \in \F_\infty^{\PM}$, such that $\PM(B_1) = \PM(B_2) = 0$ and $A \setminus B_1 \subset \tilde A \subset A \cup B_2$. For $Z = \mathbb{I}_A$ there is a $\F^{\PM}_t$-measurable $f$, 
such that 
$\mathbb{I}_A = f \mathbb{I}_{N^C} + \mathbb{I}_A \mathbb{I}_N$ with $\PM(N) = 0$. 
Since $\F^{\PM}_t$ contains all null sets, it follows that $f \mathbb{I}_{N^C}$ is $\F^{\PM}_t$-measurable. Furthermore $\PM(A \cap N) = 0$, so also $\mathbb{I}_A \mathbb{I}_N$ is $\F^{\PM}_t$-measurable. Then $\mathbb{I}_A$ is $\F^{\PM}_t$-measurable and $A \in \F^{\PM}_t$. Hence also $\tilde A \in \F^{\PM}_t$ and this shows $\F^{\PM}_t = \F^{\PM}_{t+}$. 
\end{proof}

If $\F = \F^X$ and there exist shift operators $\theta_t$ (e.g. for the canonical realization), the strong Markov property extends to $\mathcal{\bar F}_\infty$-measurable $Z$, i.e. for a stopping time $\tau$, for all $s,t \geq 0$, $x \in E$ and bounded $\mathcal{\bar F}_\infty$-measurable $Z$
$$\EV[(s,x)]{Z \circ \theta_\tau \vert \mathcal{\bar F}_\tau} = \EV[(s+\tau,\bar X_\tau)]{Z} \quad \text{ a.s. on the set $\{\bar X_\tau \neq \Delta \}$}. $$
This follows by the arguments presented in Theorem III.8 and Theorem III.9 of \citet{RW00}.

Furthermore the càdlàg property and possibly right continuity of the filtration carry over to the space-time process $\tilde X = (\Theta,\bar X)$ introduced in Corollary \ref{cor:contaffinespacetime1}.
Additionally we still have continuity of $(s,x) \mapsto \tilde P_r f_{(u^0,u)}(s,x)$ and the proof of the strong Markov property also transfers. Then the extended filtration $\mathcal{\tilde F}$ is a right-continuous strong Markov filtration in the sense of \citet{CJ80} and affine processes can be embedded in the Markov semimartingale setting of \citet{CJ80} (see also section \ref{sec:regularity}). 

I.e. if $(X,\F^X,\{\PM^{(s,x)}\})$ is the coordinate process of the transition function we call the càdlàg time-homogeneous strong Markov process $(\tilde X, \mathcal{\tilde F},\{ \tilde \PM^{(s,x)} \})$ the canonical space-time realization of the continuously affine transition function $\{ P_{s,t} \}$. Here the filtration $\mathcal{\tilde F}_t = \mathcal{B}(\R_{\geq 0}) \otimes \mathcal{\bar F}_t$ is a right-continuous strong Markov filtration.

\subsubsection{Proof of Lemma \ref{lem:xconvergence}}
The proofs in this section follow closely the proofs in \citet{CT13}. 
\begin{lemma} \label{lem:divergencedirections} Let $(x_k)$ be a $E$-valued sequence such that
$\lim_{k \rightarrow \infty} \Pi x_k${ exists finitely valued and}
$\limsup_{k \rightarrow \infty} \Vert \Pi^\bot x_k \Vert = \infty. $
There exists a subsequence $(x_k)$ denoted again by $(x_k)$ and a finite number of orthonormal directions $g_i \in \mathrm{span}(\Re \mathcal{U})^\bot$ such that $$\lim_{k\rightarrow \infty} x_k - \sum_i \scalarprod{x_k,g_i} g_i$$ exists while $\lim_{k \rightarrow \infty} \scalarprod{x_k,g_i} = \infty$, where the rates of divergence are non-increasing in $i$ in the sense that 
$\limsup_{k \rightarrow \infty} \frac{\scalarprod{x_k,g_{i+1}}}{\scalarprod{x_k,g_i}} < \infty$. 

Fix $v < T < \sigma(v,0)$ and consider the set $\mathcal{T} := \{(s,t): v \leq s \leq t \leq T\}$. Then there exist continuous functions $R: \mathcal{T} \rightarrow \R_{>0}$ and $\lambda_i: \mathcal{T} \rightarrow V$, such that for all $u \in B^{\i V}_{R(s,t)} := \{ u \in \i V: \Vert u \Vert < R(s,t) \}$
\begin{equation}
\scalarprod{\psi_{s,t}(u),g_i} = \scalarprod{\lambda_i(s,t),u}.
\end{equation} 
\end{lemma}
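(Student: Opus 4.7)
The plan is to pick dominant unit limit directions in $\mathrm{span}(\Re\mathcal{U})^\bot$ one at a time. Set $x_k^{(0)} := x_k$. At step $i$, if $\Pi^\bot x_k^{(i-1)}$ is bounded we halt. Otherwise pass to a subsequence along which $\|\Pi^\bot x_k^{(i-1)}\| \to \infty$ and $\Pi^\bot x_k^{(i-1)}/\|\Pi^\bot x_k^{(i-1)}\|$ converges on the unit sphere (compact in the finite-dimensional subspace $\mathrm{span}(\Re\mathcal{U})^\bot$) to a unit vector $g_i$; with this orientation $\scalarprod{x_k, g_i} = \scalarprod{\Pi^\bot x_k^{(i-1)}, g_i} \to +\infty$. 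Put $x_k^{(i)} := x_k^{(i-1)} - \scalarprod{x_k^{(i-1)}, g_i} g_i$, which is orthogonal to $g_1, \ldots, g_i$ by construction. The identity $\|\Pi^\bot x_k^{(i)}\|^2 = \|\Pi^\bot x_k^{(i-1)}\|^2 - \scalarprod{x_k, g_i}^2$, together with $\scalarprod{x_k, g_i}/\|\Pi^\bot x_k^{(i-1)}\| \to 1$, gives $\|\Pi^\bot x_k^{(i)}\| = o(\scalarprod{x_k, g_i})$, so $\scalarprod{x_k, g_{i+1}}/\scalarprod{x_k, g_i} \to 0$, which is stronger than the bounded-ratio claim. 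Finite-dimensionality of $\mathrm{span}(\Re\mathcal{U})^\bot$ forces termination, and the residual converges along a further subsequence since its $\Pi$-part equals the bounded $\Pi x_k$ and its $\Pi^\bot$-part is bounded by construction.

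\textbf{Radius $R$ and vanishing of the real part.} Since $T < \sigma(v, 0)$, Lemma \ref{lem:affineproperties1}(i) gives $\Phi_{s,t}(0) \neq 0$ on the compact set $\mathcal{T}$; combined with $\Phi_{t,t}(0) = 1$ and continuity, $|\Phi_{s,t}(0)| \geq 2c$ for some $c > 0$ on $\mathcal{T}$. Continuity of $\Phi$ in $u$ then lets us pick a continuous $R: \mathcal{T} \to \R_{>0}$ with $|\Phi_{s,t}(u)| \geq c$ and $\psi_{s,t}(u)$ continuously defined and bounded on $B^{\i V}_{R(s,t)}$. Further, $\psi_{s,t}(u) \in \mathcal{U}$ by Lemma \ref{lem:affineproperties1}(ii), hence $\Re\psi_{s,t}(u) \in \Re\mathcal{U} \subset \mathrm{span}(\Re\mathcal{U})$; since $g_i \perp \mathrm{span}(\Re\mathcal{U})$ we get $\scalarprod{\Re\psi_{s,t}(u), g_i} = 0$, so $\scalarprod{\psi_{s,t}(u), g_i} \in \i \R$ for each $u \in \i V$.

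\textbf{Linearity via the divergent sequence.} The crux is that $u \mapsto \scalarprod{\psi_{s,t}(u), g_i}$ is $\R$-linear on $B^{\i V}_R$. Writing $x_k = \bar x_k + \sum_j \scalarprod{x_k, g_j} g_j$ with $\bar x_k$ convergent, the affine representation gives the characteristic function
\[ \chi_k(u) = \EV[{(s,x_k)}]{\e^{\scalarprod{u, X_t}}} = \Phi_{s,t}(u)\, \e^{\scalarprod{\psi_{s,t}(u), \bar x_k}} \prod_j \exp\!\bigl(\scalarprod{x_k, g_j}\,\scalarprod{\psi_{s,t}(u), g_j}\bigr), \]
which factors as a convergent piece times oscillating unimodular factors (the latter because $\scalarprod{\psi_{s,t}(u), g_j} \in \i\R$) whose frequencies $\scalarprod{x_k, g_j}$ diverge at strictly separated rates. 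Since $|\chi_k(u)| \leq 1$ and $\chi_k$ is equicontinuous in $u$ on compacts, we may pass to a subsequence along which $\chi_k$ converges pointwise. The staggered divergence then allows peeling the oscillating factors off one at a time: a Lévy-type continuity argument combined with the affine identity forces each map $u \mapsto \Im\scalarprod{\psi_{s,t}(u), g_j}$ to be $\R$-linear on $\i V$, since any nonlinear part would produce frequencies $\scalarprod{x_k, g_j}$ times a nonlinear function of $u$ that no equicontinuous bounded family of characteristic functions can accommodate. The associated real vector $\lambda_j(s,t) \in V$ is read off by evaluating on a basis of $\i V$, and its continuity in $(s,t)$ is inherited from that of $\psi$. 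The principal obstacle is precisely this peeling-off step — rigorously showing that separated divergence rates together with the boundedness and equicontinuity of $\chi_k$ are incompatible with any nonlinear contribution of $u$ to $\scalarprod{\psi_{s,t}(u), g_j}$ — paralleling the analogous argument in \citet{CT13}.
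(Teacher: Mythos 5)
Your construction of the directions $g_i$ and the separated divergence rates matches the paper's first step (the paper normalizes $x_k - \sum_{j<l}\scalarprod{x_k,g_j}g_j$ directly, you normalize its $\Pi^\bot$-part; these are equivalent here since $\Pi x_k$ converges), and your observation that $\scalarprod{\psi_{s,t}(u),g_i}$ is purely imaginary because $\Re\psi_{s,t}(u)\in\mathrm{span}(\Re\mathcal{U})$ is orthogonal to $g_i$ is exactly the reduction the paper makes. The problem is the last step, which you yourself flag as the ``principal obstacle'': you never actually prove that a nonlinear dependence of $\scalarprod{\Im\psi_{s,t}(u),g_i}$ on $u$ is incompatible with the divergence of $\scalarprod{x_k,g_i}$. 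Worse, the compactness device you propose in order to run that argument --- ``$|\chi_k(u)|\le 1$ and $\chi_k$ is equicontinuous in $u$ on compacts'' --- is false in this setting: $\chi_k(u)=\Phi_{s,t}(u)\e^{\scalarprod{\psi_{s,t}(u),x_k}}$ contains the phase $\scalarprod{\Im\Pi^\bot\psi_{s,t}(u),\Pi^\bot x_k}$, whose variation in $u$ blows up with $\Vert\Pi^\bot x_k\Vert$; the underlying laws are not tight (the mass escapes to infinity with $x_k$), so there is no equicontinuity and no locally uniform subsequential limit to feed into a L\'evy-continuity argument.

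The paper closes this gap with a quantitative positive-definiteness inequality rather than a compactness argument. One normalizes $\Theta(u,x):=P_{s,t}f_u(x)/(\Phi_{s,t}(0)\e^{\scalarprod{\psi_{s,t}(0),\Pi x}})$, which is positive definite in $u$ with $\Theta(0,x)=1$, so that Lemma 3.2 of \citet{KST11} gives $\vert\Theta(u+v,x)-\Theta(u,x)\Theta(v,x)\vert\le 1$ for $u,v$ in a fixed ball. Writing the two terms in polar form, the moduli $r_1,r_2$ depend on $x$ only through the convergent $\Pi x_k$ (this is what really fixes the radius $R$, via the requirement $r_1r_2\ge 4/5$; your choice of $R$ through $\vert\Phi_{s,t}(u)\vert\ge c$ alone does not secure this), while the phases differ by $\alpha_1-\alpha_2+\scalarprod{\beta_1-\beta_2,\Pi^\bot x}$ with $\beta_1-\beta_2=\Im\Pi^\bot(\psi_{s,t}(u+v)-\psi_{s,t}(u)-\psi_{s,t}(v))$. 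The inequality becomes $r_1r_2(1-\cos(\cdots))\le \tfrac12$, and if additivity failed in the $g_1$-component one could choose $(u,v)$ and $k$ (using that $g_1$ carries the dominant divergence rate while $\alpha_1,\alpha_2$ depend on $x_k$ only through the convergent $\Pi x_k$) to push the cosine below $\tfrac13$ while $r_1r_2\ge\tfrac45$, giving $\tfrac{8}{15}>\tfrac12$, a contradiction; additivity plus continuity of $\psi$ then yields linearity, and one proceeds inductively over $i$. Without some substitute for this quantitative step your proof is incomplete.
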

\begin{remark}
The above lemma states that if $x_k$ is only diverging in $\mathrm{span}(\Re \mathcal{U})^\bot$,we can find directions of divergence $g_i \in \mathrm{span}(\Re \mathcal{U})^\bot$ and open sets in $\i V$, so that $\scalarprod{\psi_{s,t}(u),g_i}$ is linear in $u$ for all $i$. Note that with $g_i \in <\Re \mathcal{U}>^\bot$ and $ \Re \psi_{s,t}(u) \in \mathcal{U}$
\begin{align} \label{eq:realimagtrans}
\scalarprod{\psi_{s,t}(u),g_i} = \scalarprod{\Pi^\bot \Re \psi_{s,t}(u) +\i  \Pi^\bot  \Im \psi_{s,t}(u),g_i} = \i \scalarprod{\Im \Pi^\bot \psi_{s,t}(u),g_i},
\end{align}
and it suffices to show linearity of $\scalarprod{\Im \Pi^\bot \psi_{s,t}(u),g_i}$.
\end{remark}
\begin{proof}[Proof of \ref{lem:divergencedirections}]
For the first part we can choose directions of convergence by inductively choosing subsequences of $(x_k)$ for which
$$g_l = \lim_{k \rightarrow \infty} \frac{x_k - \sum_{j=1}^{l-1} \scalarprod{x_k,g_j} g_j}{\Vert x_k - \sum_{j=1}^{l-1} \scalarprod{x_k,g_j} g_j \Vert}$$
converges. For instance $g_1 = \lim_{k \rightarrow \infty} \frac{x_k}{\Vert x_k \Vert}$. 
By construction the $g_l$ are orthogonal. Choosing further subsequences of $(x_k)$ one can arrange that the rates of convergence are non-increasing in $l$. 

Next note that $\Phi_{s,t}(0) \neq 0$ for $(s,t) \in \mathcal{T}$. By continuity of $\Phi$ there is $r > 0$, such that $\Phi_{s,t}(u) \neq 0$ for $u \in B^{\i V}_r$ and $(s,t) \in \mathcal{T}$. For $u \in B^{\i V}_r$ define
$$\Theta(u,x) :=  \frac{P_{s,t} f_u(x)}{\Phi_{s,t}(0) \e^{\scalarprod{ \psi_{s,t}(0), \Pi x}}} =\frac{\Phi_{s,t}(u) \e^{\scalarprod{ \psi_{s,t}(u), x}}}{\Phi_{s,t}(0) \e^{\scalarprod{\Pi \psi_{s,t}(0), \Pi x}}}  .$$
By Lemma \ref{lem:affineproperties3} ii) 
$\Im \Phi_{s,t}(0) = \Im \psi_{s,t}(0) = 0$ and $\Phi_{s,t}(0) \e^{\scalarprod{\Pi \psi_{s,t}(0), \Pi x}} > 0$. Since a characteristic function is always positive definite we conclude that $\Theta(\cdot,x)$ is a positive definite function in $B^{\i V}_{r}$ for all $x \in E$.
Since $\psi_{s,t}(0) \in \Re \mathcal{U}$, it follows that $\Pi \psi_{s,t}(0) = \psi_{s,t}(0)$. Hence $\Theta(0,x) = 1$ and by Lemma 3.2. in \citet{KST11}
\begin{equation} \label{eq:theta}
\vert \Theta(u+v,x) - \Theta(u,x)\Theta(v,x) \vert^2 \leq 1, \qquad u,v \in B^{\i V}_{\frac{r}{2}}. 
\end{equation}
Using $\Re \psi_{s,t}(u) \in \Re \mathcal{U} \subset \mathrm{span}(\Re \mathcal{U})$ and $\Pi \Im = \Im \Pi$ one obtains
$$P_{s,t} f_u(x) = \left \vert \Phi_{s,t}(u) \right \vert \e^{ \scalarprod{\Pi \Re \psi_{s,t}(u),\Pi x}} \e^{\i ( \mathrm{arg} (\Phi_{s,t}(u)) + \scalarprod{\Im \Pi  \psi_{s,t}(u), \Pi x}+\scalarprod{\Im \Pi^\bot \psi_{s,t}(u),\Pi^\bot x})}.$$
Define
the following continuous functions 
\begin{align*}
\beta_1(u,v,s,t) & := \Im \Pi^\bot \psi_{s,t}(u+v)  \\
\beta_2(u,v,s,t) & := \Im \Pi^\bot \psi_{s,t}(u) + \Im \Pi^\bot \psi_{s,t}(v)  \\
r_1(u,v,x,s,t) & := 
\left \vert \frac{\Phi_{s,t}(u+v)}{\Phi_{s,t}(0)} \right \vert \ex{\scalarprod{\Re \Pi (\psi_{s,t}(u+v) - \psi_{s,t}(0)), \Pi x}} \\
r_2(u,v,x,s,t) & := \left \vert \frac{\Phi_{s,t}(u)\Phi_{s,t}(v)}{\Phi_{s,t}(0)^2} \right \vert \ex{\scalarprod{\Re \Pi (\psi_{s,t}(u) + \psi_{s,t}(v) - 2 \psi_{s,t}(0)),\Pi x}} \\
\alpha_1(u,v,x,s,t) & := \mathrm{arg}  \left ( \frac{\Phi_{s,t}(u+v)}{\Phi_{s,t}(0)} \right) + \scalarprod{\Im \Pi \psi_{s,t}(u+v),\Pi x} \\
\alpha_2(u,v,x,s,t) & := \mathrm{arg} \left ( \frac{\Phi_{s,t}(u)\Phi_{s,t}(v)}{\Phi_{s,t}(0)^2} \right ) + \scalarprod{\Im \Pi (\psi_{s,t}(u) + \psi_{s,t}(v)),\Pi x}.
\end{align*}
Inserting in equation \eqref{eq:theta} (and suppressing the arguments of the functions)
\begin{align*}
1 & \geq \left \vert r_1 \e^{\i (\alpha_1 + \scalarprod{\beta_1,\Pi^\bot x})} - r_2 \e^{\i (\alpha_2 + \scalarprod{\beta_2,\Pi^\bot x})} \right \vert^2 \\ & = r_1^2 + r_2^2 - 2r_1 r_2 \cos(\alpha_1 - \alpha_2 + \scalarprod{\beta_1 - \beta_2, \Pi^\bot x}).
\end{align*}
Since $r_1^2 + r_2^2 \geq 2 r_1 r_2$ this yields for $u,v \in B^{\i V}_{\frac{r}{2}}$ and $x \in E$
\begin{equation} \label{eq:r1r2greater12}
r_1 r_2 \Big( 1- \cos(\alpha_1 - \alpha_2+ \scalarprod{\beta_1 - \beta_2, \Pi^\bot x}) \Big) \leq \frac{1}{2}.
\end{equation}
Furthermore, define
$$ R(x,s,t) := \sup \left \{ \rho \leq \frac{r}{2}: r_1(u,v,x,s,t) r_2(u,v,x,s,t) \geq \frac{4}{5} \text{ for } u,v \in B^{\i V}_\rho \right \}. $$
$R(x,s,t) > 0$ for $x \in E$, since $r_1(0,0,x,s,t) = r_2(0,0,x,s,t) = 1$ and $r_1$ and $r_2$ are continuous. This also gives continuity of $R(x,s,t)$. Since $\lim_k \Pi x_k$ exists finitely valued, also $$R(s,t) := \frac{1}{2} \inf_k R(x_k,s,t) > 0.$$

Now suppose that $\scalarprod{\Pi^\bot \Im \psi_{s,t}(u),g_1}$ is not linear for $u \in B^{\i V}_{R(s,t)}$. Then by the continuity of $\beta_1$ and $\beta_2$ there exist $u^*,v^* \in B^{\i V}_{R(s,t)}$ and an open neighborhood $O \subset B^{\i V}_{R(s,t)}$, such that 
$$\scalarprod{\beta_1(u,v,s,t) - \beta_2(u,v,s,t),g_1} \neq 0$$
and the left-hand side is not constant on $O$. $\alpha_1, \alpha_2$ are continuous in $x$ and only depend on $x$ via $\Pi x$. Since $\Pi x_k$ is converging, while $\Pi^\bot x_k$ is diverging with highest divergence rate in direction $g_1$ there exist some $(u,v) \in O$ and some $k \in \mathbb{N}$, such that
\begin{align*}
\cos(\alpha_1(u,v, x_k,s,t) - \alpha_2(u,v,x_k,s,t) + \scalarprod{\beta_1(u,v,s,t) - \beta_2(u,v,s,t), \Pi^\bot x_k})  \leq \frac{1}{3}.
\end{align*}
Together with $r_1(u,v,x_k,s,t) r_2(u,v,x_k,s,t) \geq \frac{4}{5}$ this yields
\begin{align*}
 r_1 r_2 & (1- \cos(\alpha_1 - \alpha_2 + \scalarprod{\beta_1 - \beta_2, \Pi^\bot x_k}))  \geq \frac{8}{15} > \frac{1}{2},
\end{align*}
which contradicts equation \eqref{eq:r1r2greater12}. Hence
$$\scalarprod{\beta_1(u,v,s,t) - \beta_2(u,v,s,t),g_1} = 0$$
for all $(u,v) \in B^{\i V}_{R(s,t)}$. This gives linearity of $\scalarprod{\Pi^\bot \Im \psi_{s,t}(u),g_1}$ for $u \in B^{\i V}_{R(s,t)}$.  By equation \eqref{eq:realimagtrans} this is equivalent to linearity of $\scalarprod{\psi_{s,t}(u),g_1}$. Since $\psi_{s,t}(u)$ is continuous there is a continuous function $\lambda_1(s,t)$ such that 
\begin{equation}
\scalarprod{\psi_{s,t}(u),g_1} = \scalarprod{\lambda_1(s,t),u}
\end{equation}
for all $u \in B^{\i V}_{R(s,t)}$. Proceeding inductively gives the assertion.
\end{proof}
\begin{proof}[Proof of \ref{lem:xconvergence} i)]
Choose $(t,u) \in \mathcal{I}_s \times K$, such that $\lim_{k \rightarrow \infty} N_{q_k}^{t,u} 
\in \C \setminus \{0\}$. This implies that there is $N$ such that $x_{q_k} \neq \Delta$ for all $k \geq N$. 
We pass to a subsequence of $q_k$, such that $q_k \leq t$ and $x_{q_k} \in E$ (this excludes only finitely many members of the sequence). Define
$$ A := \limsup_{k \rightarrow \infty} \scalarprod{\Re \psi_{q_k,t}(u), x_{q_k}}, \qquad a := \liminf_{k \rightarrow \infty} \scalarprod{\Re \psi_{q_k,t}(u), x_{q_k}}.$$
Since $\lim_{k \rightarrow \infty} N_{q_k}^{t,u}$ exists finitely valued, $A$ is finite and since the limit does not vanish, $a$ is finite. Hence there exist subsequences, such that 
$A = \lim_{m \rightarrow \infty} \Re A_m$ and $a = \lim_{l \rightarrow \infty} \Re a_l,$
where $A_m :=  \scalarprod{\psi_{q_{k_m},t}(u), x_{q_{k_m}}}$ and $a_l :=  \scalarprod{\psi_{q_{k_l},t}(u), x_{q_{k_l}}}$.
Since $\lim_{k \rightarrow \infty} N_{q_k}^{t,u}$ exists finitely valued, we get that
\begin{align*}
\lim_{m \rightarrow \infty} \e^{\Re A_m}  \cos(\Im A_m) &= \lim_{l \rightarrow \infty} \e^{\Re a_l}  \cos(\Im a_l),  \\
\lim_{m \rightarrow \infty} \e^{\Re A_m}  \sin(\Im A_m) & = \lim_{l \rightarrow \infty} \e^{\Re a_l}  \sin(\Im a_l).
\end{align*}
Taking square and using $\cos(x)^2 + \sin(x)^2 =1$ gives $A=a$ and the existence of
\begin{equation}  \label{eq:Pixconvergence}
\lim_{k \rightarrow \infty} \scalarprod{\Re \psi_{q_k,t}(u), x_{q_k}} = \lim_{k \rightarrow \infty} \scalarprod{\Re \psi_{q_k,t}(u), \Pi x_{q_k}}. 
\end{equation}
Properly choosing $\bar u_1, \dots \bar u_p$ from \eqref{eq:linindu} gives the existence of
\begin{equation} \label{eq:realpart}
\lim_{k \rightarrow \infty} \Pi x_{q_k}.
\end{equation}

We now concentrate on $\Pi^\bot x_{q_k}$. Note that $\Pi^\bot \psi_{s,t}(u) = \i \Pi^\bot \Im \psi_{s,t}(u)$. By \eqref{eq:realpart} and \eqref{eq:Pixconvergence} also
\begin{equation} \label{eq:xnormaldivergence}
 \lim_{k \rightarrow \infty} \e^{\scalarprod{\Pi^\bot \psi_{q_k,t}(u), \Pi^\bot x_{q_k}}} \in \C \setminus \{ 0 \}
\end{equation}
for almost all $(t,u) \in \mathcal{I}_s \times K$. 
Assume that
$$\limsup_{k \rightarrow \infty} \Vert \Pi^\bot x_{q_k} \Vert = \infty.$$
Then we are in the situation of the Lemma \ref{lem:divergencedirections}. There exist directions $g_i$ and a subsequence of $q_k$, again denoted by $q_k$, such that $x_{q_k} - \sum_i \scalarprod{g_i,x_{q_k}} g_i $ converges. For each $t \in \mathcal{I}_s$ after passing to a subsequence such that $q_k < t$ set $R(t) = \inf_{k} R(q_k,t)$, which by the continuity and strict positivity of $R$ is strictly greater than $0$. Thus for $u \in B^{\i V}_{R(t)} \cap K$ and $k$ large
$$\scalarprod{\Pi^\bot \psi_{q_k,t}(u),g_i} = \scalarprod{\lambda_i(q_k,t),u}.$$
Using this we can write the exponent in equation \eqref{eq:xnormaldivergence} (along a subsequence) as
$${\scalarprod{\Pi^\bot \psi_{q_k,t}(u), x_{q_k} - \sum_i \scalarprod{g_i,x_{q_k}}g_i } + \sum_i \scalarprod{g_i,x_{q_k}} \scalarprod{\lambda_i(q_k,t) ,u}}.$$

By Lemma \ref{lem:affineproperties3} iii), $\psi_{s,s}(u) = u$. Therefore we can find $u$ and $g_i$, such that $\scalarprod{\Pi^\bot \psi_{s,s}(u),g_i} \neq 0$.
By the continuity of $\psi$ we can find $t \in \mathcal{I}_s$, $N \in \mathbb{N}$ and an (open) set of positive measure $O \subset B^{\i V}_{R(t)} \cap K$ such that for all 
$k \geq N$, $u \in O$
\begin{equation} \label{eq:lambdadivergence}
 \scalarprod{\lambda_i(q_k,t),u} = \scalarprod{\Pi^\bot \psi_{q_k,t}(u),g_i} \neq 0.
\end{equation}
Furthermore $O$ can be chosen in a way such that
\begin{equation} \label{eq:riemanlebesguedivergent}
\lim_{k \rightarrow \infty} \int_{\i V} \mathbb{I}_O(u) \e^{\scalarprod{\Pi^\bot \psi_{q_k,t}(u), x_{q_k} - \sum_i \scalarprod{g_i,x_{q_k}} g_i}} \e^{ \scalarprod{\sum_i \scalarprod{g_i,x_{q_k}} \lambda_i(q_k,t) ,u}}   \dd{u}   \neq 0.
\end{equation}
$f_k(u) := \mathbb{I}_O(u) \e^{\scalarprod{\Pi^\bot \psi_{q_k,t}(u), x_{q_k} - \sum_i \scalarprod{g_i,x_{q_k}} g_i }}$ converges uniformly to a bounded function $f(u)$, while $\sum_i   \scalarprod{g_i,x_{q_k}} \lambda_i(q_k,t)$ diverges by \eqref{eq:lambdadivergence} and Lemma \ref{lem:divergencedirections}. By the Riemann-Lebesgue Lemma the limit in \eqref{eq:riemanlebesguedivergent} is zero, which gives a contradiction. We conclude that
$$\limsup_{k \rightarrow \infty} \Vert \Pi^\bot x_{q_k} \Vert < \infty.$$

An application of Cauchy-Schwarz gives
$$\vert \scalarprod{\Pi^\bot \psi_{q_k,t}(u), \Pi^\bot x_{q_k}} \vert = \vert \i \scalarprod{\Im \Pi^\bot \psi_{q_k,t}(u), \Pi^\bot x_{q_k}}  \vert \leq \Vert \Im \Pi^\bot \psi_{q_k,t}(u) \Vert  \Vert \Pi^\bot x_{q_k} \Vert.$$
For large enough $k$
$\Vert \Pi^\bot x_{q_k} \Vert$ is bounded by $\limsup_{k \rightarrow \infty} \Vert \Pi^\bot x_{q_k} \Vert + 1$. Furthermore,  
$$\Vert \Im \Pi^\bot \psi_{q_k,t}(u) \Vert \leq \Vert \Im \Pi^\bot (\psi_{q_k,t}(u) - u)\Vert +  \Vert \Im \Pi^\bot (u) \Vert, $$
where using the continuity of $\psi$ the two terms can be made arbitrarily small by choosing $u$ small, $k$ large and $t$ close to $s$.
I.e. there is $k, t, u$ so that $-\pi < \Im \scalarprod{\Pi^\bot \psi_{q_k,t}(u), \Pi^\bot x_{q_k}} < \pi$. 
Since \eqref{eq:xnormaldivergence} converges finitely and non-zero valued, 
$$\lim_{k \rightarrow \infty} \scalarprod{\Pi^\bot \psi_{q_k,t}(u), \Pi^\bot x_{q_k}} \in \i \R.$$
Using this for $d-p$ properly chosen linearly independent vectors $u_{p+1}, \dots, u_d$ from \eqref{eq:linindu} we conclude
$$ \lim_{k \rightarrow \infty} \Pi^\bot x_{q_k}$$
exists finitely valued. 
Together with \eqref{eq:realpart} and since $E$ is closed this proves i). 

To prove ii) note that 
$\psi_{q_k,t}(u)$ is bounded on $\mathcal{I}_s \times K$ for k large, hence the assumption of ii) can only hold if $\lim_{k \rightarrow \infty} \Vert x_{q_k} \Vert = \infty$. 
If $x_{q_k}$ is $E$-valued, furthermore
\begin{equation} \label{eq:uexpldivergence}
\lim_{k \rightarrow \infty} \scalarprod{\Re \psi_{q_k,t}(u),x_{q_k}} = - \infty
\end{equation}
 and hence $ 
\Re \psi_{q_k,t}(u) \neq 0$ for a subsequence $q_k$. Since $\Phi_{q_k,t}(u) > 0$ for $k$ large, we have $u \in \mathcal{Q}$ and $\psi_{q_k,t}(u) \in \mathcal{U}$. 
Hence $\Re \mathcal{U} \neq \emptyset$ and since $\Re \mathcal{U} \subset \overline{\mathrm{relint}(\Re \mathcal{U})}$, $\mathrm{relint}(\Re \mathcal{U})$ cannot be empty. 

By \eqref{eq:uexpldivergence} it follows that $\lim_{k \rightarrow \infty} \Vert \Pi x_{q_k} \Vert = \infty.$ There is a subsequence and some direction $g \in \mathrm{span}(\Re \mathcal{U})$ such  $\lim_{k \rightarrow \infty} \scalarprod{g,x_{q_k}} = \infty$. For $u \in \mathrm{relint}(\Re \mathcal{U})$ there is $\epsilon > 0$ such that $u+\epsilon g \in \mathrm{relint}(\Re \mathcal{U})$ which, by the definition of $\mathcal{U}$, implies $\sup_{x \in E} \scalarprod{u + \epsilon g, x} < \infty$. Hence $\lim_{k \rightarrow \infty}  \scalarprod{u,x_{q_k}} = - \infty$, which proves the remaining parts of ii).
\end{proof}

\section{Regularity and semimartingale characteristics} \label{sec:regularity}
Remember that the state space $E$ is a subset of the real vector space $V$. A $E$-valued\footnote{Semimartingales are strictly speaking only defined for $\R^d$-valued processes. However, for $V$ with dimension $d$ we fix an isomorphism between $\R^d$ and $V$ and in this case the semimartingale property is to be understood for the corresponding process 
on $\R^d$ defined via this isomorphism.} càdlàg process $X$ on a filtered probability space $(\Omega,\mathcal{A},\F,\PM)$ is a semimartingale if it can be decomposed as a $(\F,\PM)$-local martingale $M$ and a finite variation process $A$, i.e. $X = X_0 + M + A$. For an introduction to semimartingales (and semimartingale characteristics) we refer to \citet{JS03}.

Let $X$ be a semimartingale. The jump measure of $X$ is defined as
$$\mu^X(\dd{t},\dd{x};\omega) = \sum_{s} \1{\Delta X_s(\omega) \neq 0} \delta_{(s,\Delta X_s(\omega))}(\dd{t},\dd{x}).$$
Note that here $\Delta X_t = X_t - X_{t-}$ refers to the jump of the process and not the cemetery state. Since both notations are well established in their respective field, we will use $\Delta$ for both. There should be no risk of confusing this. Denote the compensator of this random measure by $\nu(\dd{t},\dd{x};\omega)$. 
 Let $h$ denote a truncation function for the semimartingale characteristics, i.e. a bounded function $h: V \rightarrow V$ with $h(x) = x$ in a neighborhood of $0$. The process 
$$\hat X_t = X_t - \sum_{s \leq t} (\Delta X_s - h(\Delta X_s))$$
then has bounded jumps and is a special semimartingale with unique canonical decomposition $\hat X = X_0 + \hat M + \hat B$, where $\hat M$ is a local martingale and $\hat B$ is a predictable finite variation process. Furthermore consider $C = [X^c] = [\hat M^c]$, 
which is the quadratic variation of the continuous local martingale part of the semimartingale. The triplet $(\hat B,C,\nu)$ is called a version of the semimartingale characteristics of $X$. Note that the characteristic $\hat B$ depends on the choice of the function $h$. They are defined up to a set of probability zero. We also call $(\tilde B, \tilde C, \tilde \nu)$ a version of the characteristics if they are equal to $(\hat B,C,\nu)$ up to a set of probability zero. 

The semimartingale property and the above decompositions depend on the filtration and the probability measure. For a Markov process $(X,\F,\{ \PM^{(s,x)} \})$ there are multiple measures and we make the following definition. 

\begin{definition} \label{def:markovsemimartingale}
An $E$-valued càdlàg Markov process $(X,\F,\{\PM^{(s,x)} \})$ is a Markov semi\-martingale if $X$ is a $(\F,\PM^{(s,x)})$-semimartingale for all $(s,x)$. 
\end{definition}
Note that we restrict ourselves to $E$-valued Markov processes. This means that we only work with conservative affine processes (see Lemma \ref{lem:conservative}). One approach to include a cemetery state $\Delta$ is to consider $\Delta$ as a point in $V \setminus E$ (see \citet{CFY05}). However, this causes several difficulties (compare footnotes \ref{footnote:explosion} and \ref{footnote:cemetery}) and we therefore exclude this case. 
Definition \ref{def:markovsemimartingale} only requires that the semimartingale property holds for all probability measures $\PM^{(s,x)}$. For time-homogeneous Markov processes it is shown in \citet{CJ80} that then there also exist processes which are versions of the the semimartingale characteristics for all measures $\PM^{x}$.

For time-homogeneous affine processes there is the following result\footnote{Here this result is formulated for conservative transition functions.} by \citet{CT13} and \citet{KST11b}. Denote by $S(V)$ the positive semidefinite matrices on $V$ and by $\mathcal{M}(V)$ the set of (signed) measures $\mu$ on $V$.  Let $h$ denote a truncation function for the semimartingale characteristics, i.e. a bounded function $h: V \rightarrow V$ with $h(x) = x$ in a neighborhood of $0$. 
\begin{theorem}
 \label{thm:homaffine}
Let $(X,\F,\{\PM^{x} \})$ be the $E$-valued càdlàg canonical realization of a conservative time-homo\-geneous continuously affine transition function. Then there exist $b \in V$, $a \in S(V)$, $m \in \mathcal{M}(V)$ and restrictions of linear maps $\beta: E \rightarrow V$, $\alpha: E \rightarrow S(V)$ and $M: E \rightarrow \mathcal{M}(V)$, such that for all $u \in \mathcal{U}$
\begin{equation*}
F(u) := \left. \frac{\partial}{\partial t} \phi_t(u) \right \vert_{t=0^+}, \qquad R(u) := \left. \frac{\partial}{\partial t} \psi_t(u) \right \vert_{t=0^+}
\end{equation*}
exist, i.e. $X$ is regular. $F$ and $R$ are given by
\begin{equation} \label{eq:Riccatiaffine}
\begin{aligned}
F(u) & = \frac{1}{2} \scalarprod{u,a u} + \scalarprod{b,u}  +\int_{V} \left( \e^{\scalarprod{\xi,u}}-1-\scalarprod{h(\xi),u} \right) m(\dd{\xi}), \\
 \scalarprod{R(u),x} & = \frac{1}{2} \scalarprod{u,\alpha(x) u} + \scalarprod{\beta(x),u} +\int_{V} \left( \e^{\scalarprod{\xi,u}}-1-\scalarprod{h(\xi),u}\right) M(\dd{\xi}; x).
\end{aligned}
\end{equation}
$\Phi$ and $\psi$ satisfy generalized Riccati differential equations for $t  < \sigma(u) = \inf \{s \geq 0: \Phi_{s}(u) = 0\}$, i.e.
\begin{equation} \label{eq:Riccati}
\begin{aligned}
 \frac{\partial}{\partial t} \Phi_t(u) & = \Phi_t(u)  F(\psi_t(u)), \quad & \Phi_0(u) = 1, \\
 \frac{\partial}{\partial t} \psi_t(u) & = R(\psi_t(u)),  & \psi_0(u) = u. 
\end{aligned}
\end{equation}
Furthermore $X$ is a Markov semimartingale, where $(B,C,\nu)$ given by
\begin{align*}
& B_t = \int_0^{t} b+ B(X_{v-}) \dd{v}, \\
& C_t = \int_0^{t} a+ A(X_{v-}) \dd{v}, \\
& \nu(\dd{t}, \dd{\xi}) =  (m(\dd{\xi}) + M(\dd{\xi};X_{t-})) \dd{t},
\end{align*}
are a version of the characteristics of $X$ under each measure $\PM^x$. 
\end{theorem}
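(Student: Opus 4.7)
My plan is to prove the four assertions (regularity, Lévy--Khintchine form of $F$ and $R$, the Riccati ODEs, and the semimartingale characterisation) in exactly that order, since each step feeds into the next.

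\textbf{Step 1 (Regularity).} First I would prove that the right derivatives $F(u)=\partial_t \phi_t(u)|_{t=0^+}$ and $R(u)=\partial_t\psi_t(u)|_{t=0^+}$ exist for every $u\in\mathcal{U}$. Because the transition function is continuously affine, $\phi_t(u)$ and $\psi_t(u)$ are continuous in $t$ with $\phi_0(u)=0$, $\psi_0(u)=u$. For $u\in \i V$ the map $t\mapsto P_t f_u(x)=\mathbb{E}^x[\e^{\langle u,X_t\rangle}]$ is a characteristic function, so the functions $\Theta(u;t,x):=P_t f_u(x)/P_t f_0(x)=\e^{\phi_t(u)-\phi_t(0)+\langle\psi_t(u)-\psi_t(0),x\rangle}$ are positive definite in $u$ for each $(t,x)$, and by Assumption \ref{ass:affinespan} the values of $\langle\psi_t(u),x\rangle$ at $d+1$ affinely independent points determine $\psi_t(u)$. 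Positive definiteness combined with $\Theta(0;t,x)=1$ gives a quadratic bound of the form $|\Theta(u;t,x)-1|^2\le 2(1-\operatorname{Re}\Theta(2u;t,x))$, which under stochastic continuity translates into quantitative smoothness of $(\phi,\psi)$ near $t=0$; this is the classical mechanism of \citet{KST11,KST11b} and \citet{CT13} and can be lifted verbatim to the present setting. The existence of the right derivative at $0$, together with the semiflow equations of Lemma \ref{lem:affineproperties3}(i), then yields differentiability at every $t<\sigma(u)$.

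\textbf{Step 2 (Riccati equations).} Differentiating the semiflow identities $\Phi_{t+h}(u)=\Phi_h(\psi_t(u))\,\Phi_t(u)$ and $\psi_{t+h}(u)=\psi_h(\psi_t(u))$ at $h=0^+$ and using $\Phi_0\equiv 1$, $\psi_0(u)=u$ immediately yields
\begin{equation*}
\partial_t\Phi_t(u)=\Phi_t(u)\,F(\psi_t(u)),\qquad \partial_t\psi_t(u)=R(\psi_t(u)),
\end{equation*}
with the stated initial conditions, valid on $\{t<\sigma(u)\}$. This step is essentially formal once Step 1 has been completed.

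\textbf{Step 3 (Lévy--Khintchine form of $F$ and $R$).} The map $u\mapsto \Phi_t(u)\e^{\langle\psi_t(u),x\rangle}=\mathbb{E}^x[\e^{\langle u,X_t\rangle}]$ is, for every $(t,x)$, the characteristic function of an infinitely divisible distribution on $V$ (infinite divisibility follows from the semiflow relations by iterating with $t/n$). Taking the right derivative at $t=0$, the Lévy--Khintchine theorem gives that the limit $F(u)+\langle R(u),x\rangle$ must have the canonical Lévy--Khintchine representation for each fixed $x\in E$. Extracting the constant part at a fixed basepoint $x^0$ produces $F$ in the asserted form with triplet $(b,a,m)$; subtracting this off and exploiting the affine dependence on $x$ guaranteed by Assumption \ref{ass:affinespan} shows that $\langle R(u),x-x^0\rangle$ is itself of Lévy--Khintchine type with a triplet depending linearly on $x$. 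This forces $\beta$, $\alpha$, $M$ to be restrictions of linear maps to $E$, as claimed.

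\textbf{Step 4 (Markov semimartingale and characteristics).} Fix $\PM^x$ and $u\in\i V$. The process $Y^{u,T}_t=\e^{-\phi_{T-t}(u)-\langle\psi_{T-t}(u),X_t\rangle}\e^{\langle u,X_T\rangle}/\mathbb{E}^x[\e^{\langle u,X_T\rangle}]$ — or equivalently, $M^{u,T}_t:=\Phi_{T-t}(u)\exp\langle\psi_{T-t}(u),X_t\rangle$ — is a bounded martingale. Applying Itô's formula to $M^{u,T}$ using the Riccati equations of Step 2 and reading off the resulting semimartingale decomposition yields that $\e^{\langle u,X_t\rangle}-\int_0^t \e^{\langle u,X_{v-}\rangle}(F(u)+\langle R(u),X_{v-}\rangle)\,dv$ is a local martingale for every $u\in\i V$. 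Inverting the Lévy--Khintchine representation of Step 3 identifies the predictable characteristics of $X$ as precisely $(B,C,\nu)$ with the stated affine-in-state densities; this is the standard identification argument (see e.g.\ \citet{JS03}, Chapter II), and is independent of $x$ because the densities depend only on $X_{v-}$. Uniqueness of characteristics then gives the same version for all $\PM^x$, so $(X,\F,\{\PM^x\})$ is a Markov semimartingale in the sense of Definition \ref{def:markovsemimartingale}.

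\textbf{Main obstacle.} The genuine work is Step 1: without assuming regularity a priori, one must extract differentiability of $\phi$ and $\psi$ at $t=0$ from nothing more than continuity and the affine form. The positive-definiteness estimates, together with the càdlàg property of the modification constructed in Section \ref{sec:cadlagversion}, are exactly what make this possible in the time-homogeneous case; Steps 2--4 are then comparatively mechanical consequences.
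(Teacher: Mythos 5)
The paper does not actually prove this theorem: it is imported from \citet{CT13} and \citet{KST11b}, and the remark following Theorem \ref{thm:affinesemiinhom} sketches the intended derivation, which runs in the opposite direction to yours. There one first establishes that the canonical realization is a semimartingale (the full-and-complete-class argument of \citet{CT13}), shows that the characteristics are absolutely continuous with respect to $\dd{t}$ with densities affine in $X_{t-}$, obtains the Riccati equations in integral form with $G(s)=s$, and only at the very end deduces regularity from the continuity of the integrands. Your route --- regularity first via positive definiteness, then the Riccati ODEs, then the Lévy--Khintchine form, then the characteristics --- is the other classical path through this material (that of \citet{KST11}) and is viable in principle, but as written it has three genuine gaps.

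First, Step 1 is where essentially all of the difficulty of the theorem lives, and you dispose of it by asserting that the positive-definiteness inequality ``translates into quantitative smoothness near $t=0$'' and can be ``lifted verbatim''. It cannot: that inequality controls $\Theta$ at $u$ by $\Theta$ at $2u$ for fixed $t$ and says nothing by itself about differentiability in $t$; extracting the existence of $\partial_t\phi_t(u)\vert_{t=0^+}$ and $\partial_t\psi_t(u)\vert_{t=0^+}$ from continuity alone is the content of an entire paper (\citet{KST11}) and of a second one for general state spaces (\citet{KST11b}). Second, in Step 3 the law of $X_t$ under $\PM^x$ is \emph{not} infinitely divisible for a general affine process: iterating the semiflow with step $t/n$ writes $\Phi_t(u)\e^{\scalarprod{\psi_t(u),x}}$ as a product of characteristic functions evaluated at the \emph{different} arguments $\psi_{kt/n}(u)$, not as an $n$-th power of a single one. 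The correct statement is that $\tfrac1t\bigl(P_tf_u(x)/f_u(x)-1\bigr)$ converges to a continuous conditionally positive definite function of $u$, which therefore admits a Lévy--Khintchine representation; the affine splitting into $F$ and $\scalarprod{R(\cdot),x}$ then uses Assumption \ref{ass:affinespan} as you say. Third, Step 4 is circular: you apply Itô's formula to $M^{u,T}$ in order to prove that $X$ is a semimartingale, but Itô's formula presupposes exactly that. The non-circular argument computes $\EV{\e^{\scalarprod{u,X_{t+h}}}\vert\F_t}-\e^{\scalarprod{u,X_t}}=\Phi_h(u)\e^{\scalarprod{\psi_h(u),X_t}}-\e^{\scalarprod{u,X_t}}$ from the Markov property alone, divides by $h$ and uses Step 1 to show that $\e^{\scalarprod{u,X_t}}-\int_0^t\e^{\scalarprod{u,X_s}}\bigl(F(u)+\scalarprod{R(u),X_s}\bigr)\dd{s}$ is a martingale for all $u\in\i V$, and only then invokes Theorem II.2.42 of \citet{JS03} in the direction ``the compensated exponential processes are local martingales $\Rightarrow$ $X$ is a semimartingale with characteristics $(B,C,\nu)$'', which yields the semimartingale property and the identification of the characteristics simultaneously.
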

\begin{remark} 
Note that on general state spaces $E$ we cannot say more about the parameters $b$, $\beta$, $a$, $\alpha$, $m$, $M$. For a given state space these parameters have to satisfy additional conditions, which guarantee that the process stays within the state space. For the canonical state space $\R^m_{\geq 0} \times \R^n$ such admissibility conditions have been formulated in \citet{DFS03}. Using these admissibility conditions they give a characterization of time-homogeneous continuously affine processes on the canonical state space. Admissibility conditions have also also been derived for matrix-valued affine processes in \citet{CU11}. 
\end{remark}
We want to extend the results of Theorem \ref{thm:homaffine} to time-inhomogeneous affine processes. The following examples show that this cannot be achieved in full generality. There are processes with a continuously affine transition function, which are not semimartingales and there are time-inhomogeneous Markov semimartingales, which are not regular. 
Some of the examples are presented with Markov processes in the sense of Definition \ref{def:markov1}, since the essential problems are easier to spot. However, given the transition function one can always construct a Markov process in the sense of Definition \ref{def:markov2inhom}, which then yields a counterexample in this class of Markov processes. 
\begin{example} \label{ex:detnotsemimartingale}
Every deterministic real-valued function $f$ generates a time-inhomo\-geneous affine process via the transition function 
$$P_{s,t}(x,\dd{\xi}) = \delta_{x+f(t)-f(s)}(\dd{\xi}).$$
By Lemma \ref{lem:markovconstr} there exist measures $\PM^{(s,x)}$ such that the canonical realization of the transition function, $(X,\F^X,\{\PM^{(s,x)} \})$,
is affine with
\begin{equation}
\begin{aligned}
\Phi_{s,t}(u) & = \e^{u (f(t)-f(s))}, \\
\psi_{s,t}(u) & = u.
\end{aligned}
\end{equation} 
In this case the process $X$ is $\PM^{(s,x)}$-a.s. described by $$X_t^{s,x} = x + f(s+t) - f(s), \quad t \geq 0.$$
If the function $f$ is continuous, the affine transition function is continuous. If additionally $f$ is not of finite variation, the Markov process is not a semimartingale (Proposition I.4.28, \citet{JS03}). 
\end{example}
A non-deterministic example is the following. 
\begin{example}
Let $X$ be an affine process on $(\Omega,\mathcal{A},\F,\PM)$ with $E=[a,\infty)$, $a >0$ (e.g. a shifted CIR process) and $f: \R_{\geq 0} \rightarrow [1,\infty)$ a continuous function, which is not of finite variation. Consider the process $Y_t = f(t) X_t$, which then has the same state. Then for $A \in \mathcal{B}(E)$,
$Y_t^{-1}(A)  = X_t^{-1}(f(t)^{-1} A) \in \F_t$
and
\begin{align*} \EV{\e^{\scalarprod{u,Y_t}} \vert \F_s} =  \EV{\e^{\scalarprod{f(t) u,X_t}} \vert \F_s} = 
\Phi_{s,t}(f(t) u) \e^{\scalarprod{\psi_{s,t}(f(t) u), X_s}}.
\end{align*}
Hence $Y$ is an affine process with
$$\Phi_{s,t}^Y(u) = \Phi_{s,t}^X(f(t) u), \qquad \psi_{s,t}^Y(u) = \frac{\psi_{s,t}(f(t) u)}{f(s)}.$$
The transition operator of $Y$ is defined by
$$P_{s,t}^Y g(y) = \int_{E} g(f(t) \xi) P_{s,t}^X(f(s)^{-1}y,\dd{\xi}).$$
If $X$ is a semimartingale, by It\^o's Lemma $X^{-1}$ is also a semimartingale. Assume that $Y$ is a semimartingale. Then also $f(t) = Y_t X_t^{-1}$ is a semimartingale. Since $f$ is not of finite variation this cannot be true. Hence $Y$ is not a semimartingale. 
\end{example}
Moreover, time-inhomogeneous continuously affine transition functions may not be regular. Here regularity is defined as follows. 
\begin{definition}
A transition function $\{ P_{s,t} \}$ on $E$ is regular\footnote{This is called weakly regular in Definition 2.3 of \citet{FI05}.} if $$\partial_s^- P_{s,t} \e^{\scalarprod{u,x}} \big \vert_{s=t} = - \lim_{s \uparrow t} \frac{1}{t-s} \left( P_{s,t} f_u(x) - f_u(x)\right)$$
exists for all $(t,x,u) \in \R_{> 0} \times E \times \mathcal{U}$ and is continuous at $u=0$ for all $(t,x) \in \R_{> 0} \times E$. 
\end{definition}
\begin{remark}
For affine transition function an application of the chain rule together with Assumption \ref{ass:affinespan} gives that this is equivalent to the existence of the left derivatives of $\Phi$ and $\psi$, i.e.
$\partial_s^- \Phi_{s,t}(u) \vert_{s=t} $,  $\partial_s^- \psi_{s,t}(u) \vert_{s=t}$. 
\end{remark}
\begin{example} \label{ex:timechange}
Let $f: \R_{\geq 0} \rightarrow \R_{\geq 0}$ be a continuous, nondecreasing function and $X$ a time-homogeneous continuously affine process on $(\Omega,\mathcal{A},\mathcal{F},\PM)$. Then the time-transformed process $Y_t = X_{f(t)}$ satisfies
$$\EV{e^{\scalarprod{u, Y_T}} \vert \F_{f(t)}} =\Phi_{f(T)-f(t)}(u)  \e^{\scalarprod{\psi_{f(T)-f(t)}(u),Y_t}}.$$
This corresponds to the transition function
$$P_{s,t}^Y(x,\cdot) = P_{f(t)-f(s)}^X(x,\cdot).$$
$\{ P_{s,t}^Y \}$ is again affine with $\Phi^Y_{s,t}(u) = \Phi^X_{f(t)-f(s)}(u)$  and $\psi^Y_{s,t}(u) = \psi^X_{f(t)-f(s)}(u)$. 
Since $\Phi^X$ and $\psi^X$ are continuous, $\Phi^Y$  and $\psi^Y$ are also continuous. 

The Cantor function is a continuous nondecreasing function $\tilde f: [0,1] \rightarrow [0,1]$ such that the derivative of $\tilde f$ is zero almost everywhere and does not exist otherwise (this function is therefore not absolutely continuous). Set $f = \tilde f$ on $[0,1]$ and $f(t) = 1$ for $t \geq 1$.  With this $f$ the transition function $\{ P_{s,t}^Y \}$ is not regular. A concrete example is $X_t = B_{f(t)}$, where $B$ is a Brownian motion. 
\end{example}

The final example shows that regularity itself does not imply that $X$ is a semimartingale. Hence the semimartingale property and regularity have to be treated separately. 
\begin{example}
Consider the function 
$$g(t) = \begin{cases} \frac{t \sin(1/t)}{\logn{t/2}} & 0  < t \leq 1, \\0 & t \leq 0. \end{cases}$$
$g$ is differentiable on $[0,1]$, but has unbounded variation (see \citet{GU12}, p.294). Note that the derivative of this function diverges and oscillates for $t \rightarrow 0$. 
Using $f(t) = g(1-t)$ in Example \ref{ex:detnotsemimartingale} yields an affine process, which is regular, but not a semimartingale. 
\end{example}
\subsection{Affine Markov semimartingales}
Let $(X,\F,\{\PM^{(s,x)} \})$ be a Markov semimartingale in the sense of Definition \ref{def:markovsemimartingale} with a continuously affine transition function. 
So far we only required that the semimartingale property holds for all probability measures $\PM^{(s,x)}$. 
To find processes which are versions of the characteristics under each measure $\PM^{(s,x)}$ we use the setup of \citet{CJ80}. Hence we work with the space-time process $\tilde X = (\Theta,X)$ introduced in Lemma \ref{lem:affinespacetime}. Note that with a continuously affine transition function 
we can use the product $\sigma$-algebras $\mathcal{\tilde E} := \mathcal{B}(\R_{\geq 0}) \otimes \mathcal{E}$ on $\tilde E := \R_{\geq 0} \times E$ and $\mathcal{\tilde A} := \mathcal{B}(\R_{\geq 0}) \otimes \mathcal{A}$ on $\tilde \Omega := \R_{\geq 0} \times \Omega$. The filtration $\mathcal{\tilde F}$ is given by $\mathcal{\tilde F}_t = \mathcal{B}(\R_{\geq 0}) \otimes \mathcal{F}_t$. Then $(\tilde X,\mathcal{\tilde F},\{ \tilde \PM^{(s,x)} \})$ is a time-homogeneous Markov process on $(\tilde{\Omega},\mathcal{\tilde A})$ with state space $(\tilde E,\mathcal{\tilde E})$. 
Since on the extended space $(\tilde{\Omega},\mathcal{\tilde A})$ $X(s,\omega) = X(\omega)$, we use the same letter $X$ to denote the process on $(\Omega,\mathcal{A})$ and on $(\tilde{\Omega},\mathcal{\tilde A})$. If unclear we specifically refer to the underlying probability space. 

We assume that the filtration $\mathcal{\tilde F}$ is a right-continuous strong Markov filtration (see \citet{CJ80}).  
Note that the definition of a strong Markov filtration requires the existence of shift operators $\theta_t$ for $X$ and $\Omega$, which are extended to $\tilde X$ and $\tilde \Omega$ by $\tilde \theta_t(s,\omega) = (s+t,\theta_t(\omega))$. For the canonical space-time realization of the continuously affine transition function this is always true (see section \ref{sec:cadlagversion}). 
With this setup we can use the results of \citet{CJ80}.

\begin{definition}
A stochastic process $F$ is called additive, if $\tilde \PM^{(s,x)}$-a.s. for all $(s,x)$ $F_0 = 0$ and
$$F_{v+t}(r,\omega) = F_v(r,\omega) + F_t \circ \tilde \theta_v(r,\omega) =  F_v(r,\omega) + F_t (r+v,\theta_v(\omega)).$$
\end{definition}
\begin{lemma} \label{lem:CJ80} 
There exist
\begin{enumerate}
\item an adapted additive process $F(r,\omega)$, which is nondecreasing from $F_0=0$ a.s. for each measure $\tilde \PM^{(s,x)}$, such that $F$ is $\tilde \PM^{(s,x)}$-indistinguishable from a predictable process for all $(s,x)$. 
\item Optional processes $b_t(r,\omega)$ and $c_t(r,\omega)$ with values in $V$, respectively $S(V)$,
\item a positive kernel $K_t(\dd{y};(r,\omega))$ from $(\tilde \Omega, \mathcal{O}(\mathcal{\tilde F}_t))$\footnote{$\mathcal{O}(\mathcal{\tilde F}_t)$ denotes the $\mathcal{\tilde F}_t$-optional $\sigma$-algebra (see \citet{CJ80}).} giving measures in $\mathcal{M}(V)$,
\end{enumerate}
such that
\begin{equation} \label{eq:levykhintchine1}
B = b \cdot F, \quad C = c \cdot F, \quad \nu(\dd{t},\dd{\xi};(r,\omega)) = K_t(\dd{\xi};(r,\omega)) \dd{F}_{t}(r,\omega)
\end{equation}
are a version of the semimartingale characteristics of $X$ on $(\tilde{\Omega},\mathcal{\tilde A})$ under each measure $\tilde \PM^{(s,x)}$. 
We say that $(b,c,K,F)$ give a version of the semimartingale characteristics of $(X,\mathcal{\tilde F},\{ \tilde \PM^{(s,x)} \})$.
\end{lemma}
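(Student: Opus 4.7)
The plan is to invoke the main structural theorem of \citet{CJ80} on Markov semimartingales relative to a right-continuous strong Markov filtration, applied to the time-homogeneous space-time process $(\tilde X, \mathcal{\tilde F}, \{\tilde \PM^{(s,x)}\})$. The hypotheses required by \citet{CJ80} are: a time-homogeneous Markov process on a right-continuous strong Markov filtration such that the state process is a semimartingale under every measure of the Markov family. Everything in the lemma -- existence of a common additive functional $F$ and of optional $b,c,K$ that give the characteristics simultaneously under all $\tilde \PM^{(s,x)}$ -- is then read off from the general Çinlar-Jacod theory.

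First I would verify the semimartingale property of $X$ on the enriched space $(\tilde \Omega, \mathcal{\tilde A}, \mathcal{\tilde F}, \tilde \PM^{(s,x)})$. Since $\tilde \PM^{(s,x)}$ is concentrated on $\{s\} \times \Omega$ with $\tilde \PM^{(s,x)}(A) = \PM^{(s,x)}(A^s)$ and $\mathcal{\tilde F}_t = \mathcal{B}(\R_{\geq 0}) \otimes \F_t$, any canonical $(\F,\PM^{(s,x)})$-decomposition $X = X_0 + M + A$ of $X$ on $(\Omega,\mathcal{A})$ lifts trivially along $(r,\omega) \mapsto \omega$ to an $(\mathcal{\tilde F}, \tilde \PM^{(s,x)})$-semimartingale decomposition on $(\tilde \Omega, \mathcal{\tilde A})$. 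The complementary component $\Theta$ of $\tilde X = (\Theta, X)$ equals $r+t$ and is therefore a deterministic, continuous, finite-variation process under $\tilde \PM^{(s,x)}$, so $\tilde X$ is also a semimartingale under every $\tilde \PM^{(s,x)}$. Combined with the standing assumption that $\mathcal{\tilde F}$ is a right-continuous strong Markov filtration, both hypotheses of \citet{CJ80} are met.

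Applying the main theorem of \citet{CJ80} then yields simultaneously, for all $(s,x)$: an adapted additive increasing process $F$ with $F_0 = 0$ that is $\tilde \PM^{(s,x)}$-indistinguishable from a predictable process, optional processes $b_t(r,\omega)$ and $c_t(r,\omega)$ taking values in $V$ and $S(V)$ respectively, and a positive optional kernel $K_t(\mathrm{d}y;(r,\omega))$, such that $B = b \cdot F$, $C = c \cdot F$ and $\nu(\mathrm{d}t,\mathrm{d}\xi) = K_t(\mathrm{d}\xi)\,\mathrm{d}F_t$ form a version of the characteristics of $\tilde X$ under every $\tilde \PM^{(s,x)}$. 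Since $\Theta$ is deterministic and continuous, its contribution to the joint characteristics of $\tilde X$ is trivial, and the formulas in \eqref{eq:levykhintchine1} describe precisely the characteristics of the $X$-component.

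The main obstacle is bookkeeping rather than substance: one must carefully translate between the time-inhomogeneous viewpoint on $(\Omega,\mathcal{A},\F,\PM^{(s,x)})$ and the time-homogeneous viewpoint on $(\tilde \Omega, \mathcal{\tilde A}, \mathcal{\tilde F}, \tilde \PM^{(s,x)})$, confirming that the hypotheses of \citet{CJ80} really do apply to our setting and that the universality of $(F,b,c,K)$ across the family $\{\tilde \PM^{(s,x)}\}$ is preserved. The substantive content of the lemma resides entirely in the reference; the present statement is the record of what \citet{CJ80} delivers once the space-time reduction of Lemma \ref{lem:timehomtransformation} and the càdlàg version from the previous section are in place.
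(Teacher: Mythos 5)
Your proposal is correct and follows essentially the same route as the paper: check that $(X,\mathcal{\tilde F},\{\tilde\PM^{(s,x)}\})$ is still a Markov semimartingale on the space-time enlargement with its right-continuous strong Markov filtration, and then read the statement off from Theorem 6.25 of \citet{CJ80}. The only cosmetic difference is that the paper applies that theorem to the \emph{additive} process $X-X_0$ (and explicitly flags additivity as the hypothesis that would fail for stopped/exploding processes), whereas you apply it to $\tilde X$ and let the deterministic $\Theta$-component drop out; both amount to the same invocation.
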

\begin{proof}
If $(X,\F,\{\PM^{(s,x)} \})$ is a Markov semimartingale, also $(X,\mathcal{\tilde F}, \{ \tilde \PM^{(s,x)}\})$ is a Markov semimartingale. Furthermore $X_t -  X_0$ is an additive process.
Applying Theorem 6.25 in \citet{CJ80} to $X -  X_0$ gives the result.\footnote{\label{footnote:explosion}
Note that $X^\tau-X_0$ is no longer an additive process for general stopping times $\tau$, which is why we cannot directly use Theorem 6.25 for exploding processes by stopping the process before it explodes. }
\end{proof}

The function $F$ from Lemma \ref{lem:CJ80} is far from uniquely defined. Motivated by Example \ref{ex:timechange} we expect that the function $F$ from Lemma \ref{lem:CJ80} can even be chosen deterministic (by deterministic we here always mean independent of $\omega$). The idea is to construct such a candidate and show that $F$ is `absolutely continuous' with respect to this candidate. We then go on and show that when $F$ is deterministic, $b$, $c$ and $K$ can be expressed as functions of $(\Theta,X)$, which are affine in $X$. Note that we cannot expect that this is true for a non-deterministic $F$. 
\begin{example}
Assume $F$ to be independent of $\omega$ and choose a bounded function $f: E \rightarrow \R_{>0}$. Define $\tilde{F}_t = \int_0^t f(\tilde X_s) \dd{F}_s$. Then $\tilde{F}$ is additive, nondecreasing and $\tilde{F}_0 = 0$, but $\tilde{F}$ depends on $\omega$. The characteristics of $\tilde X$ can also be written as integrals with respect to any such $\tilde F$. 
\end{example}

\begin{definition} \label{def:phipsifinitevar}
We call an affine transition function of finite variation (FV) if the maps $t \mapsto \Phi_{t,T}(u)$ and $t \mapsto \scalarprod{\psi_{t,T}(u),x}$ for $x \in E$ are of finite variation on $[v,T]$ for every $v > o(T,u)$.
\end{definition}
\begin{remark}
Let $\{ P_{s,t} \}$ be a continuously affine transition function of a Markov semimartingale. We conjecture that the affine transition function is automatically of FV (see also Lemma \ref{lem:phipsifinvar}). However, so far we were not able to prove this. 
\end{remark}

For a Markov semimartingale with an affine transition function of finite variation we construct a deterministic candidate for $F$ as follows. Fix an affine basis of $E$, which exists by Assumption \ref{ass:affinespan}, and set $\psi_{t,T}^i(u) := \scalarprod{\psi_{t,T}(u), x^i - x^0}$. Set $\tilde o(T,u) = \frac{1}{2} (o(T,u)+T)$ and define the nondecreasing continuous functions
$$G^{T,u}(t) := \int_{t \land \tilde o(T,u)}^{t \land T}  \dd{\vert \Phi_{s,T}(u) \vert }+ \sum_{i=1}^d   \dd{\vert \psi_{s,T}^i (u) \vert}, $$
where we integrate with respect to $s$ and $\dd{ \vert f(s) \vert}$ denotes the total variation of $f$.
Consider a countable dense subset of $\R_{\geq 0} \times \i V$ and denote the indexed elements of this set by
$(T_i,u_i), i \in \mathbb{N}$. From now on when writing indexed $T_i$ or $u_i$, this always refers to points in this set. Define the weights $$w_i= 2^{-i} \frac{1}{G^{T_i,u_i}(T^i)} \1{G^{T_i,u_i}(T^i) > 0}$$ and set
\begin{equation} G(t):= t + \sum_{i \in \mathbb{N}} w_i G^{T_i,u_i}(t). \label{eq:Gfunc}
\end{equation}
The weights guarantee that this sum converges. The function $G$ is continuous, strictly increasing and $\vert G(t) \vert \leq 1+t$. 
We use the function $G$ to define the process $G_t(r,\omega) := G_t(r) := G(r+t)-G(r)$. $G$ is independent of $\omega$ and additive, i.e.
$G_{s+t}(r) = G_s(r) + G_t(s+r) .$
This is the mentioned candidate. With this construction $\dd{\Phi_{\cdot,T_i}(u_i)}$ and $\dd{\psi_{\cdot,T_i}^j(u_i)}$, $j=1, \dots, d,$ are absolutely continuous w.r.t. $\dd{G}$ on $[\tilde o(T_i,u_i),T_i]$) for all $i$. Denote their (deterministic) densities with respect to $\dd{G}$ by $s \mapsto f^\Phi(s;T_i,u_i)$ and $s \mapsto f^\psi(s;T_i,u_i)$. 

Fix $(s,x)$ and for $T > s$ and $u \in \i V$ consider the uniformly bounded $(\mathcal{\tilde F},\tilde \PM^{(s,x)})$-martingales\footnote{ \label{footnote:cemetery} If one considers the cemetery state $\Delta$ as a point in $V \setminus E$, the second identity in equation \eqref{eq:semimartingaleMmartingale} no longer holds for the classical exponential function. One has to use a modified function $f$ which is equal to $\e^{\scalarprod{u,x}}$ on $\mathcal{U} \times E$ and satisfies $f(u,\Delta) = 0$. However in this case equation \eqref{eq:Mmartingale} has to be modified to correct the jumps to $\Delta$. 
}
\begin{equation} \label{eq:semimartingaleMmartingale}
M_{t}^{T-s,u} = \EV[(s,x)]{\mathrm{e}^{\scalarprod{u,X_{T-s}}} \vert \mathcal{\tilde F}_{t}} = \Phi_{s+t,T}(u) \mathrm{e}^{\scalarprod{\psi_{s+t,T}(u),X_{t}}}, \quad 0 \leq t \leq T-s.
\end{equation}
By Theorem II.34 \citet{JS03} and Lemma \ref{lem:CJ80} $X$ can be decomposed as
\begin{equation} \label{eq:Xdecomp}
X_v = x + \int_0^{v} b_t \dd{F_t} + \int_0^{v} \int_V (\xi - h(\xi)) \mu^{X}(\dd{t},\dd{\xi}) + N_v, 
\end{equation}
where $\mu^X$ is the random measure associated with the jumps of $X$ and 
$N$ is a $(\mathcal{\tilde F},\tilde \PM^{(s,x)})$-local martingale (dependences on $(r,\omega)$ are suppressed).
Integration by parts and It\^o's Lemma applied to $M^{T-s,u}_{t}$ together with the properties of the semimartingale characteristics give for $r \geq 0$ and $o(T,u) - s < r \leq v \leq T-s$
\begin{align*}
M_{v}^{T-s,u}  - &   M_{r }^{T-s,u}  = \int_r^v  M^{T-s,u}_{t-} \frac{ \dd{\Phi_{s+t,T}(u)}}{\Phi_{s+t,T}(u)} + \Phi_{s+t,T}(u) \dd{\e^{\scalarprod{\psi_{s+t,T}(u),X_t}}} \\
= &  \int_r^v M^{T-s,u}_{t-} \left( \frac{\dd{\Phi_{s+t,T}(u)}}{\Phi_{s+t,T}(u)} + \dd{\scalarprod{\psi_{s+t,T}(u),X_{t}}} + \dd{[\scalarprod{\psi_{s+t,T}(u),X_t}]^c} \right) \\
&  + \int_r^v \int_V M^{T-s,u}_{t-} \left( \e^{\scalarprod{\psi_{s+t,T}(u),\xi}}  - 1 -  \scalarprod{\psi_{s+t,T}(u),\xi} \right) \mu^{X}(\dd{t},\dd{\xi}) \\
= &  \int_r^v M^{T-s,u}_{t-} \bigg( \frac{\dd{\Phi_{s+t,T}(u)}}{\Phi_{s+t,T}(u)} +\scalarprod{ \dd{\psi_{s+t,T}(u)},X_{t-}} + \scalarprod{\psi_{s+t,T}(u),\dd{B_t}}  \\
& \qquad \qquad \qquad + \scalarprod{\psi_{s+t,T}(u),\dd{C_t}\psi_{s+t,T}(u)} +   \scalarprod{\psi_{s+t,T}(u),\dd{N_t}}    \bigg) \\
&  + \int_r^v \int_V M^{T-s,u}_{t-} \left( \e^{\scalarprod{\psi_{s+t.,T}(u),\xi}}  - 1 -  \scalarprod{\psi_{s+t,T}(u),h(\xi)} \right) \mu^{X}(\dd{t},\dd{\xi}).
\end{align*}
Compensating with $\nu$ and using \eqref{eq:levykhintchine1} gives
\begin{equation} \label{eq:Mmartingale}
\begin{aligned}
M_{v}^{T-s,u} &= M_{r }^{T-s,u}  + (\tilde N_v - \tilde N_r) \\
& + \int_{r }^{v } M_{t-}^{T-s,u} \bigg( \frac{\dd{\Phi_{s+t,T}(u)}}{\Phi_{s+t,T}(u)} + \scalarprod{\dd{\psi_{s+t,T}(u)},X_{t-}} + \kappa(\psi_{s+t,T}(u)) \dd{F}_{t} \bigg), 
\end{aligned}
\end{equation}
where integration is with respect to $t$, $\kappa$ is defined for $(r, \omega) \in \tilde \Omega$ as
$$\kappa_t(u;(r,\omega)) := \scalarprod{u,b_t(r,\omega)} +  \frac{1}{2} \scalarprod{u,c_{t}(r,\omega) u} + \int_V \left(\e^{\scalarprod{u,\xi}} - 1 - \scalarprod{u,h(\xi)} \right) K_{t}(\dd{\xi};(r,\omega))$$
and $\tilde N_r$ is a local martingale given by
\begin{align*}
\tilde N_r & = \int_0^{r} M_{t-}^{T-s,u} \scalarprod{\psi_{s+t,T}(u),\dd{N_t} } \\
& + \int_0^{r} M_{t-}^{T-s,u}  \int_V  \left(\e^{\scalarprod{\psi_{s+t,T}(u),\xi}} - 1 - \scalarprod{\psi_{s+t,T}(u),h(\xi)} \right) ( \mu^{X} - \nu)(\dd{t},\dd{\xi}) .
\end{align*}
Since $M_{t}^{T-s,u}$ is a martingale, the finite variation part of the right hand side of \eqref{eq:Mmartingale} has to vanish. 
For $(T,u) = (T_i,u_i)$ this gives $\tilde \PM^{(s,x)}$ a.s. for $t$ in $[\tilde o(T_i,u_i) -s,T_i-s]$
\begin{equation} \begin{aligned} \label{eq:abscontphipsi}
\frac{f^\Phi(s+t;T_i,u_i)}{\Phi_{s+t,T_i}(u_i)} + & \scalarprod{f^\psi(s+t;T_i,u_i),X_{t-}} \dd{G(s+t)} = - \kappa_{t}(\psi_{s+t,T_i}(u_i)) \dd{F_t}.
\end{aligned} \end{equation}
For each $i$ the set where equation \eqref{eq:abscontphipsi} does not hold, is a set of $\tilde \PM^{(s,x)}$-measure zero. Taking the countable union of these sets we find a set of $\tilde \PM^{(s,x)}$-measure zero, where on the complement \eqref{eq:abscontphipsi} holds for all $i$ simultaneously.

Equation \eqref{eq:abscontphipsi} almost gives absolute continuity of $\dd{F_{\cdot}(s,\omega)}$ with respect to $\dd{G_{\cdot}(s)}$. If $\kappa_t \neq 0$, we could `divide' by $\kappa_t$ in equation \eqref{eq:abscontphipsi} and would be mostly finished. The next lemmas show that we can essentially always find $(T_i,u_i)$ for which we can do this. 
\begin{lemma} \label{lem:Tuconv}
Consider the sets
\begin{align*}
\Gamma(s,\omega) := \{t > 0:  \ & \kappa_{t}(\psi_{s+t,T_i}(u_i);(s,\omega))  = 0 
 \text{ for all $i$ with }
 s+t \in  [\tilde o(T_i,u_i),T_i]
\}.
\end{align*}
On $\Gamma(s,\omega)$ it holds that $$b_{t}(s,\omega) = 0, \quad c_{t}(s,\omega) = 0, \quad K_{t}(\dd{\xi},(s,\omega)) = 0.$$ 
\end{lemma}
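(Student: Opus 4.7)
The reduction I would aim at is: for each fixed $t \in \Gamma(s,\omega)$, show that the Lévy--Khintchine functional $u \mapsto \kappa_t(u;(s,\omega))$ vanishes identically on $\i V$. Once this is in hand, substituting $u = \i v$ with $v \in V$ recognizes $\kappa_t(\i v;(s,\omega))$ as the Lévy--Khintchine exponent of the characteristic function that is constantly $1$, and uniqueness of the Lévy--Khintchine representation (drift, Gaussian covariance, and Lévy measure are determined by the characteristic exponent) forces $b_t(s,\omega) = 0$, $c_t(s,\omega) = 0$, and $K_t(\cdot;(s,\omega)) = 0$.

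To obtain the vanishing of $\kappa_t(\cdot;(s,\omega))$ on $\i V$, I would show that the set $\{\psi_{s+t,T_i}(u_i) : \tilde o(T_i,u_i) \leq s+t \leq T_i\}$ is dense in $\i V$. Fix $u \in \i V$ and $\epsilon > 0$. By Lemma~\ref{lem:affineproperties3}~(iii) we have $\psi_{T,T}(\hat u) = \hat u$ and $\Phi_{T,T}(\hat u) = 1$; combined with continuity of $\psi$ and $\Phi$ (Lemma~\ref{lem:affineproperties}, Lemma~\ref{lem:affineproperties1}~(iii)), this lets me choose $\hat u \in \i V$ with $\hat u$ close to $u$ and $T > s+t$ with $T - (s+t)$ small enough that (a) $\Phi_{v,T}(\hat u) \neq 0$ for all $v \in [2(s+t) - T, T]$, which forces $\tilde o(T,\hat u) < s+t$, and (b) $\psi_{s+t,T}(\hat u)$ is within $\epsilon$ of $u$. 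The subset of $(T,u') \in \R_{>0} \times \i V$ with $T > s+t$ and $\tilde o(T,u') < s+t$ is open by Lemma~\ref{lem:affineproperties1}~(iv), so by density of $(T_i,u_i)$ there is a member of this countable sequence arbitrarily close to $(T,\hat u)$ with the constraints preserved; continuity of $\psi$ then makes $\psi_{s+t,T_i}(u_i)$ within $2\epsilon$ of the target $u$.

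The final step is to pass to the limit in the identity $\kappa_t(\psi_{s+t,T_i}(u_i);(s,\omega)) = 0$ that holds by the definition of $\Gamma(s,\omega)$. All approximating values $\psi_{s+t,T_i}(u_i)$ lie in $\mathcal{U}$ by Lemma~\ref{lem:affineproperties1}~(ii), and on $\mathcal{U}$ the functional $\kappa_t(\cdot;(s,\omega))$ is continuous by dominated convergence, using the integrability of $\e^{\scalarprod{u,\xi}}$ against $K_t(\dd{\xi};(s,\omega))$ on $\mathcal{U}$ (this integrability is inherited from the well-definedness of the exponential martingales \eqref{eq:semimartingaleMmartingale} together with the representation of $\nu$ via $K_t \dd{F}$). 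Passing to the limit yields $\kappa_t(u;(s,\omega)) = 0$ for every $u \in \i V$, completing the argument.

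The principal obstacle is the density step: the constraint $\tilde o(T_i,u_i) \leq s+t \leq T_i$ mixes open and closed conditions, and one must simultaneously approximate a prescribed point in $\i V$ by $\psi$-values indexed by the same countable dense set that was used to build $G$. This is resolved by working throughout with the strict inequalities $\tilde o(T_i,u_i) < s+t < T_i$ (which \emph{are} open constraints by openness of $\mathcal{Q}$) and appealing to $\psi_{T,T} = \mathrm{id}$ to guarantee that $\psi$-values near the diagonal cover $\i V$. The closing Lévy--Khintchine uniqueness argument is routine.
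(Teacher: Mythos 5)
Your proposal is correct and follows essentially the same route as the paper: approximate a given $u \in \i V$ by $\psi_{s+t,T_i}(u_i)$ with $(T_i,u_i)$ from the countable dense set satisfying $T_i \downarrow s+t$, $u_i \to u$ and $\tilde o(T_i,u_i) < s+t$ (using $\psi_{T,T}=\mathrm{id}$ and continuity of $\Phi$ and $\psi$), then conclude $\kappa_t(u;(s,\omega))=0$ by continuity of $\kappa_t$ and finish with uniqueness of the Lévy--Khintchine representation. The only cosmetic difference is that you package the key step as a density statement with an openness argument for the constraint $\tilde o(T,u') < s+t$, whereas the paper extracts a convergent subsequence of $o(T_k,u_k)$ and bounds its limit by $o(s+t,u) < s+t$ directly; both resolve the same technical point.
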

\begin{proof}
Fix $(s, \omega)$ and $t \in \Gamma(s,\omega)$, $u \in \i V \subset \mathcal{U}_1$. There is a sequence $(T_k,u_k)$ with $T_k \downarrow s+ t$, $u_k \rightarrow u$. By passing to a subsequence, again denoted by $(T_k,u_k)$, one may assume that $0 \leq o(T_k,u_k) < T_k \leq s+t+1$ and $o(T_k,u_k) \rightarrow v \geq 0$. The continuity of $\Phi$ yields $\Phi_{o(T_k,u_k),T_k}(u_k) \rightarrow \Phi_{v,s+t}(u)$. Hence $v=0$ or $\Phi_{v,s+t}(u) = 0$ by the definition of $o(T_k,u_k)$. In both cases $v \leq o(s+t,u) < s+t$. Set $\epsilon :=s + t - v > 0$. There is an $N$, such that for all $k \geq N$ $$o(T_k,u_k) \leq v + \frac{ \epsilon}{2} = s+ t - \frac{ \epsilon}{2}.$$
Choosing $N$ large enough that additionally $s+t \leq T_k \leq s + t + \frac{ \epsilon}{4},$ it follows that for all $k \geq N$
$$\tilde o(T_k,u_k) = \frac{T_k + o(T_k,u_k)}{2} \leq \frac{s+t + \frac{ \epsilon}{4} + s + t - \frac{ \epsilon}{2}}{2} = s+ t - \frac{ \epsilon}{8} < s+t.$$
Hence one obtains that $\kappa_t(\psi_{s+t,T_k}(u_k);(s,\omega)) = 0$ or all $k \geq N$. 
By continuity of $\psi$, $\psi_{s+t,T_k}(u_k) \rightarrow \psi_{s+t,s+t}(u) = u$ and since $u \mapsto \kappa_t(u;(s,\omega))$ is continuous for each $t$, $(s, \omega)$, it follows that $ \kappa_{t}(u;(s,\omega))  = 0$.
This holds for all  $u \in \i \R^d$ and by the uniqueness of the Lévy-Khintchine representation (Lemma II.2.44, \citet{JS03}) this transfers to $b,c,K$. 
\end{proof}
For each $(s,\omega)$ the measure $\dd{F}_{\cdot}(s,\omega)$ restricted to the complement of $\Gamma(s,\omega)$ is measure on $\R_{\geq 0}$. Denote by $\tilde F_t(s,\omega)$ its distribution function with $\tilde F_0(s,\omega) = 0$. By Lemma \ref{lem:Tuconv} and \eqref{eq:levykhintchine1} $(b,c,K,\tilde F)$ is again a version of the semimartingale characteristics of $X$. Define
$$\Gamma^* := \{(r,\omega) \in \tilde \Omega: \dd{\tilde F}_{\cdot}(r,\omega) \ll \dd{G_{{\cdot}}(r)} \}.$$
\begin{lemma} \label{lem:gammastar}
$\PM^{(s,x)}(\Gamma^*)=1$ for all $(s,x)$. 
\end{lemma}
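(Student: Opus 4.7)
The plan is to work pointwise: I will exhibit a set $\Omega_0 \subset \Omega$ with $\tilde \PM^{(s,x)}(\Omega_0) = 1$ such that for every $\omega \in \Omega_0$, the measure $\dd{F_{\cdot}}(s,\omega)$ restricted to $\Gamma(s,\omega)^c$ is absolutely continuous with respect to $\dd{G_{\cdot}(s)}$. Since $\tilde F_{\cdot}(s,\omega)$ is, by construction, the distribution function of precisely that restricted measure, this yields $(s,\omega) \in \Gamma^*$ and hence the lemma.

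The set $\Omega_0$ is obtained by intersecting the countably many $\tilde \PM^{(s,x)}$-full-measure events on which \eqref{eq:abscontphipsi} holds for each individual index $i \in \mathbb{N}$; the countable intersection still has full measure. Fix $(s,\omega) \in \Omega_0$ and, for each $i \in \mathbb{N}$, set
$$\Gamma_i(s,\omega) := \{t > 0 : s+t \in [\tilde o(T_i,u_i), T_i] \text{ and } \kappa_t(\psi_{s+t,T_i}(u_i);(s,\omega)) \neq 0\}.$$
On $\Gamma_i(s,\omega)$ the denominator below is non-zero, so \eqref{eq:abscontphipsi} rearranges to
$$\dd{F_t}(s,\omega) = -\,\frac{f^\Phi(s+t;T_i,u_i)/\Phi_{s+t,T_i}(u_i) + \scalarprod{f^\psi(s+t;T_i,u_i), X_{t-}(\omega)}}{\kappa_t(\psi_{s+t,T_i}(u_i);(s,\omega))}\, \dd{G_t(s)},$$
exhibiting an explicit measurable density of $\dd{F}$ against $\dd{G}$ on $\Gamma_i(s,\omega)$. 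In particular the restriction of $\dd{F_{\cdot}}(s,\omega)$ to $\Gamma_i(s,\omega)$ is absolutely continuous with respect to $\dd{G_{\cdot}(s)}$.

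The definition of $\Gamma(s,\omega)$ immediately gives the set-theoretic identity $\Gamma(s,\omega)^c = \bigcup_{i \in \mathbb{N}} \Gamma_i(s,\omega)$: indeed $t \notin \Gamma(s,\omega)$ means exactly that the universal statement defining $\Gamma$ fails, i.e.\ some $i$ with $s+t \in [\tilde o(T_i,u_i), T_i]$ satisfies $\kappa_t(\psi_{s+t,T_i}(u_i);(s,\omega)) \neq 0$, which is precisely $t \in \Gamma_i(s,\omega)$. Consequently, if a Borel set $A$ has $\dd{G_{\cdot}(s)}(A) = 0$, then for every $i$ one has $\dd{G_{\cdot}(s)}(A \cap \Gamma_i(s,\omega)) = 0$, and by the absolute continuity established on each $\Gamma_i$ also $\dd{F_{\cdot}}(s,\omega)(A \cap \Gamma_i(s,\omega)) = 0$; countable subadditivity then yields $\dd{F_{\cdot}}(s,\omega)(A \cap \Gamma(s,\omega)^c) = 0$, i.e.\ $\dd{\tilde F_{\cdot}}(s,\omega)(A) = 0$. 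Thus $(s,\omega) \in \Gamma^*$, which completes the argument. The main delicacy is simply the bookkeeping required to choose a single $\omega$-set of full measure on which \eqref{eq:abscontphipsi} holds simultaneously for every $i$; once done, the proof reduces to a pointwise algebraic rearrangement and a countable-subadditivity argument.
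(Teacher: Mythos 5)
Your proof is correct and follows essentially the same route as the paper: both arguments divide equation \eqref{eq:abscontphipsi} by $\kappa$ on the sets where it is nonzero (your $\Gamma_i(s,\omega)$ are the paper's $\Lambda^i(s,\omega)$) and exploit that these sets cover the complement of $\Gamma(s,\omega)$, on which $\tilde F$ is supported. The only cosmetic difference is the final aggregation: you conclude by countable subadditivity over the $\Gamma_i$, while the paper sums the relations \eqref{eq:Ftildeabscon} with weights $2^{-i}$ and uses that $\sum_i 2^{-i}\mathbb{I}_{\Lambda^i(s,\omega)}$ is strictly positive off $\Gamma(s,\omega)$; the two steps are interchangeable.
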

\begin{proof}
Fix $(s,x)$. Since  $\tilde \PM^{(s,x)}(\{s\} \times \Omega) = 1$ it suffices to show $\tilde \PM^{(s,x)}(\Gamma^*_s) = 1$ with $\Gamma^*_s = \{(s,\omega): \omega \in \Omega,  \dd{\tilde F}_{\cdot}(s,\omega) \ll \dd{G_{{\cdot}}(s)} \}$. 
Using the associativity of integrals\footnote{See Proposition 0.4.10 \citet{RY10} or \citet{FT12} for a discussion on the rules for exchanging Stieltjes integrals.}
we can \lq divide\rq \   by $\kappa_{t}(\psi_{s+t,T_i}(u_i);(s,\omega))$ in \eqref{eq:abscontphipsi}, when it is not equal to zero. For each $i$ define $$\Lambda^i(s,\omega) := \{t \geq 0: \kappa_{t}(\psi_{s+t,T_i}(u_i);(s,\omega)) \neq 0 \text{ and } s+t \in [\tilde o(T_i,u_i),T_i] \}.$$
Fix $(s,x)$. Then \eqref{eq:abscontphipsi} together with the construction of $\tilde F$ yields $\tilde \PM^{(s,x)}$-a.s. for all $i \in \mathbb{N}$
\begin{equation} \mathbb{I}_{\Lambda^i(s,\omega)}(t) \dd{\tilde F_t(s,\omega)} = H_t^i(s, \omega) \dd{G_t(s)} \label{eq:Ftildeabscon}
\end{equation}
with some function $t \mapsto H_t^i(s,\omega)$. The weighted sum $\sum_i 2^{-i}  \mathbb{I}_{\Lambda^i(s,\omega)} \leq 1$ is strictly positive on the complement of $\Gamma(s,\omega)$. Together with equation \eqref{eq:Ftildeabscon} this then yields that $\tilde \PM^{(s,x)}$-a.s. on $\Gamma^*_s$ $ \dd{\tilde F}_{\cdot}(s,\omega) \ll \dd{G_{\cdot}(s)}$. 
\end{proof}
By Lemma \ref{lem:gammastar} also $(b,c,K,\bar F)$ with $\bar F = \tilde F \mathbb{I}_{\Gamma^*}$ give a version of the semimartingale characteristics of $X$ under each $\tilde \PM^{(s,x)}$. 
$\dd{\bar F_{\cdot}(r,\omega)} \ll \dd{G_{\cdot}(r)}$ for all $(r,\omega)$ and we denote the density by $t \mapsto f(t;r,\omega)$. 
Define 
\begin{equation} \label{eq:deterministiccharacteristics}
\text{ $\tilde b = f b$, $\tilde c = f c$ and $\tilde K_t(\dd{\xi},(r,\omega))  = K_t(\dd{\xi},(r,\omega)) f(t;r,\omega)$.}
\end{equation}
Then $(\tilde b, \tilde c, \tilde K, G)$ give the same characteristics as $(b,c,K,\bar F)$. 

$G$ is continuous and hence by Proposition II.2.9 i) in \citet{JS03} it follows that $X^\tau$ is quasi-left continuous under each measure $\PM^{(s,x)}$. Hence we can also apply Theorem 6.27 in \citet{CJ80} to get that there exist
\begin{enumerate}
\item an adapted nondecreasing continuous additive process $F$
\item $\mathcal{\tilde E}$-measurable functions $b$ and $c$ with values in $V$ and $S(V)$
\item a positive kernel $K(\dd{y};t,x)$ from $(\tilde E,\mathcal{\tilde E})$ giving measures in $\mathcal{M}(V)$,
\end{enumerate}
such that $(B,C,\nu)$ defined by
\begin{equation}  \label{eq:newchar}
\begin{aligned}
&  B_t(r,\omega) = \int_0^{t }  b(\Theta_v(r),X_{v-}(\omega)) \dd{F_v(r,\omega)}, \\
&  C_t(r,\omega) = \int_0^{t }  c(\Theta_v(r),X_{v-}(\omega)) \dd{F_v(r,\omega)}, \\
&  \nu(\dd{t},\dd{\xi};(r,\omega)) =  K(\dd{\xi};\Theta_t(r),X_{t-}(\omega)) \dd{F}_{t}(r,\omega)
\end{aligned}
\end{equation}
are a version of the characteristics of $X$ under each $\PM^{(s,x)}$. Construct $\bar F$ as before, so that $(b,c,K,\bar F)$ also give a version of the semimartingale characteristics and $\dd{\bar F_{\cdot}(r,\omega)} \ll \dd{G_{\cdot}(r)}$. $G$ is continuous and strictly increasing. We define a time-change
\begin{equation} \label{eq:dettimechange}
\eta_t(r) = \eta_t(r,\omega) :=  \inf \{ v \geq 0: G(v+r)-G(r) \geq t \}.
\end{equation}
Then for $\hat F_t(r,\omega) = \bar F_{\eta_t(r)}(r,\omega)$ we have $\dd{\hat F} \ll \dd{t}$ (note that $G_{\eta_s(r)} - G_0(r) = s$). The first part of the proof of Theorem 3.55 in \citet{CJ80} together with Proposition 3.56 applied to $\hat F$ shows that there exists a nonnegative $\mathcal{\tilde E}$-measurable function $h$, such that $$\bar F_t(r,\omega) = \int_0^t h(\Theta_v(r),X_{v-}(\omega)) \dd{G_v(r)}$$ $\tilde \PM^{(s,x)}$-a.s. under each $(s,x)$. 
Set $\bar b = b h$, $\bar c = c h$, $\bar K = K h$. Then $(\bar b, \bar c, \bar K, G)$ give a version of the characteristics of $X$ via $\eqref{eq:newchar}$ with $(b,c,K,F)$ replaced by $(\bar b, \bar c, \bar K, G)$. We can summarize this in the following lemma. 
\begin{lemma} \label{lem:deterministicA} A version of the characteristics of $(X,\mathcal{\tilde F},\{ \tilde \PM^{(s,x)} \})$ is given by $(\bar b,\bar c,\bar K,G)$ via equation \eqref{eq:newchar}, where
\begin{enumerate}
\item $G$ is an additive process given by 
$G_t(r,\omega) := G(r+t)-G(r)$ for a deterministic strictly increasing continuous function $G(t)$ with $G(0) = 0$. 
\item $\bar b(t,x)$ and $\bar c(t,x)$ are $\mathcal{\tilde E}$-measurable functions  with values in $V$ and $S(V)$,
\item $\bar K$ is a positive kernel $\bar K(\dd{y};t,x)$ 
from $(\tilde E,\mathcal{\tilde E})$
giving measures in $\mathcal{M}(V)$. 
\end{enumerate}
\end{lemma}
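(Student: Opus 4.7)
The plan is to combine two complementary representations of the semimartingale characteristics: one that forces the controlling additive process to be deterministic, and one that forces the coefficients to depend only on $(t,x)$. The two are then reconciled via a Radon--Nikodym argument delivered by a deterministic time change.

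Starting from Lemma~\ref{lem:CJ80}, I would first upgrade the $\omega$-dependent controller $F$ to the deterministic $G$ defined in \eqref{eq:Gfunc}. The point of that construction is that, by the finite variation hypothesis and the weights $w_i$, the functions $t \mapsto \Phi_{t,T_i}(u_i)$ and $t \mapsto \psi^{j}_{t,T_i}(u_i)$ are absolutely continuous with respect to $\dd{G}$ on $[\tilde o(T_i,u_i), T_i]$ with deterministic densities. Writing the martingale identity \eqref{eq:Mmartingale} for $M^{T_i-s,u_i}$ and equating its predictable finite variation part to zero produces \eqref{eq:abscontphipsi}. Collecting a single null set that handles all $(T_i,u_i)$ simultaneously, Lemma~\ref{lem:Tuconv} and Lemma~\ref{lem:gammastar} give that on the exceptional set $\Gamma(s,\omega)$ the integrands $(b,c,K)$ vanish, while on its complement $\dd{F}_{\cdot}(s,\omega) \ll \dd{G}_{\cdot}(s)$. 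Replacing $F$ by $\bar F = \tilde F \mathbb{I}_{\Gamma^*}$ and pushing the Radon--Nikodym density into the coefficients as in \eqref{eq:deterministiccharacteristics} yields a version $(\tilde b, \tilde c, \tilde K, G)$ of the characteristics with a deterministic controller, although the coefficients still depend on $(r,\omega)$.

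To obtain $\mathcal{\tilde E}$-measurability in the coefficients, I would repeat the absolute continuity argument starting from a different initial representation. Since $G$ is continuous, $X$ is quasi-left-continuous under each $\tilde \PM^{(s,x)}$ by Proposition~II.2.9 of \citet{JS03}, so Theorem~6.27 of \citet{CJ80} furnishes $(b,c,K,F)$ in the form \eqref{eq:newchar} with $b$, $c$, $K$ genuine $\mathcal{\tilde E}$-measurable functions of $(t,x)$, at the cost of an $\omega$-dependent controller $F$. Re-running the $\Phi$--$\psi$ argument yields a modification $\bar F$ with $\dd{\bar F} \ll \dd{G}$. The key remaining point is to show that the associated Radon--Nikodym density is itself a function of $(\Theta_v, X_{v-})$. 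For this I would use the deterministic time change $\eta_t(r)$ from \eqref{eq:dettimechange} to convert $\dd{\bar F} \ll \dd{G}$ into $\dd{\hat F} \ll \dd{t}$, where $\hat F_t = \bar F_{\eta_t}$, and then apply the first part of the proof of Theorem~3.55 together with Proposition~3.56 of \citet{CJ80}; these produce a nonnegative $\mathcal{\tilde E}$-measurable $h$ with $\bar F_t(r,\omega) = \int_0^t h(\Theta_v(r), X_{v-}(\omega)) \dd{G_v(r)}$. Absorbing $h$ into the coefficients via $\bar b = b h$, $\bar c = c h$, $\bar K = K h$ then completes the construction.

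The main obstacle, which motivates this two-step structure, is that neither starting representation yields both desired properties at once: the conjunction of Lemma~\ref{lem:CJ80} with \eqref{eq:abscontphipsi} forces the deterministic controller $G$ but leaves the coefficients dependent on $(r,\omega)$, whereas Theorem~6.27 of \citet{CJ80} supplies $(t,x)$-measurable coefficients but only an $\omega$-dependent controller. The bridge between them is the density $h$, and the technical heart of the argument is verifying that the hypotheses of Proposition~3.56 of \citet{CJ80} are met on the space-time process after the deterministic time change $\eta$, so that $h$ can indeed be chosen as a measurable function of $(\Theta_v, X_{v-})$ rather than of $(r,\omega)$.
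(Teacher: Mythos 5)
Your proposal is correct and follows essentially the same route as the paper: a first pass from Lemma \ref{lem:CJ80} and the finite-variation martingale argument (via \eqref{eq:abscontphipsi}, Lemma \ref{lem:Tuconv} and Lemma \ref{lem:gammastar}) to force the deterministic controller $G$, followed by a second pass via Theorem 6.27 of \citet{CJ80} --- enabled by the quasi-left-continuity that the continuity of $G$ provides --- together with the time change $\eta$ and Proposition 3.56 to obtain the $\mathcal{\tilde E}$-measurable density $h$ and absorb it into the coefficients. There is nothing substantive to add.
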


To show a theorem in the spirit of Theorem \ref{thm:homaffine} we use the affine basis from Assumption \ref{ass:affinespan}. To do this we consider independent copies of $X$ starting at different values. Define
\begin{itemize}
\item the probability space $\hat \Omega := \R_{\geq 0} \times \Omega^{d+1}$, 
\item the process $\hat X(\hat \omega) := (\Theta(s), X(\omega^0), \dots, X(\omega^d)), $ where $\hat \omega = (s,\omega^0, \dots, \omega^d)$,
\item the filtration $\mathcal{\hat F}$ defined by $\mathcal{\hat F}_t :=\mathcal{B}(\R_{\geq 0}) \times (\otimes_{i=0}^d \F_t)$,
\item for $s \geq 0$ and $x = (x^0, \dots, x^d) \in E^{d+1}$ the probability measures $$\hat \PM^{(s, x)} := \delta_s \otimes (\otimes_{i=0}^d \PM^{(s,x^i)}).$$
\end{itemize}
Then for each $s \geq 0$ the processes $\hat X^0, \dots, \hat X^d$ with $\hat X^i(\hat \omega) = X(\omega^i)$ are independent, $\hat \PM^{(s,x)}(\hat X^i_0 = x^i) = 1$ for $i = 0, \dots, d,$ and each component $\hat X^i$ behaves like the time-inhomo\-geneous affine process starting in $(s,x^i)$. I.e. each component $\hat X^i$ has semimartingale characteristics as described in Lemma \ref{lem:deterministicA} (with the same $G$) 
depending only on $(s,\omega^i)$, respectively $(\Theta_t(s), X_{t-}(\omega^i))$ (and not $\hat \omega$)

For the affine basis $(x^0,\dots,x^d)$ 
we can express $\hat X_t^i$ in terms of the basis $(x^1-x^0,\dots,x^d-x^0)$, i.e. we define processes $X_{t,j}^i$ by $\hat X_t^i = \sum_{j=1}^d X_{t,j}^i (x^j - x^0)$ and consider the matrix-valued stochastic process
\begin{align}
& H_t := H(\hat X^0_t,\dots, \hat X_t^d) := \mat{1 & X_{t,1}^0 & \hdots & X_{t,d}^0 \\ \vdots &\vdots &\ddots & \vdots \\ 1 & X_{t,1}^{d} & \hdots &  X_{t,d}^{d} },  \label{eq:Ht} \\
&  Y_t = \inf_{0 \leq s \leq t} \vert \det H_s \vert.
\end{align} 
Set $\bar x = (x^0, \dots, x^d)$. We have the following lemma (compare \citet{KM11}). 
\begin{lemma} \label{lem:linindX}
For each $s > 0$ there exists $\delta(s) > 0$ such that 
 $$\hat \PM^{(s,\bar x)}(\det H_t \neq 0 \text{ for all } 0 \leq t \leq \delta(s)) > 0.$$
\end{lemma}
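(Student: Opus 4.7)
The plan is to reduce the statement to two elementary ingredients: the explicit value $\det H_0 = 1$ and the right-continuity of $t \mapsto \det H_t$ at $0$, applied to a càdlàg version of the process.

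First, I would compute $H_0$ explicitly. Under $\hat \PM^{(s,\bar x)}$ one has $\hat X_0^i = x^i$ almost surely for $i=0,\dots,d$. Expanding $x^0$ in the basis guaranteed by Assumption \ref{ass:affinespan}, say $x^0 = \sum_{j=1}^d \alpha_j (x^j-x^0)$, gives $X_{0,j}^0 = \alpha_j$. For $i \geq 1$ the identity $x^i = x^0 + (x^i-x^0)$ then yields $X_{0,j}^i = \alpha_j + \delta_{ij}$. Subtracting the first row from every subsequent row in the definition \eqref{eq:Ht} converts $H_0$ into a block-triangular matrix with top-left block $[1]$ and bottom-right block $I_d$. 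Hence $\det H_0 = 1 \neq 0$, independently of the starting point $x^0$.

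Second, I would invoke the existence of a càdlàg modification established in section \ref{sec:cadlagversion}. Working with this modification, each coordinate process $\omega^i \mapsto X(\omega^i)$ is $\PM^{(s,x^i)}$-a.s.\ càdlàg. Because $\hat \PM^{(s,\bar x)}$ is the product measure $\delta_s \otimes \bigotimes_{i=0}^d \PM^{(s,x^i)}$, the path $t \mapsto (\hat X_t^0,\dots,\hat X_t^d)$ is $\hat \PM^{(s,\bar x)}$-a.s.\ càdlàg, and since $H_t$ and $\det H_t$ depend continuously on $(\hat X_t^0,\dots,\hat X_t^d)$, the map $t \mapsto \det H_t$ is itself a.s.\ càdlàg.

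Third, combining the two ingredients: by right-continuity at $0$ and $\det H_0 = 1$, the random time
$$\epsilon(\hat \omega) := \sup\{t \geq 0 : \det H_r(\hat \omega) \neq 0 \text{ for all } r \in [0,t] \}$$
satisfies $\epsilon > 0$ $\hat \PM^{(s,\bar x)}$-a.s. Consequently $\hat \PM^{(s,\bar x)}(\epsilon > \delta) \uparrow 1$ as $\delta \downarrow 0$, and any sufficiently small deterministic $\delta(s)>0$ yields $\hat \PM^{(s,\bar x)}(\epsilon > \delta(s)) > 0$. On the event $\{\epsilon > \delta(s)\}$ one has $\det H_t \neq 0$ on the whole interval $[0,\delta(s)]$, which is exactly the desired conclusion. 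No step looks genuinely delicate; the mildly subtle point is merely to carry out the determinant computation carefully enough to see that the unknown coordinates $\alpha_j$ of $x^0$ drop out after row reduction.
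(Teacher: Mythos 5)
Your proof is correct. It rests on the same two pillars as the paper's own argument -- the nondegeneracy of $H_0$ and the right-continuity of the paths at $t=0$ -- but the packaging is genuinely leaner. The paper works with $Y_t = \inf_{0 \leq r \leq t} \vert \det H_r \vert$ and the function $g(t) = \hat\PM^{(s,\bar x)}(Y_t > 0)$, proves that $g$ is continuous on all of $\R_{\geq 0}$ (left-continuity via stochastic continuity, a.s.\ convergence along a subsequence and dominated convergence; right-continuity from the càdlàg paths), and then takes $\delta(s)$ to be the first time $g$ reaches $\frac{1}{2}$, which is positive because $g(0)=1$. Since you only need right-continuity at the single point $t=0$, you can drop the left-continuity/stochastic-continuity step entirely and replace the continuity of $g$ by the a.s.\ positivity of the random time $\epsilon$ together with continuity of measure along $\{\epsilon > 1/n\} \uparrow \{\epsilon > 0\}$. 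You also make explicit the computation $\det H_0 = 1$, which the paper uses only implicitly through $g(0)=1$; your row reduction showing that the unknown coordinates $\alpha_j$ of $x^0$ cancel is exactly where Assumption \ref{ass:affinespan} enters. Two minor points: in this section $X$ is already càdlàg by Definition \ref{def:markovsemimartingale}, so the appeal to the modification of section \ref{sec:cadlagversion} is not actually needed; and for the measurability of $\{\epsilon > \delta\}$ it is cleanest to argue, as the paper implicitly does, through the infimum process $Y$, whose measurability is immediate for càdlàg paths, using $\{Y_t > 0\} \subset \{\epsilon \geq t\}$.
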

\begin{proof}
Fix $s > 0$ and set $g(t) = \hat \PM^{(s,\bar x)}(Y_t > 0)$. We first show that $g(t)$ is left-continuous. Let $t_k \uparrow t$. $g(0) = 1$ and $g(t)$ is decreasing. Hence $g(t_k)$ converges. 
Since $X$ is stochastically continuous and Markov, it is also continuous in probability under $\PM^{(s, x)}$ and hence $Y_t$ as well. Then there is a subsequence $t_{k_m}$, such that $Y_{t_{k_m}} \downarrow Y_t$ a.s.. By dominated convergence also $g(t_{k_m}) \downarrow g(t)$. Hence $g(t_k)$ also converges to $g(t)$ and we get that $g(t)$ is left-continuous. Since $Y_t$ is a.s. right-continuous by dominated convergence this also holds for $g(t)$. Hence $g(t)$ is continuous in $t$ and we can define
$$\delta(s) := \inf \{t > 0:  \hat \PM^{(s,\bar x)}(Y_t > 0)  = \frac{1}{2} \} > 0.$$
\end{proof}

We conjecture that a uniform version of Lemma \ref{lem:linindX} is true for a continuously affine transition function, i.e. that for each $T > 0$ there exists $\delta > 0$ such that for all $0 \leq s \leq T$ $\hat \PM^{(s,\bar x)}(\det H_t \neq 0 \text{ for all } 0 \leq t \leq \delta) > 0.$ In this case the affine transition function is of finite variation.
\begin{lemma} \label{lem:phipsifinvar}
Assume that for each $T > 0$ there exists $\delta > 0$ such that for all $0 \leq s \leq T$ $$\hat \PM^{(s,\bar x)}(\det H_t \neq 0 \text{ for all } 0 \leq t \leq \delta) > 0.$$
Then the functions $t \mapsto \Phi_{t,T}(u)$ and $t \mapsto \scalarprod{\psi_{t,T}(u),x}$ are of finite variation on $[v,T]$ for every $v > o(T,u)$ and $x \in E$. 
\end{lemma}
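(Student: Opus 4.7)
The plan is to exploit the $d+1$ independent copies $\hat X^0,\ldots,\hat X^d$ of $X$ started from the affine basis $x^0,\ldots,x^d$ at time $v$ and to invert a matrix identity relating the $d+1$ exponential martingales to the unknown deterministic pair $(\phi_{\cdot,T}(u),\psi_{\cdot,T}(u))$. I fix $v>o(T,u)$, so that $\Phi_{v+t,T}(u)\neq 0$ throughout $[0,T-v]$; under $\hat \PM^{(v,\bar x)}$ each
\[
M_t^{T-v,u,i}\;=\;\Phi_{v+t,T}(u)\,\e^{\scalarprod{\psi_{v+t,T}(u),\hat X_t^i}},\qquad i=0,\ldots,d,
\]
is a uniformly bounded non-vanishing complex martingale (boundedness comes from continuity of $\Phi$ and $\psi$ on the compact parameter set together with $\psi_{v+t,T}(u)\in\mathcal U_k$ for some $k$; non-vanishing since $\Phi\neq 0$), hence a complex semimartingale. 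Writing the identity $\log M_t^{T-v,u,i}=\phi_{v+t,T}(u)+\scalarprod{\psi_{v+t,T}(u),\hat X_t^i}$ in matrix form yields
\[
H_t\begin{pmatrix}\phi_{v+t,T}(u)\\ \psi_{v+t,T}^{1}(u)\\ \vdots\\ \psi_{v+t,T}^{d}(u)\end{pmatrix}\;=\;\begin{pmatrix}\log M_t^{T-v,u,0}\\ \vdots\\ \log M_t^{T-v,u,d}\end{pmatrix},
\]
where $H_t$ is the matrix from \eqref{eq:Ht} and $\psi^j$ denotes the $j$-th coordinate of $\psi$ in the standard basis of $V$.

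The hypothesis supplies $\delta>0$ with $\hat \PM^{(v,\bar x)}(A)>0$ for $A=\{\det H_t\neq 0\text{ for all }t\in[0,\delta]\}$. I pass to the conditional measure $\mathbb Q=\hat \PM^{(v,\bar x)}(\,\cdot\mid A)$, which is absolutely continuous with respect to $\hat \PM^{(v,\bar x)}$ and therefore preserves the semimartingale property (Proposition III.3.1 of \citet{JS03}). Under $\mathbb Q$ the matrix $H_t$ stays in $\mathrm{GL}(d+1)$ throughout $[0,\delta]$, so Itô's formula applied to the smooth inverse map makes $H_t^{-1}$ a $\mathbb Q$-semimartingale; multiplying the display above on the left by $H_t^{-1}$ then realises the deterministic vector $(\phi_{v+t,T}(u),\psi_{v+t,T}^{1}(u),\ldots,\psi_{v+t,T}^{d}(u))^{\top}$ as a $\mathbb Q$-semimartingale on $[0,\delta]$. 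A càdlàg deterministic process that is a semimartingale necessarily has finite variation on compacts (Proposition I.4.28 of \citet{JS03}), so $\phi_{v+\cdot,T}(u)$ and each $\psi_{v+\cdot,T}^{j}(u)$ is of finite variation on $[0,\delta]$.

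Because the hypothesis provides a $\delta$ uniform in $s\in[0,T]$, I iterate the above argument at the partition points $v,v+\delta,v+2\delta,\ldots$ to cover $[v,T]$ by finitely many sub-intervals, stitching together finite variation on each piece. Exponentiation gives finite variation of $\Phi_{\cdot,T}(u)=\e^{\phi_{\cdot,T}(u)}$, and since $\scalarprod{\psi_{\cdot,T}(u),x}=\sum_{j}\psi^{j}_{\cdot,T}(u)\,x_{j}$ is $\R$-linear in $\psi$, finite variation of the coordinates transfers to finite variation of $\scalarprod{\psi_{\cdot,T}(u),x}$ for every $x\in E$.

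The main obstacle is to make $\log M_t^{T-v,u,i}$ into a genuine semimartingale. For real $u\in V\cap\mathcal U$ this is trivial, since Lemma \ref{lem:affineproperties3}(ii) forces $M$ to be strictly positive and its real logarithm is immediately a positive-real semimartingale. For complex $u$ the branch of the complex logarithm has to be tracked continuously along the càdlàg path of $M$, with the jumps identified with the prescribed deterministic value $\scalarprod{\psi_{v+t,T}(u),\Delta\hat X_t^i}$. A cleaner, log-free route is to carry out the matrix inversion separately for the strictly positive real semimartingale $|M_t^{T-v,u,i}|^{2}$, yielding finite variation of $\log|\Phi_{\cdot,T}(u)|$ and $\Re\psi_{\cdot,T}(u)$, and for the ratios $M^{i}/M^{0}$, whose real and imaginary parts are semimartingales and determine $\Im\psi_{\cdot,T}(u)$ via elementary trigonometric identities; combining the two recovers finite variation of the full complex-valued functions.
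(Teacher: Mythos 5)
Your proof is correct and follows essentially the same route as the paper's: write the matrix identity $H_t\,(\phi_{v+t,T}(u),\psi^1_{v+t,T}(u),\dots,\psi^d_{v+t,T}(u))^{\top}=(\log \hat M^{T-v,u,0}_t,\dots,\log \hat M^{T-v,u,d}_t)^{\top}$, invert $H_t$ on a positive-probability event where it stays nonsingular up to time $\delta$, note that the resulting deterministic vector coincides there with a semimartingale and hence is of finite variation by Proposition I.4.28 of \citet{JS03}, and then patch together intervals of length $\delta$ (the paper uses a stopping time $\hat{\tau}$ and a compactness/finite-subcover argument where you use a conditional measure and an arithmetic partition, but these are cosmetic differences). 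Your closing paragraph on fixing a branch of the complex logarithm addresses a genuine point that the paper's proof passes over in silence, and the $\vert M\vert^{2}$/ratio workaround you sketch is a reasonable way to make that step airtight.
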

\begin{proof}
Fix $T > 0$, $u \in \mathcal{U}$.
For $o(T,u) \leq s \leq T$, $0 \leq i \leq d$
consider the $(\hat \F, \hat \PM^{(s,\bar x)})	$-martingales
$$\hat M_t^{T-s,u,i} = \Phi_{s+t,T}(u) \e^{\scalarprod{\psi_{s+t,T}(u),\hat X_t^i}}, \quad 0 \leq t \leq T-s.$$
By assumption 
there is $\delta > 0$ and a set $\Lambda(s)$ with $\hat \PM^{(s,\bar x)}(\Lambda(s)) > 0$, such that $H_t$ is invertible on $[0, \delta]$. Let $\hat{\tau}$ be a stopping time such that $H_t$ is invertible on $[0,\hat{\tau}]$ and $\hat{\tau} \geq \delta$ on $\Lambda(s)$. We then have $\hat \PM^{(s,\bar x)}$-a.s. that 
\begin{equation} \label{eq:phipsifinvar}
\mat{\logn{\Phi_{s+t \land \hat{\tau},T}(u)} \\ \psi_{s+t \land \hat{\tau},T}^1(u) \\ \vdots \\ \psi_{s+t \land \hat{\tau},T}^d(u)} = \mat{1 & X_{t  \land \hat{\tau},1}^0 & \hdots & X_{t \land \hat{\tau},d}^0 \\ \vdots &\vdots &\ddots & \vdots \\ 1 & X_{t \land \hat{\tau},1}^{d} & \hdots &  X_{t \land \hat{\tau},d}^{d} }^{-1} \mat{ \logn{\hat M_{t \land \hat{\tau}}^{T-s,u,0}} \\ \vdots \\ \logn{\hat M_{t \land \hat{\tau}}^{T-s,u,d}}}.
\end{equation}
On $\Lambda(s)$ and $[0,\delta \land (T-s)]$ the left side is deterministic and coincides with a semimartingale. Hence $\Phi$ and $\psi$ are of finite variation on $[s,(s+\delta) \land T]$. Let $v > o(T,u)$. Then we can cover $[v,T]$ with the open intervals $(s,s+\delta)$, $s \in [v,T]$ and by compactness there is a finite subcover. Since $\Phi$ and $\psi$ are of finite variation on $[s,(s+\delta) \land T]$, they are of finite variation on $[v,T]$. 
\end{proof}

While it seems plausible that the assumption of Lemma \ref{lem:phipsifinvar}
is fulfilled for a continuously affine transition function, the following counterexample shows that this is not true for all affine transition functions.
\begin{example}
Construct a time-inhomogeneous Markov processes, which is $\PM^{(s,x)}$-a.s. equal to 
\begin{align*}
X_t^{s,x} := & \1{s < 1} ((1-s-t) B^x_{t}\1{s + t \leq 1} + (B^x_t - B^x_{1-s}) \1{s+t > 1}) \\ 
+ & \1{s \geq 1} B^x_{t},
\end{align*}
where $B^x_t$ is a Brownian motion starting in $x$. This is an affine process with
\begin{align*}
\psi_{s,t}(u) = \begin{cases} s \leq t < 1: & u \frac{1-t}{1-s} \\ s < 1 \leq t: & 0 \\ 1 \leq s \leq t: & u \end{cases} \qquad \qquad \phi_{s,t}(u) = \begin{cases} s \leq t < 1: & \frac{1}{2} u^2 (t-s) (1-t)^2 \\ s < 1 \leq t: & \frac{1}{2} u^2 (t-1) \\ 1 \leq s \leq t: & \frac{1}{2} u^2 (t-s) \end{cases}
\end{align*}
This process is not stochastically continuous in the sense of Definition \ref{def:stochcont}. In particular, for the sequence $(s,t)_n := (1-\frac{1}{n},1) \rightarrow (1,1)$ stochastic continuity (in $s$) fails. Moreover we cannot find a uniform $\delta$ on the interval $[0,1]$, since $\delta(s)$ from Lemma \ref{lem:linindX} in this case satisfies $\delta(s) \leq 1-s$ for $0 \leq s \leq 1$. 
\end{example}
We now formulate and prove the central theorem of this section. 
\begin{theorem} \label{thm:affinesemiinhom}
Let $(X,\mathcal{\F},\{ \PM^{(s,x)}\})$ be a Markov semimartingale whose transition function $\{ P_{s,t} \}$ is continuously affine of finite variation. 
Then there exist a deterministic $\R_{\geq 0}$-valued strictly increasing continuous function $G$, maps $ b, a, m$ from $\R_{\geq 0}$ to $V$, $S(V)$ and $\mathcal{M}(V)$ and maps $\beta: \R_{\geq 0} \times E \rightarrow V$, $\alpha: \R_{\geq 0} \times E \rightarrow S(V)$ and $M: \R_{\geq 0} \times E \rightarrow \mathcal{M}(V)$, which for $\dd{G}$-a.e. $t\geq0$ are restrictions of linear maps on $E$, such that for all $u \in \mathcal{U}$, $\Phi$ and $\psi$ satisfy generalized Riccati integral equations
\begin{equation} \label{eq:RiccatiinhomA}
\begin{aligned}
\Phi_{s,T}(u) & = 1 + \int_s^T \Phi_{t,T}(u) F(t,\psi_{t,T}(u)) \dd{G(t)}, \\
\psi_{s,T}(u) & = u + \int_s^T R(t,\psi_{t,T}(u)) \dd{G(t)},
\end{aligned}
\end{equation}
where
\begin{equation} \label{eq:FGinhom}
\begin{aligned}
F(t,u) & = \frac{1}{2} \scalarprod{u,a(t) u} + \scalarprod{b(t),u} +\int_{V} \left( \e^{\scalarprod{\xi,u}}-1-\scalarprod{h(\xi),u} \right) m(\dd{\xi}; t), \\
\scalarprod{R(t,u),x} & = \frac{1}{2} \scalarprod{u,\alpha(t,x) u} + \scalarprod{\beta(t,x),u} \\ & \quad +\int_{V} \left( \e^{\scalarprod{\xi,u}}-1-\scalarprod{h(\xi),u} \right) M(\dd{\xi}; t,x).
\end{aligned}
\end{equation}
If furthermore $\mathcal{\tilde \F}$ is a right-continuous strong Markov filtration on the extended probability space $(\tilde \Omega, \mathcal{\tilde A})$\footnote{Note that this holds for the canonical space-time realization of the Markov process $X$.}, then 
\begin{align}
& B_t = \int_0^{t} b(\Theta_v)+ \beta(\Theta_v,X_{v-}) \dd{G(\Theta_v)}, \notag \\
& C_t = \int_0^{t} a(\Theta_v) + \alpha(\Theta_v,X_{v-}) \dd{G(\Theta_v)}, \label{eq:inhomcharacteristics}\\
& \nu(\dd{t}, \dd{\xi}) = \left( m(\dd{\xi};\Theta_t) + M(\dd{\xi}; \Theta_t,X_{t-}) \right) \dd{t} \notag
\end{align}
are a version of the semimartingale characteristics of $X$ under each measure $\tilde \PM^{(s,x)}$. Here $\Theta_t(r,\omega) = t+r$. 
\end{theorem}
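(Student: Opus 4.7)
The plan is to combine Lemma \ref{lem:deterministicA} with the exponential-martingale identity to extract the affine structure of the characteristics, and then read off the Riccati equations directly.

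First I would invoke Lemma \ref{lem:deterministicA} to fix a deterministic strictly increasing continuous $G$ together with measurable functions $\bar b(t,x)$, $\bar c(t,x)$, $\bar K(\dd{\xi};t,x)$ such that $(\bar b,\bar c,\bar K,G)$ gives a version of the semimartingale characteristics of $X$ via \eqref{eq:newchar}. Define the Lévy-Khintchine functional
\[
\tilde\kappa(t,x,u) := \scalarprod{\bar b(t,x),u} + \tfrac12 \scalarprod{u,\bar c(t,x)u} + \int_V \bigl(\e^{\scalarprod{u,\xi}} - 1 - \scalarprod{u,h(\xi)}\bigr) \bar K(\dd{\xi};t,x).
\]
Next I would repeat the computation that led to \eqref{eq:Mmartingale} for the exponential martingale $M_t^{T-s,u}$ of \eqref{eq:semimartingaleMmartingale}, now using $(\bar b,\bar c,\bar K,G)$ in place of the abstract $(b,c,K,F)$. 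The finite-variation hypothesis together with the construction of $G$ in \eqref{eq:Gfunc} gives absolute continuity of $\Phi_{\cdot,T_i}(u_i)$ and $\psi^j_{\cdot,T_i}(u_i)$ with respect to $\dd{G}$ on $(\tilde o(T_i,u_i),T_i]$, with deterministic densities $f^\Phi(\cdot;T_i,u_i)$ and $f^\psi(\cdot;T_i,u_i)$. Vanishing of the predictable drift then yields $\tilde\PM^{(s,x)}$-a.s., for $\dd{G}$-a.e.\ $t$ and all $i$ simultaneously,
\[
\tilde\kappa\bigl(s+t,X_{t-},\psi_{s+t,T_i}(u_i)\bigr) = -\frac{f^\Phi(s+t;T_i,u_i)}{\Phi_{s+t,T_i}(u_i)} - \scalarprod{f^\psi(s+t;T_i,u_i),X_{t-}},
\]
so that $\tilde\kappa(t,\cdot,\psi_{t,T_i}(u_i))$ coincides with a specific affine function of $x$ at the random point $X_{t-}$.

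To lift this pointwise identity to a genuine affine dependence in $x$, I would pass to the extended space $\hat\Omega$ carrying independent copies $\hat X^0,\dots,\hat X^d$ starting from the affine basis $x^0,\dots,x^d$. Each copy satisfies the identity above, and by Lemma \ref{lem:linindX} the matrix $H_t$ from \eqref{eq:Ht} is invertible on some $[0,\delta(s)]$ with positive $\hat\PM^{(s,\bar x)}$-probability, so that $\hat X_{t-}^0,\dots,\hat X_{t-}^d$ form an affine basis of $E$ on a set of positive measure for $\dd{G}$-a.e.\ $t$. Since two affine functions on $E$ agreeing on an affine basis agree everywhere, this forces $\tilde\kappa(s+t,\cdot,\psi_{s+t,T_i}(u_i))$ to be affine on all of $E$ for $\dd{G}$-a.e.\ $t$. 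Using $\psi_{t,t}(u)=u$ together with the continuity of $\psi$ and density of $\{(T_i,u_i)\}$, the arguments $\psi_{s+t,T_i}(u_i)$ exhaust a dense subset of $\i V$ as $i$ varies; continuity of $u\mapsto\tilde\kappa(t,x,u)$ on $\i V$ then extends the affine property to all $u\in\i V$. Uniqueness of the Lévy-Khintchine representation (Lemma II.2.44 of \citet{JS03}) applied at each such $t$ decomposes $\bar b(t,x)=b(t)+\beta(t,x)$, $\bar c(t,x)=a(t)+\alpha(t,x)$, $\bar K(\dd{\xi};t,x)=m(\dd{\xi};t)+M(\dd{\xi};t,x)$ with $\beta,\alpha,M$ linear in $x$, yielding the representation \eqref{eq:FGinhom} of $F(t,u)$ and $R(t,u)$.

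Finally, substituting this affine decomposition back into the drift-vanishing identity from the second step expresses both terms on its right-hand side as $F(s+t,\psi_{s+t,T_i}(u_i))$ and $R(s+t,\psi_{s+t,T_i}(u_i))$, and integrating against $\dd{G}$ on $[s,T]$ produces the generalized Riccati integral equations \eqref{eq:RiccatiinhomA}, first on the countable dense set of parameters and then, by continuity of $\Phi,\psi,F,R$ in $u$, for all $u\in\mathcal U$. The characteristics \eqref{eq:inhomcharacteristics} are then just \eqref{eq:newchar} rewritten via the affine decomposition. The main obstacle will be the bookkeeping of null sets: the drift-vanishing identity holds only $\tilde\PM^{(s,x)}$-a.s.\ for each fixed $(T,u)$, and one must exploit the countable dense sequence $(T_i,u_i)$ together with the joint continuity of $\Phi,\psi$ and $\tilde\kappa$ in order to obtain a conclusion in which $b,a,m,\beta,\alpha,M$ are genuine deterministic functions on $\R_{\geq 0}\times E$ independent of the starting point $(s,x)$.
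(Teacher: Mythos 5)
Your proposal reproduces the paper's own proof almost step for step: Lemma \ref{lem:deterministicA} supplies the deterministic clock $G$ and the characteristics $(\bar b,\bar c,\bar K,G)$; the integration-by-parts/It\^o computation on the exponential martingales kills the drift and yields the identity \eqref{eq:abscontphipsi2}; the independent copies started at the affine basis, the invertibility of $H_t$ from Lemma \ref{lem:linindX}, the density of the $(T_i,u_i)$ together with $\psi_{t,t}(u)=u$ and the continuity of $u\mapsto\bar\kappa(u;t,x)$ give the affine dependence on $x$; and uniqueness of the Lévy-Khintchine representation produces the splitting into $(b,\beta)$, $(a,\alpha)$, $(m,M)$. (Your phrase ``two affine functions agreeing on an affine basis agree everywhere'' is, strictly speaking, only an identification of the candidate affine function at the $d+1$ random points, after which the characteristics must be redefined on a null set; the paper is equally terse here, so I do not count this against you.)

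The one genuine divergence is the derivation of the Riccati integral equations for a general $(T,u)$: you propose to prove \eqref{eq:RiccatiinhomA} on the countable dense set $(T_i,u_i)$ and then pass to the limit using continuity of $\Phi,\psi,F,R$ in $u$. As stated this step has a gap: it requires interchanging the limit with the $\dd{G}$-integral, but $F(t,\cdot)$ and $R(t,\cdot)$ are only defined for $\dd{G}$-a.e.\ $t$ and you have no local-in-$t$ bound that is uniform over compact sets of $u$, so dominated convergence is not available off the shelf. The paper avoids this entirely by rerunning the martingale computation for the arbitrary fixed pair $(T,u)$ --- the finite-variation hypothesis applies to every $(T,u)$, not only the dense ones --- which gives the measure identity $\dd{\Phi_{\cdot,T}(u)}/\Phi_{\cdot,T}(u) + \scalarprod{\dd{\psi_{\cdot,T}(u)},X_{t-}(\omega^j)} = -\bar\kappa\,\dd{G}$ directly; invertibility of $H_t$ on a set of positive probability then separates the components and identifies the densities $-\Phi_{t,T}(u)F(t,\psi_{t,T}(u))$ and $-R(t,\psi_{t,T}(u))$, and a covering argument extends this from $[s,(s+\delta(s))\land T]$ to all of $[v,T]$, $v>o(T,u)$. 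Everything needed for that direct argument is already contained in your second paragraph, so the repair is to apply it to general $(T,u)$ rather than extending by continuity.
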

\begin{remark}
Equation \eqref{eq:RiccatiinhomA} reads in terms of $\phi$
$$ \phi_{s,T}(u) = \int_s^T F(t,\psi_{t,T}(u)) \dd{G(t)}.$$
\end{remark}
\begin{proof}
If $\mathcal{\tilde \F}$ is not a strong Markov filtration we consider instead of $X$ the canonical space-time realization of section \ref{sec:cadlagversion}. Note that if $(X,\F,\{ \PM^{(s,x)} \})$ is a Markov semimartingale, it is also a Markov semimartingale with respect to the smaller filtration generated by $X$ and then also the canonical space-time realization is a Markov semimartingale (we only add null sets to the filtration) with a right-continuous strong Markov filtration $\tilde \F$. Hence without loss of generality we may assume that $\mathcal{\tilde \F}$ is a strong right-continuous Markov filtration. 

Fix $u \in \mathcal{U}$ and $s \geq 0$ and consider the $(\hat \F,\hat \PM^{(s,\bar x)})$-martingales introduced in the proof of Lemma \ref{lem:phipsifinvar},
$$\hat M_t^{T-s,u,j} = \Phi_{s+t,T}(u) \e^{\scalarprod{\psi_{s+t,T}(u), \hat X_t^j}}, \quad 0 \leq t \leq T-s.$$

For $(T,u) = (T_i,u_i)$ and $j = 0, \dots, d$ we can repeat the steps leading to \eqref{eq:abscontphipsi} using the characteristics from Lemma \ref{lem:deterministicA}. This then gives that $\hat \PM^{(s,x)}$-a.s. on $\{s\} \times \hat \Omega$ (note that $\hat \PM^{(s,x)}(\{s\} \times \hat \Omega) = 1$) simultaneously for all $j=0, \dots d$ and all $i$
\begin{align*}
\frac{f^\Phi(s+t;T_i,u_i)}{\Phi_{s+t,T_i}(u_i)} & +  \scalarprod{f^\psi(s+t;T_i,u_i),X_{t-}(\omega^j)} \dd{G_t(s)} = \\
& - \bar \kappa(\psi_{s+t,T_i}(u_i);s+t,X_{t-}(\omega^j)) \dd{G_t(s)}
\end{align*}
on $[\tilde o(T_i,u_i)-s,T_i-s]$, where for $t \geq 0$, $u \in \mathcal{U}$, $x \in E$
$$\bar \kappa(u;t,x) := \scalarprod{u,\bar b(t,x)} +  \frac{1}{2} \scalarprod{u,\bar c(t,x) u} + \int_V \left(\e^{\scalarprod{u,\xi}} - 1 - \scalarprod{u,h(\xi)} \right) \bar K(\dd{\xi};t,x).$$
Formulated differently $\dd{G}_{\cdot}(s)$-a.e. on $([\tilde o(T_i,u_i),T_i]-s)$
\begin{equation}  \label{eq:abscontphipsi2}
\frac{f^\Phi(s+t;T_i,u_i)}{\Phi_{s+t,T_i}(u_i)} + \scalarprod{f^\psi(s+t;T_i,u_i),X_{t-}(\omega^j)} = - \bar \kappa(\psi_{s+t,T_i}(u_i);s+t,X_{t-}(\omega^j)).
\end{equation}
By the proof of Lemma \ref{lem:Tuconv} we can find sequences $(T_k,u_k)$ with $T_k \downarrow s+t$, $u_k \rightarrow u$, such that \eqref{eq:abscontphipsi2} holds with $(T_k,u_k)$ for all $k \geq N$. The right side then converges to $-\bar \kappa_{t}(u;s+t, X_{t-}(\omega^j))$ for each $j \in 0, \dots, d$.
Hence also the left side converges. By Lemma \ref{lem:linindX} there is $\delta(s) > 0$ and a set $\Lambda(s)$ with $\hat \PM^{(s,\bar x)}(\Lambda(s)) > 0$, such that $H_t$ as defined in equation \eqref{eq:Ht} is invertible on $\Lambda(s)$ for $t \in [0,\delta(s)]$. 
Hence the limits 
\begin{align*}
\tilde f^\Phi(s+t,u) & = \lim_{k \rightarrow \infty} f^\Phi(s+t,T_k,u_k), \\
\tilde f^\psi(s+t,u) & = \lim_{k \rightarrow \infty} f^\psi(s+t,T_k,u_k),
\end{align*}
exist. So $\tilde f^\Phi(t,u)$ and $\tilde f^\psi(t,u)$ are defined for $\dd{G}_{\cdot}(s)$-a.e. $t \in [s,\delta(s)]$. Since $s$ was arbitrary, this limits exist $\dd{G}$-a.e. on $\R_{\geq 0}$ and \eqref{eq:abscontphipsi2} yields
\begin{equation}
\tilde f^\Phi(s+t,u)  + \scalarprod{\tilde f^\psi(s+t,u) , X_{t-}(\omega^j)} = - \bar \kappa(u;s+t,X_{t-}(\omega^j))
\end{equation}
The left-hand side is affine in $X_{t-}(\omega^j)$, hence this is also true for right hand side. By the uniqueness of the Levy-Khintchine formulas (Lemma II.2.44 in \citet{JS03}) it then follows that $\bar b(t,x)$, $\bar c(t,x)$, $\bar K(\dd{\xi};t,x)$ are also affine in $x$ $\dd{G}$-a.e. and that the maps $a,b,\alpha,\beta,m,M$ as stated in the theorem are $\dd{G}$-a.e. given by 
\begin{align*}
\bar b(t,x) & = b(t) + \beta(t,x) \\
\bar c(t,x) & = a(t) + \alpha(t,x) \\
\bar K(\dd{\xi};t,x) & = m(\dd{\xi};t) + M(\dd{\xi}; t,x) .
\end{align*}
This proves the parts regarding the semimartingale characteristics. Finally we prove the generalized Riccati integral equations. Note that $\bar \kappa(u,t,x) = F(t,u) + \scalarprod{R(t,u),x}.$ Fix $(T,u)$ and $o(T,u) < s \leq T$. By considering the martingales $\hat M_t^{T-s,u,j}$ from equation \eqref{eq:semimartingaleMmartingale} and applying again integration by parts, the It\^o formula with the semimartingale characteristics of Lemma \ref{lem:deterministicA} one obtains
$$\frac{\dd{\Phi_{s+t,T}(u)}}{\Phi_{s+t,T}(u)} + \scalarprod{\dd{\psi_{s+t,T}(u)}, X_{t-}(\omega^j)} = - \bar \kappa(\psi_{s+t,T}(u);s+t,X_{t-}(\omega^j)) \dd{G(s+t)}$$
for $t \in [\tilde o(T,u)-s,T-s])$ and $j=0,\dots, d$. By Lemma \ref{lem:linindX} there is a $\delta(s) > 0$ and sets $\Lambda(s)$ with $\hat \PM^{(s,\bar x)}(\Lambda(s)) > 0$ such that $H_t$ is invertible on $\Lambda(s)$ for $t \in [0,\delta(s)]$. Hence for all $0 \leq r \leq \delta(s) \land (T-s)$,
the associativity of integrals and the fact that the integrands are affine in $X_{t-}(\omega^j)$ yields 
$\dd{\Phi_{\cdot,T}(u)} \ll  \dd{G(\cdot)}$ with density $t \mapsto - \Phi_{t,T}(u) F(t,\psi_{t,T}(u))$ and $\dd{\psi_{\cdot,T}(u)}  \ll  \dd{G(\cdot)}$ with density $t \mapsto -R(t,\psi_{t,T}(u))$ on $[s,(s+\delta(s)) \land T]$. Since $s$ was arbitrary, 
this holds on $[v,T]$ with $v > o(T,u)$ and the generalized Riccati integral equations in \eqref{eq:RiccatiinhomA} follow. 
\end{proof}
\begin{remark}
The semimartingale characteristics depend only on $(\Theta,X)$ with $\Theta$ depending on $s$ and $X$ depending on $\omega$. For fixed $s$ $\tilde \PM^{(s,x)}(\Theta_0 = s) = 1$ for all $x \in E$. Hence \eqref{eq:inhomcharacteristics} with $\Theta_t$ replaced by $s+t$ gives a version of the characteristics of $X$ under $\tilde \PM^{(s,x)}$ for $s$ fixed and each $x \in E$. These characteristics only depend on $\Omega$, so they are also characteristics of $X$ on $(\Omega,\mathcal{A})$ with the original filtration $\F_t$ under each probability measure $ \PM^{(s,x)}$, $x \in E$. 
\end{remark}

\begin{remark}
This theorem relates to the results for time-homogeneous affine processes in \citet{CT13} as follows. By using a full and complete class of functions it is shown in \citet{CT13} that the canonical realization of a time-homogeneous continuously affine transition function is always a semimartingale and that one can always choose $G(s) = s$. From there one could directly use the above proof.  The first part shows that the differentiated semimartingale characteristics only depend on $X_{s-}$ and are affine. The second part then gives the functions $F$ and $R$, which in this case also do not depend on time, so that we get generalized Riccati integral equations with $G(s) = s$. In this case the integrands are continuous, so we get differentiability and therefore regularity. 
\end{remark}
The following example shows that in the time-inhomogeneous case the local characteristics, although affine in $X$ may be very irregular with respect to the time parameter. This similarly holds for the functions $F$ and $R$ in equation \eqref{eq:FGinhom}.
\begin{example}  \label{ex:smithvolterracantor}
Consider the Smith-Volterra-Cantor set (or fat cantor set) on $[0,1]$ denoted by $A$. This set is generated iteratively as follows. In the first step the interval $(\frac{3}{8},\frac{5}{8})$ is removed from $[0,1]$. The iteration step removes from every remaining interval a centered open subinterval of one quarter of the length of this interval (see also Definition 2.1 in \citet{MU14}, where this is referred to as the SVC(4) set). The Smith-Volterra-Cantor set has Lebesgue measure $\frac{1}{2}$. 
We define an affine process as in Example \ref{ex:detnotsemimartingale} using the deterministic function
$f(t) = \int_0^{t} \mathbb{I}_A(v) \dd{v}.$
This gives a continuously affine process with $\psi_{s,t}(u) = u$ and $$\phi_{s,t}(u) = u \int_s^t \mathbb{I}_A(v) \dd{v} = 
u \lambda(A \cap [s,t]),$$
where $\lambda$ denotes the Lebesgue measure. 
By the Lebesgue differentiation theorem $\phi$ is differentiable almost everywhere with derivative $\mathbb{I}_A (t)$. 
Differentiable semimartingale characteristics $(B,C,\nu)$ are given by $B_t(s,\omega) = \int_0^t \mathbb{I}_A(s+v) \dd{v}$, $C = 0$, $\nu = 0$. The function $\mathbb{I}_A(t)$ does not have a left-handed limit for all left border points of a removed interval and no right-handed limit for all right border points of a removed interval. To see this consider a left border point (the right border point case follows analogously) $t$ (e.g. $t=\frac{3}{8}$). Then for every $\epsilon > 0$ the sets $A \cap [t-\epsilon,t)$ and $A^C \cap [t-\epsilon,t)$ have positive measure. Hence we can always find sequences of times $s_n \uparrow t$ and $\tilde{s}_n \uparrow t$ with $\mathbb{I}_A (s_n) = 0$ and $\mathbb{I}_A (\tilde{s}_n) =1$. 
\end{example}

Theorem \ref{thm:affinesemiinhom} shows that although the left derivatives of $\Phi$ and $\psi$ might not exist everywhere (i.e. they are not regular), we can think of derivatives of $\Phi$ and $\psi$ existing with respect to $\dd{G}$ and $\dd{G}$-a.e.. In particular if $G$ can be chosen as $G(s) = s$ this then corresponds to actual derivatives, which exist almost everywhere. This motivates the following definition. 
\begin{definition}
A continuously affine transition function $P_{s,t}$ is called absolutely continuously affine if
the functions $s \mapsto \Phi_{s,T}(u)$ and $s \mapsto \psi_{s,T}(u)$ are absolutely continuous on $[v,T]$ for each $o(T,u) < v \leq T$.
\end{definition}
\begin{corollary}
Assume that the assumptions of Theorem \ref{thm:affinesemiinhom} hold. In this case
$((\Theta,X),\mathcal{\tilde F},\{\tilde \PM^{(s,x)} \})$ is an It\^o process (see \citet{CJ80}) if and only if the affine transition function is absolutely continuous. 
\end{corollary}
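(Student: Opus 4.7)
The plan is a two-direction argument, exploiting Theorem \ref{thm:affinesemiinhom}, which already supplies the full characteristics-plus-Riccati picture with respect to the specific measure $\dd{G}$ built in its proof.

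In the easier direction, suppose the transition function is absolutely continuously affine, so that $s\mapsto\Phi_{s,T}(u)$ and $s\mapsto\psi_{s,T}(u)$ are absolutely continuous on each $[v,T]$ with $v>o(T,u)$; then so are their total-variation functions. Consequently each summand $G^{T_i,u_i}$ in the construction \eqref{eq:Gfunc} of $G$ is absolutely continuous. Since the partial sums are dominated by $G(t)\le t+1$, the series of densities converges in $L^1_{\mathrm{loc}}$ and $G$ itself is absolutely continuous with some density $g\ge 1$. Substituting $\dd{G(\Theta_v)}=g(\Theta_v)\,\dd{v}$ into the characteristic formulas \eqref{eq:inhomcharacteristics} shows that $B$, $C$ and $\nu$ are absolutely continuous with respect to Lebesgue measure, with densities of the form $[b(\Theta_v)+\beta(\Theta_v,X_{v-})]g(\Theta_v)$ and analogues for $c,\alpha,m,M$, all measurable functions of the current state $(\Theta_v,X_{v-})$; by \citet{CJ80} this is exactly the It\^o-process property.

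For the converse, assume that $((\Theta,X),\mathcal{\tilde F},\{\tilde \PM^{(s,x)}\})$ is an It\^o process, so its characteristics admit densities $\hat b,\hat c,\hat K$ with respect to $\dd{t}$ that are measurable in $(\Theta,X_{-})$. Then in Lemma \ref{lem:CJ80} one may take $F_t=t$ as the nondecreasing continuous additive process, and the entire derivation in the proof of Theorem \ref{thm:affinesemiinhom} can be rerun with $\dd{G}$ replaced by $\dd{t}$. Concretely, applying the martingale identity \eqref{eq:Mmartingale} to $M^{T-s,u}$, invoking the independent-copy construction, exploiting the invertibility of the matrix $H_t$ from \eqref{eq:Ht} on a set of positive probability via Lemma \ref{lem:linindX}, and separating the deterministic scalar and vector components as in the proof of the theorem yields Riccati integral equations of the form \eqref{eq:RiccatiinhomA} with $\dd{G(t)}$ replaced by $\dd{t}$. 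Absolute continuity of $\Phi_{\cdot,T}(u)$ and $\psi_{\cdot,T}(u)$ on each $[v,T]$ with $v>o(T,u)$ is then immediate.

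The main obstacle I expect is in the converse direction: specifically, justifying that the Cinlar--Jacod representation in Lemma \ref{lem:CJ80} is compatible with the choice $F_t=t$ under the It\^o hypothesis. This should reduce to the observation that any two versions of the semimartingale characteristics agree up to an evanescent set, so the $\dd{G}$-densities produced in Theorem \ref{thm:affinesemiinhom} and the Lebesgue densities given by the It\^o property are related by the Radon--Nikodym factor $\dd{G}/\dd{t}$; once this is in hand, the affine separation argument from the proof of Theorem \ref{thm:affinesemiinhom} transports verbatim to the Lebesgue-integrator setting and the Riccati equations in $\dd{t}$ follow.
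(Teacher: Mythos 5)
Your proposal is correct and follows essentially the same route as the paper: for the It\^o $\Rightarrow$ absolutely continuous direction you rerun the derivation of Theorem \ref{thm:affinesemiinhom} with $F_t = t$ in place of $G$ (which is exactly what the paper does, and the compatibility you worry about is immediate from the definition of an It\^o process in \citet{CJ80}), while for the converse the paper observes that one could have used $\dd{s}$ as the candidate in Lemma \ref{lem:deterministicA}, which is equivalent to your direct observation that the constructed $G$ is itself absolutely continuous so that $\dd{G(\Theta_v)} = g(\Theta_v)\,\dd{v}$ can be substituted into \eqref{eq:inhomcharacteristics}.
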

\begin{proof}
If $(\Theta,X)$ is a It\^o process, we know from the start that we can find semimartingale characteristics of $X$ generated by $(b,c,K,\dd{s})$. Then the derivation goes through as before.
On the other hand if the functions $\Phi$ and $\psi$ are absolutely continuous, it easy to see that we could have used $\dd{s}$ as candidate in the proof of Lemma \ref{lem:deterministicA}. Note that $\Theta_t(s) = \Theta_0(s) + t$ and that $(B^\Theta,0,0)$ with $B_t^\Theta = \int_0^t \dd{s}$ are differentiated characteristics of $\Theta$. Hence $X$ is an It\^o process. 
\end{proof}
$\eta_t$ of \eqref{eq:dettimechange}, i.e.
$$\eta_t(r) =  \inf \{ v \geq 0: G(v+r)-G(r) \geq t \}.$$
\begin{corollary}
Assume that the assumptions of Theorem \ref{thm:affinesemiinhom} hold.
The time-changed Markov process $((\hat \Theta, \hat X), \mathcal{\hat F},\{\tilde \PM^{(s,x)} \})$ with
$$(\hat \Theta,\hat X)_t(r,\omega) = (\Theta, X)_{\eta_t(r)}(r,\omega), \quad \mathcal{\hat F}_t = \mathcal{\tilde F}_{\eta_t}$$ is a Markov semimartingale with differentiable characteristics, i.e. it is an It\^o process and the differentiable characteristics are affine in $\hat X$. 
\end{corollary}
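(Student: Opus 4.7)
The plan is to exploit two facts: the time change $\eta$ is deterministic, continuous and strictly increasing (since $G$ has these properties), and it is the functional inverse of $v \mapsto G(r+v)-G(r)$, so $G(r+\eta_t(r)) = G(r) + t$ and in particular, under $\tilde \PM^{(s,x)}$ where $\Theta_v = s+v$ almost surely, $\hat \Theta_t = s + \eta_t(s)$ is deterministic with $G(\hat \Theta_t) - G(s) = t$. From the defining equation one reads off the cocycle identity $\eta_{t+u}(s) = \eta_t(s) + \eta_u(s + \eta_t(s))$. Together with the strong Markov property of $(\Theta,X)$ at the deterministic stopping time $\eta_t(s)$, this immediately gives the time-homogeneous Markov property
\[ \mathbb{E}^{\tilde{\mathbb{P}}^{(s,x)}}\!\left[ f(\hat\Theta_{t+u},\hat X_{t+u}) \,\big|\, \mathcal{\hat F}_t\right] = \tilde P_{\eta_u(\hat\Theta_t)} f(\hat\Theta_t,\hat X_t), \]
with transition operator $\hat P_u f(r,y) := \tilde P_{\eta_u(r)} f(r,y)$. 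The semimartingale property transfers through a standard deterministic time-change theorem for semimartingales (Theorem X.10.16 in \citet{JS03}): since each $\eta_t(s)$ is finite and the original filtration $\mathcal{\tilde F}$ is right-continuous, $\hat X = X\circ\eta$ is a semimartingale w.r.t.\ $\mathcal{\hat F} = \mathcal{\tilde F}_{\eta_\cdot}$ under each $\tilde{\mathbb{P}}^{(s,x)}$, and its characteristics are obtained by composing those of $X$ with $\eta$.

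Next I would transport the characteristics \eqref{eq:inhomcharacteristics} under this time change. Substituting $v = \eta_u(s)$ in the Stieltjes integrals and using $\dd G(\Theta_v) = \dd G(s+v) = \dd u$ under this substitution (by the very definition of $\eta$), together with the fact that continuity of $\eta$ gives $X_{\eta_u-} = \hat X_{u-}$ up to a $\tilde{\mathbb{P}}^{(s,x)}$-null set, I would compute
\begin{align*}
\hat B_t &= B_{\eta_t} = \int_0^{\eta_t}\!\bigl(b(\Theta_v) + \beta(\Theta_v,X_{v-})\bigr)\,\dd G(\Theta_v) = \int_0^t\!\bigl(b(\hat\Theta_u) + \beta(\hat\Theta_u,\hat X_{u-})\bigr)\,\dd u,\\
\hat C_t &= \int_0^t\!\bigl(a(\hat\Theta_u) + \alpha(\hat\Theta_u,\hat X_{u-})\bigr)\,\dd u,\\
\hat\nu(\dd t,\dd\xi) &= \bigl(m(\dd\xi;\hat\Theta_t) + M(\dd\xi;\hat\Theta_t,\hat X_{t-})\bigr)\,\dd t.
\end{align*}
These are differentiable characteristics that depend only on $(\hat\Theta_t,\hat X_{t-})$ and are affine in $\hat X_{t-}$, so $((\hat\Theta,\hat X),\mathcal{\hat F},\{\tilde{\mathbb{P}}^{(s,x)}\})$ is an It\^o process in the sense of \citet{CJ80} and the claim follows.

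The main technical obstacle will be the rigorous justification of the substitution $v = \eta_u$ in the three characteristic formulas, especially for the compensator $\hat\nu$, where one must interchange the change of variables with the positive kernel $K$. This reduces to the standard fact that a continuous strictly increasing bijection transports a Stieltjes measure via composition with its inverse, applied $\omega$-wise to the integrands $(b+\beta)$, $(a+\alpha)$ and $(m+M)$; the measurability in $(t,x)$ needed to invoke the time-change theorem of \citet{CJ80} is already guaranteed by Theorem \ref{thm:affinesemiinhom}. The identification $X_{\eta_u-}=\hat X_{u-}$, crucial for the affine dependence to survive, relies precisely on the continuity of $\eta$ (which would fail for a general $G$ with jumps, but is ensured here by the construction of $G$).
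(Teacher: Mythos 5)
Your proposal is correct and follows essentially the same route as the paper: a deterministic, continuous, strictly increasing time change $\eta$ with $G_{\eta_s(r)}-G_0(r)=s$ converts the $\dd{G}$-integrals of \eqref{eq:inhomcharacteristics} into Lebesgue integrals, while continuity of $\eta$ preserves the left limits and hence the affine dependence on $\hat X_{u-}$. The only difference is bookkeeping: the paper invokes Proposition 7.13 of \citet{CJ80}, which packages the Markov-semimartingale preservation and the transformation of the characteristics in one statement, whereas you assemble the same facts from the cocycle identity for $\eta$ plus a generic semimartingale time-change theorem (your Jacod--Shiryaev reference should be checked, but the cited fact is standard).
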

\begin{proof}
By construction $\dd{s} \ll \dd{G(\cdot)}$. Denote the density by $s \mapsto f^\Theta(s)$. Then the characteristics of $\Theta$ can also be written as $B^\Theta_t(r) = \int_0^t f^\Theta(r+s) \dd{G_s(r)}$. By Proposition 7.13 in \citet{CJ80} the time-changed process $((\hat \Theta, \hat X), \mathcal{\hat F},\{\tilde \PM^{(s,x)} \})$ is a Markov semimartingale and a version of the semimartingale characteristics is given by (note that $G_{\eta_s(r)} - G_0(r) = s$)
\begin{equation*}
\begin{aligned}
& \hat B_t = \int_0^t b(\hat \Theta_v)+ B(\hat \Theta_v,\hat X_{v-}) \dd{v}, \\
& \hat C_t = \int_0^t a(\hat \Theta_v) + A(\hat \Theta_v, \hat X_{v-}) \dd{v}, \\
& \hat \nu(\dd{t}, \dd{\xi}) =  \left(m(\dd{\xi}; \hat \Theta_t) + M(\dd{\xi}; \hat \Theta_t,\hat X_{t-})\right) \dd{t}.
\end{aligned}
\end{equation*}
\end{proof}

\begin{corollary}
Assume that the assumptions of Theorem \ref{thm:affinesemiinhom} hold.
Define the time-inhomogeneous affine transition operator $\{\hat P_{s,T} \}$ by $\hat P_{s,T} = P_{G^{-1}(s),G^{-1}(T)}$, i.e.
\begin{align*}
\hat \Phi_{s,T}(u) & = \Phi_{G^{-1}(s),G^{-1}(T)}(u),  \\
\hat \psi_{s,T}(u) & = \psi_{G^{-1}(s),G^{-1}(T)}(u).
\end{align*}
Then $\hat P_{s,T}$ is an absolutely continuously affine transition function. 
\end{corollary}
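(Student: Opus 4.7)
The plan is to pass through the change of variables $t = G^{-1}(\tau)$ in the generalized Riccati integral equations from Theorem \ref{thm:affinesemiinhom} and observe that the pushforward of the Lebesgue--Stieltjes measure $\dd G$ under $G$ is Lebesgue measure on the image, so that the resulting integral representations for $\hat\Phi$ and $\hat\psi$ are ordinary Lebesgue integrals and therefore yield absolutely continuous functions of $s$.

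First I would record the basic preliminaries. Since $G$ is continuous and strictly increasing with $G(0)=0$, it is a homeomorphism from $\R_{\geq 0}$ onto $[0,G(\infty))$, and $G^{-1}$ is continuous and strictly increasing. A direct check shows $\{\hat P_{s,T}\}$ is a continuously affine transition function: the semiflow equations transport from those of $\{P_{s,t}\}$ in Lemma \ref{lem:affineproperties3}(i), continuity of $\hat\Phi$ and $\hat\psi$ follows by composition with $G^{-1}$, and $\hat o(T,u) = G(o(G^{-1}(T),u))$, so the condition $o(T,u) < v \leq T$ for the definition of absolute continuity corresponds exactly to $o(G^{-1}(T),u) < G^{-1}(v) \leq G^{-1}(T)$.

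Next, I would apply the change of variables. Because $G$ is continuous and strictly increasing, the Lebesgue--Stieltjes measure $\dd G$ on $[v',T']$ is the pushforward of Lebesgue measure on $[G(v'),G(T')]$ under $G^{-1}$, so for any Borel measurable $f$ which is $\dd G$-integrable on $[v',T']$,
\begin{equation*}
\int_{v'}^{T'} f(t) \dd G(t) \;=\; \int_{G(v')}^{G(T')} f(G^{-1}(\tau)) \dd\tau.
\end{equation*}
Applying this to the Riccati integral equations \eqref{eq:RiccatiinhomA} with $v' = G^{-1}(s)$ and $T' = G^{-1}(T)$ (so that $s \mapsto t = G^{-1}(\tau)$ inside the integrand) and using that $\hat\Phi_{\tau,T}(u) = \Phi_{G^{-1}(\tau),G^{-1}(T)}(u)$ and $\hat\psi_{\tau,T}(u) = \psi_{G^{-1}(\tau),G^{-1}(T)}(u)$, one obtains
\begin{align*}
\hat\Phi_{s,T}(u) &= 1 + \int_s^T \hat\Phi_{\tau,T}(u)\, F(G^{-1}(\tau),\hat\psi_{\tau,T}(u)) \dd\tau, \\
\hat\psi_{s,T}(u) &= u + \int_s^T R(G^{-1}(\tau),\hat\psi_{\tau,T}(u)) \dd\tau,
\end{align*}
valid for $\hat o(T,u) < s \leq T$.

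Finally, I would conclude absolute continuity. On $[v,T]$ with $\hat o(T,u) < v$, the functions $\tau \mapsto \hat\Phi_{\tau,T}(u)$ and $\tau \mapsto \hat\psi_{\tau,T}(u)$ are continuous (in particular bounded) by Lemma \ref{lem:affineproperties}, and $F$ and $R$ are continuous in their arguments with $F(\cdot,\hat\psi_{\cdot,T}(u))$, $R(\cdot,\hat\psi_{\cdot,T}(u))$ measurable in $\tau$; hence the integrands above are Lebesgue-integrable on $[v,T]$. Since the indefinite Lebesgue integral of a locally integrable function is absolutely continuous, $s \mapsto \hat\Phi_{s,T}(u)$ and $s \mapsto \hat\psi_{s,T}(u)$ are absolutely continuous on $[v,T]$, which is the defining property of an absolutely continuously affine transition function. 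The main technical issue is to justify cleanly the change-of-variables identity for Lebesgue--Stieltjes integrals (including keeping track of the null sets on which the integrand of the Riccati equation is only defined $\dd G$-a.e., which translate to Lebesgue null sets via the homeomorphism $G$); everything else is a direct application of Theorem \ref{thm:affinesemiinhom}.
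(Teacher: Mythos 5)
Your proposal is correct and follows essentially the same route as the paper: the paper's proof also substitutes $w = G(v)$ in the generalized Riccati integral equations of Theorem \ref{thm:affinesemiinhom}, converting the $\dd{G}$-integral into a Lebesgue integral whose indefinite integral is absolutely continuous. Your version is slightly more explicit about the measure-theoretic justification of the change of variables and about keeping the time argument of $F$ and $R$ as $G^{-1}(\tau)$, but the substance is identical.
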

\begin{proof}
By Theorem \ref{thm:affinesemiinhom} the Riccati equations \eqref{eq:RiccatiinhomA} hold for $\Phi$ and $\psi$. Together with the definition of $\hat \Phi$ and $\hat \psi$ this yields
\begin{equation}
\begin{aligned}
\hat \Phi_{s,T}(u) & = 1 + \int_{G^{-1}(s)}^{G^{-1}(T)} \Phi_{v,G^{-1}(T)}(u) F(v,\psi_{v,G^{-1}(T)}(u)) \dd{G(v)}, \\
& = 1 + \int_{s}^{T} \Phi_{G^{-1}(w),G^{-1}(T)}(u) F(G^{-1}(w),\psi_{G^{-1}(w),G^{-1}(T)}(u)) \dd{w} \\
& = 1 + \int_{s}^{T} \hat \Phi_{w,T}(u) F(w,\hat \psi_{w,T}(u)) \dd{w} \\
\hat{\psi}_{s,T}(u) & = u + \int_s^t R(w,\hat \psi_{w,T}(u)) \dd{w}, \\[-1 em]
\end{aligned}
\end{equation}

\end{proof}
Note that the time-changed process $(\hat X, \mathcal{\hat F},\{\tilde \PM^{(r,x)}\})$ is no longer a time-inhomo\-geneous Markov process in the sense of Definition \ref{def:markov2inhom}. However, under each measure $\tilde \PM^{(r,x)}$ $\hat X$ is a time-inhomo\-geneous Markov process on $(\tilde \Omega, \mathcal{\tilde A}, \mathcal{\hat F},\tilde \PM^{(r,x)})$ in the sense of Definition \ref{def:markov1} with a regular affine transition function given by $\{ \hat P_{r+\eta_s(r),r+\eta_t(r)} \}_{0 \leq s \leq t}$.

\renewcommand{\T}{\cdot}
\chapter{Affine LIBOR models driven by real-valued affine processes} \label{cha:cosh}
\newcommand{\p}{\  .}

Market models, the most famous example being the LIBOR market model, are very popular in the area of interest rate modeling. If these models generate nonnegative interest rates they usually do not give semi-analytic formulas for both basic interest rate derivatives, caps and swaptions. One exception is the class of affine LIBOR models proposed by \citet{KPT11}. Using nonnegative affine processes as driving processes affine LIBOR models guarantee nonnegative forward interest rates and lead to semi-analytical formulas for caps and swaptions, so that calibration to interest rate market data is possible.

In this chapter we modify the setup of \citet{KPT11} to allow for not necessarily nonnegative affine processes. This modification still leads to semi-analytical formulas for caps and swaptions and guarantees nonnegative forward interest rates, but allows for a wider class of driving affine processes and hence is more flexible in producing interest rate skews and smiles.
\citet{DGG12} also propose a modification of affine LIBOR models. There driving processes are affine processes with values in the space of positive semidefinite matrices. The approach in this paper has the advantage that a flexible class of implied volatility surfaces can be produced with a much smaller number of parameters.

The structure of this chapter is as follows. In section \ref{sec:affineprocesscosh} affine processes and their properties are reviewed. Section \ref{sec:marketmodels} introduces the necessary notation and market setup and reviews affine LIBOR models. It concludes with some comments on practical implementation. Section \ref{sec:modaffinemodel} is the main section of this chapter. The first part presents the modified affine LIBOR model and semi-analytical pricing formulas for caps and swaptions are derived. The second part 
then gives some examples of usable affine processes with numerical calculations.

\section{Affine processes} \label{sec:affineprocesscosh}
Let  $X = (X_t)_{0 \leq t \leq T}$ be a homogeneous Markov process with values in $D = \R^m_{\geq 0} \times \R^n$ realized on a measurable space $(\Omega,\mathcal{A})$ with filtration  $(\F_t)_{0 \leq t \leq T}$, with regards to which $X$ is adapted. Denote by $\PM^x[\cdot]$ and $\EV[x]{\cdot}$ the corresponding probability and expectation when $X_0 = x$. 
X is said to be an affine process, if its characteristic function has the form
\begin{equation}  \label{eq:affineMomentGencosh}
\EV[x]{ \e^{u\T X_{t}}}  = \ex{\phi_{t}(u) + \psi_{t}(u) \cdot x}, \quad u \in \i \R^d, x \in D,
\end{equation}
where $\phi: [0,T]\times \i \R^d  \rightarrow \C$ and $\psi: [0,T] \times \i \R^d  \rightarrow \C^d$ with $\i \R^d = \{u \in \C^d: \mathrm{Re}(u) = 0 \}$ and $\cdot$ denoting the scalar product in $\R^d$. 
By homogeneity and the Markov property the conditional characteristic function satisfies
$$ \EV[x]{e^{u\T X_{t}} \vert \F_s} = \ex{\phi_{t-s}(u) + \psi_{t-s}(u) \T X_s}. $$
Accordingly affine processes can also be defined for inhomogeneous Markov processes (see \citet{FI05}), in which case the above equality reads
\begin{equation*} \label{eq:affineMomentGeninhomcosh}
\EV[x]{ \e^{u\T X_t} \vert \F_s} = \ex{\phi_{s,t}(u) + \psi_{s,t}(u)\T X_s}, \quad u \in \i \R^d, x \in D, 
\end{equation*}
with $\phi_{s,t}: \i \R^d  \rightarrow \C$ and $\psi_{s,t}: \i \R^d  \rightarrow \C^d$ for $0 \leq s \leq t$.

$X$ is called an analytic affine process (see \citet{KR08}), if $X$ is stochastically continuous and the interior of the set\footnote{$\mathcal{V}$ can be described as the (convex) set, where the extended moment generating function of $X_t$ is defined for all times $t \leq T$ and all starting values $x \in E$. By Lemma 4.2 in \citet{KM11} the set $\mathcal{V}$ is in fact equal to the seemingly smaller set 
$\left \{ u \in \C^d:   \exists x \in \mathrm{int}(D): \EV[x]{ \e^{\mathrm{Re}(u)\T X_T}} < \infty \right \}.$}
\begin{align}
\mathcal{V} & := \left \{ u \in \C^d: \sup_{0 \leq s \leq T}  \EV[x]{ \e^{\mathrm{Re}(u)\T X_s}} < \infty \quad  \forall x \in D \right \}, \label{eq:momsetcosh}
\end{align}
contains 0\footnote{This also implies that $X$ is conservative, i.e. $\PM^x(X_t \in D) = 1 \  \forall x \in D \text{ and } 0 \leq t \leq T$. 
}. In this case the functions $\phi$ and $\psi$ have continuous extensions to $\mathcal{V}$, which are analytic in the interior, such that \eqref{eq:affineMomentGencosh} holds for all $u \in \mathcal{V}$.

The class of affine processes includes Brownian motion and more generally all Lévy processes. Since Lévy processes have stationary independent increments, in this case $\psi_t(u) = u$ and $\phi_t(u) = t \kappa(u)$, where $\kappa$ is the cumulant generating function of the Lévy process. Ornstein-Uhlenbeck processes are further important examples of affine processes. They are discussed in section \ref{sec:OUprocess}. 

The standard reference for affine processes is \citet{DFS03}. There they give a characterization of affine processes, where $\phi$ and $\psi$ are specified as solutions of a system of differential equations\footnote{The fact that this characterization holds for all stochastically continuous affine processes was first shown in \citet{KST11} and later for affine processes with more general state spaces in \citet{KST11b} and \citet{CT13}.}. 
Of all the rich theory of affine processes the methods in this chapter only use the specific form \eqref{eq:affineMomentGencosh} of their moment generating function and the following property.
\begin{lemma} \label{lem:psimontonicity}
Let $X$ be a one-dimensional analytic affine process and $\mathrm{Re}(u) < \mathrm{Re}(w)$, $u,w \in \mathcal{V}$. Then
$\mathrm{Re}(\psi_t(u)) < \psi_t(\mathrm{Re}(w)),$
i.e. $\psi_t \vert_{\mathcal{V} \cap \R}$ is strictly increasing. 
\end{lemma}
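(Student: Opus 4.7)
The plan is to chain two inequalities. First I would show that $\mathrm{Re}(\psi_t(u)) \leq \psi_t(\mathrm{Re}(u))$ for every $u \in \mathcal{V}$; this reduces the claim to a statement about real arguments only. Second I would establish that $\psi_t$ restricted to the real section $\mathcal{V} \cap \R$ is strictly increasing, and then the conclusion follows by applying the second step to $v_1 = \mathrm{Re}(u) < \mathrm{Re}(w) = v_2$ and concatenating with the first step.

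For the first inequality I would start from the triangle inequality
\[
\left| \EV[x]{\e^{u X_t}} \right| \;\leq\; \EV[x]{\e^{\mathrm{Re}(u) X_t}},
\]
valid for $u \in \mathcal{V}$. Substituting the affine representation \eqref{eq:affineMomentGencosh} on both sides and taking logarithms rearranges to
\[
\bigl[\mathrm{Re}(\psi_t(u)) - \psi_t(\mathrm{Re}(u))\bigr]\, x \;\leq\; \phi_t(\mathrm{Re}(u)) - \mathrm{Re}(\phi_t(u)) \qquad \forall\, x \in D.
\]
Since the one-dimensional state space $D$ is either $\R_{\geq 0}$ or $\R$, and hence unbounded above, letting $x \to +\infty$ forces the coefficient of $x$ on the left to be non-positive, which yields $\mathrm{Re}(\psi_t(u)) \leq \psi_t(\mathrm{Re}(u))$.

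For the strict monotonicity on $\mathcal{V} \cap \R$ I would invoke the Riccati structure that accompanies analyticity. As $X$ is an analytic affine process, $\psi_t$ is analytic on the interior of $\mathcal{V}$ and satisfies a Riccati ODE $\partial_t \psi_t(u) = R(\psi_t(u))$ with $\psi_0(u) = u$, where $R$ is analytic on $\mathcal{V}$, in particular real-analytic and thus locally Lipschitz on the real segment $\mathcal{V} \cap \R$. For real initial data $v_1 < v_2$ in $\mathcal{V} \cap \R$, the orbits $t \mapsto \psi_t(v_i)$ stay real and are well-defined on $[0,T]$ thanks to $v_i \in \mathcal{V}$. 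Uniqueness of solutions to a locally Lipschitz ODE prevents the two orbits from ever meeting, and continuity in $t$ together with the initial ordering then propagates $\psi_t(v_1) < \psi_t(v_2)$ to all $t \in [0,T]$. Combining this with the first step gives $\mathrm{Re}(\psi_t(u)) \leq \psi_t(\mathrm{Re}(u)) < \psi_t(\mathrm{Re}(w))$, which is the claim. The delicate point is really the second step: one has to confirm that the Riccati orbit starting at $v_i$ remains in the domain of $R$ for all $t \in [0,T]$, but this is exactly what the hypothesis $v_i \in \mathcal{V}$ buys, so the comparison/uniqueness argument applies cleanly.
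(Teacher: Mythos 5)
Your proposal is correct in substance, but it takes a genuinely different route from the paper. The paper's proof is essentially a two-line reduction to the literature: for $D=\R_{\geq 0}$ it cites \citet{KPT11}, and for $D=\R$ it invokes Proposition 3.3 of \citet{KST11}, which says that any stochastically continuous affine process on $\R$ has $\psi_t(u)=\e^{\beta t}u$ for some constant $\beta$ --- after which both the strict monotonicity and the inequality $\mathrm{Re}(\psi_t(u))<\psi_t(\mathrm{Re}(w))$ are immediate from linearity. You instead give a unified, self-contained argument: the triangle-inequality step $\mathrm{Re}(\psi_t(u))\leq\psi_t(\mathrm{Re}(u))$, obtained by comparing $\vert\E^x[\e^{uX_t}]\vert$ with $\E^x[\e^{\mathrm{Re}(u)X_t}]$ and letting $x\to+\infty$, is clean and exploits only the unboundedness of $D$; the strictness then comes from the non-crossing of orbits of the scalar autonomous Riccati ODE. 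What your approach buys is uniformity over the two possible state spaces and independence from the structural classification of affine processes on $\R$; what it costs is reliance on heavier machinery --- the validity of the Riccati equation for \emph{real} $u\in\mathcal{V}$ (not just $u\in\i\R$) is itself a nontrivial fact that needs \citet{KM11} or \citet{KR08}, and the local Lipschitz property of $R$ can degenerate at boundary points of its effective domain, so the comparison argument as stated is airtight only on $\mathrm{int}(\mathcal{V})\cap\R$ (a point you flag but do not fully close; an approximation from the interior, or restricting to interior points as the applications in this chapter do, would finish it). The paper's route avoids all of this at the price of being a pure citation.
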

\begin{proof}
The case $D = \R_+$ is already contained in \citet{KPT11}. In case $D = \R$ the lemma follows from the fact that by Proposition 3.3 in \citet{KST11} $\psi_t(u) = \e^{\beta t} u$ for some constant $\beta$.
\end{proof}
%
%
\begin{remark}
If $D = \R_+$, it is known that both, $\psi$ and $\phi$, are monotonically increasing (\citet{KPT11}). 
With $D=\R$ this stays true for $\psi$, but not $\phi$, as the deterministic affine process $X_t = x_0 - t$ shows. 
\end{remark}

\section{Interest rate market models} \label{sec:marketmodels}

\subsection*{Classical market models}
Consider a tenor structure $0 < T_1 < \dots < T_N < T_{N+1} =: T$ and a market consisting of zero coupon bonds with maturities $T_1,\dots,T_{N+1}$. Their price processes $(P(t,T_k))_{0 \leq t \leq T_k}$ are assumed to be nonnegative semimartingales on a filtered probability space $(\Omega, \mathcal{A}, \F, \PM)$ (here $\F = (\mathcal{F}_t)_{0 \leq t \leq T}$), which satisfy $P(T_k,T_k) = 1$ almost surely.  If there exists an equivalent probability measure $\QM^{T}$ such that the normalized bond price processes $P(\cdot,T_k) / P(\cdot,T)$ are martingales\footnote{One can extend bond price processes to $[0,T]$ by setting $P(t,T_k) := \frac{P(t,T)}{P(T_k,T)}$ for $t > T_k$, so that $P(\cdot,T_k) / P(\cdot,T)$ is a martingale on $[0,T]$ if and only if it is a martingale on $[0,T_k]$. Economically this can be interpreted as immediately investing the payoff of a zero coupon bond into the longest-running zero coupon bond.
}, the market is arbitrage-free. In this case we can define equivalent martingale measures $\QM^{T_k}$ for the numeraires $P(t,T_k)$ instead of $P(t,T)$ by
\begin{equation} \frac{\dd{\QM^{T_k}}}{\dd{\QM^{T}}} =  \frac{1}{P(T_k,T)} \frac{P(0,T)}{P(0,T_k)}.  \label{eq:measurechangecosh}
\end{equation}
In particular under the measure $\QM^{T_k}$ the forward bond price process $\frac{P(\cdot,T_{k-1})}{P(\cdot,T_{k})}$ and the forward interest rate process $F^k(\cdot)$,
\begin{equation} \label{eq:fwdintrate}
F^k(t) = \frac{1}{\Delta_k} \left(\frac{P(t,T_{k-1})}{P(t,T_{k})} - 1\right), \qquad \Delta_k = T_{k}-T_{k-1},
\end{equation}
are martingales. This is the basic market setup used throughout the rest of the chapter.

In the classical LIBOR market models forward interest rate processes $F^k$ are modeled as continuous exponential martingales under their respective martingale measure $\QM^{T_k}$. Hence forward interest rates are positive. Using driftless geometric Brownian motions as driving processes caplet prices are given by the Black formula (\citet{BF76}) while swaption prices cannot be calculated analytically.
Alternatively one can start with modeling the forward bond price processes ${P(\cdot,T_{k-1})}/{P(\cdot,T_{k})}$ instead of forward interest rate processes. Using again exponential martingales like a driftless Brownian motion it is then 
possible to analytically calculate caplet and swaption prices (see \citet{EO05}). The drawback of this approach is that forward interest rates will be negative with positive probability.

\citet{KPT11} proposed the affine LIBOR models, where forward interest rates are nonnegative while swaption and caplet prices can still be calculated semi-analytically, i.e. up to a numerical integration. The above approaches model the individual forward interest rate processes (resp. forward bond price process) with respect to the individual measure $\QM^{T_k}$ under which they are a martingale. Contrary \citet{KPT11} model the price processes $P(\cdot,T_k)/P(\cdot,T)$, which are all martingales under the same probability measure $\QM^{T}$.
\begin{remark}
Note that all models mentioned in this chapter do not fully specify the whole term structure, but only part of it. In order to price derivatives not contained within the specified tenor structure it is necessary to specify some kind of interpolation scheme. Arbitrary interpolations may lead to arbitrage, however one can always choose an interpolation method, such that the model stays arbitrage-free (\citet{WE10}).
\end{remark}

\subsection*{The affine LIBOR models}

This section presents a summary of the affine LIBOR model introduced in \citet{KPT11}. 
On the filtered probability space $(\Omega, \mathcal{A}, \F, \QM^T)$ consider a nonnegative analytic affine process $X$ with a fixed starting value $x_0 \in \R_{\geq0}^d$. For the tenor structure  $0 < T_1 < \dots < T_N < T_{N+1} =: T$ define for $k=1,\dots,N$ and $0 \leq t \leq T_k$
\begin{equation} \label{eq:affinefwdprices}
 \frac{ P(t,T_k)}{P(t,T)} := \EV[\QM^T]{\e^{u_k\T X_T} \vert \F_t} = \e^{\phi_{T-t}(u_k) + \psi_{T-t}(u_k)\T X_t}, \qquad u_k \geq 0, u_k \in \mathcal{V},
\end{equation}
where $\EV[\QM]{\cdot}$ denotes the expectation with respect to a probability measure\footnote{Since $x_0$ is fixed, contrary to to section \ref{sec:affineprocesscosh} any dependence of probability measures on the starting value of the Markov process $X$ will be suppressed from now on.}  $\QM$. These price processes are martingales and the resulting model is arbitrage-free.

Writing
\begin{equation} \label{eq:fwdnormbonds}
\frac{P(t,T_{k-1})}{P(t,T_k)} = \left. {\frac{P(t,T_{k-1})}{P(t,T)}} \right / { \frac{P(t,T_k)}{P(t,T)}} 
\end{equation}
in \eqref{eq:fwdintrate} shows that forward interest rates being nonnegative is equivalent to normalized bond prices of \eqref{eq:affinefwdprices} satisfying
\begin{equation} \label{eq:bondmonotonicity}
\frac{ P(t,T_1)}{ P(t,T)} \geq .. \geq \frac{ P(t,T_N) }{P(t,T)} \geq 1.
\end{equation}
Since for $x\geq 0$, $\e^{u^T x}$ is monotonically increasing in every component of $u$ the monotonicity for normalized bond prices in \eqref{eq:bondmonotonicity} is satisfied as long as $u_1 \geq \dots u_{N} \geq 0$. 

The parameters $u_k$ in \eqref{eq:affinefwdprices} should be determined, so that the starting values of normalized bond prices ${ P(0,T_k)}/{P(0,T)} =
 \ex{\phi_T(u_k) + \psi_T(u_k)\T x_0}$
fit the initial term structure inferred from actual market data. 
For most affine processes every term structure can be fitted and for currently nonnegative forward interest rates this can be done using an decreasing sequence $u_1 \geq \dots \geq u_N \geq 0$ (see \citet{KPT11}).

\begin{remark}
Since $X$ is nonnegative, the k-th normalized bond price is not only greater equal to one, but is bounded from below by the time-dependent constant given by
$ \ex{\phi_{T-t}(u_{k}) },$
which is strictly greater than one. Accordingly in the affine LIBOR models forward interest rates are bounded from below by a strictly positive time-dependent constant.
\end{remark}

Affine LIBOR models lead 
to nonnegative forward interest rates. Additionally this specification is appealing because the density processes for changes of measures are again exponentially affine in $X_t$, i.e. inserting \eqref{eq:affinefwdprices} into \eqref{eq:measurechangecosh} gives
\begin{align*}
\frac{\dd{\QM^{T_k}}}{\dd{\QM^{T}}} & = \frac{P(0,T)}{P(0,T_k)}  \e^{\phi_{T-T_k}(u_k) + \psi_{T-T_k}(u_k)\T X_{T_k}}.
\end{align*}
Moreover normalized bond prices and because of \eqref{eq:fwdnormbonds} also forward bond prices are of exponential affine form. It follows that the moment generating function of the logartihm of normalized bond prices under $\QM^{T_k}$ is also of exponential affine form and that calculation of caplet prices is possible via a one-dimensional Fourier inversion. 
If the dimension of the driving process is one, swaption prices can also be calculated via one-dimensional Fourier inversion (see \citet{KPT11}). Hence this approach satisfies both, nonnegative interest rates and analytical tractability of standard interest rate market instruments. If the dimension is larger than one, the exact price of swaptions can only be calculated via higher-dimensional integration, the dimension of which is the length of the underlying swap. Alternatively \citet{GPSS14} provide approximate formulas for swaptions. 

\subsubsection*{Practical application of the affine LIBOR model} \label{sec:practicalapplication}
Although this framework is elegant from a theoretical point of view, a practical implementation faces several difficulties which shall be discussed here. 

First, calibration of interest rates and implied volatilities cannot be separated. The initial term structure can be fitted using the $u_k$, but the parameters $u_k$ also have a strong impact on implied volatilities. This can be seen by looking at the forward bond price
\begin{equation} \frac{1}{P(T_{k-1},T_k)} =  \ex{\phi_{T-t}(u_{k-1})-\phi_{T-t}(u_{k})+(\psi_{T-t}(u_{k-1})-\psi_{T-t}(u_k))\T X_{T_{k-1}}}, \label{eq:KRTfltrv} \end{equation}
which is the random variable responsible for the payoff of a caplet.
The driving process $X$ influences the distribution of this random variable through two different channels. First via the parameters of the driving process itself and second via the parameters $u_k$ (depending on $X$ and the initial interest rate term structure). Hence for changes in the yield curve different parameters are required to reproduce the same implied volatility surface. If $X$ is a Lévy process, then as mentioned in section \ref{sec:affineprocesscosh} $\psi_t(u) = u$ and it follows that the distribution of \eqref{eq:KRTfltrv} depends on the difference $u_{k+1} - u_k$, which in turn is related to the steepness of the initial yield curve\footnote{This is similar for most affine processes, but is best visible for Lévy processes.}. Hence caplet implied volatilities are especially sensitive with regards the steepness of the initial yield curve.

Second, interest rates and volatilities of this model depend on the final horizon $T$. Changing the horizon $T$ while using the same affine process $X$ will lead to different results and
there is no general way of rescaling the parameters of $X$ to negate such an effect. This is rather counterintuitive, since extending the horizon of a model should not change the results for quantities already included with the shorter horizon. 
 
Third, the types of possible volatility surfaces is rather constrained in the fully analytically tractable one-dimensional case. For example, we were only able to generate volatility skews\footnote{The smile example of \citet{KPT11}, figure 9.2, using an Ornstein-Uhlenbeck process seems to be numerically incorrect for strikes smaller than 0.4. With the mentioned initial yield curve the underlying interest rate is always larger than the strike, which corresponds to a zero implied volatility, destroying the displayed smile.}.
This might be resolved by using higher-dimensional nonnegative processes. However, in multidimensional affine LIBOR models swaptions can no longer be calculated efficiently by Fourier methods. On the other hand allowing arbitrary affine processes destroys the nonnegativity of forward interest rates, a central property of affine LIBOR models. We propose a modification, that preserves the nonnegativity of forward interest rates without the restriction to nonnegative affine processes.

\section{The modified affine LIBOR model} \label{sec:modaffinemodel} 

On the filtered probability space $(\Omega, \mathcal{A}, \F, \QM^T)$ consider an analytic one-dimen\-sional affine process $X$ with a fixed starting value $x_0$, i.e. the set $\mathcal{V}$ defined in \eqref{eq:momset} contains $0$ in the interior. For $u \in \mathcal{V}$ with $-u \in \mathcal{V}$ consider the martingales $M^u$,
\begin{equation} \label{eq:coshmartingales}
M_t^u :=  \EV[\QM^T]{\cosh(u X_T) \vert \F_t}  = \frac{1}{2} \left(\e^{\phi_{T-t}(u) + \psi_{T-t}(u) X_t} + \e^{\phi_{T-t}(-u) + \psi_{T-t}(-u) X_t} \right).
\end{equation}
By the symmetry of the cosinus hyperbolicus $M^u = M^{-u}$, hence one may restrict $u$ to be nonnegative. 
For the given tenor structure $0 < T_1  < \cdots < T_N \leq T_{N+1} = T$ and the market setup of section \ref{sec:marketmodels} define the normalized bond prices for $k=1, \dots, N$ and $t\leq T_k$ as
\begin{equation*}
\begin{aligned}
\frac{ P(t,T_k)}{P(t,T)} & :=  M_{t}^{u_k}, \qquad u_k \in \{v \in \mathcal{V}: v \geq 0, -v \in \mathcal{V} \}.
\end{aligned}
\end{equation*} 
With $M_t^{u_k}$ being a $\QM^T$-martingale the model is arbitrage-free. For every $x \in \R$ the function $u \mapsto \cosh(u x)$ is increasing in $u \in \R_{\geq 0}$ and satisfies $\cosh(ux) \geq 1$ so that if
$$u_1 \geq u_2 \geq \dots \geq u_N \geq 0,$$
equation \eqref{eq:bondmonotonicity} holds and forward interest rates
\begin{equation*} F^k(t) = \frac{1}{\Delta_k} \left( \frac{M_t^{u_{k-1}}}{M_t^{u_k}} - 1 \right), \quad 0 \leq t \leq T_{k-1}. \end{equation*}
are nonnegative for all $t$. 
To fit initial market data one has to choose the sequence $(u_k)$ so that $M_0^{u_k} = {P(0,T_k)}/{P(0,T)}. $
The following lemma gives the condition for the affine process $X$ under which a given initial term structure can be reproduced and shows that the $u_k$ are uniquely determined.

\begin{lemma} \label{thm:interestfit}
If $$P(0,T_1) / P(0,T) < \sup_{u \in \mathcal{V}: -u \in \mathcal{V}} \EV[\QM^T]{\cosh(u X_T) \vert \F_0},$$ then the model can fit any term structure of nonnegative forward interest rates. Additionally there exists a unique decreasing sequence $u_1 \geq \cdots \geq u_N$, such that  $${ P(0,T_k)}/{P(0,T)} = \EV[\QM^T]{\cosh(u_k X_T) \vert \F_0} = M_0^{u_k}.$$
If forward interest rates are strictly positive, the sequence is strictly decreasing.
\end{lemma}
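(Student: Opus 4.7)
Let $\mathcal{S} := \{u \in \R : u \in \mathcal{V},\ -u \in \mathcal{V}\}$; since $\mathcal{V}$ is convex with $0$ in its interior, $\mathcal{S}$ is a symmetric real interval about $0$. Define
\begin{equation*}
g(u) := M_0^u = \tfrac{1}{2}\bigl(\e^{\phi_T(u)+\psi_T(u)x_0} + \e^{\phi_T(-u)+\psi_T(-u)x_0}\bigr) = \EV[\QM^T]{\cosh(u X_T)}
\end{equation*}
on $\mathcal{S}$. Since $\cosh$ is even, so is $g$, and it suffices to study $g$ on $\mathcal{S}_+ := \mathcal{S}\cap[0,\infty)$, where $g(0)=1$. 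The strategy is to show $g$ is a continuous, strictly increasing bijection from $\mathcal{S}_+$ onto $[1,G)$ (or $[1,G]$) with $G:=\sup_{u\in\mathcal{S}_+} g(u)$; matching any admissible initial term structure then reduces to inverting $g$.

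Continuity of $g$ on $\mathcal{S}$, together with analyticity in its interior, follows directly from the regularity of $\phi_T,\psi_T$ for an analytic affine process. For strict monotonicity on $\mathcal{S}_+$, note that for any fixed $y\in\R$ the map $s\mapsto \cosh(sy)$ is nondecreasing on $[0,\infty)$ and strictly increasing there whenever $y\neq 0$. Hence for $0\leq u<v$ in $\mathcal{S}_+$,
\begin{equation*}
g(v)-g(u) = \EV[\QM^T]{\cosh(vX_T)-\cosh(uX_T)} \geq 0,
\end{equation*}
with strict inequality unless $\QM^T[X_T=0]=1$. The degenerate case $X_T=0$ a.s.\ would force $g\equiv 1$ and $G=1$; but nonnegative forward rates give $P(0,T_1)/P(0,T)\geq 1$, which together with the hypothesis $P(0,T_1)/P(0,T)<G$ excludes this. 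Thus $g$ is strictly increasing on $\mathcal{S}_+$.

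With $g$ continuous and strictly increasing from $g(0)=1$ up to the supremum $G$, the intermediate value theorem shows that $g$ maps $\mathcal{S}_+$ bijectively onto an interval containing $[1,G)$. Nonnegativity of the target forward rates translates to the chain of inequalities $1\leq P(0,T_N)/P(0,T)\leq \cdots \leq P(0,T_1)/P(0,T)$, and every one of these quantities lies in $[1,P(0,T_1)/P(0,T)]\subset[1,G)$ by assumption, so each admits a unique preimage $u_k\in\mathcal{S}_+$ under $g$. Strict monotonicity of $g$ then transfers the ordering of $P(0,T_k)/P(0,T)$ to the $u_k$, yielding $u_1\geq u_2\geq\cdots\geq u_N\geq 0$, with strict inequalities whenever the corresponding forward rates are strictly positive.

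The only delicate point is strict monotonicity of $g$, which breaks down exactly when $X_T$ is the zero random variable; that degeneracy is however ruled out automatically by the quantitative hypothesis of the lemma, so no separate nondegeneracy assumption on $X$ is needed.
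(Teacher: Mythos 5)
Your proof is correct and follows essentially the same route as the paper's: continuity and monotonicity of $u \mapsto \EV[\QM^T]{\cosh(u X_T)\vert \F_0}$ on the nonnegative part of $\mathcal{V}$, combined with the intermediate value theorem starting from $m(0)=1$. You are in fact somewhat more careful than the paper, which only asserts that the map is increasing; your observation that strict monotonicity can fail only when $X_T=0$ a.s., a degeneracy already excluded by the hypothesis of the lemma, is precisely the detail needed to justify the uniqueness of the sequence $(u_k)$.
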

\begin{proof}
$m(u) = \EV[\QM^T]{\cosh( u X_T) \vert \F_0}$ is a continuous function which is increasing in case $u \geq 0$. By the assumption of the theorem there exists $\overline{u} >0$ with $m(\overline{u})
 > P(0,T_1)/P(0,T)$. Furthermore $
m(0) = 1$, which proves the lemma.
\end{proof}
 \begin{remark}
Generalizing this approach to a $d$-dimensional driving process is possible by setting $$M_t^u = \EV{\prod_{l=1}^d \cosh \left(u^{(l)} X_T^{(l)} \right) \Big \vert \F_t}, \quad u =(u^{(1)},\dots,u^{(d)}) \geq 0.$$
In this case it is guaranteed that $M_t^{u} \geq M_t^w$ for $u \geq w$, which guarantees the nonnegativity of forward interest rates.  However, the option pricing formulas in the following sections do not generalize.
\end{remark}
As in the affine LIBOR model for a monotonically decreasing sequence $(u_k)$ forward interest rates are not only nonnegative, but bounded below by strictly positive time-dependent constants (the bounds can be calculated numerically). This is not a big issue if these bounds are close to zero, but has to be checked during the calibration process. 

In the modified affine LIBOR model the change of measure to the $T_k$-forward measure $\QM^{T_k}$ is given by
\begin{equation} \frac{\dd{\QM^{T_k}}}{\dd{\QM^{T}}}  = \frac{P(0,T)}{P(0,T_k)}  M_{T_k}^{u_k} = \frac{M_{T_k}^{u_k}}{M_0^{u_k}}. \label{eq:coshmeasurechange} \end{equation}
Here $M_t^{u_k}$ is a sum of exponentials of $X_t$, while in the affine LIBOR model the corresponding term is a single exponential. This means that contrary to the affine LIBOR model the process $X$ is not an inhomogeneous affine process under $\QM^{T_k}$ and it is not possible to calculate the moment generating function of the logarithm of foward bond prices under $\QM^{T_k}$. Nevertheless it is possible to get analytical formulas for the prices of caplets and swaptions. 

\subsection{Option pricing} \label{sec:option pricing}
The derivation of the pricing formulas for caplets and swaptions is based on a method first applied in \citet{JA89}. First caplets are dealt with, swaptions follow afterwards\footnote{Actually caplet prices coincide with prices of swaptions with only one underlying period. The difference between those two derivatives is the payoff time.}.
Note that if $u_k = u_{k-1}$ the corresponding forward interest rate $F^{k}$ always stays zero. To exclude such pathological examples assume that the sequence $(u_k)$ is strictly decreasing.
In this section random variables are often viewed as functions of the value of the driving process $X$. Specifically consider the functions $M_t^u: \R \rightarrow \R$, 
\begin{equation} \label{eq:coshmartingalefunction}
x \mapsto M_t^u(x) := \frac{1}{2} \left(\e^{\phi_{T-t}(u) + \psi_{T-t}(u) x} + \e^{\phi_{T-t}(-u) + \psi_{T-t}(-u) x} \right).\end{equation}
The time $t$ value of martingale $M^u$ in \eqref{eq:coshmartingales} is then $M_t^u = M_t^u(X_t)$. In the rest of the chapter $M_t^u$ will denote both, the function and the value of the stochastic processes, where the correct interpretation should be clear from context. 

The payoff of a caplet for the $(k+1)^\text{th}$ forward rate $F^{k+1}(T_{k})$ with strike $K$ is
$$ \Delta_{k+1} \left(F^{k+1}(T_k)- K \right)_+ = \left(\frac{1}{P(T_k,T_{k+1})} - \tilde{K} \right)_+ = \left( \frac{M_{T_k}^{u_{k}}}{M_{T_k}^{u_{k+1}}} - \tilde{K} \right)_+, $$
where $\tilde{K} = 1 + \Delta_{k+1} K$. Since this payoff has to be paid at time $T_{k+1}$ 
the price of the caplet and the corresponding floorlet is
\begin{align*}
\mathrm{Cpl}(t,T_k,T_{k+1},K) &=  P(t,T_{k+1}) \EV[\QM^{T_{k+1}}]{\left( \frac{M_{T_k}^{u_{k}}}{M_{T_k}^{u_{k+1}}} - \tilde{K} \right)_+ \Big \vert \F_t}, \notag \\
\mathrm{Flt}(t,T_k,T_{k+1},K) & =  P(t,T_{k+1}) \EV[\QM^{T_{k+1}}]{\left(\tilde{K} -  \frac{M_{T_k}^{u_{k}}}{M_{T_k}^{u_{k+1}}}  \right)_+ \Big \vert \F_t}. 
\end{align*}
Since price processes are martingales, the put/call parity holds and prices of caplets follow from floorlets and vice versa. Because Fourier analysis is easier for floorlets, where the payoff is bounded, formulas are derived for floorlets.

Since the moment generating function of $\ln({{M_{T_k}^{u_{k}}}/{M_{T_k}^{u_{k+1}}}})$ is unknown, Fourier methods are not directly applicable. However, 
the function $x \mapsto {M_{T_k}^{u_{k}}(x)}/{M_{T_k}^{u_{k+1}}(x)}$ has a unique minimum and is monotonically increasing moving away from this minimum. Using this one can get rid of the positive part and use Fourier inversion to calculate the above expectations. 
The above mentioned monotonicity is very fortunate and follows from a close interplay between the monotonicity of the sequence $(u_k)$ and the function $\psi$ with properties of the cosinus hyperbolicus. Details are laid out in the proof of the following lemma, which can be found in section \ref{sec:proofsCosh}. 
\begin{lemma} \label{lem:monoton} For $i=1, \dots, n$ let $u_0 \geq u_i \geq 0$, where for at least one $i$ $u_0 > u_i$. Let $c_i > 0$ be positive constants. Define a function $g: \R \rightarrow \R$ by 
\begin{equation}
g(x) := \sum_{i=1}^n c_i \frac{M_t^{u_i}(x)}{M_t^{u_0}(x)} \p \label{eq:monoton}
\end{equation}
Then $g$ has a unique maximum at some point $\xi \in \R$ and and is strictly monotonically decreasing to $0$ on the left and right side of $\xi$.
\end{lemma}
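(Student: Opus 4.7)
The plan is to exploit the explicit structure of $g$ as a rational combination of real exponentials and to bound the level sets via a Descartes-type sign count. Writing
\begin{equation*}
M_t^u(x) = \frac{1}{2}\bigl(A(u) e^{\psi_{T-t}(u) x} + B(u) e^{\psi_{T-t}(-u) x}\bigr)
\end{equation*}
with $A(u), B(u) > 0$ and setting $\tilde g(x) := \sum_i c_i M_t^{u_i}(x)$, the equation $g(x) = K$ is equivalent to $\tilde g(x) - K M_t^{u_0}(x) = 0$. By Lemma \ref{lem:psimontonicity}, whenever $u_i < u_0$ one has
\begin{equation*}
\psi_{T-t}(-u_0) < \psi_{T-t}(-u_i) \leq \psi_{T-t}(u_i) < \psi_{T-t}(u_0),
\end{equation*}
so every exponent contributed by $\tilde g$ lies strictly inside the open interval spanned by the two exponents of $M_t^{u_0}$.

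Ordering the resulting exponential sum by increasing exponent, the coefficients then follow the sign pattern $(-,+,+,\dots,+,-)$ for every $K>0$: the smallest and largest exponents carry $-KB(u_0)/2$ and $-KA(u_0)/2$, while each intermediate exponent carries a positive sum of terms of the form $c_i A(u_i)/2$ or $c_i B(u_i)/2$. The substitution $y = e^x$ converts this into a generalized polynomial with exactly two sign changes, so Descartes' rule for exponential sums yields at most two real roots. Consequently every level set $\{x \in \R : g(x) = K\}$, $K > 0$, contains at most two points; this is the conceptual heart of the proof.

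The asymptotics and the final conclusion are then routine. The same exponent comparison shows that $M_t^{u_0}$ dominates $\tilde g$ both as $x \to +\infty$ (where $e^{\psi_{T-t}(u_0)x}$ grows fastest) and as $x \to -\infty$ (where $e^{\psi_{T-t}(-u_0)x}$ grows fastest), so $g(x) \to 0$ at both ends. Since $g>0$ is continuous it attains its supremum $M$ at some $\xi$. A second maximum, or constancy of $g$ on an interval, would together with the decay at infinity force some level set with $K<M$ close to $M$ to contain at least four points, contradicting the two-point bound (constancy is ruled out because $g$ is real-analytic and not identically $M$). For each $K \in (0,M)$ the two points of $\{g=K\}$ must then lie on opposite sides of $\xi$, so $g|_{(-\infty,\xi]}$ is a continuous bijection onto $(0,M]$ and hence strictly increasing, and symmetrically $g|_{[\xi,\infty)}$ is strictly decreasing. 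The delicate step I expect is the sign count, which really requires the strict inequalities $u_i < u_0$ for all $i$ (rather than merely one $i$, as the stated hypothesis could be read); this is also what is needed for $g$ to decay to $0$ at $\pm\infty$.
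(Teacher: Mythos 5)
Your proof is correct, but it takes a genuinely different route from the one in the paper. The paper's argument is a direct calculus computation: it decomposes $\phi$ and $\psi$ into even and odd parts, writes $g'$ as $\cosh(B_0+b_0x)^{-2}\sum_i c_i\e^{A_i}\e^{a_ix}f_i(x)$, and shows each $f_i$ is monotone decreasing from $+\infty$ to $-\infty$ using the key inequality $\vert a_i\vert < b_0-b_i$ --- which is exactly the same input as your exponent sandwich $\psi_{T-t}(-u_0)<\psi_{T-t}(\pm u_i)<\psi_{T-t}(u_0)$, both coming from Lemma \ref{lem:psimontonicity}. You instead bound the cardinality of every level set $\{g=K\}$ by two via the sign pattern $(-,+,\dots,+,-)$ and the Descartes rule for exponential sums, and then derive unimodality from the asymptotics plus real-analyticity. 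Your zero-counting argument is clean and, in the multi-term (swaption) case, arguably tighter: the paper's final step passes from monotonicity of the individual $f_i$ to unimodality of the weighted sum $\sum_i c_i\e^{a_ix}f_i(x)$, where the $x$-dependent positive weights $\e^{a_ix}$ make that inference less immediate than the text suggests, whereas your level-set bound handles arbitrarily many summands uniformly. Two small points: when grouping coincident exponents (e.g.\ $\psi_{T-t}(u_i)=\psi_{T-t}(-u_i)$ for $u_i=0$, or repeated $u_i$) you should note that the grouped coefficients remain positive, which they do; and your closing observation is accurate --- both your proof and the paper's implicitly strengthen the hypothesis to $u_0>u_i$ for \emph{all} $i$ (the paper discards the terms with $u_i=u_0$ as constants), since otherwise $g$ tends to a positive constant rather than to $0$ at $\pm\infty$, and this strengthening is what holds in the floorlet and swaption applications.
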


For floorlet valuation this lemma is not directly applicable as $u_k > u_{k+1}$, which is the wrong inequality. However, there is only one summand and the lemma can be applied to the inverse ${M_{T_k}^{u_{k+1}}(x)}/{M_{T_k}^{u_{k}}(x)}$. 
It follows that ${M_{T_k}^{u_{k}}(x)}/{M_{T_k}^{u_{k+1}}(x)}$ has a unique minimum at some point $\xi$ and is increasing to infinity to the left and right. 
Hence it is possible to write
\begin{equation} \label{eq:fltpayoff}
\left(\tilde{K} -  \frac{M_{T_k}^{u_{k}}(x)}{M_{T_k}^{u_{k+1}}(x)}  \right)_+ = \left(\tilde{K} -  \frac{M_{T_k}^{u_{k}}(x)}{M_{T_k}^{u_{k+1}}(x)}  \right) \1{\kappa_1 < x <  \kappa_2},
\end{equation}
where $\kappa_1$ and $\kappa_2$ are two uniquely determined constants satisfying $\kappa_1 \leq \xi \leq \kappa_2$. 
If $\kappa_1 = \xi = \kappa_2$ the payoff is 
zero, which corresponds to ${M_{T_k}^{u_{k}}}/{M_{T_k}^{u_{k+1}}}  > \tilde{K}$. This happens if the forward interest rate is bounded from below by K, 
which only happens for very low strikes $K$. Inserting \eqref{eq:fltpayoff} into the price of a floorlet it follows by a change of measure that
\begin{align}
\mathrm{Flt}(t,T_k,T_{k+1},K) 
& =  P(t,T_{k+1}) \EV[\QM^{T_{k+1}}]{\left(\tilde{K} -  \frac{M_{T_k}^{u_{k}}}{M_{T_k}^{u_{k+1}}}  \right)  \1{\kappa_1 < X_{T_k} <  \kappa_2} \Big \vert \F_t} \notag \\
&=  P(t,T) \EV[\QM^{T}]{\left(\tilde{K} M_{T_k}^{u_{k+1}} -  M_{T_k}^{u_{k}}   \right)  \1{\kappa_1 < X_{T_k} <  \kappa_2} \Big \vert \F_t}. \label{eq:floorlet}
\end{align}
$\tilde{K} M_{T_k}^{u_{k+1}} -  M_{T_k}^{u_{k}}$ is the sum of exponentials of the random variable $X_{T_k}$.
The expectation in \eqref{eq:floorlet} is calculated under the measure $\QM^T$, where the conditional moment generating function
$$\mathcal{M}_{X_t \vert X_s}(z) := \EV[\QM^T]{\e^{z X_t}\vert \F_s} =  \EV[\QM^T]{\e^{z X_t}\vert X_s} = \ex{\phi_{t-s}(z) + \psi_{t-s}(z) X_s}$$
is known for $z \in \mathcal{V}$. Hence the expectation in \eqref{eq:floorlet} can be calculated via Fourier inversion. The Fourier inversion formula for terms of the above form is stated in Lemma \ref{lem:fouriertransform}, the proof of which is given in section \ref{sec:proofsCosh}.

\begin{lemma}
\label{lem:fouriertransform}
Assume that the function $f: \R \rightarrow \R$ has the representation
$$f(x) = \sum_k C_k \e^{v_k x}\1{\kappa_1 < x < \kappa_2}, \qquad \lim_{x \downarrow \kappa_1} f(x) = \lim_{x \uparrow \kappa_2} f(x) = 0,$$
where the summation is over a finite index set and the $C_k$ and $v_k$ are real constants.
Then for $R \in \mathcal{V} \cap \R$ 
the Fourier inversion formula  
\begin{equation*}
\EV{f(X_t)\vert \F_s} = \frac{1}{\pi} \int_{0}^{\infty} \mathrm{Re} \left(  { \mathcal{M}_{X_t \vert X_s}(\i u + R)} \hat{f}(u- \i R) \right)  \dd{u}
\end{equation*}
holds, where $\hat{f}$ is the analytic Fouier transform given by
\begin{equation}
\hat{f}(z)  = \frac{1}{\i z} \sum_k  {\frac{C_k v_k }{v_k - \i z}  { \left(\e^{(v_k - \i z) \kappa_2} - \e^{(v_k - \i z) \kappa_1 } \right) } }, \qquad z \neq 0, z \neq -\i v_k.
\end{equation}
\end{lemma}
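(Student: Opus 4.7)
The plan is to apply a damped Fourier inversion, in the Carr--Madan style, to the $\F_s$-conditional distribution of $X_t$, whose extended moment generating function $\mathcal{M}_{X_t\vert X_s}$ is available in closed form through the affine structure. I would begin by verifying the displayed expression for $\hat f$. The natural computation is $\hat f(z) = \int_{\kappa_1}^{\kappa_2} \e^{-\i z x} f(x)\,\dd{x}$; a direct summand-by-summand evaluation works, but the cleaner route is integration by parts on each summand, which kills the boundary terms thanks to the assumption $\lim_{x\downarrow\kappa_1}f=\lim_{x\uparrow\kappa_2}f=0$ and produces $\hat f(z) = (\i z)^{-1}\widehat{f'}(z)$. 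Since $f'(x) = \sum_k C_k v_k \e^{v_k x}$ on $(\kappa_1,\kappa_2)$, one reads off the stated formula.

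Next, I would introduce the damped function $\tilde f(x) := \e^{-Rx} f(x)$. Because $f$ is bounded with compact support $[\kappa_1,\kappa_2]$, $\tilde f \in L^1(\R)\cap C_0(\R)$, and its ordinary Fourier transform is the shifted $\hat{\tilde f}(u) = \hat f(u - \i R)$. The explicit formula for $\hat f$ gives $|\hat f(u - \i R)| = O(1/u^2)$ as $|u|\to\infty$: the factor $(\i z)^{-1}(v_k - \i z)^{-1}$ supplies two powers of decay and all exponentials remain bounded on the horizontal line $\{\Im z = -R\}$. Hence $\hat{\tilde f} \in L^1(\R)$ and the pointwise Fourier inversion theorem yields $\tilde f(x) = (2\pi)^{-1}\int_{-\infty}^{\infty} \e^{\i u x}\,\hat f(u-\i R)\,\dd{u}$ for every $x$; multiplying by $\e^{Rx}$ gives a pointwise representation of $f(x)$.

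To finish, I would take the $\F_s$-conditional expectation of both sides and exchange expectation with the $u$-integral via Fubini. The integrability required is supplied by the bound $|\mathcal{M}_{X_t\vert X_s}(R+\i u)| \le \mathcal{M}_{X_t\vert X_s}(R) < \infty$, which is finite since $R\in\mathcal{V}\cap\R$, together with the $O(1/u^2)$ decay of $\hat f(u-\i R)$. This produces
\begin{equation*}
\EV{f(X_t)\vert\F_s} = \frac{1}{2\pi}\int_{-\infty}^{\infty} \mathcal{M}_{X_t\vert X_s}(R+\i u)\,\hat f(u-\i R)\,\dd{u},
\end{equation*}
and since both $f$ and $X_t$ are real valued, the integrand at $-u$ is the complex conjugate of the integrand at $u$, so folding the integral onto $(0,\infty)$ and taking real parts yields the claimed formula. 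The main technical obstacle is precisely the $O(1/u^2)$ decay of $\hat f$ on the shifted line: it is what legitimises both the pointwise Fourier inversion and the Fubini exchange, and it rests crucially on the vanishing boundary condition at $\kappa_1$ and $\kappa_2$, without which one integration by parts would contribute nontrivial boundary terms and only $1/|u|$ decay would survive.
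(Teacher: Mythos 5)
Your proposal is correct and follows essentially the same route as the paper: integration by parts using the vanishing boundary values to obtain $\hat f$, the $O(u^{-2})$ decay on the line $\Im z=-R$ to justify pointwise Fourier inversion, the bound $\mathcal{M}_{X_t\vert X_s}(R)<\infty$ to exchange conditional expectation and integration, and conjugate symmetry to fold the integral onto $(0,\infty)$. Your phrasing via the damped function $\e^{-Rx}f(x)$ is just a restatement of the paper's contour integral over $\{\Im z=-R\}$, so there is no substantive difference.
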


To calculate the price of a floorlet in \eqref{eq:floorlet} apply Lemma \ref{lem:fouriertransform} to
$f_{k+1}^{K}(X_{T_{k}})$ with
\begin{equation}
f_{k+1}^{K}(x) := \left(\tilde{K} M_{T_k}^{u_{k+1}}(x) -  M_{T_k}^{u_{k}}(x)   \right)  \1{\kappa_1 < x <  \kappa_2}.  \label{eq:ffloorlet}
\end{equation}  
Its Fourier transform is
\begin{equation} \label{eq:fhatfloorlet}
\hat{f}^{K}_{k+1}(z) = \frac{1}{\i z} \left( (1 + \Delta_{k+1} K)  h_{\kappa_1,\kappa_2}^{T_{k}} (-\i z,u_{k+1}) -  h_{\kappa_1,\kappa_2}^{T_{k}}(-\i z,u_{k}) \right)
\end{equation}
with
\begin{equation}  \label{eq:hfunction}
\begin{split}
 h_{\kappa_1,\kappa_2}^t(z,u) & := \e^{\phi_{T-t}(u)} \frac{\psi_{T-t}(u)}{2 ( z + \psi_{T-t}(u))} \left(\e^{(z + \psi_{T-t}(u)) \kappa_2} - \e^{( z + \psi_{T-t}(u)) \kappa_1 } \right) \\
& + \e^{\phi_{T-t}(-u)} \frac{\psi_{T-t}(-u)}{2 (z + \psi_{T-t}(-u))} \left(\e^{( z + \psi_{T-t}(-u)) \kappa_2} - \e^{( z + \psi_{T-t}(-u)) \kappa_1 } \right).
\end{split}
\end{equation}

The case of swaptions is similar. Consider a swap which is part of the tenor structure. That is, consider $1 \leq \alpha < \beta \leq N$ and the according interest rate swap with forward swap rate
$$  S_{\alpha,\beta}(t)  = \frac{P(t,T_\alpha) - P(t,T_\beta)}{\sum_{k=\alpha+1}^\beta \Delta_k P(t,T_k)}, \qquad  \Delta_k = T_{k} - T_{k-1}. $$
The payoff of a put swaption on the above swap with strike $K$ is then 
\begin{align*}
\sum_{k=\alpha+1}^\beta P(T_\alpha,T_k) \Delta_k \left(K - S_{\alpha,\beta}(T_\alpha) \right)_+ & = \left(P(T_\alpha,T_\beta) + K \sum_{k=\alpha+1}^\beta \Delta_k P(T_\alpha,T_k) - 1\right)_+ \\
& = \left(\frac{M_{T_\alpha}^{u_\beta}}{M_{T_\alpha}^{u_\alpha}} +  \sum_{k=\alpha+1}^\beta K \Delta_k \frac{M_{T_\alpha}^{u_k}}{M_{T_\alpha}^{u_\alpha}} - 1 \right)_+ .
\end{align*}
Since the function ${M_{T_\alpha}^{u_\beta}(x)}/{M_{T_\alpha}^{u_\alpha}(x)} +  \sum_{k=\alpha+1}^\beta K \Delta_k {M_{T_\alpha}^{u_k}(x)}/{M_{T_\alpha}^{u_\alpha}(x)} $ is of the form of Lemma \ref{lem:monoton}, it has a unique maximum $\xi$ and one can find constants\footnotemark  $\ \kappa_1 \leq \xi \leq \kappa_2$ such that 
after a change of measure the value of a put swaption is 

\footnotetext{
As in the floorlet case if $\kappa_1 = \kappa_2$ then the forward swap rate is always larger than the strike. Note that $S_{\alpha,\beta}(t)$ can also be written as 
$S_{\alpha,\beta}(t) = \sum_{k=\alpha+1}^\beta w_k(t) F^k(t)$
with $w_k > 0$ (see e.g. \citet{BM06}). It follows that if forward interest rates are bounded below by positive constants the same will be true for forward swap rates. This bound is then at most an average of the corresponding forward interest rates bounds and is therefore of the same order of magnitude, which for a meaningful model will be small enough.
}

\begin{equation*}
\mathrm{Put Swaption}(t,T_\alpha,T_\beta,K) = P(t,T) \EV[\QM^T]{f^K_{\alpha,\beta}(X_{T_\alpha}) \Big \vert \F_t},
\end{equation*}
where 
\begin{equation} f^K_{\alpha,\beta}(x) = \left( {M_{T_\alpha}^{u_\beta}(x)}  -  M_{T_\alpha}^{u_\alpha}(x) +  \sum_{k=\alpha+1}^\beta K \Delta_k {M_{T_\alpha}^{u_k}(x)} \right) \1{\kappa_1 < x < \kappa_2}.
\label{eq:fputswaption}
\end{equation}
Again this of the form in Lemma \ref{lem:fouriertransform} and in this case
\begin{equation} \label{eq:fhatputswaption}
\begin{aligned}
\hat{f}^K_{\alpha,\beta}(z) = 
\frac{1}{\i z} \Big( &   h_{\kappa_1,\kappa_2}^{T_\alpha}(-\i z,u_\beta) -   h_{\kappa_1,\kappa_2}^{T_\alpha}(- \i z ,u_\alpha) + {K} \sum_{k=\alpha+1}^\beta \Delta_k h_{\kappa_1,\kappa_2}^{T_\alpha}(-\i z,u_k) \Big),
\end{aligned}
\end{equation}
where $h^t_{\kappa_1,\kappa_2}(z,u)$ is defined in \eqref{eq:hfunction}.
The pricing formulas are summarized in the following theorem.
\begin{theorem}
Let $R \in \mathcal{V} \cap \R$. In the modified affine LIBOR model prices of a forward interest rate put and a put swaption are
\begin{equation}
\Flt(t,T_k,T_{k+1},K) = \frac{P(t,T)}{\pi} \int_{0}^{\infty} \mathrm{Re} \left(  { \mathcal{M}_{X_{T_k}\vert X_t}(R + \i u)}
\hat{f}^{K}_{k+1}(u - \i R) \right) \dd{u}, 
\end{equation}
\begin{equation}
\mathrm{Put Swaption}(t,T_\alpha,T_\beta,K) = \frac{P(t,T)}{\pi} \int_{0}^{\infty} \mathrm{Re} \left(  { \mathcal{M}_{X_{T_\alpha}\vert X_t}(R + \i u)}
\hat{f}^K_{\alpha,\beta}(u - \i  R) \right) \dd{u},
\end{equation}

The Fourier transforms
$\hat{f}^{K}_{k+1}$ repectively $\hat{f}^K_{\alpha,\beta}$ are given in \eqref{eq:fhatfloorlet} respectively \eqref{eq:fhatputswaption} for $R \notin \{0,u_k, u_{k+1}\}$ respectively $R \notin \{0,u_\alpha, \dots, u_\beta \}$.
\end{theorem}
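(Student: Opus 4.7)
The plan is to reduce both pricing formulas to a direct application of Lemma \ref{lem:fouriertransform}. For the floorlet, equation \eqref{eq:floorlet} already rewrites the price as
\[
\mathrm{Flt}(t,T_k,T_{k+1},K) \;=\; P(t,T)\,\EV[\QM^T]{f^{K}_{k+1}(X_{T_k})\,\big|\,\F_t},
\]
with $f^{K}_{k+1}$ as in \eqref{eq:ffloorlet}, and an analogous reduction (together with the change of numeraire to $\QM^T$) gives the put swaption as $P(t,T)\EV[\QM^T]{f^{K}_{\alpha,\beta}(X_{T_\alpha})\,|\,\F_t}$ with $f^{K}_{\alpha,\beta}$ as in \eqref{eq:fputswaption}. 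Both integrands are, by the definition \eqref{eq:coshmartingalefunction} of $M_t^u(x)$, a finite sum of exponentials times the indicator of an interval, so they fit the hypothesis of Lemma \ref{lem:fouriertransform} once the boundary condition at $\kappa_1,\kappa_2$ has been checked.

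First I would verify this boundary condition. For the floorlet case, the constants $\kappa_1\le\kappa_2$ are exactly the two (possibly equal) points where $\tilde K - M^{u_k}_{T_k}(x)/M^{u_{k+1}}_{T_k}(x)=0$ that exist by Lemma \ref{lem:monoton} applied to the reciprocal $M^{u_{k+1}}_{T_k}/M^{u_k}_{T_k}$; multiplying by $M^{u_{k+1}}_{T_k}(x)>0$ then shows that $\tilde K M^{u_{k+1}}_{T_k}(x)-M^{u_k}_{T_k}(x)$ vanishes at $\kappa_{1},\kappa_{2}$, i.e.\ $\lim_{x\downarrow\kappa_1}f^{K}_{k+1}(x)=\lim_{x\uparrow\kappa_2}f^{K}_{k+1}(x)=0$. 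The swaption case is analogous: by Lemma \ref{lem:monoton} applied directly to the weighted sum in $f^{K}_{\alpha,\beta}/M^{u_\alpha}_{T_\alpha}$, the numerator $M^{u_\beta}_{T_\alpha}(x)-M^{u_\alpha}_{T_\alpha}(x)+K\sum_{k=\alpha+1}^\beta\Delta_k M^{u_k}_{T_\alpha}(x)$ vanishes at the two constructed endpoints.

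Next I would compute the analytic Fourier transforms. Expanding each $M^{u}_t(x)=\tfrac12\e^{\phi_{T-t}(u)+\psi_{T-t}(u)x}+\tfrac12\e^{\phi_{T-t}(-u)+\psi_{T-t}(-u)x}$ in \eqref{eq:ffloorlet} and \eqref{eq:fputswaption} expresses $f^{K}_{k+1}$ and $f^{K}_{\alpha,\beta}$ in the exact form
$\sum_k C_k\e^{v_k x}\mathbb{I}_{\{\kappa_1<x<\kappa_2\}}$, with the $v_k$ equal to the various $\psi_{T-t}(\pm u_j)$. Plugging into the formula for $\hat f$ in Lemma \ref{lem:fouriertransform} and grouping the two summands that belong to a common $u_j$ reproduces the auxiliary function $h^{t}_{\kappa_1,\kappa_2}(z,u)$ defined in \eqref{eq:hfunction}, and yields precisely the expressions \eqref{eq:fhatfloorlet} and \eqref{eq:fhatputswaption}. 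For the admissible parameter $R$, Lemma \ref{lem:fouriertransform} requires $R\in\mathcal V\cap\R$ and excludes the points where either the leading $1/(\i z)$ factor or one of the factors $1/(v_k-\i z)$ develops a singularity on the contour $z=u-\i R$; translating these conditions back to $R$ gives the exclusions $R\notin\{0,u_k,u_{k+1}\}$ and $R\notin\{0,u_\alpha,\dots,u_\beta\}$ stated in the theorem (with the convention used in the lemma the $v_k=\psi_{T-t}(\pm u_j)$ collapse to $\pm u_j$ at the boundary of the contour).

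Finally, Lemma \ref{lem:fouriertransform} applied with the conditional moment generating function $\mathcal M_{X_{T_k}\vert X_t}$ (respectively $\mathcal M_{X_{T_\alpha}\vert X_t}$) delivers
\[
\EV[\QM^T]{f(X_{T})\,\big|\,\F_t}\;=\;\frac{1}{\pi}\int_0^\infty \Re\bigl(\mathcal M_{X_T\vert X_t}(R+\i u)\,\hat f(u-\i R)\bigr)\,\dd u,
\]
and multiplying by $P(t,T)$ yields the two claimed pricing formulas. The main obstacle is not any of these Fourier manipulations, which are routine, but the qualitative structural input in the second step: namely that Lemma \ref{lem:monoton} gives a \emph{unique} pair $\kappa_1\le\kappa_2$ at which the payoff function changes sign, so that the positive part can be removed without introducing a spurious boundary contribution in the Fourier inversion.
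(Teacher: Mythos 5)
Your proposal is correct and follows essentially the same route as the paper: the theorem is just the summary of the derivation preceding it, namely removing the positive part via the root interval $(\kappa_1,\kappa_2)$ supplied by Lemma \ref{lem:monoton}, changing numeraire to $\QM^T$, and applying Lemma \ref{lem:fouriertransform} to the resulting sum of exponentials, with the transform terms grouped into $h^t_{\kappa_1,\kappa_2}$. Your explicit check that the integrand vanishes at $\kappa_1,\kappa_2$ (needed for the integration-by-parts in Lemma \ref{lem:fouriertransform}) is a detail the paper leaves implicit but is exactly the right thing to verify.
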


In order to calculate $\hat{f}^{K}_{i}$ respectively $\hat{f}^K_{\alpha,\beta}$ one has to find the roots $\kappa_1, \kappa_2$ of the functions
\begin{align} 
g^K_{k}(x) & := \tilde{K} -  \frac{M_{T_k}^{u_{k}}(x)}{M_{T_k}^{u_{k+1}}(x)},  \label{eq:fltfunc} \\
{g}^K_{\alpha,\beta}(x) & := \frac{M_{T_\alpha}^{u_\beta}(x)}{M_{T_\alpha}^{u_\alpha}(x)} +  \sum_{k=\alpha+1}^\beta K \Delta_k \frac{M_{T_\alpha}^{u_k}(x)}{M_{T_\alpha}^{u_\alpha}(x)} - 1 .\label{eq:swaptionfunc}
\end{align}
By Lemma \ref{lem:monoton} this amounts to finding the roots of a function which has a single optimum and is monotonic when moving away from this optimum.
Numerical determination of the roots of such well-behaved one-dimensional functions poses no problem. 
Having determined those bounds valuation reduces to a one-dimensional integration of a function that is falling at least like $1/x^2$ (depending on the moment generating function of the affine process), so also numerical integration is feasible.
Note that besides caps, floors and swaptions, options like digital options or Asset-or-Nothing options can be calculated in a similar manner.

\subsection{Examples} \label{sec:examples}
The first part of this section looks at the benchmark case of a Brownian motion, where everything can also be calculated in closed form. Afterwards Ornstein-Uhlenbeck processes are discussed. The section concludes with examples of possible volatility surfaces.
\subsubsection*{Brownian motion}
Choose $X_t = B_t$, a standard Brownian motion starting in 0. 
The conditional moment generating function is
$$\mathcal{M}_{B_T \vert B_t}(u) = \EV{\e^{u B_T} \vert \F_t} = \ex{u B_t + \frac{u^2}{2} (T-t)}.$$
Hence this is an affine process with $\phi_t(u) = \frac{u^2}{2} t$ and $\psi_t(u) = u$. 
Consider the time $0$ price of a floorlet as given in \eqref{eq:floorlet} with $t=0$.
Since $M_t^u(-x) = M_t^u(x)$ one finds that in this case  
$\kappa_2 = \kappa$ and $\kappa_1 = - \kappa$, where $\kappa$ is the unique positive root of \eqref{eq:fltfunc} if $g^{K}_{k+1}(0) < 0$ and $\kappa=0$ otherwise. 
By \eqref{eq:floorlet} the floorlet price $\Flt(0,T_k,T_{k+1},K)$ is
\begin{align*}P(0,T) \EV[\QM^{T}]{ \left( {\tilde{K}} \e^{\frac{u_{k+1}^2}{2} (T-T_k)}\cosh(u_{k+1} B_{T_k}) -  \e^{\frac{u_{k}^2}{2} (T-T_k)} \cosh(u_{k} B_{T_k})\right)  \1{\vert B_{T_k} \vert \leq \kappa} }. 
\end{align*}
By the symmetry of a Brownian motion starting in $0$
\begin{align*}  \EV{\cosh(z B_t) \1{\vert B_t \vert \leq \kappa}} & = \EV{\e^{z B_t} \1{\vert B_t \vert \leq \kappa}} = \EV{\e^{-z B_t} \1{\vert B_t \vert \leq \kappa}} \\
&= \e^{\frac{1}{2} t z^2} \left( \Phi \Big(\frac{\kappa}{\sqrt{t}}- z \sqrt{t} \Big) - \Phi \Big (-\frac{\kappa}{\sqrt{t}} - z \sqrt{t} \Big) \right) ,
\end{align*}
where $\Phi$ denotes the cumulative distribution function of a standard normal distributed random variable. Hence the floorlet price $\mathrm{Flt}(0,T_k, T_{k+1},K)$ is 
\begin{align*}
 {\tilde{K}} & P(0,T) \e^{u_{k+1}^2  \frac{T}{2} } \left( \Phi \Big(\frac{\kappa}{\sqrt{T_k}}- u_{k+1} \sqrt{T_k} \Big) - \Phi \Big(- \frac{\kappa}{\sqrt{T_k}}-u_{k+1} \sqrt{T_k}\Big)  \right) \\
-  &  P(0,T) \e^{u_k^2  \frac{T}{2} } \left( \Phi \Big(\frac{\kappa}{\sqrt{T_k}}-u_k \sqrt{T_k}\Big) - \Phi \Big(- \frac{\kappa}{\sqrt{T_k}}-u_k \sqrt{T_k}\Big)  \right).
\end{align*}
Slightly more complicated formulas exist when $B$ is replaced with a Brownian motion with constant drift and volatility and a starting value different from $0$. 

Swaptions can be treated the same way. Let $\kappa$ be the unique positive root of \eqref{eq:swaptionfunc} if $g^K_{\alpha,\beta}(0) > 0$ and $\kappa=0$ otherwise. Then
\begin{align*}
\mathrm{Put}&\mathrm{Swaption}(0,T_\alpha,T_\beta,K) \\
& =  P(0,T)   \e^{u_\beta^2  \frac{T}{2} } \left( \Phi \Big( \frac{\kappa}{\sqrt{T_\alpha}}- u_\beta \sqrt{T_\alpha}\Big) - \Phi \Big(- \frac{\kappa}{\sqrt{T_\alpha}}-u_\beta \sqrt{T_\alpha} \Big)  \right) \\
& - P(0,T) \e^{u_\alpha^2  \frac{T}{2} } \left( \Phi \Big(\frac{\kappa}{\sqrt{T_\alpha}}-u_\alpha \sqrt{T_\alpha} \Big) - \Phi \Big(-\frac{\kappa}{\sqrt{T_\alpha}} -u_\alpha \sqrt{T_\alpha} \Big)   \right) \\
& + P(0,T)  \sum_{k=\alpha+1}^\beta K \Delta_k \e^{u_k^2  \frac{T}{2} } \left( \Phi \Big( \frac{\kappa}{\sqrt{T_\alpha}}- u_k \sqrt{T_\alpha}\Big) - \Phi \Big(- \frac{\kappa}{\sqrt{T_\alpha}}- u_k \sqrt{T_\alpha} \Big)  \right).
\end{align*}

\subsubsection*{Ornstein-Uhlenbeck (OU) processes} \label{sec:OUprocess}
The OU process $X$ generated by a Lévy process $L$ is defined as the unique strong solution of (see \citet{SA99}, section 17)
\begin{equation} \dd{X_t} = - \lambda X_t \dd{t} + \dd{L_t}, \quad X_0 = x_0.  \label{eq:OUprocess} \end{equation}
Then $ Y_t := \e^{\lambda t} X_t 
= x_0 + \int_0^t \e^{\lambda s} \dd{L_s}. $
Using the key formula of Lemma 3.1 in \citet{ER99} (here we assume the according integrability condition of the jumps of the Levy process) it follows that
$$ \EV{\e^{u X_t}} = \EV{\ex{\e^{-\lambda t} u Y_t}} = \ex{\e^{-\lambda t} u x_0 + \int_0^t \kappa(\e^{-\lambda s} u) \dd{s}},$$
where $\kappa(u) = \logn{\EV{\e^{u L_1}}}$ is the cumulant generating function of the Lévy process $L$.
Hence this process is affine with 
\begin{equation} \psi_t(u) = \e^{-\lambda t} u \quad \text{ and } \quad \phi_t(u) =  \int_0^t \kappa(\e^{-\lambda s} u) \dd{s} = \frac{1}{\lambda} \int_{\e^{-\lambda t}}^1 \frac{\kappa(v u)}{v} \dd{v}. \label{eq:phiOU} \end{equation}
By Corollary 2.10 in \citet{DFS03} every affine process with state space $\R$ is in fact an OU process. Hence in the context of affine processes defined on the real line OU processes are the right class to consider.  
For application it should be possible to calculate the integral in \eqref{eq:phiOU} analytically. Two examples where this is possible are presented below.

\begin{remark}
If $L$ is a martingale, the process in \eqref{eq:OUprocess} is mean-reverting to zero, however shifting the mean to $\theta$ is easily done by using $Z_t = \theta+X_t.$ Then $\dd{Z_t} = \lambda (\theta-Z_t) \dd{t} + \dd{L_t}$ and
$$\EV{\e^{u Z_t} } = \ex{(\phi_t(u) + \theta u  (1-\e^{-\lambda t})) + \psi_t(u) Z_0}.$$ 
Hence $Z$ is again affine with $\psi_t^\theta(u) = \psi_t(u)$ and $\phi_t^\theta(u) = \phi_t(u) + \theta u  (1-\e^{-\lambda t})$.
Note that this process is then generated by the Lévy process $\tilde{L}_t = L_t + \theta \lambda t$, i.e. the original Lévy process plus an additional drift of $\theta \lambda$.
\end{remark}

The first example is the classical OU process generated by a Brownian motion $ \sigma B$, where $\kappa(u) = \frac{1}{2} \sigma^2 u^2$. This process is described by
$$\dd{X_t} =  - \lambda  X_t \dd{t} + \sigma \dd{B_t},  \quad X_0 = x. $$ The integral in \eqref{eq:phiOU} is 
\begin{align} \label{eq:phicont}
\phi_t(u) = \frac{1}{\lambda} \int_{\e^{-\lambda t}}^1 \frac{ \kappa(v u)}{v} \dd{v} 
 =  \frac{\sigma^2 u^2}{4 \lambda} (1-\e^{-2 \lambda t}).
\end{align}

%
%
%
%
%
%

With Brownian motion describing the continuous part of Lévy processes., for the second example we consider a pure jump process, namely a Double $\Gamma$-OU process. $\Gamma$-OU processes are generated by a compound Poisson process with jump intensity $\lambda \beta$ ($\lambda$ being the same as in \eqref{eq:OUprocess}) and exponentially distributed jumps with expectation value $\alpha$. The limit distribution of this process is a $\Gamma$-distribution, which gives the process its name. 
As the generating compound Poisson process is strictly increasing, the generated $\Gamma$-OU process is a subordinator and stays above 0. In order to find an OU process with values in $\R$ consider the difference of two independent compound $\Gamma$-OU processes $L^+, L^-$ with parameters $\alpha^+, \beta^+, \alpha^-, \beta^-$ and set $\lambda^+ = \lambda \beta^+, \lambda^- =  \lambda \beta^-$. Then $L = L^+ - L^-$ is a compound Poisson process, where positive jumps with expected jump size $\frac{1}{\alpha^+}$ are arriving at rate $\lambda^+$, while negative jumps with expected jump size $\frac{1}{\alpha^-}$ are arriving at rate $\lambda^-$. 

The cumulant generating function of a compound Poisson process with exponential jumps is $\frac{\lambda \beta u}{\alpha -u}$, which is defined for $u < \alpha$. Hence for $u \in (-\alpha^-,\alpha^+)$ the moment generating function of the combined process $L$ is
\begin{align*}
\EV{\e^{u L_1}} & 
= \EV{\e^{u L^+_1 }} \EV{\e^{ - u L^-_1}} 
= \ex{\lambda \frac{(\beta^+ + \beta^-) u^2 + (\beta^+ \alpha^- - \beta^- \alpha^+) u}{(\alpha^+ -u)(\alpha^- +u)}}.
\end{align*}
Inserting this into \eqref{eq:phiOU} gives
\begin{align*}
\phi_t(u) = & \frac{1}{\lambda} \int_{\e^{-\lambda t}}^1  \frac{\kappa(v u)}{v} \dd{v} = \int_{\e^{-\lambda t}}^1 \frac{(\beta^+ + \beta^-) u^2 v + (\beta^+ \alpha^- - \beta^- \alpha^+) u}{(\alpha^+ -u v)(\alpha^- +u v)} \dd{v} \\
= & - \frac{\beta^+ + \beta^-}{2} \int_{\e^{-\lambda t}}^1 \frac{-2 u^2 v + (\alpha^+ - \alpha^-) u}{(\alpha^+ -u v)(\alpha^- +u v)} \dd{v} \\
& + \int_{\e^{-\lambda t}}^1 \frac{(\beta^+ \alpha^- - \beta^- \alpha^+) u + (\beta^+ + \beta^-) (\alpha^+ - \alpha^-) \frac{u}{2} }{(\alpha^+ -u v)(\alpha^- +u v)} \dd{v}
 \end{align*}
The first integral is the differentiated logarithm of the denominator, so for this integral we get
$$ - \frac{\beta^+ + \beta^-}{2} \logn{(\alpha^+ -u v)(\alpha^- +u v)} \big \vert_{v=\e^{-\lambda t}}^1 = \frac{\beta^+ + \beta^-}{2} \logn{\frac{(\alpha^+-\e^{-\lambda t}u)(\alpha^-+\e^{-\lambda t}u)}{(\alpha^+-u)(\alpha^-+u)}}. $$
The second integral is
$$ \frac{\beta^+ + \beta^-}{2} \int_{\e^{-\lambda t}}^1 \frac{u}{\alpha^+ - vu} + \frac{u}{\alpha^- + vu} \dd{v},$$
so that the integrands are again differentiated logarithms of their denominators. Hence we get that the integral is
$$(-\logn{\alpha^+ - vu} + \logn{\alpha^- + vu} ) \vert_{v=\e^{-\lambda t}}^1 =  \logn{\frac{(\alpha^+-\e^{-\lambda t}u)(\alpha^-+u)}{(\alpha^+-u)(\alpha^-+\e^{-\lambda t}u)}}.$$
In summary the function $\phi$ is given by
\begin{equation} \label{eq:phijump}
\begin{aligned}
\phi_t(u) & = \frac{\beta^+ + \beta^-}{2} \logn{\frac{(\alpha^+-\e^{-\lambda t}u)(\alpha^-+\e^{-\lambda t}u)}{(\alpha^+-u)(\alpha^-+u)}} \\
& + \frac{\beta^+ - \beta^-}{2 } \logn{\frac{(\alpha^+-\e^{-\lambda t}u)(\alpha^-+u)}{(\alpha^+-u)(\alpha^-+\e^{-\lambda t}u)}}.
\end{aligned}
\end{equation}

It is also possible to combine the two approaches by considering an OU process generated by a Lévy process which is the difference of two compound Poisson processes plus a Brownian motion, all of which are independent. The resulting $\phi$ then follows by adding up the two functions \eqref{eq:phicont} and \eqref{eq:phijump} and for this process $\mathcal{V} = \{u \in \C: - \alpha^- < \mathrm{Re}(u)  < \alpha^+ \}$. By the previous remark it is also possible to shift this process by $\theta$. Such OU processes are used in the following numerical examples.

\subsubsection*{Volatility surfaces}
With the OU process of the previous section it is possible to generate volatility smiles as well as volatility skews. For illustration we consider a term structure with constant interest rates of $3.5\%$. The tenor structure and therefore the forward interest rates are based on half year intervals. Implied volatilities are then calculated for caplets with maturities over a 5-year period and strikes ranging from $0.02$ to $0.07$. Figure \ref{fig:OUskew} shows a skewed volatility surface while figure \ref{fig:OUsmile} shows a very pronounced smile, both of which are generated by an $OU$ process of the just introduced type. 
As mentioned in the previous chapters forward interest rates in this type of model will be bounded from below. The bounds in these examples are at $1\%$ for the forward interest rate expiring after half a year and decrease to basically $0\%$ for the forward interest rate which expires in 5 years. Hence they are well within reasonable boundaries. 
\begin{figure}
\centering
\includegraphics[width=0.7 \textwidth, trim=0cm 2cm 0cm 2.5cm,clip=true]{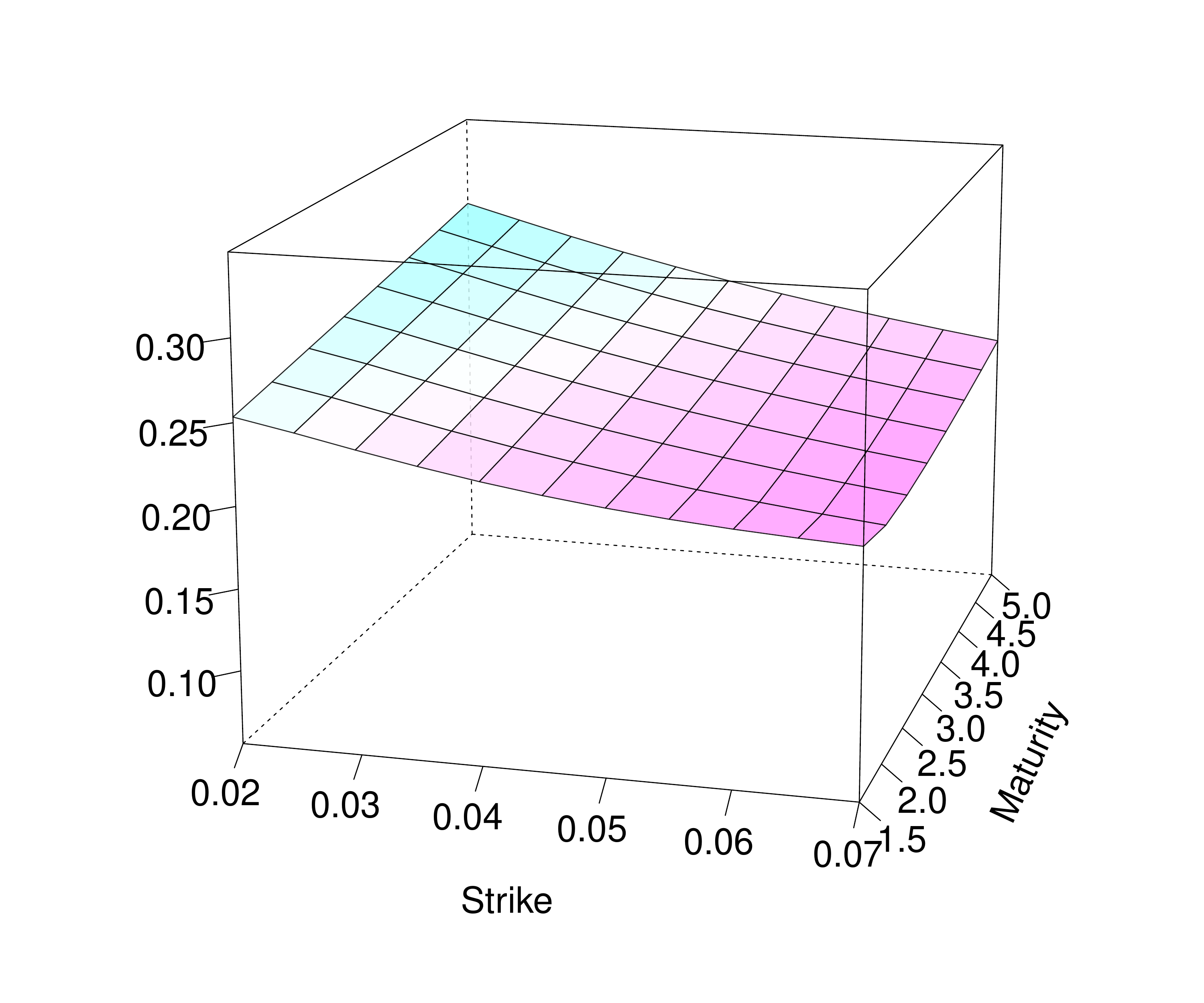}
\captionsetup{singlelinecheck=false, margin = 1.5 cm}
\caption[OU skew]{Implied volatility skew of caplets generated by an OU process with parameters $\lambda = 0.02, \alpha^+ = 12, \alpha^- =10, \beta^+=50, \beta^- = 5, \sigma=0.3, \theta=0.5, x=0.7$ and $T = 10$.}
\label{fig:OUskew}
\end{figure}
\begin{figure}
\centering
\includegraphics[width=0.7 \textwidth,trim=0cm 2cm 0cm 2.5cm,clip=true]{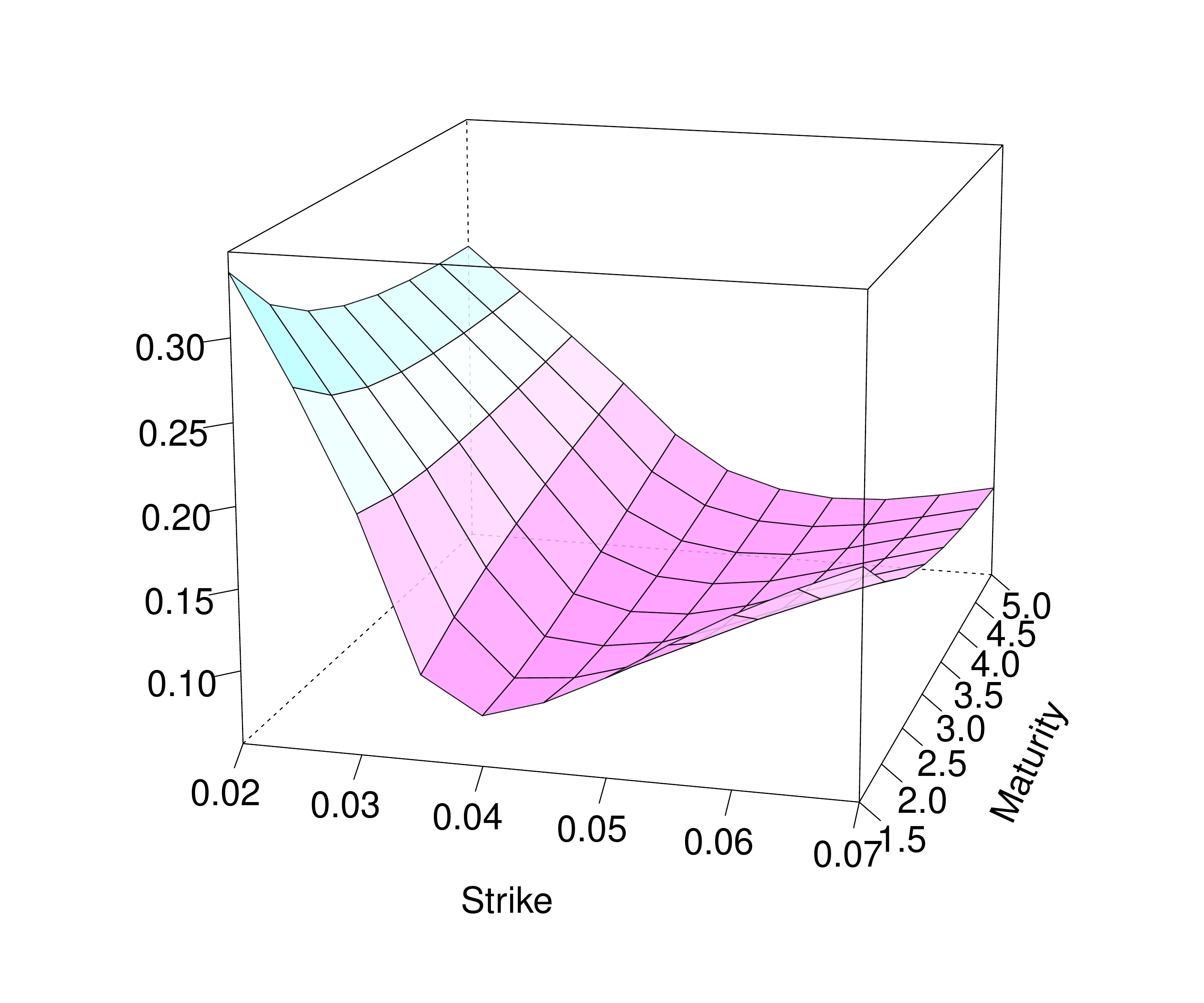}
\captionsetup{singlelinecheck=false, margin = 1.5 cm}
\caption[OU smile]{Implied volatility smile of caplets generated by an OU process with parameters $\lambda = 0.02, \alpha^+ = 50, \alpha^- =5, \beta^+=50, \beta^- = 10, \sigma=0, \theta=0, x=1$ and $T = 10$.}
\label{fig:OUsmile}
\end{figure}
For completeness an example of at-the-money implied volatilities for swaptions with maturities and underlying swap rates ranging from 2 to 7 years is displayed in figure \ref{fig:swaptionsurface}. 

\begin{figure}
\centering
\includegraphics[width=0.7 \textwidth]{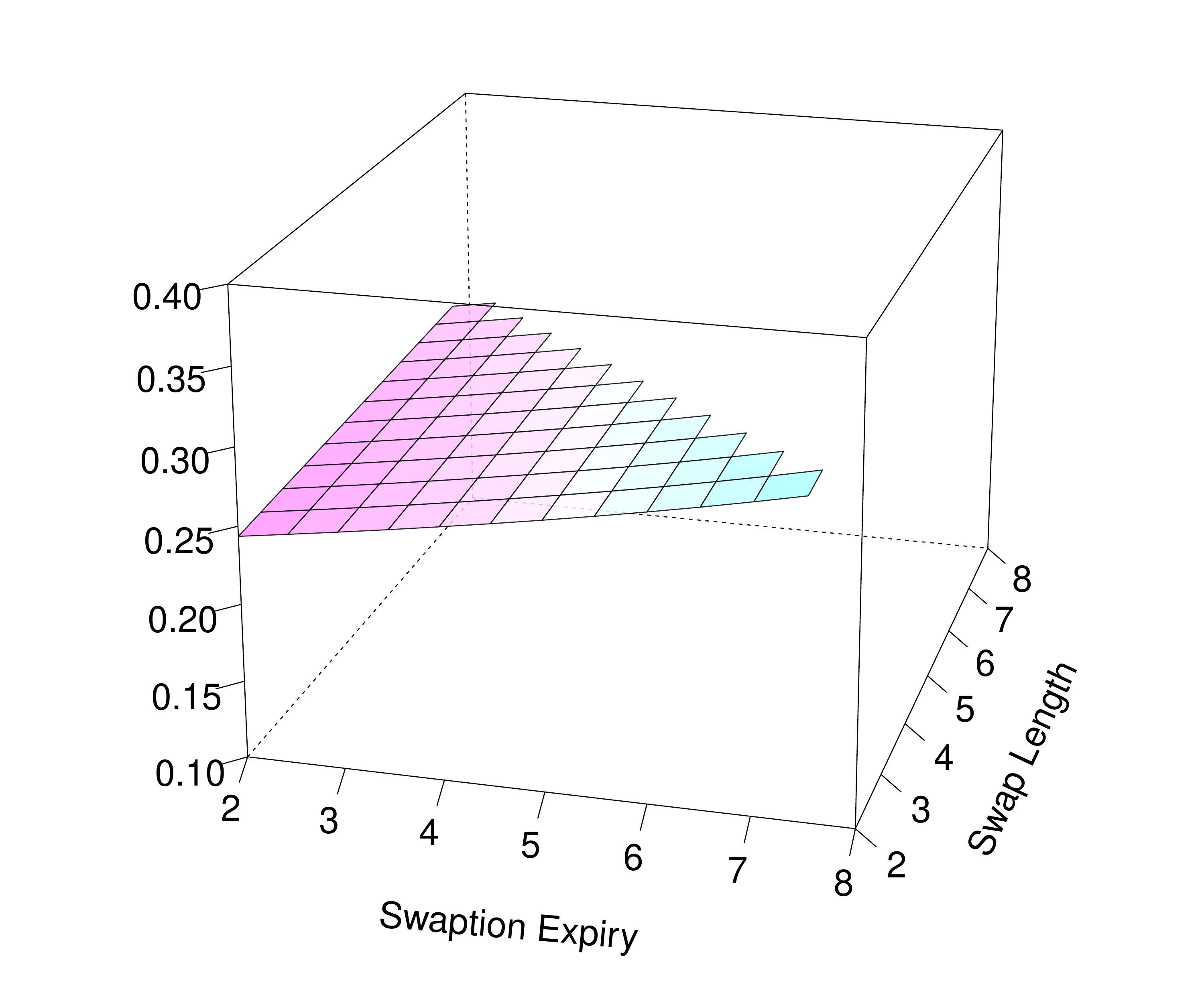}
\captionsetup{singlelinecheck=false, margin = 1.5 cm}
\caption[Swaption Implied volatilites]{Swaption implied volatilites generated by an OU process with parameters $\lambda = 0.02, \alpha^+ = 12, \alpha^- =10, \beta^+=50, \beta^- = 5, \sigma=0.3, \theta=0.5, x=0.7$ and $T = 10$.}
\label{fig:swaptionsurface}
\end{figure}

\section*{Conclusion}
Classical interest rate market models are not capable of simultaneously allowing for semi-analytical pricing formulas for caplets and swaptions and guaranteeing nonnegative forward interest rates. One exception are the affine LIBOR models presented in \citet{KPT11}. We modify their approach to also allow for driving processes which are not necessarily nonnegative. Caplet and swaption valuation is possible via one-dimensional numerical integration. This allows for a fast calculation of implied volatilities for these types of interest rate derivatives. With the additional flexibility of real-valued affine processes this type of model is capable of producing skewed implied volatility surfaces as well as implied volatility surfaces with pronounced smiles.

\section{Proofs} \label{sec:proofsCosh}
\begin{proof}[Proof of Lemma \ref{lem:monoton}]
For a function $f(x)$ denote its even and odd part by
$$f^e(x) = \frac{1}{2} (f(x) + f(-x)), \qquad f^o(x) = \frac{1}{2} (f(x) - f(-x)).$$
Note that if $f$ is monotonically increasing, the same is true for $f^o$.
Then \eqref{eq:coshmartingalefunction} can be written as
\begin{align*}
M_t^u(x) 
& = \frac{1}{2} \left(  \e^{\phi_{T-t}(u) + \psi_{T-t}(u) x}  +  \e^{\phi_{T-t}(-u) + \psi_{T-t}(-u) x} \right) \\
& =  \e^{\phi_{T-t}^e(u) + \psi_{T-t}^e(u) x} \cosh(\phi_{T-t}^o(u)+\psi_{T-t}^o(u) x)
\end{align*}
and
\begin{equation} \label{eq:Mufrac}
\frac{M_t^{u_{i}}(x)}{M_t^{u_{0}}(x)} =  \e^{(\phi_{T-t}^e(u_i) - \phi_{T-t}^e(u_{0}) )+ (\psi_{T-t}^e(u_i) -  \psi_{T-t}^e(u_{0})  ) x} \frac{\cosh(\phi_{T-t}^o(u_i)+\psi_{T-t}^o(u_i) x)}{\cosh(\phi_{T-t}^o(u_{0})+\psi_{T-t}^o(u_{0}) x)}.
\end{equation}
If $u_i = u_0$, then \eqref{eq:Mufrac} is constant and has no influence regarding monotonicity or maxima. Hence from now on assume $u_0 > u_i$ for all $i$. The function $g$ of equation \eqref{eq:monoton} can be written as
$$g(x) = \sum_{i=1}^n {c_i} \e^{A_i} \e^{a_i x} \frac{\cosh(B_i + b_i x)}{\cosh(B_0 + b_0 x)},$$
where for $i=0,\dots,n$
\begin{equation*}
\begin{aligned}
A_i & = (\phi_{T-t}^e(u_i) -  \phi_{T-t}^e(u_{0})), \qquad & B_i = \phi_{T-t}^o(u_i),  \\
a_i & = (\psi_{T-t}^e(u_i) -  \psi_{T-t}^e(u_{0})), & b_i = \psi_{T-t}^o(u_i). 
\end{aligned}
\end{equation*}

Since $\psi$ is monotonically increasing (see Lemma \ref{lem:psimontonicity}), also $\psi^o$ is monotonically increasing. With $\psi^o(0)=0$ it follows that $b_i \geq 0$ for all $i$. Furthermore note that $a_i < b_0 - b_i$ is equivalent to $\psi_{T-t}(u_i) < \psi_{T-t}(u_0)$ and $-a_i < b_0 - b_i$ is equivalent to $\psi_{T-t}(-u_0) < \psi_{T-t}(-u_i)$. Since $u_0 > u_i \geq 0$ the monotonicity of $\psi$ yields
\begin{equation} \vert a_i \vert < b_0 - b_i. \label{eq:aiest} \end{equation}

An elementary calculation gives
$$g^\prime(x) =  \frac{1}{ \cosh(B_0 + b_0 x)^2} \sum_{i=1}^n c_i \e^{A_i} \e^{a_i x} f_i(x), $$
where
\begin{align*} 
f_i(x) & = a_i \cosh(B_i+ b_i x)\cosh(B_0+b_0 x)+ b_i \sinh(B_i+b_i x) \cosh(B_0+b_0 x) \\
& - b_0 \cosh(B_i+b_i x) \sinh(B_0 +b_0 x).
\end{align*}
The derivative of $f_i$ is 
\begin{equation} \label{eq:gprime}
\begin{aligned} f_i^\prime(x)  =&   b_i \cosh(B_0 + b_0 x) \Big(a_i \sinh(B_i + b_i x) + (b_i - b_0) \cosh(B_i + b_i x)\Big) \\
 + &   b_0 \cosh(B_i + b_i x) \Big(a_i\sinh(B_0 + b_0 x) +  (b_i - b_0)  \cosh(B_0 + b_0 x)\Big) .
\end{aligned}
\end{equation}
Using \eqref{eq:aiest} the terms inside the brackets of each row in \eqref{eq:gprime} are strictly less than
$$ \vert a_i \vert (\sign{a_i} \sinh(B_j + b_j x) - \cosh(B_j + b_j x)) \leq 0, \qquad (j=i,0).$$
The last inequality is true since $\cosh(x) \pm \sinh(x) \geq 0$. The terms outside of the brackets in \eqref{eq:gprime} are all positive. Hence $f^\prime \geq 0$ and the $f_i$ are monotonically decreasing. Using \eqref{eq:aiest} a simple calculation shows that $ \lim_{x\rightarrow - \infty} f_i(x) =  \infty  \text{ and } \lim_{x\rightarrow  \infty} f_i(x) = -  \infty $. Since $c_i > 0$ for all $i$ the same is true for $\sum_{i=1}^n c_i \e^{a_i x} f_i(x)$. Hence $g$ has a single maximum and is decreasing to the left and right of it. Furthermore $g(x) \geq 0$ and again using \eqref{eq:aiest} $\lim_{x \rightarrow  \infty} g(x) = \lim_{x \rightarrow - \infty} g(x) = 0$.
\end{proof}

\begin{proof}[Proof of Lemma \ref{lem:fouriertransform}]
$f$ is continuous with compact support. Hence the extended Fourier transform $\hat{f}(z) = \int_{\R} f(x) \e^{- \i z x} \dd{x}$ exists for all $z \in \C$ and is analytic. For $z \neq 0, z \neq - \i v_k$, it is given by
\begin{align*}
\hat{f}(z)& = \int_{\kappa_1}^{\kappa_2} \e^{-\i z x} f(x) \dd{x} =  \frac{1}{\i z} \int_{\kappa_1}^{\kappa_2} \e^{-\i z x} f^\prime(x) \dd{x} \\
& = \frac{1}{\i z} \sum_k  {\frac{C_k v_k }{v_k - \i z}  { \left(\e^{(v_k - \i z) \kappa_2} - \e^{(v_k - \i z) \kappa_1 } \right) } }. 
\end{align*}
Since $\hat{f}(u - \i R) = O(u^{-2})$ for fixed $R$, it is absolutely integrable. By Fourier inversion 
\begin{equation*} \begin{split} f(x) & = \frac{1}{2 \pi} \int_{\mathrm{Im}(z) = - R} \e^{\i z x} \hat{f}(z) \dd{z} = \frac{1}{2 \pi} \int^\infty_0 \mathrm{Re}\left( \e^{(\i u + R) x} \hat{f}(u-\i R) \right) \dd{u},
\end{split} \end{equation*}
where the last equation follows from the fact that $f$ is real valued and the symmetry $\overline{\hat{f}(z)} = \hat{f}(-\overline{z})$. Since 
$\int \EV{\vert \e^{(\i z + R) X_t} \vert \vert \F_s}  \vert \hat{f}(z) \vert \dd{z} = \mathcal{M}_{X_t\vert X_s}(R) \int \vert \hat{f}(z) \vert \dd{z}$ is bounded if $R \in \mathcal{V} \cap \R$, conditional expectation and integration can be interchanged.
\end{proof}

\chapter{Affine inflation market models} \label{cha:infl}
\newcommand{\defeq}{\mathrel{\mathop:}=}

\renewcommand{\T}{\cdot} 
\renewcommand{\Transp}[1]{\big( #1 \big) \cdot} 

Arbitrage-free inflation models have first been rigorously introduced in \citet{JY03}. Since then several inflation models have been proposed. Similar to interest rate models one can distinguish between short rate models and market models. While short rate models in the spirit of \citet{JY03} aim at modeling the unobservable continuous nominal and real short rate, market models use discrete observable rates as the basis for modeling (see \citet{BB04, ME05}). These observable rates are the basis of liquidly traded inflation swaps, zero coupon inflation-indexed swaps and year-on-year inflation-indexed swaps. While there have been several extensions of these models (e.g. \citet{MM06,MM09}) all of these models suffer the problem that there exist analytical formulas for only one type of swap, but not both. The model in this chapter leads to closed formulas for both types.

Based on the ideas in \citet{KPT11} one can use affine processes to describe analytically highly tractable models. Affine processes are Markov process, where the characteristic function is of exponentially affine form, i.e.
$$ \EV{\e^{u X_t} \vert X_s} = \e^{\phi_ {t-s}(u) + \psi_{t-s}(u) X_s}. $$
The class of affine processes contains a large number of processes, e.g. every Lèvy process is affine. Using the longest-dated nominal zero coupon bond as a numeraire we can model the normalized bond prices as \enquote{exponential martingales} with respect to an affine process $X$. In this type of models we are not only able to price both types of inflation swaps, but can also derive semianalytical formulas for calls and puts on the underlying inflation rates, another liquidly traded inflation derivative.

The structure of this chapter is as follows. The first part describes inflation markets and typical traded derivatives. Afterwards we outline the setup of an inflation market model. The second part introduces the affine inflation market model and derives pricing formulas for the introduced inflation derivatives. The third part provides a concrete model specification including the calibration to actual market data. In section \ref{sec:affineprocess} we collect the properties of affine processes needed in this chapter. Furthermore we specify the affine processes used in the numerical section.

\section{Inflation markets} \label{sec:inflprod}
Denote the time $t$ price of the nominal zero coupon bond with maturity $T$ by $P(t,T)$ and consider an inflation index with time $t$ value $I(t)$. Typically inflation indexes are so-called consumer price indexes (CPI). To shorten notation we will use the term CPI synonymous for inflation index, nevertheless the reader can think of an arbitrary inflation index. The basic mathematical instruments in inflation-linked markets are so called inflation-linked zero coupon bonds (corresponding to zero coupon bonds for nominal interest rate markets).
An inflation-linked zero coupon bond with maturity $T$ is a bond paying $I(T)$ at time $T$. Denote its price by $P_{ILB}(t,T)$.

In actual markets governments issue inflation-linked coupon bonds. Such a bond pays a fixed coupon on the variable basis $I(T_k)/I(T_0)$ at some fixed number of predetermined dates $T_k \leq T$ (typically annually), where $T_0$ is the time of issue. Additionally to the coupons such a bond redeems at maturity $T$ with value $\max\{I(T)/I(T_0),1\}$. Such a bond can therefore be described as a combination of inflation-linked zero coupon bonds plus an included option with payoff $(1-I(T)/I(T_0))_+$. In general these bonds are issued with maturities of several years and inflation is positive. In this case the included option has little influence on the total price, which is why it is market practice to mostly ignore it. In particular if one ignores these options it is possible to strip inflation-linked zero coupon bond prices out of actually traded inflation-linked coupon bonds by the same methods used for nominal quantities.

Consider the quantity 
\begin{equation} 
P_R(t,T) :=  \frac{P_{ILB}(t,T)}{I(t)},\label{eq:realbonddef}
\end{equation}
which is called the price of a real zero-coupon bond. Note that this is not the price\footnote{Here we mean price in terms of money which has to be paid at a transaction. In fact it is essentially this number that is quoted on trading screens. However, in case such a bond is traded, the cash-flow is then the quoted number multiplied by the according index ratio.} of a traded asset, but a theoretical quantity. The usage of the term price is motivated by the fact that this quantity can be viewed as the price of a zero coupon bond in a fictitious economy, where everything is measured in terms of the inflation index $I(t)$\footnote{One could interpret $I(t)$ as a numeraire, but one has to be careful not to use this as a mathematical numeraire, since $I(t)$ is not actually traded.}. Given real zero-coupon bond prices continuously compounded real interest rates are defined by
$R(t,T) := -{ \logn{P_R(t,T)} }/{(T-t)}.$ Accordingly one can define real counterparts to other nominal quantities such as forward interest rates or the short rate. This quantities are sometimes used as a starting point for inflation option pricing models (see e.g. \citet{JY03}, \citet{ME05}).

Next to inflation-linked bond markets there exist several liquidly traded inflation-linked derivatives. First consider the forward price of the inflation index (forward CPI), i.e. the at time $t$ fixed value $\I(t,T)$, which at time $T$ can be exchanged against $I(T)$ without additional costs.  Since $P_{ILB}(t,T)$ is the current price of $I(T)$, 
the time $t$ forward CPI for maturity $T$ is
\begin{equation}
\I(t,T) := \frac{P_{ILB}(t,T)}{P(t,T)} . \label{eq:fwdCPIdef}
\end{equation}
In a zero coupon inflation-indexed swap (ZCIIS) two parties exchange the realized inflation  $\frac{I(T)}{I(t)}$ against a fixed amount $(1+K)^{T-t}$. ZCIIS are mostly traded for full year maturities M. For $T = t + M$ the value of such a payer swap can be expressed as
\begin{equation} \label{eq:ZCIIS} P(t,T) \left( \frac{\I(t,T)}{I(t)} - (1+K)^M \right). \end{equation} The rate $K$, for which equation \eqref{eq:ZCIIS} is zero is then called the ZCIIS rate $ZCIIS(t; M)$. These ZCIIS rates are quoted in the market for several full-year maturities. 
\begin{remark} Note that ZCIIS rates and inflation-linked bonds are closely related via \eqref{eq:fwdCPIdef} and  \eqref{eq:ZCIIS}. In reality this relationship is not observed. This is partly due to different creditworthiness of counterparties in bond and swap markets. A more detailed analysis of this difference can be found in \citet{FL10}. For model calibration one has to choose one market, usually the swap market.
\end{remark}

Next to ZCIIS there is a second important type of swap in inflation markets, the year-on-year inflation-indexed swaps (YYIIS). These swaps exchange the annualized inflation against a fixed rate $K$, i.e. consider an annually spaced tenor structure $T_k=t+k, k=0,\dots,M$.
The netted payment of a payer YYIIS at time $T_k$ is $\left( {I(T_k)}/{I(T_{k-1})} -1 \right)  - K.$
Hence the inflation leg consists of payoffs of the form $$\frac{1}{T-S}  \left( \frac{I(T)}{I(S)} -1 \right).$$
Denote the forward value of such a payoff, the annualized forward inflation rate, by $F_I(t,S,T)$. Then the value of a payer YYIIS with maturity $M$ and strike $K$ can be expressed as
$$\sum_{k=1}^M P(t,T_k) (F_I(t,T_{k-1},T_k) -K).$$
The YYIIS rate $YYIIS(t; M)$ is the rate $K$ such that the corresponding YYIIS has zero value. Note that given YYIIS rates for all annual maturities one can calculate annual forward inflation rates $F_I(t,T_{k-1},T_k)$ and vice versa. 

Foward CPIs $\I(t,T)$, respectively forward inflation rates $F_I(t,S,T)$ are the mathematical quantities underlying the market-traded ZCIIS, respectively YYIIS. Inflation market models aim at modeling these quantities. In existing inflation market models either $\I(t,T)$ or $F_I(t,S,T)$ can be expressed by analytical formulas, but not both. Consider a market, where price processes are assumed to be semimartingales on a filtered probability space $(\Omega, \mathcal{A}, (\mathcal{F}_t), \PM)$. Fix a $T$-forward measure $\QM^T$, i.e. a probability measure equivalent to $\PM$ such that asset prices normalized with the numeraire price $P(t,T)$ are $\QM^T$-martingales. Then
\begin{align*}
\I(t,T) & = \EV[\QM^T]{I(T) \vert \F_t} \\
F_I(t,S,T) & =  \E^{\QM^T} \left[  \frac{1}{T-S}  \left( \frac{I(T)}{I(S)} -1 \right)  \bigg \vert \F_t \right]. \label{eq:fwdinfl} \end{align*}
Calculating the expectations of the inflation index, as well as the fraction of the inflation index at two different times proves difficult. In the model of this chapter both are exponentially affine in the underlying driving stochastic process. For an affine process (see section \ref{sec:affineprocess}) such expectations can be calculated and we are able to give semianalytical formulas for many standard options like caps and floors of forward inflation rates
.

\subsection{The inflation market model}
We now introduce the general setup of an (inflation) market model. 
Consider a tenor structure $0 < T_1 < \dots < T_N =: T$ and a market consisting of zero coupon bonds with maturities $T_k$ and prices $P(t,T_k)$. The price processes $(P(t,T_k))_{0 \leq t \leq T_k}$ are assumed to be positive semimartingales on a filtered probability space $(\Omega, \mathcal{A}, \F, \PM)$ (here $\F = (\mathcal{F}_t)_{0 \leq t \leq T}$), which satisfy $P(T_k,T_k) = 1$ almost surely.  If there exists an equivalent probability measure $\QM^{T}$ such that the normalized bond price processes $P(\cdot,T_k) / P(\cdot,T)$ are martingales\footnote{One can extend bond price processes to $[0,T]$ by setting $P(t,T_k) := \frac{P(t,T)}{P(T_k,T)}$ for $t > T_k$, so that $P(\cdot,T_k) / P(\cdot,T)$ is a martingale on $[0,T]$ if and only if it is a martingale on $[0,T_k]$. Economically this can be interpreted as immediately investing the payoff of a zero coupon bond into the longest-running zero coupon bond.
}, the market is arbitrage-free. This setup describes the class of interest rate market models like the classical LIBOR market model (\citet{BGM97}) and its extensions. 

To extend this setup to inflation markets consider an inflation index $I$, where we assume w.l.o.g. that $I(0)=1$. Assume there exist inflation-linked zero-coupon bonds with the same maturities\footnote{The assumption that for each zero coupon maturity there is a ILB with the same maturity is used only for notional convenience. 
} $T_1, \dots, T_N$ and price processes $P_{ILB}(t,T_k)_{0 \leq t \leq T_k}$, all of which are positive semimartingales\footnote{The inflation index is only described through the bond prices $P_{ILB}$. I.e. the distribution of $I(t)$ is only given at times $T_k$, where it coincides with the distribution of $P_{ILB}(T_k,T_k)$.}. If there exists an equivalent probability measure $\QM^T$ such that all normalized price processes
\begin{equation} \label{eq:martingaleassets}
\left( \frac{ P(t,T_k)}{P(t,T)} \right)_{0 \leq t \leq T_k}, \qquad \left(  \frac{ P_{ILB}(t,T_k)}{P(t,T)} \right)_{0 \leq t \leq T_k}
\end{equation}
are $\QM^T$-martingales, the extended market model is arbitrage-free. For given $\QM^T$ define the $T_k$-forward measures $\QM^{T_k}$ by
\begin{equation} \frac{\dd{\QM^{T_k}}}{\dd{\QM^{T}}} =  \frac{1}{P(T_k,T)} \frac{P(0,T)}{P(0,T_k)}.  \label{eq:measurechange}
\end{equation}
Under $\QM^{T_k}$ the forward interest rate
$$F^k(t) := \frac{1}{\Delta_k} \left(\frac{P(t,T_{k-1})}{P(t,T_k)} -1 \right), \qquad \Delta_k := T_k - T_{k-1}, $$
the earlier introduced forward CPI $$\I(t,T_k) = \frac{P_{ILB}(t,T_k)}{ P(t,T_k)},$$ 
and for $j < k$ the forward inflation rates $F_I(t,T_j,T_k)$ given by
$$1 + (T_k-T_j) F_I(t,T_j,T_k) = \EV[\QM^{T_k}]{\frac{I(T_k)}{I(T_j)} \vert \F_t}$$
are all martingales. Modeling (some of) these martingales is the starting point of inflation market models in the literature (see e.g. \citet{ME05}). In contrast we start by modeling the normalized bond prices in \eqref{eq:martingaleassets} and derive the above quantities thereof.

\section{The Affine inflation market model} \label{sec:affineCPI}
Let $(X_t)_{0 \leq t \leq T}$ with $X_0 = x$ be an analytic affine process with state space $\R^m_{\geq 0} \times \R^n$, $m > 0$, $n \geq 0$ on the probability space $(\Omega,\mathcal{A},\F,\QM^T)$ and define for $k=1,\dots,N$
\begin{equation} \label{eq:affineFwdCPImodel}
\begin{aligned}
 \frac{ P(t,T_k)}{P(t,T)} & := M_t^{u_k}, \qquad && u_k \in (\R_{\geq 0}^m \times \{0\}^n ) \cap \, \mathcal{V},  \\
 \frac{ P_{ILB}(t,T_k)}{P(t,T)} & := M_t^{v_k}, && v_k \in \R^{m+n} \cap \; \mathcal{V},
\end{aligned}
\end{equation}
where
\begin{equation} M_t^u := \EV[\QM^T]{\e^{u \cdot X_T} \vert \F_t} = \ex{\phi_{T-t}(u) + \psi_{T-t}(u) \cdot X_t},  \qquad u \in \mathcal{V}, \label{eq:affinemartingale} \end{equation}
with $\mathcal{V}$ defined in \eqref{eq:momset}. 
The processes $M_t^u$ are $\QM^T$-martingales by the definition of an affine process. Hence this model is arbitrage-free. 
Note that in \eqref{eq:affineFwdCPImodel} the parts of $u_k$ corresponding to real-valued components of $X$ are chosen to be zero. For a decreasing sequence $u_1 \geq \dots \geq u_N \geq 0$ one then has $M_t^{u_{k-1}} \geq M_t^{u_k}$, so that forward interest rates $F^k(t)$ are guaranteed to be nonnegative 
for all $k$. Contrary to interest rates\footnote{Although interest rates are currently negative in certain countries, interest rates are still bounded below by the costs of physically keeping money. We can incorporate bounds different from $0$ by setting $\frac{P(t,T_k)}{P(t,T)} := c_k M_t^{u_k}$. } inflation rates are not required to be nonnegative, which is why we do not restrict $v_k$ in \eqref{eq:affineFwdCPImodel}.
The values of $u_k$ and $v_k$ should be calibrated to fit the initial term structures, i.e. $M_0^{u_k} = P(0,T_k) / P(0,T)$ and $M_0^{v_k} = P_{ILB}(0,T_k) / P(0,T).$ By Lemma \ref{lem:ufitting} it follows that parameters $u_k$ fitting a current term structure with nonnegative forward interest rates can always be chosen to be decreasing. 
For multidimensional affine processes such sequences are far from unique. Concrete specifications how to choose $u_k$ and $v_k$ will be presented in section \ref{sec:numeric}.

The big advantage of this setup is that \eqref{eq:measurechange} in this case reads 
\begin{equation} \frac{\dd{\QM^{T_k}}}{\dd{\QM^{T}}} =  \frac{M_{T_k}^{u_k}}{M_{0}^{u_k}} = \frac{1}{M_{0}^{u_k}} \ex{\phi_{T-T_k}(u_k) + \psi_{T-T_k}(u_k)\T X_{T_k}}
\end{equation}
which is exponentially affine in $X$. In particular it is easy to check (see \citet{KPT11}) that for $0 \leq s \leq r$ and $\psi_{T-r}(u_k) + w \in \mathcal{V}$
\begin{equation}
\begin{aligned}
\EV[{\QM^{T_k}}]{\e^{w \cdot X_r} \vert \F_s}  =&  \ex{ \phi_{r-s}(\psi_{T-r}(u_k) + w) - \phi_{r-s}(\psi_{T-r}(u_k)) } \\
& \ex{\big( \psi_{r-s}(\psi_{T-r}(u_k) + w) - \psi_{r-s}(\psi_{T-r}(u_k)) \big) \cdot  X_s}. \label{eq:measurechar}
\end{aligned}
\end{equation}
Hence the moment generating function of $X$ is also known\footnote{This also shows that $X$ is a time-inhomogeneous affine process under $\QM^{T_k}$.} under different measures $\QM^{T_k}$.
Together with the exponential affine form of basic quantities  this is the reason why this model is analytically highly tractable. For example the forward rates $F^k$ satisfy 
$$(1+\Delta_k F^k(t)) = \frac{M_t^{u_{k-1}}}{M_t^{u_k}} = \e^{A(t,u_{k-1},u_k)+B(t,u_{k-1},u_k) \cdot X_t},$$ with
\begin{equation} \label{eq:AB}
\begin{aligned}
A(t,v,u) & := \phi_{T-t}(v) - \phi_{T-t}(u), \\
B(t,v,u) & := \psi_{T-t}(v) - \psi_{T-t}(u).
\end{aligned}
\end{equation}
Hence the $\QM^{T_k}$-extended moment generating function of $\logn{1+\Delta_k F^k(t)}$ can be calculated explicitly using \eqref{eq:measurechar}. 
The price of a caplet then follows using a Fourier-inversion formula (see \citet{KPT11}). Swaptions can also be dealt with (\citet{KPT11,GPSS14}) and so the most common interest rates derivatives can be calculated efficiently. We can use similar methods for inflation derivatives.

\subsection{Forward CPI and CPI options}
%
As mentioned before, the main advantage of this model is that for several important quantities the moment generating function is known under all forward measures $\QM^{T_k}$. Start by looking at the forward CPI
\begin{equation} \label{eq:fwdCPIaffine}
\I(t,T_k) = \frac{ P_{ILB}(t,T_k)}{P(t,T_k)} =  \frac{ P_{ILB}(t,T_k)}{P(t,T)}  \frac{ P(t,T)}{P(t,T_k)} = \frac{M_t^{v_k}}{M_t^{u_k}} =  \e^{A(t,v_k,u_k) + B(t,v_k,u_k) \cdot X_t},
\end{equation}
with $A$ and $B$ defined in \eqref{eq:AB}. Hence the forward CPI is of exponential affine form and therefore the $\QM^{T_k}$-moment generating function of its logartihm can be calculated using \eqref{eq:measurechar}. In particular, setting $A_I^k := A(T_k,v_k,u_k),  B_I^k := B(T_k,v_k,u_k)$ and using $I(T_k) = \I(T_k, T_k)$ one has
\begin{align*}
\mathcal{M}^{\QM^{T_k}}_{\logn{I(T_k)} \vert \F_s}(z) \defeq & \; \EV[\QM^{T_k}]{I(T_k)^z \vert \F_s} =  \EV[\QM^{T_k}]{\ex{z A_I^k + z B_I^k\T X_{T_k} }\vert \F_s} \\
 = & \ex{z  A_I^k + \phi_{T_k -s}(\psi_{T-T_k}(u_k) + z B_I^k) - \phi_{T_k -s}(\psi_{T-T_k}(u_k)) } \\
& \ex{\big(\psi_{T_k -s}(\psi_{T-T_k}(u_k) + z B_I^k)-\psi_{T_k -s}(\psi_{T-T_k}(u_k)) \big)\T X_s} \\
 = &\ex{z \phi_{T-T_k}(v_k) + (1-z)  \phi_{T-T_k}(u_k)} \\
&   \ex{ \phi_{T_k -s}\big(z \psi_{T-T_k}(v_k) + (1-z) \psi_{T-T_k}(u_k)\big)} \\
& \ex{ \psi_{T_k -s}\big( z \psi_{T-T_k}(v_k) + (1-z) \psi_{T-T_k}(u_k) \big)\T X_s} / M_s^{u_k}.
\end{align*}
Here the last equality follows using \eqref{eq:semiflowinfl}. Note that this function is well defined and analytic in $z$ if $z \psi_{T-T_k}(v_k) + (1-z) \psi_{T-T_k}(u_k) \in  \mathrm{int}(\mathcal{V})$.

Given the moment generating function of $\logn{I(T_k)}$ CPI calls and puts can be calculated using the following well-known Fourier inversion formula (see e.g. \citet{EGP10}). 
If $R \in (1,\infty)$ such that $\mathcal{M}_{X \vert \F}(R) < \infty$, then
\begin{equation}  \label{eq:FourierCall}
\begin{aligned}
\EV{(e^X - K)_+ \vert \F} & = \frac{K}{\pi} \int_0^{\infty} \mathrm{Re} \left(  { \mathcal{M}_{X \vert \F}(\i u + R)} \frac{K^{-(\i u+R)}}{(\i u +R)(\i u +R -1)} 
\right) \dd{u}.
\end{aligned}
\end{equation}
Thus the price of a forward CPI call with maturity $T_k$ and payoff $(I(T_k) - K)_+$ is
\begin{align*}
\mathrm{CPI}&\mathrm{Call}(t,T_k,K) \\
&  = \frac{{K} P(t,T_k)}{\pi} \int_0^{\infty} \mathrm{Re} \left(  { \mathcal{M}^{\QM^{T_k}}_{\logn{I(T_k)} \vert \F_s}(\i u + R)} \frac{{K}^{-(\i u+R)}}{(\i u +R)(\i u +R -1)} \right) \dd{u},
\end{align*}
where $R>1$ is chosen to satisfy $R \psi_{T-T_k}(v_k) + (1-R) \psi_{T-T_k}(u_k) \in \mathrm{int}(\mathcal{V})$.
\begin{remark}
In section \ref{sec:inflprod} it is mentioned that ILBs usually come with an included option guaranteeing a redemption of at least the original nominal amount. For an ILB issued at $S$ and with maturity $T_k$ this translates into an option $(1 - I(T_k)/I(S))_+$ which corresponds to $1/I(S)$ CPI puts with strike $I(S)$. 
\end{remark}
 
\subsection{Forward Inflation and inflation caplets and floorlets}
Typically inflation market models are not able to handle both forward CPI and forward inflation products analytically. With the approach presented here forward inflation rates are of a similar form as forward CPIs. The annualized inflation $F_I(T_k,T_{k-j},T_k)$ satisfies
\begin{align}
1+ (T_k-T_{k-j}) F_I(T_k,T_{k-j},T_k) = \frac{I(T_k)}{I(T_{k-j})} = \e^{A_I^k + B_I^k \cdot X_{T_k} - A_I^{k-j} - B_I^{{k-j}} \cdot X_{T_{k-j}} } =: \e^{Y^k}. \label{eq:Yk}
\end{align}

The following Lemma gives the moment generating functions for random variables of this type. 
\begin{lemma} \label{lem:daffine}
Let $s \leq r \leq t \leq T$ and $$\psi_{T-t}(u_k)+w \in \mathcal{V} \text{ and } \psi_{t-r}(\psi_{T-t}(u_k)+ w) -  \psi_{T-r}(u_k) + u \in \mathcal{V}.$$ Then
\begin{equation}
\begin{aligned}
& \EV[\QM^{T_k}]{\e^{u\T X_r + w\T X_t} \vert \F_s }  =  \ex{\big(\psi_{r-s}(\psi_{t-r}(\psi_{T-t}(u_k)+w) + u) - \psi_{T-s}(u_k) \big)\T X_s} \\
& \qquad  \ex{\phi_{t-r}(\psi_{T-t}(u_k)+w) + \phi_{r-s}(\psi_{t-r}(\psi_{T-t}(u_k)+w)+ u) - \phi_{t-s}(\psi_{T-t}(u_k)) } .
\end{aligned}
\end{equation}
\end{lemma}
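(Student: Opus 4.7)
My plan is to reduce the two-time-point expectation to successive one-time-point expectations by conditioning on $\F_r$ and applying the known $\QM^{T_k}$-conditional exponential formula \eqref{eq:measurechar} twice. Concretely, by the tower property,
\begin{equation*}
\EV[\QM^{T_k}]{\e^{u \cdot X_r + w \cdot X_t} \vert \F_s} = \EV[\QM^{T_k}]{ \e^{u \cdot X_r} \, \EV[\QM^{T_k}]{\e^{w \cdot X_t} \vert \F_r} \vert \F_s }.
\end{equation*}
The inner conditional expectation is exactly of the form in \eqref{eq:measurechar} (applied on the interval $[r,t]$), yielding an expression of the form $\e^{C_1} \e^{\tilde w \cdot X_r}$ with
\begin{equation*}
C_1 = \phi_{t-r}(\psi_{T-t}(u_k)+w) - \phi_{t-r}(\psi_{T-t}(u_k)), \qquad \tilde w = \psi_{t-r}(\psi_{T-t}(u_k)+w) - \psi_{t-r}(\psi_{T-t}(u_k)).
\end{equation*}

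Next, I would absorb $\tilde w$ into the exponent at time $r$ and apply \eqref{eq:measurechar} a second time on the interval $[s,r]$ with argument $u + \tilde w$. The integrability hypotheses in the lemma are precisely what is needed to legally invoke \eqref{eq:measurechar} at each step: the first is $\psi_{T-t}(u_k)+w \in \mathcal{V}$, and the second is $u + \tilde w + \psi_{T-r}(u_k) = \psi_{t-r}(\psi_{T-t}(u_k)+w) + u \in \mathcal{V}$, which after rewriting equals the second condition stated.

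Finally, I would use the time-homogeneous semiflow identities from Lemma \ref{lem:affineproperties3} (i), namely $\psi_{r-s}(\psi_{T-r}(u_k)) = \psi_{T-s}(u_k)$ and $\phi_{t-r}(\psi_{T-t}(u_k)) + \phi_{r-s}(\psi_{T-r}(u_k)) = \phi_{t-s}(\psi_{T-t}(u_k))$, to collapse the bookkeeping terms. The $X_s$-exponent then becomes $\psi_{r-s}(\psi_{t-r}(\psi_{T-t}(u_k)+w) + u) - \psi_{T-s}(u_k)$, and the constant part becomes $\phi_{t-r}(\psi_{T-t}(u_k)+w) + \phi_{r-s}(\psi_{t-r}(\psi_{T-t}(u_k)+w) + u) - \phi_{t-s}(\psi_{T-t}(u_k))$, as required.

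The proof is essentially mechanical once the tower property and \eqref{eq:measurechar} are set up; the only real bookkeeping challenge is tracking the nested arguments and invoking the semiflow identities in the right direction to cancel the spurious $u_k$-dependent terms. No genuinely new estimate is needed.
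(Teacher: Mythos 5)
Your proof is correct and takes exactly the route of the paper's own argument: condition on $\F_r$, apply \eqref{eq:measurechar} on $[r,t]$ and then on $[s,r]$, and use the semiflow identities to collapse the $u_k$-dependent bookkeeping terms. One small caveat: the condition your second application of \eqref{eq:measurechar} actually requires is $\psi_{t-r}(\psi_{T-t}(u_k)+w)+u\in\mathcal{V}$, which is \emph{not} literally the lemma's stated second hypothesis (they differ by $\psi_{T-r}(u_k)$), so your remark that the two coincide "after rewriting" is not accurate — though this mismatch is already present in the paper's own statement and does not affect the identity you derive.
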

\begin{proof}
Using the tower property and applying \eqref{eq:measurechar} twice it follows
\begin{align*}
\E^{\QM^{T_k}}  \Big[ & \e^{u\T X_r + w\T X_t} \big \vert \F_s \Big] =  \EV[\QM^{T_k}]{\EV[\QM^{T_k}]{\e^{w\T X_t} \big \vert \F_r}  \e^{u\T X_r} \big \vert \F_s} \\
= & \E^{\QM^{T_k}} \Big[ \ex{ \phi_{t-r}(\psi_{T-t}(u_k)+w) -  \phi_{t-r}(\psi_{T-t}(u_k))} \\
&  \ex{\big( \psi_{t-r}(\psi_{T-t}(u_k)+w) -  \psi_{t-r}(\psi_{T-t}(u_k)) + u \big)\T X_r} \big \vert \F_s \Big] \\
=  & \ex{ \phi_{t-r}(\psi_{T-t}(u_k)+w) -  \phi_{t-r}(\psi_{T-t}(u_k)) } \\
& \ex{\phi_{r-s}(\psi_{t-r}(\psi_{T-t}(u_k)+w) + u) - \phi_{r-s}(\psi_{T-r}(u_k)) } \\
& \ex{\big(\psi_{r-s}(\psi_{t-r}(\psi_{T-t}(u_k)+w) + u) - \psi_{r-s}(\psi_{T-r}(u_k)) \big)\T X_s} \\
= & \ex{ \phi_{t-r}(\psi_{T-t}(u_k)+w) + \phi_{r-s}(\psi_{t-r}(\psi_{T-t}(u_k)+w)+ u) - \phi_{t-s}(\psi_{T-t}(u_k)) } \\
& \ex{\big(\psi_{r-s}(\psi_{t-r}(\psi_{T-t}(u_k)+w) + u) - \psi_{T-s}(u_k) \big)\T X_s}.
\end{align*}
Here we used the semiflow property \eqref{eq:semiflowinfl} to simplify the expression.
\end{proof} 

By Lemma \ref{lem:daffine} the $\QM^{T_k}$-moment generating function of $Y^k$ defined in \eqref{eq:Yk} is
\begin{align*}
\mathcal{M}^{\QM^{T_k}}_{Y^k \vert \F_s}(z) 
 = & \EV[\QM^{T_k}]{ \ex{z A_I^k + z  {B_I^k} \cdot X_{T_k} - z A_I^{k-j} - z  {B_I^{k-j}} \cdot X_{T_{k-j}} } \vert \F_s } \\
= &  \ex{z A_I^k - z A_I^{k-j} +  \phi_{T_k-T_{k-j}}(\psi_{T-T_k}(u_k)+z B_I^k) } \\
& \ex{\phi_{T_{k-j}-s}(\psi_{T_k-T_{k-j}}(\psi_{T-T_k}(u_k)+z B_I^k) - z B_I^{k-j}) - \phi_{T_k-s}(\psi_{T-T_k}(u_k)) } \\
& \ex{\Transp{\psi_{T_{k-j}-s}(\psi_{T_k-T_{k-j}}(\psi_{T-T_k}(u_k)+ z B_I^k) -  z B_I^{k-j}) - \psi_{T-s}(u_k)} X_s},
\end{align*}
which is well-defined if \begin{equation} \left \{ \psi_{T-T_k}(u_k)+z B_I^k, \psi_{T_k-T_{k-j}}(\psi_{T-T_k}(u_k)+z B_I^k) -  \psi_{T-T_{k-j}}(u_k) - z B_I^{k-j} \right \} \subset \mathcal{V}. \end{equation}
The forward inflation rate is then given by $1+(T_k-T_{k-j}) F_I(t,T_{k-j},T_k) = \mathcal{M}^{\QM^{T_k}}_{Y^k \vert \F_t}(1)$.
So for $\psi_{T-T_{k-j}}(v_{k})-B_I^{k-j} \in \mathcal{V}$ it is 
\begin{align*}
1+(T_k-T_{k-j}) & F_I(t,T_{k-j},T_k) =
 \ex{\Transp{\psi_{T_{k-j}-t}(\psi_{T-T_{k-j}}(v_{k})-B_I^{k-j})-\psi_{T-t}(u_k)}X_t} \\
& \ex{\phi_{T-T_{k-j}}(u_{k-j}) + \phi_{T_{k-j}-t}(\psi_{T-T_{k-j}}(v_{k})-B_I^{k-j} )+ \phi_{T-t}(u_k)}.
\end{align*}
Furthermore the payoff of an inflation caplet with strike $K$ is 
\begin{align*}
(T_k - T_{k-j}) (F_I(T_k,T_{k-j},T_k)-K)_+ =  \left( \frac{I(T_k)}{I(T_{k-j})} - \tilde{K} \right)_+
\end{align*}
where $\tilde{K} = 1 + (T_k - T_{k-j}) K$. With the Fourier inversion formula \eqref{eq:FourierCall} one can calculate the price of an inflation caplet. In particular, we have for $R>1$ 
\begin{align*}
\mathrm{InflCpl}& (t,T_{k-j},T_k,K) 
\\& = \frac{\tilde{K} P(t,T_k)}{\pi} \int_0^{\infty} \mathrm{Re} \left(  { \mathcal{M}^{\QM^{T_k}}_{Y^k \vert \F_t}(\i u + R)} \frac{\tilde{K}^{-(\i u+R)}}{(\i u +R)(\i u +R -1)} \right) \dd{u},
\end{align*}
provided that $\psi_{T-T_k}(u_k)+R B_I^k \in \mathrm{int}(\mathcal{V})$ and $\psi_{T_k-T_{k-j}}(\psi_{T-T_k}(u_k)+R B_I^k) -  \psi_{T-T_{k-j}}(u_k) - R B_I^{k-j} \in \mathrm{int}(\mathcal{V})$.

\subsection{Correlation} \label{sec:correlation}
So far we considered the pricing of typical market traded options. Another important aspect is the correlation structure. The relevant quantities
\begin{equation} \label{eq:logquants}
\begin{aligned}
\logn{1 + \Delta_k F_n^k(t)}  & =  \logn{\frac{M_t^{u_{k-1}}}{M_t^{u_k}}} ={A(t,u_{k-1},u_k) + B(t,u_{k-1},u_k) X_t}, \\
\logn{\I^j(t)} & =  \logn{ \frac{M_t^{v_j}}{M_t^{u_j}} } ={A(t,v_j,u_j) + B(t,v_j,u_j) X_t}. 
\\ \logn{1 + \Delta_k F_I(t,T_{k-j},T_k)} & = \text{const} + 
\big(\psi_{T_{k-j}-t}(\psi_{T-T_k}(v_k)  - B_I^{k-j}) - \psi_{T-t}(u_k) \big)\T X_t. 
\end{aligned}
\end{equation}
are all affine transformation of $X_t$ and the correlation for two such terms is 
\begin{align*}
\Cor [A_t + B_t \cdot X_t,\tilde{A}_t + \tilde{B}_t \cdot X_t] & = \frac{\Var{B_t \cdot X_t,\tilde{B}_t \cdot X_t }}{\sqrt{\Var{B_t \cdot X_t}}\sqrt{\Var{\tilde{B}_t \cdot X_t }}}.
\end{align*}
For independent components of $X_t$ this simplifies to\footnote{Up to some technical conditions the variance of a one-dimensional affine process is
$$\Var{X_t^i} =  \left. \frac{\partial^2}{\partial^2 u} \right \vert_{u=0}  (\phi_t^i(u) + \psi_t^i(u) X_0^i) .$$}
\begin{align} \label{eq:corrstructure}
 \frac{\sum_{i=1}^d B_t^i \tilde{B}_t^i \Var{X_t^i}}{\sqrt{\sum_{i=1}^d (B_t^i)^2 \Var{X_t^i}}\sqrt{\sum_{i=1}^d (\tilde{B}_t^i)^2 \Var{X_t^i} }}.
\end{align}
Hence correlations strongly depend on $B(t,u_{k-1},u_k) = \psi_{T-t}(u_{k-1}) - \psi_{T-t}(u_k)$ and $B(t,v_j,u_j) = \psi_{T-t}(v_j) - \psi_{T-t}(u_j)$, respectively the structure of the $v_k$ and $u_k$. The exact correlation depends on the used measure (e.g. $\QM^{T_k}, \PM$), but choosing $v_k$ and $u_k$ cleverly, one can guarantee that the correlation structure, i.e. the correlation signs, stay the same. Concrete specifications for meaningful correlation structures will be given in the next section. Similar observations can also be made for instantaneous correlations of the corresponding quantities in the case of continuous affine processes (see \citet{GPSS14} for the general idea). 

\section{Implementation example} \label{sec:numeric}  \label{sec:calibrationexample}
We design the structure of the affine inflation market model in such a way that the calibration can be separated into the calibration to nominal market data and the calibration to inflation market data afterwards. The method used to calibrate to nominal market data is based on the ideas in \citet{GPSS14}. There they fit a multiple curve affine LIBOR market model by using a common driving process $X^0$ plus additional driving processes $X^1, \dots, X^M$, all of which are independent, affine and nonnegative. 
For calibration they use caplets with full-year maturities and underlying forwards of tenors less than a year. Using one individual driving process for each year one can then use an iterative procedure to calibrate to market data. Their approach can also be used in this setup. 

In particular consider a semiannual tenor structure $T_k = k/2, k=1,\dots,N$, $N$ even, and a driving affine process consisting of $M+1 = N/2 +1$ components $X^0, X^1, \dots ,X^M$, all of which are independent analytic affine processes with functions $\phi^i$ and $\psi^i$, $i=0,\dots,M$. Then by \eqref{eq:affinecombphipsi}
\begin{align*}
\phi_t(u) & = \sum_{i=0}^M \phi_t^i(u^i) \\
\psi_t(u) & = (\psi_t^0(u^0),\psi_t^1(u^1), \dots, \psi_t^M(u^M)),
\end{align*}
where $u^i, i=0, \dots, M$ denotes the corresponding component of $u \in \R^{M+1}$. To describe an affine inflation market model we also have to specify the vectors $u_k$. The structure of the vectors should be so that the following points are satisfied.
\begin{itemize}
\item Forward interest rates are nonnegative. This is the case if $0 \leq u_k \leq u_{k-1}$. 
\item The model matches the initial interest rate term structure. This basically fixes one component of each vector $u_k$.
\item Calibration to market data is possible by using an iterative procedure. 
\item The model has a meaningful correlation structure.
\end{itemize}
\begin{table}
\begin{center}
\begin{tabular}{l | C{2.5em}:C{2.5em}C{2.5em}C{2.5em}C{2.5em}:C{3.4 em}C{3.4 em}C{2.5em}C{2.5em}  }
& $X^0$ & $X^1$ & $X^2$ & \dots & $X^{M}$ & $X^{M+1}$ & $X^{M+2}$ & \dots & $X^{2M}$ \\ \hdashline
$u_1$ & $\tilde{u}_1$ & $\overline{u}_1$ & $\overline{u}_3$ & $\dots$ & $\overline{u}_{N-1}$ & 0 & $0$ & $\cdots$ & $0$ \\
$u_2$ & $\tilde{u}_2$ & $\overline{u}_2$ & $\overline{u}_3$ & $\dots$ & $\overline{u}_{N-1}$ & 0 & $0$ & $\cdots$ & $0$ \\
$u_3$ & $\tilde{u}_3$ & 0 & $\overline{u}_3$ & $\dots$ & $\overline{u}_{N-1}$ & $0$ & 0  & $\cdots$ & $0$ \\
$u_4$ & $\tilde{u}_4$ & 0 & $\overline{u}_4$ & $\dots$ & $\overline{u}_{N-1}$ & $0$ & 0  & $\cdots$ & $0$ \\
\vdots & \vdots & \vdots & \vdots & $\ddots$ & \vdots & \vdots & \vdots & $\ddots$ & \vdots \\
$u_{N-2}$ & $\tilde{u}_{N-2}$ & 0 & 0 & $\dots$ & $\overline{u}_{N-1}$ & $0$ &0 & $\cdots$ & 0 \\
$u_{N-1}$ & $\tilde{u}_{N-1}$ & 0 & 0 & $\dots$ & $\overline{u}_{N-1}$ & $0$ &0 & $\cdots$ & 0 \\
$u_{N}$& $\tilde{u}_{N}$ & 0 & 0 & $\dots$ & $\overline{u}_{N}$ & $0$ &0 & $\cdots$ & 0 \\
\end{tabular}
\end{center}
\captionsetup{singlelinecheck=false, margin = .5 cm}
\caption{Description of the parameter structure $u_k$. Each row corresponds to one vector with the column names denoting the process the position in the vector corresponds to.}
\label{tab:upar}
\end{table}
This can be achieved by choosing the vectors $u_k$ in the following way. They depend on $2N$ real parameters 
$$\tilde{u}_1 \geq \dots \geq \tilde{u}_N \geq 0, \qquad \overline{u}_1 \geq \dots \geq \overline{u}_N \geq 0.$$
For $1 \leq j \leq M$ set (compare table \ref{tab:upar}, ignoring the zero columns in the rightmost side for now)
$$ u_k = \tilde{u}_k \e^0 + \overline{u}_k e^{\ceil{\frac{k}{2}}} + \sum_{l= \ceil{\frac{k}{2}}+1 }^M \overline{u}_{2l-1} e^l$$
where $e^0,e^1,\dots,e^M$ denote the base vectors $(1,0, \dots, 0), \dots, (0,\dots,0,1)$ of $\R^{M+1}$. Note that with this choice the vectors $(u_k)$ are decreasing. Given $\tilde{u}_1,\dots,\tilde{u}_n$ and the processes $X^0,X^1,\dots,X^M$ by Lemma \ref{lem:ufitting} the parameters $\overline{u}_1,\dots,\overline{u}_N$ are determined by fitting the current term structure. I.e. we require that $$\frac{P(0,T_k)}{P(0,T)} =  \EV[\QM^T]{\e^{u_k\T X_T}} = 
\EV[\QM^T]{\e^{\tilde{u}_k X^0_T}} \EV[\QM^T]{\e^{\overline{u}_k X^{\ceil{\frac{k}{2}}}_T}}  \prod_{i= \ceil{\frac{k}{2}}+1}^M \EV[\QM^T]{\e^{\overline{u}_{2l-1} X^l_T}},$$
so that the parameters $\overline{u}_1,\dots,\overline{u}_N$ can be calculated using backwards iteration.
Furthermore the semiannual forward interest rates with full-year maturities $F^{2k}$ only depend on $u_{2k-1},u_{2k}$ and the processes $X^0, X^k, \dots, X^{M}$. Hence if $X^0$ and $\tilde{u}_1, \dots, \tilde{u}_N$ are already specified, one can fit $X^{M}$ to caplets on the forward interest rate $F^{N}$ and then go backwards to iteratively fit the processes $X^k$ to caplets on forward interest rates $F^{2k}$. Hence if $X^0$ and  $\tilde{u}_1, \dots, \tilde{u}_N$ are fixed, all the remaining parameters can be calibrated to the yield curve and caplet prices. As stated in \citet{GPSS14} and confirmed by our numerical tests the concrete choice of $X^0$ and $\tilde{u}_k$ (within a meaningful range) has no qualitative impact on the resulting calibration quality. Henceforth $X^0$ is fixed as a CIR process (specifications of this process can be found equation \eqref{eq:CIR}). So far we have said nothing about resulting correlations. This is where the choice of the parameters $\tilde{u}_1, \dots, \tilde{u}_n$ comes in.
The relevant functions for correlations of forward interest rates $F^{2k}$ are
\begin{equation*}
B(t,u_{2k-1},u_{2k}) = ( \psi_{T-t}^0(\tilde{u}_{2k-1}) - \psi_{T-t}^0(\tilde{u}_{2k}), 0, \dots, 0, \psi_{T-t}^{k}(\overline{u}_{2k-1})- \psi_{T-t}^{k}(\overline{u}_{2k}), 0, \dots, 0).
\end{equation*}
These functions are \enquote{orthogonal} except for the first component\footnote{Although only stated for even forward rates for notional simplicity this is true for all forward rates.}. Hence by equation \eqref{eq:corrstructure} the correlation structure mainly depends on the sequence $(\tilde{u}_k)$.
Since $X^0$ is nonnegative the function $\psi^0_t(u)$ is increasing in $u$ (see \citet{KPT11}). With $(\tilde{u}_k)$ being a decreasing sequence this results in nonnegative correlations. Setting $\tilde{u}_k = \tilde{u}$ for all $k$ results in zero correlation\footnote{Negative correlations in this setup are only possible if the sequence $(\tilde{u}_k)$ is not decreasing which means that forward interest rates can become negative.}. The magnitude of correlations depends on how much of the variance of forward bond prices is explained by $\tilde{u}_k X^0$. Here we choose the factors $\tilde{u}_k$ so that $\EV[\QM^T]{\e^{2 \tilde{u}_k X_T^c}} = P(0,T_k)/P(0,T)$. The idea behind this choice is that approximately half of the variance should be explained by the common factor. Alternative if one has additional information on correlations (e.g. through market data), this could be incorporated in $\tilde{u}_k$. 

For calibration we used market data from September, 29th 2011. The yield curve is bootstrapped from LIBOR and swap rates and is displayed in figure \ref{fig:yieldcurve}. The used caplet implied volatilities for 6-month forward interest rates with strikes $1\%$ to $6\%$ and maturities from 1 to 10 years are bootstrapped from cap data. The resulting implied volatilities can be found in figure \ref{fig:capletfit}. 
\begin{figure}
\centering
\includegraphics[width=0.8 \textwidth]{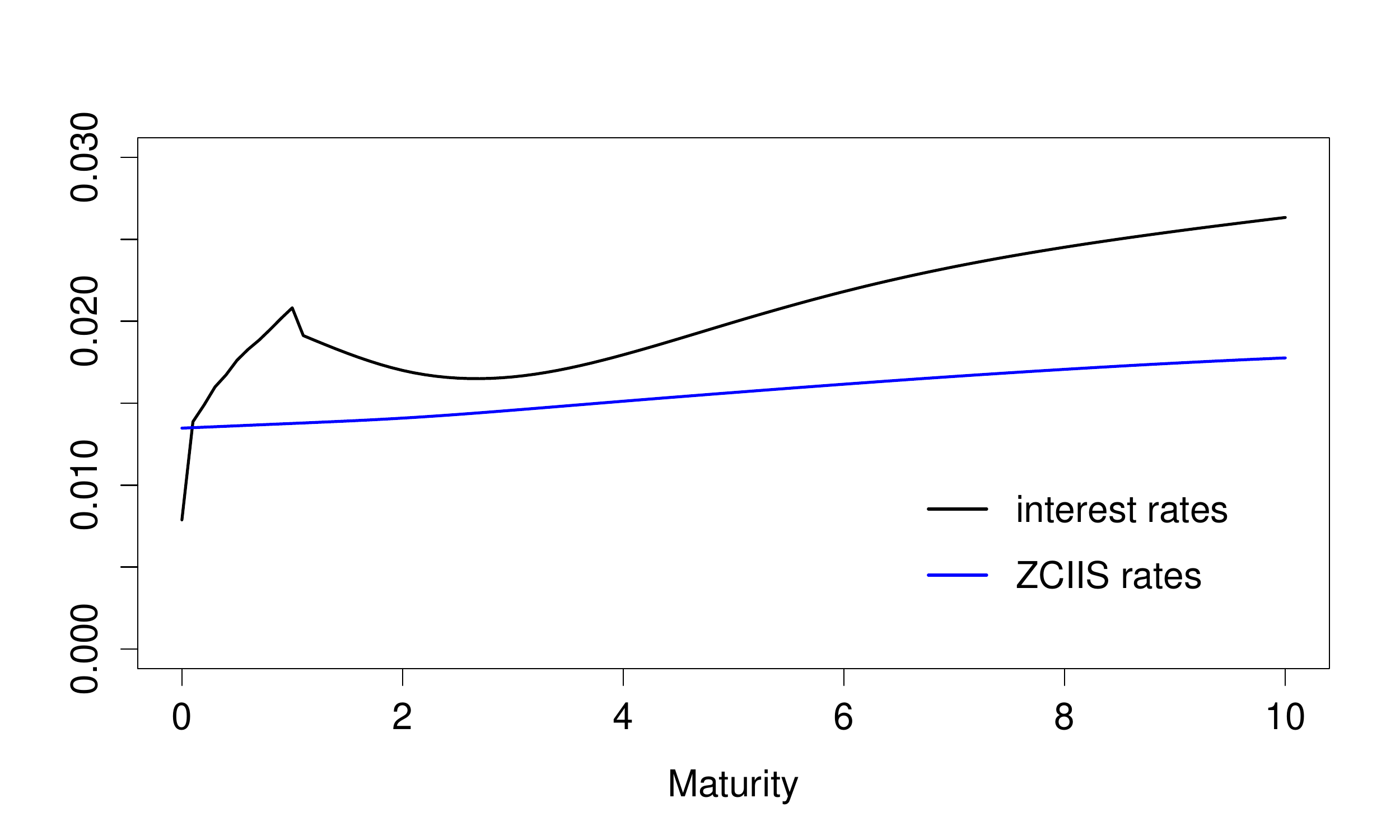}
\caption{Yield curve and ZCIIS curve from September, 29th 2011}
\label{fig:yieldcurve} \label{fig:ZCIIScurve}
\end{figure}
We fixed the time horizon $T=10$ and chose the parameters of the common CIR process as $\lambda=0.026,\theta=0.65,\eta=0.5,x=3.45$. The $M$ individual driving processes are chosen to be CIR processes with added jumps (see equation \eqref{eq:CIRGamOU}). Their parameters and the parameters $\overline{u}_k$ are calibrated with the mentioned recursive method. In each step the parameters are chosen so that the mean squared errors of implied volatilities are minimized. In contradiction with \citet{GPSS14} we were not able to produce a similar calibration quality as described in their paper\footnote{Several requests for clarification with the authors have resulted in the answer that their results are currently not in a state to be shared.}. The resulting calibration can be found in figure \ref{fig:capletfit}. Especially for long dated caplet volatilities the pronounced skew could not be reproduced. Nevertheless the model provides a reasonable fit of the caplet volatility surface, especially since the focus is on inflation derivatives. 
\begin{figure}
\centering
\includegraphics[width=0.55 \textwidth]{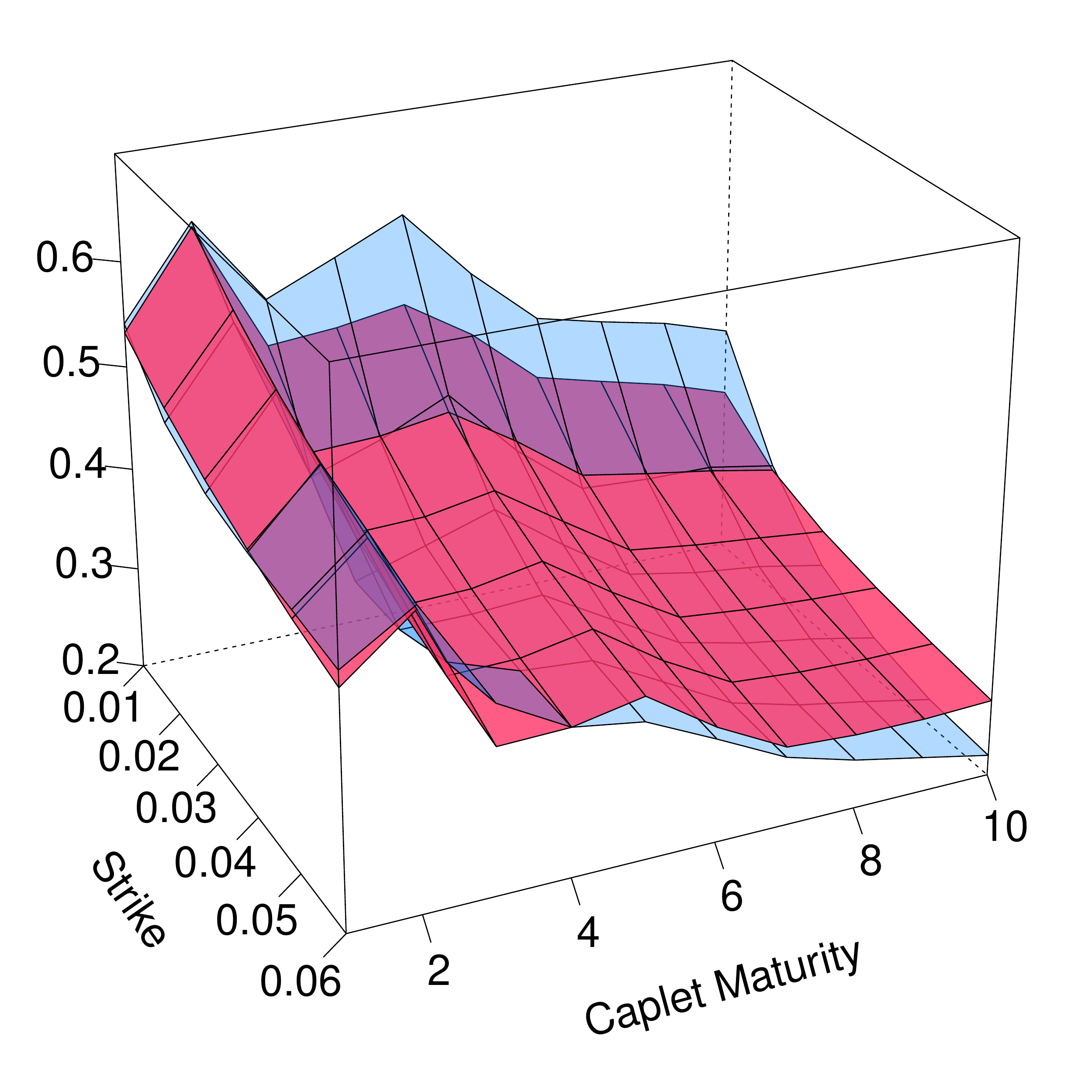}
\captionsetup{singlelinecheck=false, margin = 1.5 cm}
\caption{Market and model caplet implied volatilities of 6-month forward interest rates with strikes $1\%$ to $6\%$ and maturities from 1 to 10 years. Market volatilities are in a transparent blue, while model volatilities are displayed in red.}
\label{fig:capletfit}
\end{figure}

The next step is to extend the calibration to inflation markets. Additionally to the $M + 1$ driving processes used for modeling interest rates we use another $M$ independent analytic affine processes (one for each year) driving inflation-related quantities. Since the individual processes are independent by \eqref{eq:affinecombphipsi} the setup without inflation can be embedded by setting the additional components of $u_k$ equal to zero (see table \ref{tab:upar}). So from now on we assume that the processes $X^0,X^1,\dots, X^M$ and the vectors $u_1, \dots, u_N$ are already fixed.

The choice of the inflation parameters $v_k$ focuses on the same aspects aspects as the choice of $u_k$ without the restriction that inflation rates have to be nonnegative. We again assume that the vectors $v_k$ are determined by $2N$ parameters\footnote{We actually only require $N$ parameters, since the odd rows in table \ref{tab:uinfpar} do not contribute for the considered annual inflation rates. For notional simplicity we nevertheless consider $2N$ parameters.}
$\tilde{v}_1, \dots, \tilde{v}_N$, $\overline{v}_1, \dots, \overline{v}_N.$
In particular, we choose (see also table \ref{tab:uinfpar})
\begin{table}
\begin{center}
\begin{tabular}{l | C{2.5em}:C{2.5em}C{2.5em}C{2.5em}C{2.5em}:C{3.4 em}C{3.4 em}C{2.5em}C{2.5em} }
& $X^0$ & $X^1$ & $X^2$ & \dots & $X^{M}$ & $X^{M+1}$ & $X^{M+2}$ & \dots & $X^{2M}$ \\ \hdashline
\rowcolor{gray}
$v_1$ & $\tilde{v}_1$ & $\overline{u}_1$ & $\overline{u}_3$ & $\dots$ & $\overline{u}_{N-1}$ & $\overline{v}_1$ & $0$ & $\cdots$ & $0$ \\
$v_2$ & $\tilde{v}_2$ & $\overline{u}_2$ & $\overline{u}_3$ & $\dots$ & $\overline{u}_{N-1}$ & $\overline{v}_2$ & $0$ & $\cdots$ & $0$ \\
\rowcolor{gray} $v_3$ & $\tilde{v}_3$ & 0 & $\overline{u}_3$ & $\dots$ & $\overline{u}_{N-1}$ & $0$ & $\overline{v}_3$  & $\cdots$ & $0$ \\
$v_4$ & $\tilde{v}_4$ & 0 & $\overline{u}_4$ & $\dots$ & $\overline{u}_{N-1}$ & $0$ & $\overline{v}_4$  & $\cdots$ & $0$ \\
\vdots & \vdots & \vdots & \vdots & $\ddots$ & \vdots & \vdots & \vdots & $\ddots$ & \vdots \\
$v_{N-2}$ & $\tilde{v}_{N-2}$ & 0 & 0 & $\dots$ & $\overline{u}_{N-1}$ & $0$ &0 & $\cdots$ & 0 \\\rowcolor{gray} $v_{N-1}$ & $\tilde{v}_{N-1}$ & 0 & 0 & $\dots$ & $\overline{u}_{N-1}$ & $0$ &0 & $\cdots$ & $\overline{v}_{N-1}$ \\
$v_{N}$& $\tilde{v}_{N}$ & 0 & 0 & $\dots$ & $\overline{u}_{N}$ & $0$ &0 & $\cdots$ & $\overline{v}_{N}$ \\
\end{tabular}
\end{center}
\captionsetup{singlelinecheck=false, margin = .5 cm}
\caption{Description of the inflation parameter structure $v_k$. Each row corresponds to one vector with the column names denoting the process the position in the vector corresponds to. Note that for annual inflation rate option pricing the vectors $v_j$ with $j$ odd do not matter.}
\label{tab:uinfpar}
\end{table}
$$ v_k = \tilde{v}_k \e^0 + \overline{u}_k e^{\ceil{\frac{k}{2}}} + \sum_{l= \ceil{\frac{k}{2}}+1 }^M \overline{u}_{2l-1} e^l + \overline{v}_k e^{M+\ceil{\frac{k}{2}}}.$$
Choosing the nominal components $v_k^i = u_k^i$ for $i = 1, \dots, M$ has the advantage that forward CPIs and therefore also forward inflation rates do not depend on the nominal processes $X^1, \dots, X^M$. Together with the choice of $v_k$ with respect to the inflation processes $X^{M+1}, \dots X^{2M}$ this implies that the forward CPI $\I(t,T_{2k})$ depends only on $X^0$ and $X^{M+k}$. In particular the function $B(t,v_{2k},u_{2k})$ corresponding to the forward CPI $\I(t,T_{2k})$ defined in \eqref{eq:fwdCPIaffine} is
\begin{align*}
B(t,v_{2k},u_{2k}) & = ( \psi_{T-t}^0(\tilde{v}_{2k}) - \psi_{T-t}^0(\tilde{u}_{2k}), 0, \dots, 0, \psi_{T-t}^{M+k}(\overline{v}_{2k}), 0, \dots, 0).
\end{align*}
From this it follows that for different forward CPIs these functions are \enquote{orthogonal} except for the first component. Furthermore except for the first component they are also \enquote{orthogonal} to the functions $B(t,u_{j-1},u_j)$ relevant for forward interest rates. Hence the correlation structure mainly depends on $\tilde{u}_k$ and $\tilde{v}_k$. Since $\psi_t^0$ is monotonically increasing, one should choose $\tilde{v}_{k} >  \tilde{u}_{k}$ if the corresponding forward CPI should be positively correlated with nominal interest rates or $\tilde{v}_{k} < \tilde{u}_{k}$ if it should be negatively correlated\footnote{We decided to introduce correlation for the inflation part only via the common factor process. We could also have changed the parameters $\overline{u}_k$ in order to introduce correlation. In this case even more complicated correlation patterns could be created.}. Also note that two annual forward CPIs are positively correlated if $\sign{\tilde{v}_k - \tilde{u}_k} = \sign{\tilde{v}_j - \tilde{u}_j}$. This therefore gives us some criteria how to determine the parameters $\tilde{u}_k$ from correlation assumptions and from now on we assume that the parameters $\tilde{v}_1, \dots, \tilde{v}_N$ are given. 
Assuming also a fixed process $X^{M+k}$ we would like to determine $\overline{v}_k$ from the current term structure $ P_{ILB}(0,T_k) / P(0,T)$. We differentiate between two cases. First consider a nonnegative affine process $X^{M+k}$. 
In this case by Lemma \ref{lem:ufitting} there exists a unique $\overline{v}_k$, so that $M_0^{v_k} = P_{ILB}(0,T_k) / P(0,T)$. For $\tilde{v}_k \approx \tilde{u}_k$ this typically results in $\overline{v}_k > 0$. In this case zero coupon inflation for $[t,T_k]$ is always positive. To avoid this alternatively consider an affine process, which takes negative and positive values. In this case $M_0^v$ is not necessarily increasing in  $\overline{v}_k$. However, by Lemma \ref{lem:ufitting} it is still a convex function in $\overline{v}_k$. This means that there are at most two possible choices for $\overline{v}_k$, in which case we need to pick one. For the results in this chapter we only used results where one choice was smaller than zero and the other one larger than zero, which we then picked.
To determine the processes $X^{M+1},\dots,X^{2M}$ notice that the annual forward inflation rate $F_I(t,T_{2k-2},T_{2k})$ depends only the processes $X^0, X^{M+k-1},X^k$ and $F_I(t,T_0,T_2)$ depends only on $X^0, X^{M+1}$. By starting with inflation options on $F_I(T_2,T_0,T_2)$ one can calibrate the parameters of $X^{M+1}$. Then one can iteratively calibrate the parameters of $X^{M+k}$ using inflation options on $F_I(T_{2k},T_{2(k-1)},T_{2k})$. 

For the calibration example we used ZCIIS rates from $1$ to $10$ years (see figure \ref{fig:ZCIIScurve}) and inflation options for strikes ranging from $-2\%$ to $6\%$, the prices of which are displayed in figure \ref{fig:inflfit}.
We chose $\tilde{v}_k = \tilde{u}_k (1+c k)$ with $c \approx 0.08$, so that $c k$  is between $1$ and $1.15$. This means that correlation are positive as usually observed. For the processes $X^{M+1}, \dots, X^{2M}$ we used Ornstein-Uhlenbeck processes with added jumps (see equation \eqref{eq:DGamOUBM}). 
The parameters of these processes and the sequence $(\overline{v}_k)$ are then calibrated as described. In each step the mean squared errors of option prices are minimized. The resulting fit is displayed in figure \ref{fig:inflfit} and this shows that the calibration is very accurate.
\begin{figure}
\centering
\includegraphics[width=0.55 \textwidth]{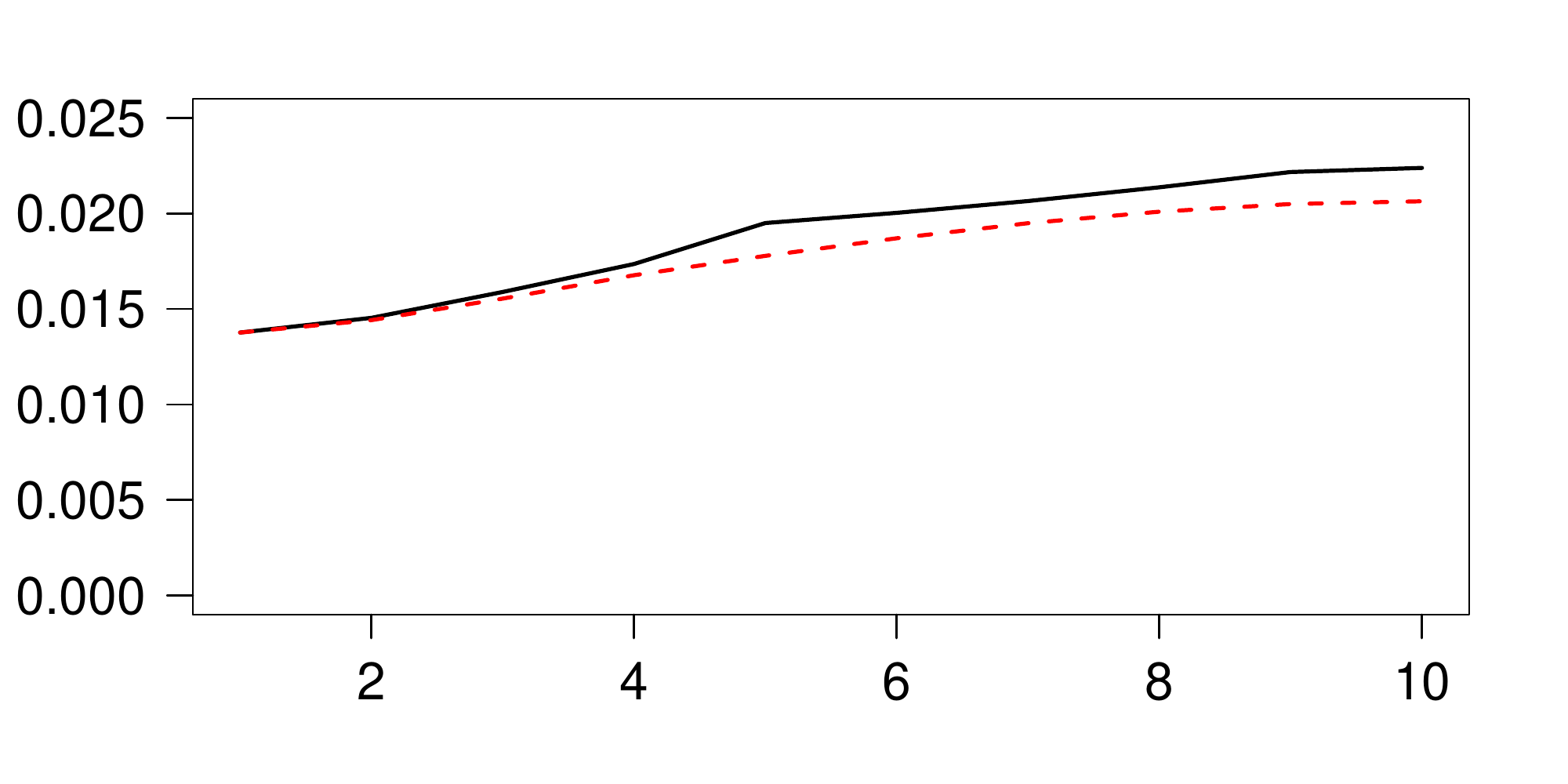}
\captionsetup{singlelinecheck=false, margin = 1.5 cm}
\caption{\small Linear interpolated annual forward inflation rates $F_I(0,T_{2(k-1)},T_{2k})$ (black) and its approximation $\I(t,T_{2k})/\I(t,T_{2(k-1)})-1$ (dashed red) for maturities from 1 to 10 years.} 
\label{fig:forwardinflation}
\end{figure}
\begin{figure}
\centering
\includegraphics[width=0.52 \textwidth]{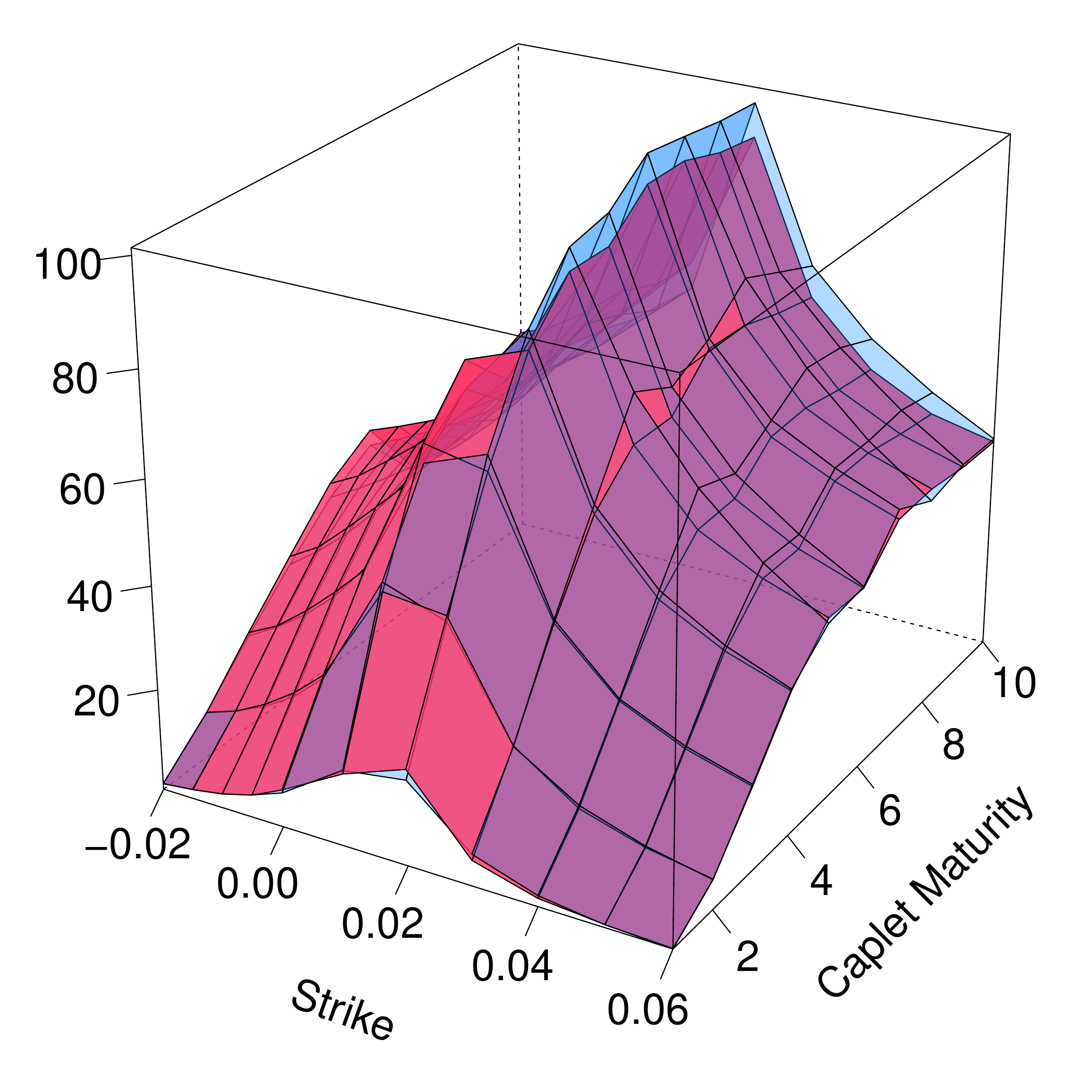}
\captionsetup{singlelinecheck=false, margin = 1.5 cm}
\caption{\small Market and model caplet/floorlet prices in basis points for annual forward inflation with strikes $-2\%$ to $6\%$ and maturities from 1 to 10 years. Market prices are in a transparent blue, while model prices are displayed in red. For strikes between $-2\%$ and $1\%$ prices are quoted for floorlets, for strikes between $2\%$ and $6\%$ prices are quoted for caplets. Market prices are bootstrapped from corresponding cap/floor data.}
\label{fig:inflfit}
\end{figure}
\begin{figure}
\centering
\includegraphics[width=0.52 \textwidth]{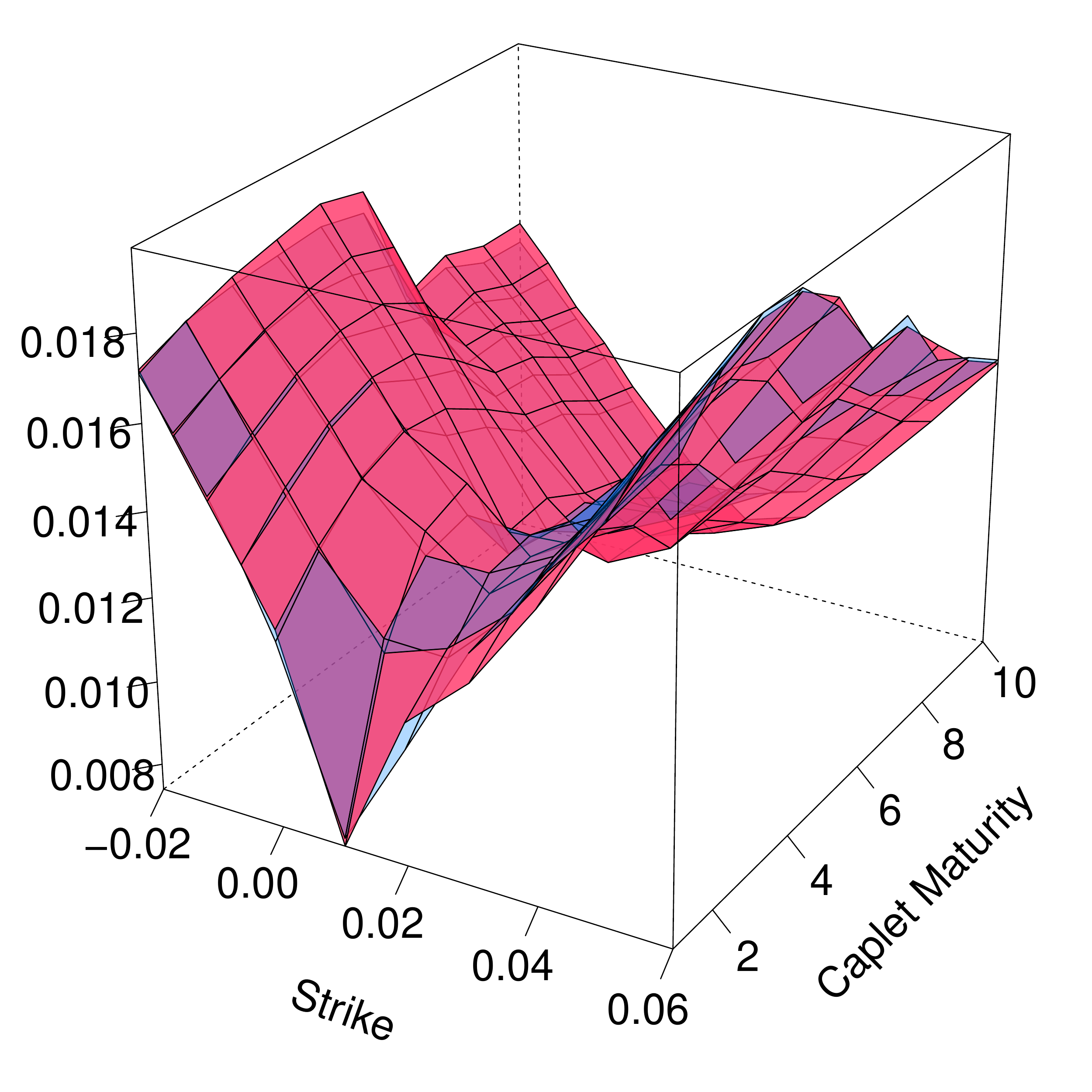}
\captionsetup{singlelinecheck=false, margin = 1.5 cm}
\caption{\small Market and model implied volatilities for annual forward inflation with strikes $-2\%$ to $6\%$ and maturities from 1 to 10 years. Market volatilities are in a transparent blue, while model volatilities are displayed in red.
}
\label{fig:inflfitvol}
\end{figure}

We would also like to display the fit in terms of (shifted lognormal) implied volatilities. Typically one only has quotes on ZCIIS rates,  which do not directly translate to forward inflation rates. However, annual forward inflation rates can be approximated by $F_I(t,T_{k-j},T_k) \approx \I(t,T_k) / \I(t,T_{k-j})-1$ (see figure \ref{fig:forwardinflation}). It is market practice to use this as forward value in the shifted Black formula to calculate approximate market volatilities. The resulting fit in terms of shifted implied volatilities is then displayed in figure \ref{fig:inflfitvol}. This shows that implied volatilities are closely reproduced by the model across all maturities and smiles and that the model is very capable in fitting the observed market data.

\subsection*{Conclusion}
We have introduced a highly tractable inflation market model, where we are able to derive analytical formulas for both types of inflation-indexed swaps. Furthermore inflation caps and floors as well as CPI caps and floors can be calculated with a one-dimensional Fourier inversion formula. Hence prices for liquidly traded inflation derivatives can be calculated quickly and accurately. Additionally the proposed model is able to price classical interest rate derivatives like caps and floors. Using these formulas we are able to calibrate the model to market data. The calibration example shows that the model can be calibrated to inflation market data very accurately.

\section[Appendix]{Appendix on affine processes} \label{sec:affineprocess}
Let  $X = (X_t)_{0 \leq t \leq T}$ be a homogeneous Markov process with values in $D = \R^m_{\geq 0} \times \R^n$ on a measurable space $(\Omega,\mathcal{A})$ with filtration  $(\F_t)_{0 \leq t \leq T}$, with regards to which $X$ is adapted. Denote by $\PM^x$ and $\EV[x]{\cdot}$ the corresponding probability and expectation when $X_0 = x$. 
X is said to be an affine process, if its characteristic function has the form
\begin{equation*} 
\EV[x]{ \e^{u\T X_{t}}}  = \ex{\phi_{t}(u) + \psi_{t}(u)\T x}, \quad u \in \i \R^d, x \in D,
\end{equation*}
where $\phi: [0,T]\times \i \R^d  \rightarrow \C$ and $\psi: [0,T] \times \i \R^d  \rightarrow \C^d$ and $\i \R^d = \{u \in \C^d: \mathrm{Re}(u) = 0 \}.$ 
By homogeneity and the Markov property the conditional characteristic function satisfies
\begin{equation}  \label{eq:affineMomentGeninfl}
 \EV[x]{e^{u\T X_{t}} \vert \F_s} = \ex{\phi_{t-s}(u) + \psi_{t-s}(u)\T X_s}. 
\end{equation}
Accordingly affine processes can also be defined for inhomogeneous Markov processes (see \citet{FI05}). In this case the affine property reads
\begin{equation*} \label{eq:affineMomentGeninhominfl}
\EV[x]{ \e^{u\T X_t} \vert \F_s} = \ex{\phi_{s,t}(u) + \psi_{s,t}(u)\T X_s}, \quad u \in \i \R^d, x \in D, 
\end{equation*}
with $\phi_{s,t}: \i \R^d  \rightarrow \C$ and $\psi_{s,t}: \i \R^d  \rightarrow \C^d$ for $0 \leq s \leq t$.

$X$ is called an analytic affine process, if $X$ is stochastically continuous and the interior of the set\footnote{$\mathcal{V}$ can be described as the (convex) set, where the extended moment generating function of $X_t$ is defined for all times $t$ and all starting values $x$. By Lemma 4.2 in \citet{KM11} the set $\mathcal{V}$ is in fact equal to the seemingly smaller set 
$\left \{ u \in \C^d:   \exists x \in \mathrm{int}(D): \EV[x]{ \e^{\mathrm{Re}(u)\T X_T}} < \infty \right \}.$}
\begin{align}
\mathcal{V} & := \left \{ u \in \C^d: \sup_{0 \leq s \leq T}  \EV[x]{ \e^{\mathrm{Re}(u)\T X_s}} < \infty \quad  \forall x \in D \right \}, \label{eq:momset}
\end{align}
contains 0
. In this case the functions $\phi$ and $\psi$ have continuous extensions to $\mathcal{V}$, which are analytic in the interior, such that \eqref{eq:affineMomentGeninfl} holds for all $u \in \mathcal{V}$ (see \citet{KR08}).

The class of affine processes includes Brownian motion and more generally all Lévy processes. Since Lévy processes have stationary independent increments, it follows that $\psi_t(u) = u$, while $\phi_t(u) = t \kappa(u)$, where $\kappa$ is the cumulant generating function of the Lévy process. Ornstein-Uhlenbeck processes are further important examples of affine processes. The affine processes used in this work are described at the end of this section. 

The standard reference for affine processes is \citet{DFS03}. There they give a characterization of affine processes, where $\phi$ and $\psi$ are specified as solutions of a system of differential equations
. 
To motivate this consider an affine process $X$. By the tower property for conditional expectations it holds for all $x \in D$
$$ \EV[x]{e^{u X_{t+s}}} = \EV[x]{\EV[x]{e^{u X_{t+s}}\vert \F_s} } =  \EV[x]{e^{\phi_t(u) + \psi_t(u)\T X_s} }. $$
Using equation \eqref{eq:affineMomentGeninfl} it follows that $\phi$ and $\psi$ satisfy the so-called semi-flow equations
\begin{equation}
\begin{aligned}
\phi_{t+s}(u) & = \phi_t(u) + \phi_s(\psi_t(u)),  \qquad & \phi_0(u) & = 0, \\
\psi_{t+s}(u) & = \psi_s(\psi_t(u)),  & \psi_0(u)  & = u.
\end{aligned} \label{eq:semiflowinfl}
\end{equation}
For a stochastically continuous affine process $X$ it was shown in \citet{KST11} that the functions 
\begin{equation*}
F(u) := \left. \frac{\partial}{\partial t} \phi_t(u) \right \vert_{t=0^+}, \qquad R(u) := \left. \frac{\partial}{\partial t} \psi_t(u) \right \vert_{t=0^+}
\end{equation*}
exist\footnote{This was also shown for affine processes with general state spaces in \citet{KST11b} and \citet{CT13}.}.
Rewriting \eqref{eq:semiflowinfl} in terms of difference quotients and letting  $s \rightarrow 0$ we get that $\phi$ and $\psi$ satisfy generalized Riccati equations
\begin{equation} \label{eq:ricattiinfl}
\begin{aligned}
 \frac{\partial}{\partial t} \phi_t(u) & = F(\psi_t(u)), \qquad \phi_0(u) = 0, \\
 \frac{\partial}{\partial t} \psi_t(u) & = R(\psi_t(u)), \qquad \phi_0(u) = u. 
\end{aligned}
\end{equation}
The functions $F$ and $R$ have a specific form of Levy-Khintchine type as first described in \citet{DFS03}. There it is also shown that for every $F$ and $R$ of this form \eqref{eq:ricattiinfl} has a unique solution. Specifying the functions $F$ and $R$ is an alternative way to specify an affine process. \citet{KM11} give conditions on $F$ and $R$ under which a solution of \eqref{eq:ricattiinfl} defines an analytic affine process. Note that in order to evaluate $\phi$ and $\psi$ one would like to have closed form solutions to the system \eqref{eq:ricattiinfl}, which in general is not the case.

Coupling independent affine processes is very tractable. For two independent affine processes $X$ and $Y$ and all starting values $x, y$ one obtains
\begin{equation} \label{eq:affinecomb}
\EV[(x,y)]{\e^{(u_X, u_Y) \cdot (X_t,Y_t)}} = \EV[(x,y)]{\e^{u_X\T X_t} \e^{u_Y\T Y_t} } = \EV[(x,y)]{\e^{u_X \cdot X_t} } \EV[(x,y)]{\e^{u_Y \cdot Y_t} }.\end{equation}
Hence $(X,Y)$ is an affine process with
\begin{equation} \label{eq:affinecombphipsi}
\begin{aligned}
\phi^{(X,Y)}_t(u_X,u_Y) & = \phi_t^X (u_X)  + \phi_t^Y (u_Y),  \\
\psi^{(X,Y)}_t(u_X,u_Y) & = (\psi_t^X(u_X), \psi_t^Y(u_Y)) .
\end{aligned}
\end{equation}
This fact together with the following Lemma is used in section \ref{sec:numeric}.
\begin{lemma} \label{lem:ufitting}
Let $X$ be an analytic affine processes comprised of $m+n$ independent affine processes, where the first $m$ are nonnegative. For $(u_1, \dots, u_k, v, u_{k+1}, \dots, u_n) \in \mathrm{int}(\mathcal{V}) \cap \R^{m+n}$ define the function
$$f^k(v) := \EV[x]{\e^{(u_1,\dots,u_{k-1},v,u_{k+1},\dots,u_{m+n}) \cdot X_t}}.$$
Then $f^k$ is monotonically increasing if $k \leq m$ and $f^k$ is convex for all $k$. 
\end{lemma}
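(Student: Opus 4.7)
The plan is to reduce the lemma to a statement about the moment generating function of a single coordinate process by exploiting independence. By equation \eqref{eq:affinecomb} (applied inductively across the $m+n$ independent components), the joint expectation factorises, so
\begin{equation*}
f^k(v) \;=\; \EV[x^k]{\e^{v X^k_t}} \prod_{i \neq k} \EV[x^i]{\e^{u_i X^i_t}} \;=\; C \cdot g(v),
\end{equation*}
where $g(v) := \EV[x^k]{\e^{v X^k_t}}$ is the real moment generating function of $X^k_t$ started at $x^k$, and $C > 0$ is a strictly positive constant that does not depend on $v$. The factorisation also implies that the set $\mathcal V$ for $X$ factorises into the product of the marginal sets $\mathcal V^i$ for the components $X^i$; hence the hypothesis that $(u_1,\dots,v,\dots,u_{m+n}) \in \mathrm{int}(\mathcal V) \cap \R^{m+n}$ ensures that $v$ lies in the interior of $\mathcal V^k \cap \R$, where $g$ is analytic.

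Next I would prove convexity. Since $g$ is analytic on the interior of $\mathcal V^k \cap \R$, we may differentiate under the expectation (the dominated convergence argument is standard inside the domain of analyticity of a moment generating function) to obtain
\begin{equation*}
g''(v) \;=\; \EV[x^k]{(X^k_t)^2 \e^{v X^k_t}} \;\geq\; 0.
\end{equation*}
Hence $g$ is convex, and multiplying by the positive constant $C$ preserves convexity, so $f^k$ is convex for every $k$.

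Finally, for monotonicity in the case $k \leq m$, the component $X^k$ takes values in $\R_{\geq 0}$ by assumption, so $X^k_t \geq 0$ almost surely. Differentiating once gives
\begin{equation*}
g'(v) \;=\; \EV[x^k]{X^k_t\, \e^{v X^k_t}} \;\geq\; 0,
\end{equation*}
so $g$, and therefore $f^k = C \cdot g$, is nondecreasing. There is essentially no genuine obstacle here; the only point that requires any care is the interchange of differentiation and expectation, which is standard because $v$ is confined to the open domain where the moment generating function of $X^k_t$ is analytic, yielding all the moment bounds needed for dominated convergence.
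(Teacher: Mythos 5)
Your proof is correct, but it takes a noticeably heavier route than the paper's. The paper's entire argument is one line: for each fixed realization $y$ of $X_t$ the integrand is a positive constant times $v \mapsto \e^{v y_k}$, which is convex for every $y_k$ and nondecreasing whenever $y_k \geq 0$ (guaranteed by the state space when $k \leq m$), and pointwise convexity and monotonicity are preserved by taking expectations. Note that this argument does not use the independence of the components at all, and it requires no differentiation under the integral sign. Your version instead factorizes the expectation via independence, reduces to the moment generating function $g$ of the single coordinate $X^k$, and computes $g'$ and $g''$ explicitly; this buys you the explicit sign information $g' = \EV[x^k]{X^k_t \e^{v X^k_t}}$ and $g'' = \EV[x^k]{(X^k_t)^2 \e^{v X^k_t}}$, but at the cost of justifying the interchange of differentiation and expectation and of the (true but not entirely free) claim that $\mathcal{V}$ factorizes into the marginal sets $\mathcal{V}^i$ --- strictly speaking all you need there is that $g$ is finite on an open interval around $v$, which already follows from the finiteness of the product together with the strict positivity of the other factors. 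Both proofs are valid; the paper's is the more economical one.
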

\begin{proof}
For each $x \in D$ the term inside the expectation is convex in $v$ and monotonically increasing in $v$ if $k \leq m$. This then also holds after taking the expectation. 
\end{proof}

The last part of this section describes the affine processes used in this chapter. One classical example is the CIR process, which is the unique solution of
\begin{equation} \label{eq:CIR}
\dd{X_t} = - \lambda (X_t - \theta) \dd{t} + 2\eta \sqrt{X_t} \dd{W_t}, \qquad X_0 = x.
\end{equation}
For this process the functions $\phi$ and $\psi$ are defined for $\mathrm{Re}(u) < \frac{\lambda}{2 \eta^2} (1-\e^{-\lambda t})^{-1}$,
\begin{equation*}
\begin{aligned}
\phi_t(u) & = - \frac{\lambda \theta}{2 \eta^2} \logn{1- \frac{2\eta^2}{\lambda} (1-\e^{-\lambda t})  u}, \\
\psi_t(u) & = \frac{\e^{-\lambda t} u}{1- \frac{2\eta^2}{\lambda} (1-\e^{-\lambda t})  u}. 
\end{aligned}
\end{equation*}
The CIR process almost surely stays nonnegative. It is strictly positive if $\frac{\lambda \theta}{2} > \eta^2$. One can add jumps to this process by adding the differential of a compound Poisson process $L_t$  to the dynamics of $X$. 
\begin{equation} \label{eq:CIRGamOU}
\dd{X_t} = - \lambda (X_t - \theta) \dd{t} + 2\eta \sqrt{X_t} \dd{W_t} + \dd{L_t}, \qquad X_0 = x.
\end{equation}
If $L_t$ has exponentially distributed jumps with expectation values $\frac{1}{\alpha}$ arriving at rate $\lambda \beta$ the functions $\phi$ and $\psi$ are (see \citet{GP13})
\begin{equation*}
\begin{aligned}
\phi_t(u)  = & - \frac{\lambda \theta}{2 \eta^2} \logn{1- \frac{2\eta^2}{\lambda} (1-\e^{-\lambda t})  u} \\
& \ - \frac{\lambda \beta}{\lambda - 2 \eta^2 \alpha} \logn{\frac{\alpha-u}{\alpha-u\left(\e^{-\lambda t}+(1-\e^{-\lambda t})\frac{1}{\lambda} 2 \eta^2 \alpha\right)}}, \\
\psi_t(u) = & \frac{\e^{-\lambda t} u}{1- \frac{2\eta^2}{\lambda} (1-\e^{-\lambda t})  u},
\end{aligned}
\end{equation*}
where $$\mathrm{Re}(u) < \min\left\{ \frac{\lambda}{2 \eta^2} (1-\e^{-\lambda t})^{-1}, \alpha \left(\e^{-\lambda t}+(1-\e^{-\lambda t})\frac{1}{\lambda} 2 \eta^2 \alpha\right)^{-1} , \alpha \right\}.$$ Since $L_t$ has only positive jumps this process also stays nonnegative. 
As a third example consider the real-valued affine process defined by
\begin{equation} \label{eq:DGamOUBM}
\dd{X_t} = - \lambda (X_t - \theta) \dd{t} + \sigma \dd{W_t} + \dd{\tilde{L}_t}, \qquad X_0 = x,
\end{equation}
where $\tilde{L}_t$ is a compound Poisson process with positive jumps with mean $\frac{1}{\alpha^+}$ arriving at rate $\lambda \beta^+$ and negative jumps with mean $\frac{1}{\alpha^-}$ arriving at rate $\lambda \beta^-$. The functions $\phi$ and $\psi$ in this case read  (see \citet{MW14})
\begin{align*}
\phi_t(u) & = \frac{\sigma^2 u^2}{4 \lambda} (1-\e^{-2 \lambda t}) + \theta u  (1-\e^{-\lambda t})  + \frac{\beta^+ + \beta^-}{2} \logn{\frac{(\alpha^+-\e^{-\lambda t}u)(\alpha^-+\e^{-\lambda t}u)}{(\alpha^+-u)(\alpha^-+u)}} \\
& + \frac{\beta^+ - \beta^-}{2 } \logn{\frac{(\alpha^+-\e^{-\lambda t}u)(\alpha^-+u)}{(\alpha^+-u)(\alpha^-+\e^{-\lambda t}u)}}, \\
\psi_t(u) & = \e^{-\lambda t} u,
\end{align*}
for $- \alpha^- < \mathrm{Re}(u) < \alpha^+$. 


\end{document}